\documentclass[11pt, oneside]{amsart}
\usepackage[marginparwidth=2cm]{geometry}                % See geometry.pdf to learn the layout options. There are lots.
\geometry{a4paper}                   % ... or a4paper or a5paper or ...                % ... or a4paper or a5paper or ... 
\usepackage{graphicx, slashed}
\usepackage{amssymb, amsthm, mathrsfs, bm}
\usepackage{epstopdf}
\usepackage{enumerate}
\usepackage{color}
\usepackage{hyperref}
\usepackage{bbm}
\numberwithin{equation}{subsection}

%Macros
%\newcommand{\name}[no_of_parameters]{defn_of_command}

\newcommand{\mc}{\mathcal}
\newcommand{\gr}{\nabla}
\newcommand{\grt}{\tilde{\nabla}}

\newcommand{\grh}{\hat{\gr}}
\newcommand{\pa}{\partial}

\newcommand{\hf}{\frac{1}{2}}
\def\d[#1]{\text{d} #1}

\def\pd[#1]{\frac{\partial}{\partial {#1}}}
\def\pdt[#1][#2]{\frac{\partial {#1}}{\partial {#2}}}

\newcommand{\emt}{\mathbb{T}}
\newcommand{\mf}{\mathcal{M}}
\newcommand{\mh}{\mathcal{H}}

\newcommand{\mi}{\mathcal{I}}

\newcommand{\Lu}{\underline{L}}
\newcommand{\Hu}{\underline{H}}

\newcommand{\f}{\frac}
\newcommand{\bu}{\bar{u}}
\newcommand{\bv}{\bar{v}}

\newcommand{\rt}{\tilde{r}}

\newcommand{\fl}{\mathbb{F}}
\newcommand{\abs}[1]{\left|#1 \right|} 
\newcommand{\norm}[2]{\left|\left |#1 \right| \right |_{#2}}

\newcommand{\sr}{\mathscr{R}}
%Theorem Environment
\makeatletter
\newtheorem*{rep@theorem}{\rep@title}
\newcommand{\newreptheorem}[2]{%
	\newenvironment{rep#1}[1]{%
		\def\rep@title{#2 \ref{##1}}%
		\begin{rep@theorem}}%
		{\end{rep@theorem}}}
\makeatother

\newtheorem{prop}{Proposition}
\newreptheorem{prop}{Proposition}

\newtheorem{thm}{Theorem}[section]
\newtheorem{defn}{Definition}[section]
\newtheorem{rem}{Remark}[section]
\newtheorem{lem}{Lemma}[section]

\newtheorem{coro}{Corollary}[section]
\title[Stability of Toroidal AdS-Schwarzschild]{Stability of the Toroidal AdS Schwarzschild Solution in the Einstein--Klein-Gordon System}
\author{Jake Dunn ${}^1$}

\author{Claude Warnick ${}^2$}
\thanks{\vspace{.1cm}\texttt{j.dunn15@imperial.ac.uk};  \texttt{c.m.warnick@maths.cam.ac.uk }\\
	\phantom{1    }\hspace{.05cm} ${}^1$ Dept. of Mathematics, South Kensington Campus, Imperial College London, SW7 2AZ, UK.\vspace{.1cm}\\
	\phantom{1    }\hspace{.05cm} ${}^2$ Centre for Mathematical Sciences, Wilberforce Road, Cambridge CB3 0WA, UK\vspace{.1cm}}

\date{\today \vspace{.1cm}}  
\begin{document}
	\maketitle
\begin{abstract}
We consider the stability of the toroidal AdS-Schwarzshild black holes as solutions of the Einstein--Klein-Gordon system, with Dirichlet or Neumann boundary conditions for the scalar field. Restricting to perturbations that respect the toroidal symmetry we show both orbital and asymptotic stability for the full nonlinear problem, for a range of choices of the Klein-Gordon mass. The solutions we construct with Neumann boundary conditions have a Hawking mass which diverges towards infinity, reflecting the infinite energy of the Klein-Gordon field for perturbations satisfying these boundary conditions.
\end{abstract}

	\tableofcontents

\section{Introduction}
The Einstein--Klein-Gordon system in an asymptotically anti de-Sitter space time is given by 
\begin{equation}\label{EKG}
\begin{split}
R_{\mu\nu}-\hf g_{\mu\nu}R - \frac{3}{l^2}g_{\mu\nu} &= 8\pi T_{\mu\nu},\\
\Box_g \psi - \frac{2a}{l^2}\psi &=0,\\
\gr_\mu\psi\gr_\nu\psi - \hf g_{\mu\nu}\gr_\sigma\psi\gr^\sigma\psi - g_{\mu\nu}\frac{a}{l^2}\psi^2 &= T_{\mu\nu}. 
\end{split}
\end{equation}
Here $g$, $\psi$ are, respectively, the $(3+1)$-dimensional Lorentzian metric and Klein-Gordon field for which we solve. $R_{\mu\nu}$ is the Ricci curvature of $g$, $R$ the scalar curvature, $l$ the AdS radius related to the cosmological constant of the system, $\Lambda$, through the relationship  $\Lambda = \f{-3}{l^2}$, and $-\frac{9}{8}< a< -\frac{5}{8}$ is a negative constant which can be thought of as the mass of the Klein-Gordon equation. It will be convenient to write $a = \frac{1}{2} \kappa^2 - \frac{9}{8}$, where $\kappa\in(0,1)$. 

Upon fixing coordinates, (\ref{EKG}) becomes a quasilinear system of hyperbolic PDE. In contrast to the situation for non-negative cosmological constant, the natural setting to solve this system is that of an initial-boundary value problem \cite{friedrich_einstein_1995,holzegel_self-gravitating_2012,holzegel_einstein-klein-gordon-ads_2013, Enciso:2014lwa}. In addition to the usual Cauchy data specified on a spacelike hypersurface, one is led to impose boundary conditions on a timelike surface `at infinity', $\mi$. Among possible choices of boundary data, one expects to be able to fix the conformal metric at $\mi$ together with a choice of Dirichlet or Neumann boundary conditions for $\psi$.  

The simplest setting to consider this system is under the assumption of spherical symmetry. There are two stationary vacuum ($\psi \equiv 0$) solutions,  the anti-de Sitter spacetime and the Kottler or AdS-Schwarzshild solution. The nonlinear stability of the AdS spacetime against Klein-Gordon perturbations, with Dirichlet boundary conditions for the Klein-Gordon field, has been extensively studied numerically, since the pioneering work of Bison and Rostorowski in \cite{bizon_weakly_2011}. The numerical work in this paper, together with perturbative arguments therein, suggest that there exist initial data arbitrarily close to AdS which collapse to form a black hole in finite time, giving evidence for the instability conjectured by Anderson \cite{Anderson} and Dafermos-Holzegel \cite{DafHol}. In a more rigorous setting, black hole formation has been shown to occur for the related Einstein-null dust model by Moschidis \cite{moschidis_proof_2017,moschidis_einstein--null_2017}. We also note the work of Dold \cite{dold_global_2017} who has established that for the $(1+4)$-dimensional AdS-Eguchi-Hansen solutions, with negative mass,  no horizon may form when  the spacetime is perturbed by data in the Bianchi IX symmetry class.

The spherically symmetric AdS-Schwarzschild black hole has also been well studied in the mathematical literature. In a series of papers, Holzegel \cite{holzegel_massive_2010} and Holzegel-Smulevici \cite{holzegel_decay_2011} established that solutions of the Klein-Gordon equation satisfying Dirichlet boundary conditions decay in time, but at a slow rate: no rate of decay faster than an inverse power of the logarithm can hold uniformly. The reason for this slow decay is a stable trapping mechanism which operates near null infinity. With more general boundary conditions \cite{warnick_massive_2013}, the problem of uniform boundedness of the linear Klein-Gordon field on this background was studied in \cite{holzegel_boundedness_2014}. At the nonlinear level, in \cite{holzegel_stability_2013} an orbital and asymptotic stability result was established for small perturbations of the spherically symmetric AdS-Schwarzschild spacetime, measured with respect to the standard Klein-Gordon norm.

Moving away from spherical symmetry, by varying the geometry of the conformal boundary one may construct a vast array of stationary vacuum spacetimes, with or without a black hole \cite{anderson_non-trivial_2002,chrusciel_non-singular_2007,chrusciel_non-singular_2017-2,chrusciel_non-singular_2017-1,chrusciel_non-singular_2017,chrusciel_non-singular_2018}. A family of solutions of particular interest generalise the AdS-Schwarzschild solutions \cite{kottler_uber_1918,lemos_two-dimensional_1995} (See also \cite{Birmingham:1998nr}). In Schwarzschild-like coordinates, the spacetime manifold is $\mathbb{R}_t \times (r_+, \infty) \times \Sigma$ and the metric takes the form:
\begin{equation}
g= -\left( k-\frac{2M}{r} + \frac{r^2}{l^2} \right) dt^2 + \frac{dr^2}{k-\frac{2M}{r} + \frac{r^2}{l^2}} + r^2 d\sigma_k^2 \label{KotDef}
\end{equation}
where $d\sigma_k^2$ is a fixed metric of constant curvature $k\in\{1,0,-1\}$ on the surface $\Sigma$ which we assume to be compact\footnote{This is not strictly necessary: one could also consider open $\Sigma$. In particular, our results apply equally to planar black holes where perturbations respect the symmetries of the plane.}, $M>0$ is a parameter, and $r_+$ is the largest root of the equation $kr +l^{-2} r^3-2M=0$. The spacetime can be continued in the usual way across the horizon at $r_+$. These spacetimes have been extensively studied in the physics literature due to the conjectured correspondence between gravitational systems with negative cosmological constant, and conformal field theories \cite{maldacena_large_1999, Polchinski:2010hw}

The choice $k=1$ corresponds to a spherical horizon, $k=0$ to a toroidal horizon, and $k=-1$ to a horizon with higher genus. In all cases $\mi$ inherits the geometry of the horizon. In this paper, we shall address the stability of the $k=0$ spacetime, for which $\Sigma$ is a torus and $d\sigma_0$ is the flat metric. The main results we shall establish are summarised as:
\begin{thm}
For a range of $\kappa$, the $k=0$ AdS-Schwarzschild is stable as a solution to the Einstein--Klein-Gordon system with homogeneous Neumann or Dirichlet boundary conditions for $\psi$, against small perturbations respecting the toroidal symmetry. The maximal development of a perturbed initial data set contains an asymptoticaly AdS region, bounded by an event horizon, in which the solution settles down to an AdS-Schwarzschild solution exponentially quickly with respect to an appropriate coordinate.
\end{thm}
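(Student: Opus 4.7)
The plan is to reduce the system to a $1{+}1$-dimensional initial--boundary value problem by using the $\mathbb{T}^2$-symmetry, then set up a bootstrap argument that propagates exponential decay for the Klein--Gordon field and simultaneously controls the geometry. First I would introduce double-null coordinates $(u,v)$ regular across the future event horizon and extending up to $\mi$, writing
\[
g = -\Omega^2 \,\mathrm{d}u\,\mathrm{d}v + r^2\,d\sigma_0^2,
\]
with the toroidal coordinates suppressed. The reduced Einstein equations then consist of Raychaudhuri equations along each null direction, a wave equation for $r$, and the Klein--Gordon equation for $\psi$, with AdS-Schwarzschild serving as the background reference solution. I would fix residual gauge by imposing the standard Bondi-type normalisation at $\mi$ and pinning the ingoing null coordinate on the horizon.

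Next I would handle the boundary conditions. Near $\mi$ the Klein--Gordon field admits the asymptotic expansion $\psi \sim \psi_-\,r^{-(3-\kappa)/2} + \psi_+\,r^{-(3+\kappa)/2}$, and we impose either $\psi_-=0$ (Dirichlet) or $\psi_+=0$ (Neumann). Local well-posedness for the coupled system in the appropriate weighted spaces follows from the machinery developed for the spherical case by Holzegel--Smulevici and the second author; the only modification is to replace the spherical reference metric by the flat torus, which changes inessential angular terms but not the analytic structure. For the Neumann problem I would introduce the renormalised mass $\vpn$ anticipated by the macro in the preamble, defined by subtracting from the Hawking mass the divergent contribution generated by the slow $r^{-(3-\kappa)/2}$ decay of $\psi$, and verify that $\vpn$ satisfies a finite, monotonic transport equation along null directions.

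The core of the argument is a continuity/bootstrap scheme in retarded time $u$, in which one assumes on some interval $[u_0,u_\ast]$ that a master norm combining
\begin{itemize}
\item twisted energies for $\psi$ through outgoing null hypersurfaces,
\item the deviation of $r$ and $\Omega^2$ from their AdS-Schwarzschild values,
\item the renormalised mass $\vpn$ (or Hawking mass in the Dirichlet case),
\end{itemize}
decays like $e^{-\delta u}$ for some small $\delta>0$, and then improves the constant. The energy estimate for $\psi$ uses the stationary Killing field $\pa_t$ away from $\mh$ together with a red-shift vector field \`a la Dafermos--Rodnianski near $\mh$; in the toroidal setting the reduced null geodesic flow exhibits \emph{no stable trapping}, so the resulting integrated local-energy-decay estimate loses no derivatives, which is precisely what allows the upgrade from polynomial to exponential decay (in contrast to the spherical case). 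The geometric quantities are then recovered by integrating the Raychaudhuri and transport equations against the now-controlled stress-energy, with $\vpn$ providing the essential coercivity in the Neumann case.

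The main obstacle I anticipate is the Neumann sector. There the leading Klein--Gordon coefficient $\psi_-$ genuinely sources the conformal metric at $\mi$, the usual Hawking mass is divergent, and the stress-energy flux through outgoing null cones has borderline decay. Making the renormalisation rigorous --- showing that $\vpn$ stays bounded, that its evolution decouples cleanly from the divergent pieces, and that the boundary terms generated when integrating by parts against the twisted multiplier are controlled by the bootstrap assumptions --- is where the novelty of the argument lies. Closing the bootstrap then requires choosing $\delta$ small enough (depending on $\kappa$ and on the mass $M$ of the background) so that the nonlinear error terms coming from the quadratic structure of $T_{\mn}$ and from the deviation of $\Omega^2$ from its background value are absorbable, which dictates the admissible range of $\kappa$ stated in the theorem.
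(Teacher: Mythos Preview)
Your proposal identifies most of the correct ingredients (symmetry reduction to $1{+}1$ double-null, twisted energies, red-shift near $\mh$, a renormalised Hawking mass for the Neumann problem, absence of stable trapping in the toroidal setting), and this aligns with the paper. However, the \emph{architecture} of your argument differs from the paper's in a way that matters, and there is one point that looks like a genuine slip.

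\textbf{Structure of the bootstrap.} You propose to bootstrap directly on exponential decay $e^{-\delta u}$ of a master norm and improve the constant. The paper does \emph{not} do this. Instead it cleanly separates orbital stability from asymptotic stability. The orbital-stability bootstrap is on the pointwise bound $|r^{3/2-\kappa}\psi|<b$ alone; the improvement comes from the \emph{monotonicity} of the final renormalised Hawking mass $\varpi$ (decreasing in $u$, increasing in $v$ in the regular region), which by itself already yields a degenerate $\Hu^1_d$ energy bound with no decay assumption whatsoever. Red-shift then upgrades this to a non-degenerate $\Hu^1$ bound, and a Sobolev inequality closes the pointwise bootstrap. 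Only \emph{after} global existence and uniform bounds are secured does the paper turn to decay, obtaining it from a Morawetz/ILED estimate followed by a pigeonhole argument. Your scheme of assuming decay and improving it conflates these two steps; it can be made to work in some settings, but here it would force you to prove global existence and decay simultaneously, and you would lose the clean leverage that the monotonicity of $\varpi$ provides. In particular, the paper's argument runs at the $H^1$ level precisely because the Hawking-mass monotonicity replaces the need to bootstrap on decay rates.

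\textbf{Decay variable.} You bootstrap in $u$ and seek $e^{-\delta u}$. In the paper's gauge (regular across $\mh$, $\chi|_\mi=\tfrac12$), the late-time parameter on the black-hole exterior is the advanced time $v$, and the decay statement is $\mc{F}(v)\lesssim e^{-\alpha v}$. Along the horizon $u=u_\mh$ is fixed while $v\to\infty$, so decay in $u$ is not the right formulation.

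\textbf{Minor point.} Your asymptotic expansion has the wrong exponents: with $a=\tfrac12\kappa^2-\tfrac98$ the two branches are $r^{-3/2\pm\kappa}$, not $r^{-(3\pm\kappa)/2}$.
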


This may be thought of as an extension of the results of Holzegel-Smulevici from the spherical to toroidal setting. There are three key ways in which our results differ from \cite{holzegel_stability_2013}. Firstly, the change from $k=1$ to $k=0$ creates some modifications to the equations we study. These modifications can typically be treated as error terms and do not cause any significant analytical difficulty. Secondly, we consider the problem with Neumann boundary conditions for the Klein-Gordon field. This is a much more profound change, since the standard Klein-Gordon energy diverges for perturbations obeying Neumann boundary conditions. In particular, the perturbations we consider are not `small' in the sense of \cite{holzegel_stability_2013}, indeed generically the perturbations are of infinite size in the norms they consider. In order to deal with this, we introduce a renormalised Hawking mass and establish that it satisfies a monotonicity property. This both allows us to consider the more general boundary conditions as well as streamlining some of the arguments. Finally, as a consequence of the improved local existence theory of \cite{holzegel_einstein-klein-gordon-ads_2013} we are able to work at the $H^1$ level of regularity in contrast to the $H^2$ regularity assumed in Holzegel-Smulevici. This of course also allows us to consider perturbations that are large in comparison to those of \cite{holzegel_stability_2013}.

In contrast to the situation for spherically symmetric AdS spacetimes, it is possible to have non-trivial vacuum perturbations which correspond to $r-$dependent deformations of the metric of the toroidal part of the $k=0$ AdS-Schwarzschild black hole. Perturbations similar to this are considered in the physics literature \cite{Chesler:2008hg, Myers:2017sxr}. The enlarged symmetry class within which this places the AdS-Schwarzschild black hole includes the AdS-Soliton metrics, which play an analogous role to the AdS spacetime in the spherically symmetric setting: i.e.\ they are complete vacuum spacetimes with no horizons.  We show that for perturbations such that the tori of symmetry are rectangular, the resulting evolution equations are equivalent to the Einstein--Klein-Gordon problem considered above, with Dirichlet boundary conditions for the scalar field. As a consequence, we establish asymptotic stability for a class of AdS black holes against non-trivial \emph{vacuum} perturbations. 

The paper proceeds as follows, in \S\ref{EKGSection} we reduce the Einstein--Klein-Gordon equations (\ref{EKG}) to a system of PDE by making a symmetry ansatz consistent with the toroidal symmetry and fixing double-null coordinates. We also introduce the renormalised Hawking mass, which will prove crucial in the analysis of the system. In \S \ref{WPS} we give a brief treatment of the wellposedness of the reduced system of equations, and in \S\ref{GU} discuss the issue of geometric uniqueness for this problem, as well as establishing the extension principles that we require in \S\ref{EP}. In \S\ref{EKGSection}--\S\ref{EP}, where we are concerned mainly with setting the problem up, many of the results are minor modifications of previous results known in the spherically symmetric setting. Where this is the case, for brevity we do not provide all of the details of the proofs. 

In \S\ref{IDS} we discuss the class of perturbations that we shall consider, in particular we introduce the norms with respect to which we assume smallness. In \S \ref{OSS} we establish the first main claim of Theorem \ref{MT}, which may be thought of as an orbital stability result, namely that the maximal development contains a region qualitatively similar to the exterior region of the AdS-Schwarzschild black hole. The key to this result is a bootstrap argument making use of the monotonicity of the renormalised Hawking mass. In Section \S \ref{AS} we complete the proof of the main theorem by establishing asymptotic stability, i.e. that the perturbed spacetime asymptotically approaches the AdS-Schwarzschild spacetime exponentially quickly in a suitable sense. We conclude with a discussion of the vacuum case.

 \section{The Einstein--Klein-Gordon system}\label{EKGSection}
 In this section we introduce the symmetry class of metrics we consider and reduce the Einstein--Klein-Gordon system under this ansatz. We also discuss the toroidal AdS Schwarzschild spacetime.
 
%The Einstein--Klein-Gordon system in an asymptotically anti de-Sitter space time is given by 
%\begin{equation}\label{EKG}
%\begin{split}
%R_{\mu\nu}-\hf g_{\mu\nu}R - \frac{3}{l^2}g_{\mu\nu} &= 8\pi T_{\mu\nu},\\
%\Box_g \psi - \frac{2a}{l^2}\psi &=0,\\
%\gr_\mu\psi\gr_\nu\psi - \hf g_{\mu\nu}\gr_\sigma\psi\gr^\sigma\psi - g_{\mu\nu}\frac{a}{l^2}\psi^2 &= T_{\mu\nu}. 
%\end{split}
%\end{equation}
%Here $g$, $\psi$ are the Lorentzian metric and Klein-Gordon field which we are solving for respectively. $R_{\mu\nu}$ is the Ricci curvature, $R$ the scalar curvature, $l$ the AdS radius related the cosmological constant of the system $\Lambda$ through the relationship  $\Lambda = \f{-3}{l^2}$, $a$ is a negative constant which can be thought of as the mass of the Klein-Gordon equation. We define a parameter related to $a$ given by $\kappa= \sqrt{\f{9}{4}+2a}$, we restrict that $\kappa\in(0,1)$.
\subsection{System reduction}~\\ 
Recall that we seek solutions to (\ref{EKG}). We label the spacetime coordinates by $(u,v,x,y)$. From \cite{gowdy_vacuum_1974}, it is known that imposing a global toroidal symmetry on the spacetime enforces the product form $\mf=\mc{Q}^+\times \mathfrak{T}^2$. Where $\mc{Q}^+$ is a two dimensional Lorentzian manifold, and $\mathfrak{T}^2= \mathbb{R}^2/\mathbb{Z}^2$. Furthermore, we may assume the metric takes the form
\begin{equation}
\begin{split}
g = &-\Omega^2(u,v)dudv+r^2(u,v)\left(A(u,v)dx+B(u,v)dy \right)^2+(B(u,v)dx+C(u,v)dy)^2,
\end{split}
\end{equation}
where $AC-B^2=1$, and $A+C>0$.\\ We shall initially consider the case $A=C=1$ and $B=0$, to retain similarity to the spherical problem as in \cite{holzegel_einstein-klein-gordon-ads_2013}. This gives the torus the properties of being square and flat.  
\begin{defn} \label{C3SFTS}
	We say a spacetime has a square flat toroidal symmetry if it has topology $\mf=\mc{Q}^+\times \mathfrak{T}^2$, and can be equipped with a metric of the form 
	\begin{equation}\label{met}
	g = -\Omega^2(u,v)dudv+r^2(u,v)\left(dx^2+dy^2\right). 
	\end{equation}
\end{defn}
\begin{lem}
	For a metric of the form \eqref{met}, The system \eqref{EKG} reduces to
	\begin{equation}\label{EKG1}
	\pa_u\left(\frac{r_u}{\Omega^2} \right) = -4\pi r\frac{\left(\pa_u\psi\right)^2 }{\Omega^2},
	\end{equation}
	\begin{equation}\label{EKG2}
	\pa_v\left(\frac{r_v}{\Omega^2} \right) = -4\pi r\frac{\left(\pa_v\psi\right)^2}{\Omega^2},
	\end{equation}
	\begin{equation}\label{EKG3}
	r_{uv} = - \frac{r_ur_v}{r}+ \frac{2\pi a r}{l^2}\Omega^2\psi^2 - \frac{3}{4}\frac{r}{l^2}\Omega^2,
	\end{equation}
	\begin{equation}\label{EKG4}
	\left(\log\Omega\right)_{uv} = - 4\pi \pa_u\psi\pa_v\psi+\frac{r_ur_v}{r^2},
	\end{equation}
	\begin{equation}\label{EKG5}
	\pa_u\pa_v\psi =-\frac{r_u}{r}\psi_v-\frac{r_v}{r}\psi_u-\frac{\Omega^2a}{2l^2}\psi.
	\end{equation}
\end{lem}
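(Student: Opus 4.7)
My plan is to compute the relevant Ricci tensor components and wave operator directly for the ansatz \eqref{met}, match them against the Einstein and Klein-Gordon equations \eqref{EKG}, and read off \eqref{EKG1}--\eqref{EKG5}. First I would record the non-vanishing Christoffel symbols of $g$: the warped-product structure with base $(\mc{Q}^+,-\Omega^2 dudv)$ and fibre $(\mathfrak{T}^2,dx^2+dy^2)$ makes this short, giving in the base only $\Gamma^u_{uu}=2(\log\Omega)_u$ and $\Gamma^v_{vv}=2(\log\Omega)_v$, with mixed symbols $\Gamma^i_{jA}=(r_A/r)\delta^i_j$ and $\Gamma^u_{ij}=(2rr_v/\Omega^2)\delta_{ij}$, $\Gamma^v_{ij}=(2rr_u/\Omega^2)\delta_{ij}$. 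From these one obtains
\begin{equation*}
R_{uu}=-\f{2}{r}\bigl(r_{uu}-2 r_u(\log\Omega)_u\bigr),\qquad R_{vv}=-\f{2}{r}\bigl(r_{vv}-2 r_v(\log\Omega)_v\bigr),
\end{equation*}
together with expressions for $R_{uv}$ and $R_{xx}=R_{yy}$. The only structural departure from the spherically symmetric derivation of \cite{holzegel_einstein-klein-gordon-ads_2013} is that the transverse Gauss curvature contribution to $R_{ij}$ vanishes identically, since the torus is flat.

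Next I would use the trace-reversed Einstein equation, which with $\Lambda=-3/l^2$ and the stress tensor in \eqref{EKG} reads
\begin{equation*}
R_{\mu\nu}\;=\;8\pi\,\gr_\mu\psi\,\gr_\nu\psi+\Bigl(\f{8\pi a}{l^2}\psi^2-\f{3}{l^2}\Bigr)g_{\mu\nu}.
\end{equation*}
Because $g_{uu}=g_{vv}=0$, the $(u,u)$ and $(v,v)$ components collapse directly to \eqref{EKG1}--\eqref{EKG2} upon recognising $R_{uu}=-(2\Omega^2/r)\pa_u(r_u/\Omega^2)$, and similarly for $R_{vv}$. The $(x,x)$ component produces \eqref{EKG3}: the $g_{xx}=r^2$ factor generates the $(2\pi a r/l^2)\Omega^2\psi^2-(3r/4l^2)\Omega^2$ terms, while $R_{xx}$ contains only the $r_{uv}$ and $r_u r_v$ pieces, precisely because the flat transverse curvature contributes no $+k$ term. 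Finally the $(u,v)$ component yields \eqref{EKG4} after substituting \eqref{EKG3} to eliminate $r_{uv}$, with the surviving terms recombining into $(\log\Omega)_{uv}$ plus the $\gr_u\psi\,\gr_v\psi$ contribution.

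For \eqref{EKG5}, I would expand $\Box_g\psi=(-g)^{-1/2}\pa_\mu\bigl((-g)^{1/2}g^{\mu\nu}\pa_\nu\psi\bigr)$ using $\sqrt{-g}=\Omega^2 r^2/2$, $g^{uv}=-2/\Omega^2$, $g^{ii}=1/r^2$. Since $\psi=\psi(u,v)$ by the symmetry ansatz, only the cross terms survive and a short calculation gives $\Box_g\psi=-(4/\Omega^2)\bigl(\pa_u\pa_v\psi+(r_u/r)\pa_v\psi+(r_v/r)\pa_u\psi\bigr)$; equating with $(2a/l^2)\psi$ and clearing the prefactor produces \eqref{EKG5}. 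The main obstacle is purely algebraic bookkeeping---there is no conceptual subtlety beyond the observation that the flat transverse geometry removes the $+k$ curvature term present in the spherical case, and some care in repackaging the Raychaudhuri equations into the divergence form \eqref{EKG1}--\eqref{EKG2} that will drive the later monotonicity arguments via the renormalised Hawking mass.
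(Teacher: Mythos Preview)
Your proposal is correct and follows the same direct-computation approach as the paper, which merely sketches ``compute the Ricci curvature and evaluate both sides of \eqref{EKG}''; your version is in fact more explicit. The only cosmetic difference is that the paper attributes \eqref{EKG3} to the $uv$ component and \eqref{EKG4} to ``the other components'', whereas you extract \eqref{EKG3} from the $(x,x)$ component of the trace-reversed equation and then substitute it into the $(u,v)$ component to obtain \eqref{EKG4}---but these are equivalent rearrangements of the same linear system, not genuinely different arguments.
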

\begin{proof}
	To see this we need to compute the Ricci curvature and evaluate both sides of \eqref{EKG}. \eqref{EKG1} and \eqref{EKG2} are the $uu$ and $vv$ components respectively. \eqref{EKG3} comes from the $uv$ component and \eqref{EKG4}, \eqref{EKG5} follow from the other components. Conversely if $r, \Omega, \psi$ solve \eqref{EKG1}-\eqref{EKG5}, the metric \eqref{met} with Klein--Gordon field $\psi$ solves \eqref{EKG}.\\
\end{proof}
\subsection{Renormalised Hawking mass}~\\ 
Making a suitable modification to \cite{holzegel_einstein-klein-gordon-ads_2013}, \cite{holzegel_self-gravitating_2012}, and \cite{holzegel_stability_2013}, we define the first renormalised Hawking mass as 
\begin{equation}\label{DHM}
\varpi_1 = \f{2r_ur_vr}{\Omega^2} + \f{r^3}{2l^2}.
\end{equation}
Through equations \eqref{EKG1}-\eqref{EKG5} we see that, this quantity satisfies the transport equations:
\begin{equation}\label{HM1}
\pa_u\varpi_1 = -8\pi r^2\f{r_v}{\Omega^2}(\pa_u\psi)^2 + \f{4\pi r^2a}{l^2}r_u\psi^2,
\end{equation}
\begin{equation}\label{HM2}
\pa_v\varpi_1 = -8\pi r^2\f{r_u}{\Omega^2}(\pa_v\psi)^2 + \f{4\pi r^2a}{l^2}r_v\psi^2.
\end{equation}
We may replace some of the original system of equations with \eqref{HM1} and \eqref{HM2}. This follows from the following lemma (where we assume derivatives to be taken in a weak sense).
\begin{lem}
	Suppose \eqref{EKG3}, \eqref{EKG5}, \eqref{HM1} and \eqref{HM2}, hold (where $\Omega$ is defined through \eqref{DHM}). Then we have that \eqref{EKG1} and \eqref{EKG2} hold. Furthermore if \eqref{EKG3} can be differentiated in $u$, then \eqref{EKG4} holds.  	
\end{lem}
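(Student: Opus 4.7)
The plan is to split the proof into two independent computations corresponding to the two claims. The first is a short algebraic manipulation; the second is a longer bookkeeping calculation whose success ultimately reflects the Bianchi identity in double-null gauge.

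\emph{Part 1: recovering \eqref{EKG1}--\eqref{EKG2}.} I would differentiate the definition \eqref{DHM} in $v$ by the product rule:
\begin{equation*}
\partial_v \varpi_1 = \frac{2 r_u r_v^2}{\Omega^2} + \frac{2 r r_v r_{uv}}{\Omega^2} + 2 r r_u\, \partial_v\!\left(\frac{r_v}{\Omega^2}\right) + \frac{3 r^2 r_v}{2 l^2}.
\end{equation*}
Substituting \eqref{EKG3} for $r_{uv}$, the $\tfrac{2 r_u r_v^2}{\Omega^2}$ term cancels against a piece of $\tfrac{2 r r_v r_{uv}}{\Omega^2}$, the $\tfrac{3r^2 r_v}{2l^2}$ term cancels similarly, and only the $\psi^2$ piece survives, giving
\begin{equation*}
\partial_v \varpi_1 = \frac{4 \pi a r^2 r_v \psi^2}{l^2} + 2 r r_u\, \partial_v\!\left(\frac{r_v}{\Omega^2}\right).
\end{equation*}
Comparing with \eqref{HM2}, the $\psi^2$ terms cancel and one is left with $2 r r_u\, \partial_v(r_v/\Omega^2) = -\frac{8\pi r^2 r_u \psi_v^2}{\Omega^2}$. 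Dividing through by $2 r r_u$ on the open set where $r_u \neq 0$ (and extending to the closure via the weak interpretation) yields \eqref{EKG2}. Equation \eqref{EKG1} follows from the entirely symmetric computation using \eqref{HM1} and \eqref{EKG3}. Note that \eqref{EKG5} is not needed in this step.

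\emph{Part 2: recovering \eqref{EKG4}.} With \eqref{EKG1}--\eqref{EKG2} now available, I would use the equivalent form $\Omega^2 = 2 r r_u r_v/N$ of \eqref{DHM}, where $N := \varpi_1 - r^3/(2l^2)$, and compute $\partial_u \partial_v$ of $\log \Omega^2$ termwise:
\begin{equation*}
2 (\log \Omega)_{uv} = (\log r)_{uv} + (\log r_u)_{uv} + (\log r_v)_{uv} - (\log N)_{uv}.
\end{equation*}
The first term is $r_{uv}/r - r_u r_v/r^2$, and \eqref{EKG3} is used to rewrite $r_{uv}/r$. The middle terms produce $r_{uuv}$ and $r_{uvv}$; the hypothesis that \eqref{EKG3} may be differentiated in $u$ is used here in order to express these second derivatives via the $u$- and $v$-derivatives of \eqref{EKG3}. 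For the last term, $(\log N)_{uv}$ is computed by differentiating \eqref{HM1} in $v$ (or \eqref{HM2} in $u$). Cross terms involving $\psi_{uv}$ that are generated by this differentiation are eliminated using the Klein-Gordon equation \eqref{EKG5}. After collecting, all contributions proportional to $\Omega^2 \psi^2$, to $\Omega^2$, and to $r_u r_v/r^2$ must conspire to cancel, leaving exactly $-8\pi \psi_u \psi_v + 2 r_u r_v/r^2$, which is twice \eqref{EKG4}.

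\emph{Main obstacle.} Part 1 is essentially a single line of algebra; the substantive work is the bookkeeping in Part 2. The reason the cancellations close is structural: \eqref{EKG4} is the angular component of the Einstein equations in this symmetry class, and the twice-contracted Bianchi identity guarantees it is a consequence of the remaining Einstein equations and the matter wave equation. The delicate issue is organising the expansion so that this redundancy is manifest and one is not left with stray terms of the form $\Omega^2 \psi^2$, $\Omega^2$, or $\psi_{uv}$ after simplification.
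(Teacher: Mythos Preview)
The paper states this lemma without proof, so there is no argument in the text to compare against; I assess your attempt on its own merits.

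Your Part 1 is correct and exactly the intended computation: differentiating \eqref{DHM}, substituting \eqref{EKG3}, and comparing with \eqref{HM2} isolates $\partial_v(r_v/\Omega^2)$ and yields \eqref{EKG2} (with \eqref{EKG1} symmetric). Your observation that \eqref{EKG5} is not needed here is also correct.

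For Part 2 your strategy is valid, but there is one point where the write-up does not quite match the stated hypothesis. You say you will express both $r_{uuv}$ and $r_{uvv}$ via the $u$- and $v$-derivatives of \eqref{EKG3}; however the lemma only assumes differentiability of \eqref{EKG3} in $u$, and in the regularity class considered later in the paper $r_{vv}$ is not assumed to exist. In your logarithmic decomposition the potentially problematic $(\log r_v)_{uv}$ term does in fact cancel against a piece of $(\log N)_{uv}$ once \eqref{HM1} is used, so only $r_{uuv}$ is genuinely required---but this cancellation should be made explicit rather than asserted.

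A slightly more transparent route that makes the asymmetric hypothesis manifest: from \eqref{EKG1} (already established in Part 1) write
\[
r_{uu} - 2r_u(\log\Omega)_u = -4\pi r\,\psi_u^2,
\]
and differentiate this in $v$; separately compute $r_{uuv}=\partial_u r_{uv}$ from \eqref{EKG3}, substituting $(\Omega^2)_u=2\Omega^2(\log\Omega)_u$ and the same expression for $r_{uu}$. Equating the two expressions for $r_{uuv}$, the $(\log\Omega)_u$ terms cancel against $2r_{uv}(\log\Omega)_u$ (using \eqref{EKG3} again), and after inserting \eqref{EKG5} for $\psi_{uv}$ one obtains
\[
(\log\Omega)_{uv} = -4\pi\,\psi_u\psi_v + \frac{r_ur_v}{r^2}
\]
directly. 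This uses exactly $\partial_u$\eqref{EKG3}, \eqref{EKG1}, \eqref{EKG5} and \eqref{EKG3} itself, matching the lemma's hypothesis precisely. Your Bianchi-identity remark is the correct structural explanation for why the bookkeeping closes.
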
 
We may also express the wave equation for $r$ in terms of $\varpi_1$ as
\begin{equation}
r_{uv} = -\f{\Omega^2}{2}\left(\f{\varpi_1}{r^2}+\f{r}{l^2} \right) + \f{2\pi a}{l^2} \Omega^2\psi^2. 
\end{equation}
Similar to the situation in \cite{holzegel_stability_2013}, \eqref{HM1}, and \eqref{HM2} imply that we can think of the Hawking mass as the potential for a weighted $H^1$ energy. \\
\subsection{Toroidal AdS Schwarzschild solution}~\\ 
In standard Schwarzschildean coordinates, the toroidal AdS Schwarzschild spacetime is the manifold $\mf=\mathbb{R}\times (r_{+},\infty) \times \mathfrak{T}^2$, with Lorentzian metric
\begin{equation}\label{SM}
g = -\left(\f{-2M}{r}+\f{r^2}{l^2} \right)dt^2 +\left(\f{-2M}{r}+\f{r^2}{l^2} \right)^{-1}dr^2+r^2\left(dx^2+dy^2\right), 
\end{equation}
where $M,l>0$, and $r_+=\left(2Ml^2 \right)^{\f{1}{3}}$. This spacetime is asymptotically AdS, and as such a timelike boundary formally given as $\mi=\{r=\infty\}$ can be attached to the manifold.
As is typical in these coordinates the metric \eqref{SM} becomes singular on the hypersurface $\mh = \{r=r_{+}\}$. As such these coordinates are not ideally suited for analysis at the horizon. To circumvent this one defines a tortoise coordinate 
\begin{equation}\label{CH1KM}
r_*(r) := \int_{2r_{+}}^r\left(\f{-2M}{X}+\f{X^2}{l^2}\right)^{-1} dX,
\end{equation}
and can find a maximal analytic extension of the manifold by following a Kruskal-style argument. The full details of this can be found in \cite{holzegel_stability_2013} with minor modifications to the toroidal setting. \\ \\
In this paper we will be working with a double null coordinate system, the standard choice for \eqref{SM} is the Eddington Finkelstein chart, defined through the transformation
\begin{equation}
u = t-r_*,\quad v= t+r_*.
\end{equation}
The metric is then
\begin{equation}\label{EFM}
g = -\left(\f{-2M}{r}+\f{r^2}{l^2} \right)dudv+r^2\left(dx^2+dy^2\right),
\end{equation}
where $r$ is now a function of $u,v$ satisfying the following differential relations 
\begin{equation}
-2r_u=2r_v=\left(\f{-2M}{r}+\f{r^2}{l^2} \right).
\end{equation}
However these coordinates also degenerate at the event horizon. To extend through this degeneration we can fix a hypersurface $\{v=v_0>0\}$ and denote its past intersection with $\mi$ by $(u_0,v_0$). We now make the $u$ coordinate transformation along this surface given by the solution to
\begin{equation}
\f{d\hat{u}}{du} = \frac{\left(\f{-2M}{r}+\f{r^2}{l^2} \right)}{\frac{r^2}{l^2}},
\end{equation}
with $\hat{u}(u_0)=u_0$.  In the coordinate $\hat{u}$, the metric is now regular at $r=r_{+}$, and we may extend the spacetime to cover the region shown in the Penrose diagram \\
\begin{figure}[h!]
	\center\includegraphics[scale=0.7]{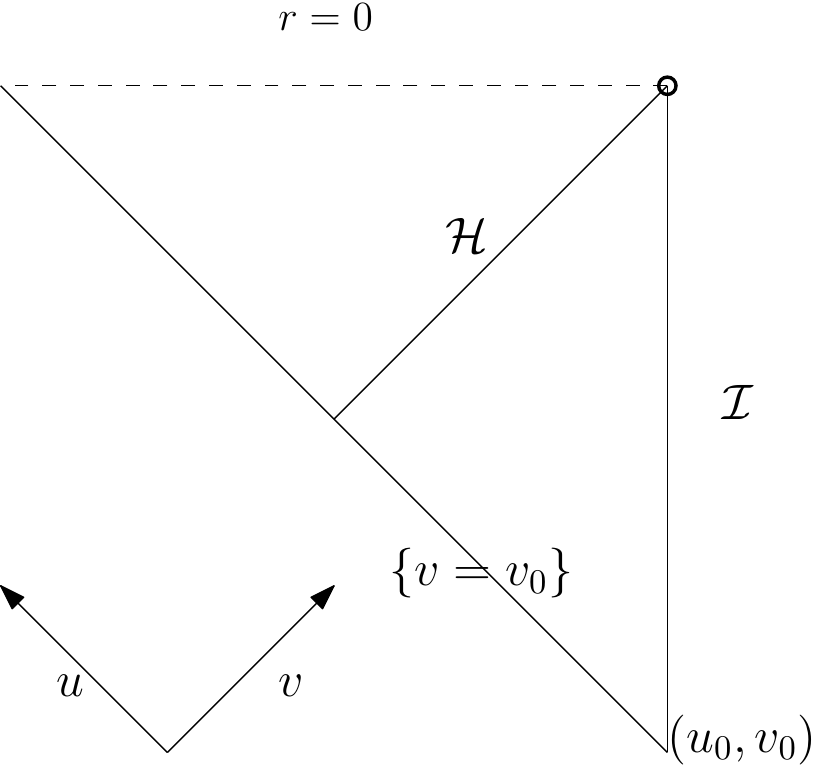}
	\caption{Penrose diagram for regularised Eddington Finkelstein chart.}
\end{figure}\\
Along this surface, $r$ satisfies:
\begin{equation}\label{GC}
-2r_{\hat{u}}=\frac{r^2}{l^2}.%\quad 2r_v=\left(\f{-2M}{r}+\f{r^2}{l^2} \right).
\end{equation} 
Fixing this relation is equivalent to fixing the $u$ coordinate along the ray $\{v=v_0\}$.
\subsection{The Klein-Gordon mass}
Throughout this paper three quantities related to the Klein-Gordon mass are used fairly interchangeably. This is largely to clean up the algebra. In order improve clarity we collect them here
\begin{itemize}
	\item $a$, denotes the Klein-Gordon mass,
	\item $g$, denotes the radial decay of the field, $g = -\f{3}{2}+\sqrt{\f{9}{4}+2a}$,
	\item $\kappa$, denotes the radical part of $g$, $\kappa =\sqrt{\f{9}{4}+2a}.$ 
\end{itemize}
We now collect some of the key values of the quantities and how they relate in a table below
\begin{center} 
	$\begin{array}{|c|c|c|c|}
	\hline
	&\text{BF Lower Bound} &  \text{ Conformal}& \text{BF Upper Bound}  \\
	\hline
	a & -\f{9}{8}  & -1 &-\f{5}{8}\\
	\hline
	\kappa & 0 &\hf &1\\
	\hline
	g & -\f{3}{2}&-1&-\hf\\
	\hline
	\end{array}$
\end{center}
\section{Wellposedness of the initial-boundary-value problem}
%\begin{rem}
In this section we briefly discuss the well-posedness of reduced Einstein--Klein-Gordon system introduced above. This follows closely the discussion in \cite{holzegel_einstein-klein-gordon-ads_2013}. The variables in \eqref{EKG1}-\eqref{EKG5}, are inconvenient to analyse the system. We anticipate the behaviour $\Omega^2\sim r^2,\psi \sim r^{-\f{3}{2}+\kappa}$ at the conformal boundary (where we expect $r\to \infty$). Furthermore when we introduce Neumann boundary conditions to the problem, the variable $\varpi_1$ will no longer form a potential for a finite $H^1$ energy. The quantity will diverge as $r\to \infty$. To rectify these issues we proceed by solving an equivalent system that has undergone a renormalisation scheme. 
%\end{rem}~\\
\subsection{Renormalised system}~\\ 
Motivated by \cite{holzegel_einstein-klein-gordon-ads_2013}, and the linear theory of \cite{dunn_kleingordon_2016}, we introduce the twisted derivative
\begin{equation}
\grt_\mu \psi = f\gr_\mu\left(\f{\psi}{f}\right), 
\end{equation} 
for a $C^1$ function $f$. We make the natural choice of twisting function:
\begin{equation}
r^g, \text{ where } g= -\f{3}{2} + \kappa.
\end{equation}
We define the second renormalised Hawking mass to be
\begin{equation}
\varpi_2 = \varpi_1 - 2\pi g\f{r^3}{l^2}\psi^2.
\end{equation}
The latter term has been introduced to cancel a divergence in $\varpi_1$ that appears as the boundary is approached.
\begin{lem}\label{RenLem}
	Define the variables 
	\begin{equation}
	r=\f{1}{\rt}, \quad \varpi_1 = \varpi_2 + 2\pi g \f{r^3}{l^2}\psi^2, 
	\end{equation} 
	then \eqref{EKG} is equivalent to
	\begin{equation}\label{REKG1}
	\pa_u\varpi_2=  -8\pi r^2\frac{r_v}{\Omega^2}(\grt_u\psi)^2 - 8\pi g\left(\varpi_2+2\pi g\f{r^3}{l^2}\psi^2 \right) \psi\grt_u\psi-4\pi\psi^2r_ug^2\f{\left( \varpi_2+2\pi g\f{r^3}{l^2}\psi^2\right) }{r},
	\end{equation}
	\begin{equation}\label{REKG2}
	\pa_v\varpi_2=-8\pi r^2\frac{r_u}{\Omega^2}(\grt_v\psi)^2 - 8\pi g\left(\varpi_2+2\pi g\f{r^3}{l^2}\psi^2 \right) \psi\grt_v\psi-4\pi\psi^2r_vg^2\f{\left( \varpi_2+2\pi g\f{r^3}{l^2}\psi^2\right) }{r},
	\end{equation}
	\begin{equation}\label{REKG3}
	\tilde{r}_{uv} = \Omega^2\rt^2\left(\f{3}{2}\varpi_2\rt^2-\f{2\pi g^2}{l^2\rt}\psi^2 \right),
	\end{equation}
	\begin{equation}\label{REKG4}
	\begin{split}
	\pa_v\left(r\grt_u\psi \right)&= \left( \kappa-\hf\right) r_u\grt_v\psi -\f{\Omega^2}{4}rV\psi,
	\end{split}
	\end{equation}
	with auxiliary variables
	\begin{equation}\label{auxvar}
	\Omega^2=-\f{4r^4\rt_u\rt_v}{\mu_1}, \quad \mu_1= -\f{2\varpi_1}{r}+\f{r^2}{l^2}, \quad 	V = \f{2g^2}{r^3}\varpi_1 +\f{8\pi ag}{l^2}\psi^2.
	\end{equation}
	we remark that equation \eqref{REKG4} can also be expressed as
	\begin{equation}\label{REKG5}
	\pa_u\left(r\grt_v\psi \right)= \left( \kappa-\hf\right) r_v\grt_u\psi -\f{\Omega^2}{4}rV\psi.
	\end{equation}
\end{lem}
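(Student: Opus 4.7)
The proof is essentially a direct computation verifying that each equation in the renormalised system follows from the original Einstein--Klein-Gordon reduction, together with the reverse direction which is symmetric in nature. The plan is to dispose of the auxiliary relations first and then handle the four substantive equations in turn.

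The auxiliary identity for $\Omega^2$ is purely algebraic: solving \eqref{DHM} for $\Omega^2$ yields $\Omega^2 = 2r_ur_vr/(\varpi_1 - r^3/(2l^2))$; then using $\tilde{r}_u = -r_u/r^2$ and $\tilde{r}_v = -r_v/r^2$ one rewrites $r_ur_v = r^4\tilde{r}_u\tilde{r}_v$ and factors the denominator as $-r\mu_1/2$ to obtain the claimed form. To derive the transport equations \eqref{REKG1}--\eqref{REKG2} for $\varpi_2$, I would differentiate the defining relation $\varpi_2 = \varpi_1 - 2\pi g r^3\psi^2/l^2$ in $u$ and substitute \eqref{HM1} for $\partial_u\varpi_1$. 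This produces terms in $(\psi_u)^2$, $r_u\psi\psi_u$ and $r_u\psi^2$; rewriting $\psi_u = \tilde{\nabla}_u\psi + g(r_u/r)\psi$ and reassembling, the $(\tilde{\nabla}_u\psi)^2$ coefficient is unchanged, the cross term $\psi\tilde{\nabla}_u\psi$ must pick up the factor $\varpi_2 + 2\pi g r^3\psi^2/l^2 = \varpi_1$, and the remaining $\psi^2 r_u$ coefficient should collapse to $-4\pi g^2 \varpi_1/r$. The $v$ equation is completely analogous.

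For the wave equation \eqref{REKG3} I would start from the reformulation $r_{uv} = -(\Omega^2/2)(\varpi_1/r^2 + r/l^2) + (2\pi a/l^2)\Omega^2\psi^2$ already derived at the end of Section 2.2, compute $\tilde{r}_{uv} = -r_{uv}/r^2 + 2r_ur_v/r^3$, and use \eqref{DHM} to write $r_ur_v/\Omega^2 = (\varpi_1 - r^3/(2l^2))/(2r)$. The $r/l^2$ contributions cancel cleanly, leaving a term proportional to $\varpi_1\tilde{r}^2$ and a term proportional to $\psi^2$ whose coefficient is $2\pi a/l^2 - 2\pi g\cdot 3/(2l^2)\cdot r^{-1}\cdot\ldots$; here the identity $2a = g^2 + 3g$ (immediate from $g = -3/2 + \kappa$ and $a = \kappa^2/2 - 9/8$) collapses the combination to $-2\pi g^2\psi^2/(l^2\tilde{r})$ after the substitution $\varpi_1 = \varpi_2 + 2\pi g r^3\psi^2/l^2$. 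Finally, for the twisted wave equations \eqref{REKG4}--\eqref{REKG5} I would expand $\partial_v(r\tilde{\nabla}_u\psi)$ using $\tilde{\nabla}_u\psi = \psi_u - g(r_u/r)\psi$, substitute \eqref{EKG5} for $r\psi_{uv}$ and \eqref{EKG3} for $r_{uv}$, then regroup the $\psi_v$ terms into $\tilde{\nabla}_v\psi$ with coefficient $(\kappa - 1/2)r_u$; the remaining terms collect into a single multiple of $\Omega^2 r\psi$ whose coefficient is precisely $-V/4$ for $V$ as in \eqref{auxvar}, again using $2a = g^2 + 3g$ to match the $\psi^3$ piece against the $V$ defined in \eqref{auxvar}.

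The main obstacle is purely bookkeeping: the calculation involves many terms of the schematic shapes $r_u\psi^2$, $\psi\psi_u$, $\psi\tilde{\nabla}_u\psi$ and $r_v\psi^2$ which must cancel or combine in precisely the right way. The key structural input that makes everything work is the indicial identity $g^2 + 3g = 2a$, which is exactly the statement that $r^g$ is the correct twisting function for the Klein-Gordon operator at infinity; this identity is what selects the cubic-in-$\psi$ term in the definition of $\varpi_2$ and is what makes the potential $V$ in \eqref{auxvar} close. For the converse direction, given a solution of \eqref{REKG1}--\eqref{REKG5} one defines $\varpi_1$ via the stated relation and $r = 1/\tilde{r}$, and the same manipulations read backwards recover \eqref{EKG1}--\eqref{EKG5}.
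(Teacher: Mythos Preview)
Your proposal is correct and follows exactly the approach the paper indicates: the paper's own proof is simply the one-sentence remark that this is a straightforward (if lengthy) calculation, differentiating the new variables and substituting \eqref{EKG1}--\eqref{EKG5} and the definition of the twisted derivative. Your write-up supplies the details the paper omits, and in particular your identification of the indicial relation $g^2+3g=2a$ as the structural input that makes the $r_u\psi^2$ and $\psi^3$ terms close is precisely the point of the computation.
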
 
\begin{proof}
	This is a straightforward, if slightly lengthy calculation. We study the derivatives of the variables $\rt$, $\varpi_2$, and use the equations $\eqref{EKG1}$ - $\eqref{EKG5}$ to simplify. For the Klein-Gordon equation we substitute the definition of the twisted derivative and simplify. 
\end{proof}
\textbf{The domain} \newline
We seek to construct a solution of the equations \eqref{EKG1}-\eqref{EKG5} in a small triangular domain of the form
\begin{equation}
\Delta_{\delta,u_0} := \{(u,v)\in \mathbb{R}^2: u_0\le v\le u_0+\delta, v<u\le u_0+\delta \},
\end{equation}
We shall impose initial conditions on the ray $\{v=u_0\}$ and boundary conditions on
\begin{equation}
\mi = \overline{\Delta}_{\delta,u_0} \backslash \Delta_{\delta,u_0} = \{(u,v)\in\overline{\Delta}_{\delta,u_0}:u=v\}.
\end{equation}\\
\begin{figure}[h!]
	\begin{center}	
		\includegraphics[scale=0.7]{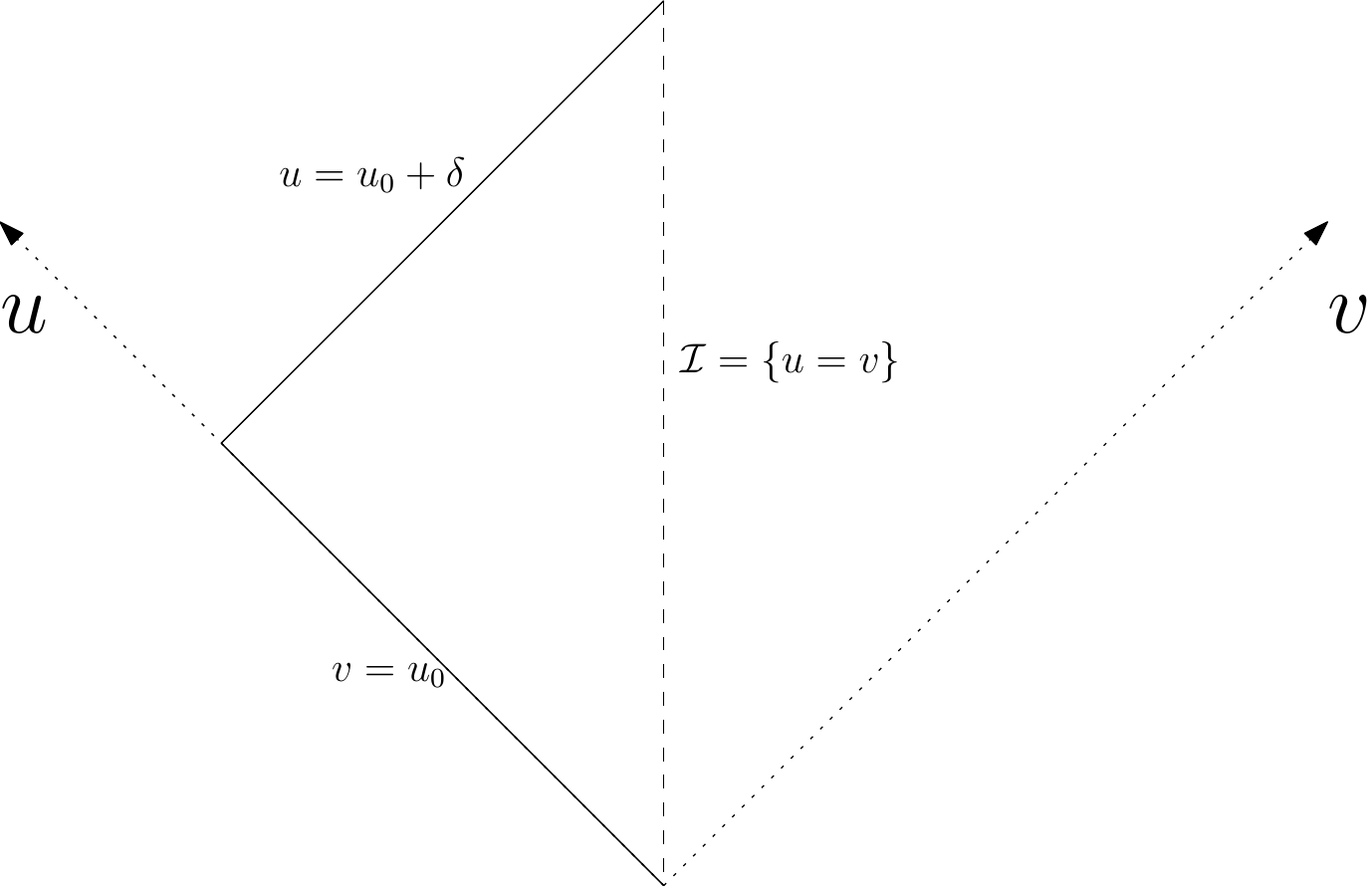}
		\caption{Diagram of $ \Delta_{\delta,u_0}$}
	\end{center}
\end{figure}
\\
It is sometimes useful to refer to the coordinates
\begin{equation}
t = \f{u+v}{2}, \qquad \rho = \frac{u-v}{2}
\end{equation}
Note that $\mi = \{\rho=0\}$ is parameterised by $t$ and further that the vector field
\begin{equation}
T= \pa_u+\pa_v,
\end{equation}
is tangent to $\mi$.
\subsection{Initial data and boundary conditions}\label{IaBd}
\subsubsection{Initial data}
\begin{defn}\label{CH3IDS}
	Let $\mc{N}=(u_0,u_1]$ be a real interval. Then a \textit{free data} set is a pair of functions $(\overline{\rt},\overline{\psi})\in C^2(\mc{N})\times C^1(\mc{N})$ such that:
	\begin{itemize}
		\item $\overline{\rt}>0$ and $\overline{\rt}_u>0$ in $\mc{N}$, as well as $\lim_{u\to u_0}\overline{\rt}_u = \f{1}{2l^2}$ and $\lim_{u\to u_0} \overline{\rt}_{uu}=0$.
		\item We have the bounds\\
		\begin{equation}\label{WPID1}
		\int_{u_0}^{u_1}\left[\left( \overline{\grt}_u\overline{\psi}\right)^2+\overline{\psi}^2  \right](u-u_0)^{-2}du < \infty, 
		\end{equation}
  \begin{equation}\label{WPID2}
		\sup_{\mc{N}}\abs{\overline{\psi}\cdot\overline{\rt}^{-\f{3}{2}+\kappa}}+ \sup_{\mc{N}}\abs{r^\hf\overline{\grt}_u\overline{\psi}}<\infty.
		\end{equation}
	\end{itemize}
	Here $\overline{\grt}_u$ is a twisted derivative with twisting function: $\overline{f} = \left( \hf(u-u_0) \right) ^{\f{3}{2}-\kappa}$.
\end{defn} 
With a free data set we are now able to construct a \textit{complete initial data} set $(\overline{\rt},\overline{\psi},\overline{\varpi_2},\overline{\rt_v})$.\\
Let $M_N>0,$ we define $\overline{\varpi_2}$ as the unique $C^1(\mc{N})$ solution to:
\begin{equation}
\begin{split}
\pa_u\overline{\varpi_2}=  &2\pi \f{\overline{r}^2}{\overline{r}_u}\left(-\f{2\overline{\varpi_2}}{\overline{r}}+\f{\overline{r}^2}{l^2} - 4\pi g\f{\overline{r}^2}{l^2}\overline{\psi}^2 \right) (\overline{\grt}_u\overline{\psi})^2 - 8\pi g\left(\overline{\varpi_2}+2\pi g\f{\overline{r}^3}{l^2}\overline{\psi}^2 \right) \overline{\psi}\overline{\grt}_u\overline{\psi}\\&-4\pi\overline{\psi}^2\overline{r}_ug^2\f{\left( \overline{\varpi_2}+2\pi g\f{\overline{r}^3}{l^2}\overline{\psi}^2\right) }{\overline{r}},
\end{split}
\end{equation}
with boundary condition
\begin{equation}
\lim_{u\to u_0} \overline{\varpi_2}= M_N.
\end{equation}
We define $\overline{\rt_v}$ is a similar way, as the unique $C^1(\mc{N})$ solution of the ODE
\begin{equation}
\pa_u\overline{\tilde{r}_{v}} = \f{\overline{\rt}^2\overline{\rt}_u\overline{\rt_v}}{-\f{2\overline{\varpi_2}}{\overline{r}}+\f{\overline{r}^2}{l^2}-4\pi g\f{\overline{r}^2}{l^2}\overline{\psi}^2 }\left(\f{3}{2}\overline{\varpi_2}\overline{\rt}^2-\f{2\pi g^2}{l^2\overline{\rt}}\overline{\psi}^2 \right),
\end{equation}
with boundary condition
\begin{equation}
\lim_{u\to u_0}\overline{\rt_v}= -\f{1}{2l^2}.
\end{equation}
\begin{rem}
	The choice of $\overline{\rt}$ is equivalent to choosing the scale of the $u$ coordinate along $\mc{N}$. It represents the gauge freedom of this problem.\\
	The choice of $\overline{\psi}$ is free, providing the conditions \eqref{WPID1}, and \eqref{WPID2} hold. The value $M_N$ is free provided it is strictly positive.\\
	The choice of boundary condition for $\overline{\rt_v}$ is to ensure that initially $\overline{\rt}_u+\overline{\rt_v} =0$, the (later chosen) boundary conditions ensure this propagates along $\mi$.
\end{rem}~\\
\subsubsection{Boundary conditions}~\\ \\
%\textbf{Notation}\\
%We now define the function
%\begin{equation}
%\rho = \f{u-v}{2},
%\end{equation}
%and denote twisting with the function $\rho$ by
%\begin{equation}
%\grh_\mu\psi := \rho^{\f{3}{2}-\kappa}\gr_\mu(\psi\rho^{-\f{3}{2}+\kappa}).
%\end{equation}
%The $\Hu^1$ norm over a set $\mc{U}\subset\Delta_{\delta,u_0}$ is given by
%\begin{equation}
%\begin{split}
%\norm{\psi}{\Hu^1(\mc{U})}^2 := \int_\mc{U}\left(\rho^{-2}\left(\abs{\grh\psi}^2 \right) + \rho^{-2}\psi^2  \right)dudv.
%\end{split}
%\end{equation}
%The space $\Hu_0^1(\Delta_{\delta,u_0})$ is given by the completion of $C_c^\infty(\Delta_{\delta,u_0})$ in the $\Hu^1$ norm.\\ \\
%If a function $\phi \in \Hu^1\left(\Delta_{\delta,u_0} \right)$ has the property that $\phi|_{\{v=u_0\}}=\phi|_{\{u=u_0+\delta\}}=0$ in a trace sense, then we say it is a test function for the Neumann problem. We denote the set of these functions by $\mc{T}_\mc{N}(\Delta_{\delta,u_0})$.\\
\textbf{Boundary conditions for $\rt$}\\
In order that we produce a spacetime that is asymptotically AdS we will insist 
\begin{equation}\label{metricbc}
\rt|_\mi = 0.
\end{equation}
As a consequence of $T := \pa_u+\pa_v$ being tangent to $\mi$, we see that
\begin{equation}
T(\rt) = 0,
\end{equation}   
along $\mi$.\\ \\
\textbf{Boundary conditions for $\psi$}\\
The boundary conditions we seek to impose on the field $\psi$ are either the Neumann conditions:
\begin{equation}\label{BC1}
\rho^{-\hf-\kappa}\left(\grt_\rho\psi \right)=0,
\end{equation}
or else the Dirichlet conditions:
\begin{equation}\label{BC2}
\rho^{-\f{3}{2}+\kappa}\psi = 0.
\end{equation}
Since we seek solutions at the $C^0\Hu^1$-level of regularity, we should understand these boundary conditions to hold in a weak sense, see \cite{Dunn_Thesis_2018}.

\subsection{Wellposedness}\label{WPS}~
\subsubsection{Regularity}~ \\
In this section we discuss the regularity we require for a weak solution to exist.
We define the $C^0\Hu^1$ norm by
\begin{equation}
\begin{split}
\norm{\psi}{C^0\Hu^1(\Delta)}^2 := &\sup_{(u,v)\in\Delta} \int_v^u\left(\rho^{-2}\left(\grh_u\psi \right)^2 + \rho^{-2}\psi^2  \right)du'\\ + &\sup_{(u,v)\in\Delta} \int_{v_0}^v\left(\rho^{-2}\left(\grh_v\psi \right)^2 + \rho^{-2}\psi^2  \right)dv',
\end{split}
\end{equation}
where
\begin{equation}
\grh_\mu\psi := \rho^{\f{3}{2}-\kappa}\gr_\mu(\psi\rho^{-\f{3}{2}+\kappa}).
\end{equation}
\begin{defn}
	A weak solution to the renormalised Einstein--Klein-Gordon system is an element of the function space:
	\begin{equation}
	\mathfrak{W} = \{(\rt,\varpi_2,\psi):\rt\in C^1_{loc.},\psi\in C^0\Hu^1,\varpi_2\in W^{1,1}_{loc},\quad \rt_{uv},\rt_{uu},\psi_u, (\varpi_2)_u\in C^0_{loc.}\}
	\end{equation}	
	that satisfies \eqref{REKG1}-\eqref{REKG4} in a weak sense. That is equations \eqref{REKG1} and \eqref{REKG3} hold classically, \eqref{REKG2} holds almost everywhere and \eqref{REKG4} holds weakly.
\end{defn}

\begin{rem}
	We note at this point that if we have a weak solution to the renormalised Einstein--Klein-Gordon system system we necessarily have that $\rt_{uuv}, \Omega,\Omega_u \in C^0_{loc.}$. 
\end{rem}

\begin{lem}
	If we have a weak solution to the renormalised Einstein--Klein-Gordon system  then the equations \eqref{EKG1}-\eqref{EKG5} hold weakly, hence we can say the metric \eqref{met} solves \eqref{EKG} weakly and we have a $C^0$ metric. 
\end{lem}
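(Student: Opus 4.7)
The plan is to show that the algebraic equivalence between the original and renormalised systems, established classically in Lemma \ref{RenLem}, transfers to the weak setting under the regularity afforded by $\mathfrak{W}$. The overall strategy is to derive the classical-form equations \eqref{EKG3}, \eqref{EKG5}, \eqref{HM1}, \eqref{HM2} from the renormalised system \eqref{REKG1}--\eqref{REKG4}, and then invoke the lemma of Section 2.2 which, together with differentiability of \eqref{EKG3}, recovers all of \eqref{EKG1}--\eqref{EKG5}.

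First, I would translate regularity. Setting $r = 1/\rt$ and $\varpi_1 = \varpi_2 + 2\pi g \tfrac{r^3}{l^2}\psi^2$, the hypotheses $\rt\in C^1_{loc}$, $\rt_{uu}, \rt_{uv}\in C^0_{loc}$, $\psi\in C^0\Hu^1$, $\varpi_2\in W^{1,1}_{loc}$ together with the auxiliary variable formula $\Omega^2 = -4r^4\rt_u\rt_v/\mu_1$ in \eqref{auxvar} and the remark $\Omega, \Omega_u\in C^0_{loc}$, $\rt_{uuv}\in C^0_{loc}$ give all the regularity needed to perform the computations of Lemma \ref{RenLem} in reverse. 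In particular, the metric coefficients $r^2$ and $\Omega^2$ in \eqref{met} are continuous, so $g$ is a $C^0$ Lorentzian metric.

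Next, I would verify the four equations classically/weakly. Equation \eqref{EKG3} is obtained from \eqref{REKG3} by substituting back $\rt = 1/r$ and $\varpi_2 = \varpi_1 - 2\pi g\tfrac{r^3}{l^2}\psi^2$, then using the definition of $\Omega^2$ and $\mu_1$; this is an algebraic identity holding pointwise. Equation \eqref{EKG5} follows from the weak equation \eqref{REKG4} by unravelling the definition of the twisted derivative $\grt_\mu\psi = r^g\gr_\mu(r^{-g}\psi)$; the transformation is algebraic in variables that are at least $C^0$, so weak solutions are preserved. Finally, the Hawking-mass transport equations \eqref{HM1} and \eqref{HM2} are recovered from \eqref{REKG1} and \eqref{REKG2} by expanding $\varpi_1$ in terms of $\varpi_2$ and $\psi^2$ and using \eqref{EKG5} to simplify the terms containing mixed derivatives of $\psi$. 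Again the manipulations are algebraic modulo the wave equation, so they remain valid in the weak sense.

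The main step is then the application of the Section 2.2 lemma: from \eqref{EKG3}, \eqref{EKG5}, \eqref{HM1}, \eqref{HM2} we conclude \eqref{EKG1} and \eqref{EKG2} unconditionally, while \eqref{EKG4} requires that \eqref{EKG3} be $u$-differentiable. The main obstacle, and the reason the remark on additional regularity is invoked, is precisely this last point: one needs $r_{uuv}$, $(\Omega^2)_u$ and $(\psi^2)_u$ to exist as continuous (or at least $L^1_{loc}$) objects in order to differentiate \eqref{EKG3} weakly in $u$. This is exactly what is supplied by the remark that $\rt_{uuv}, \Omega, \Omega_u\in C^0_{loc}$, combined with $\psi_u\in C^0_{loc}$ from the definition of $\mathfrak{W}$. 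With this verified, all of \eqref{EKG1}--\eqref{EKG5} hold weakly, and since $g$ is $C^0$ the statement that \eqref{met} solves \eqref{EKG} weakly is well-defined and established.
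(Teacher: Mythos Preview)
The paper does not actually give a proof of this lemma; it is stated without proof, presumably because the authors regard it as an immediate consequence of Lemma~\ref{RenLem} together with the Section~2.2 lemma on recovering \eqref{EKG1}--\eqref{EKG4} from the Hawking-mass equations. Your proposal correctly supplies the details the paper omits, following exactly the logical structure the paper sets up: undoing the renormalisation algebraically, checking regularity suffices for the manipulations, and invoking the Section~2.2 lemma (including the $u$-differentiability of \eqref{EKG3} via the remark that $\rt_{uuv},\Omega,\Omega_u\in C^0_{loc}$).
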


\begin{thm}\label{WP}
	Fix $0<\kappa<\f{2}{3}$, let $(\overline{\rt},\overline{\psi})$ be a free data set on $\mc{N}=(u_0,u_1]$, and fix Neumann or Dirichlet boundary conditions. Then there exists a $\delta>0$ such that the following holds. There exists a weak solution $(\rt,\varpi_2,\psi)\in \mathfrak{W}$ of the renormalised Einstein Klein-Gordon equations in the triangle $\Delta_{\delta,u_0}$, such that
	\begin{itemize}
		\item $\rt \to 0 \text{ as $\mi$ is approached},$
		\item $\psi$ satisfies the boundary condition weakly,
		\item The functions $\psi$ and $\rt$ agree with $\overline{\psi}$ and $\overline{\rt}$ respectively when restricted to $v=u_0$.
	\end{itemize}
\end{thm}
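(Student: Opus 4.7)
The plan is to prove existence by a Picard-type iteration in the function space $\mathfrak{W}$, closely adapted to the approach of \cite{holzegel_einstein-klein-gordon-ads_2013} for the spherically symmetric Einstein--Klein-Gordon system, with modifications for the toroidal geometry and, crucially, for the possibility of Neumann boundary conditions. The renormalised variables $(\rt, \varpi_2, \psi)$ have been chosen so that the solution is expected to remain in a uniform function space up to $\mi$ in both the Dirichlet and Neumann cases: $\rt\to 0$, $\varpi_2$ stays bounded, and the twisted derivatives $\grt\psi$ carry the correct boundary weight. This is what makes a single iteration scheme handle both boundary conditions.

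The iteration I would set up is the following. Given an iterate $(\rt^{(n)}, \varpi_2^{(n)}, \psi^{(n)})$ on $\Delta_{\delta,u_0}$, the next iterate is defined by: (i) integrating the transport equation \eqref{REKG1} for $\varpi_2^{(n+1)}$ along outgoing null rays starting from the data $\overline{\varpi_2}$ on $\mc{N}$, with coefficients read off from the previous iterate; (ii) integrating the wave equation \eqref{REKG3} twice for $\rt^{(n+1)}$, using initial data $\overline{\rt}$ on $\mc{N}$ together with the boundary condition $\rt|_\mi = 0$; (iii) solving the linear twisted wave equation \eqref{REKG4} for $\psi^{(n+1)}$ with initial data $\overline{\psi}$ and the chosen boundary condition \eqref{BC1} or \eqref{BC2}. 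Steps (i) and (ii) reduce to ODEs along characteristics and yield pointwise $C^1$ bounds in terms of $\norm{\psi^{(n)}}{C^0\Hu^1}$ plus data; the remaining relations \eqref{REKG2} and \eqref{REKG5} are treated as constraints, their propagation being an algebraic consequence of the Bianchi identities at the level of the iteration.

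The main analytical obstacle is step (iii): a twisted-derivative energy estimate for the linear Klein-Gordon equation on a spacetime whose geometry degenerates at $\mi$, carried out in a norm compatible with \emph{either} boundary condition. The twisting function $r^g$ with $g = -\f{3}{2} + \kappa$ is selected precisely so that the twisted wave operator is naturally adapted to the Hardy-type inequality underlying the $C^0\Hu^1$ norm, and the range $\kappa \in (0,1)$ (the Breitenlohner--Freedman window) ensures coercivity of the effective potential $V$ after twisting. Multiplying \eqref{REKG4} by $\grt_u\psi$ and its $v$-counterpart and integrating over $\Delta_{\delta,u_0}$, the boundary integral at $\mi$ vanishes under \eqref{BC1} or \eqref{BC2}, delivering the required energy bound. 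The stricter constraint $\kappa < \f{2}{3}$ is not needed for linear solvability; rather it arises when closing the nonlinear iteration, since the cubic correction $\f{r^3}{l^2}\psi^2$ entering $\varpi_1$ via \eqref{auxvar} is absorbed by the twisted $\Hu^1$ norm only in this range.

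Finally I would close the argument by establishing uniform bounds for the iterates in $\mathfrak{W}$ for $\delta$ sufficiently small (depending on the free data), and contraction in a weaker norm such as $C^0$ for $(\rt, \varpi_2)$ and weighted $L^2$ for $\psi$. Cauchy convergence then produces a limit satisfying \eqref{REKG1} and \eqref{REKG3} classically, \eqref{REKG2} almost everywhere, and \eqref{REKG4} weakly, with the regularity dictated by $\mathfrak{W}$. Preservation of the boundary condition $\rt|_\mi = 0$, the weak realisation of \eqref{BC1} or \eqref{BC2} by $\psi$, and the agreement with initial data on $\{v=u_0\}$ all pass to the limit by continuity of the relevant traces in the twisted-$\Hu^1$ topology.
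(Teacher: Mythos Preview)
Your proposal is correct and follows essentially the same approach as the paper: a Banach fixed point (Picard iteration) argument in the space $\mathfrak{W}$, adapted from \cite{holzegel_einstein-klein-gordon-ads_2013}, with the toroidal modifications treated as subleading error terms. In fact you supply more detail on the structure of the iteration and the role of the twisted energy estimate than the paper itself, which largely defers to the cited reference.
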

\begin{proof}
	The proof of this theorem follows from \cite{holzegel_einstein-klein-gordon-ads_2013}. There is a minor difference in that there is a slight change in subleading terms compared to the spherical case, but these terms can be treated as small error terms. The proof follows a Banach fixed point theorem argument, constructing a map whose fixed point is a solution to \eqref{REKG1}-\eqref{REKG4} and then establishing that it is a contraction map when acting on a sufficiently small ball in the space $\mathfrak{W}$.
\end{proof}
\begin{rem}
	If more regularity is assumed on the initial data then just as in \cite{holzegel_einstein-klein-gordon-ads_2013} it may be shown that we have a classical solution. The boundary conditions hold classically, and $\mc{T}\psi:=  -\f{r_u}{\Omega^2}\pa_v\psi+ \f{r_v}{\Omega^2}\pa_u\psi$ decays like $\rho^{\f{3}{2}-\kappa}$, as the boundary is approached. 
\end{rem}
\begin{rem}
	As the field $\psi\in C^0\Hu^1$, it obeys the usual energy estimates. These may be shown by working at a higher level of regularity, and recovered by a density argument (similar to proposition 8.1 in \cite{holzegel_einstein-klein-gordon-ads_2013}). In later sections when deriving these energy estimates, we will see boundary terms that won't make sense at the current level of regularity. We can however see they vanish at a higher level of regularity, and thus may be dropped from the estimates.
\end{rem}

\subsection{Geometric uniqueness}\label{GU}~\\ 
When solving system \eqref{REKG1} - \eqref{REKG5} it is important to note that we have made a choice  of gauge with respect to which the boundary conditions have been stated. Thus \textit{a priori} we might expect our solution to be dependant on this gauge. In this section we shall restate the boundary conditions in a geometric fashion, and establish the local uniqueness, up to diffeomorphism, of toroidal solutions to the Einstein--Klein-Gordon system for these boundary conditions. This is the problem of geometric uniqueness as discussed in \cite{friedrich_initial_2009}.

In order to address this issue we note that the relevant boundary conditions are invariant under a change of null coordinates.
\begin{defn}
	For $\mc{U}\subset \mathbb{R}^{1+1}$ we define the norm
	\begin{equation}
	\begin{split}
	\norm{\psi}{\Hu_g^1(\mc{U})}^2 := \int_\mc{U}\left(\rt^{-2}\left(\abs{\grt\psi}^2 \right) + \rt^{-2}\psi^2  \right)dudv.
	\end{split}
	\end{equation}
	We can define $\Hu^1_{0,g}(\mc{U})$ as the completion of $C_c^\infty\left(\mc{U}\right)$ with the $\Hu^1_{g}$-norm.  
\end{defn}
\begin{figure}
	\begin{center}	
		\includegraphics[scale=1]{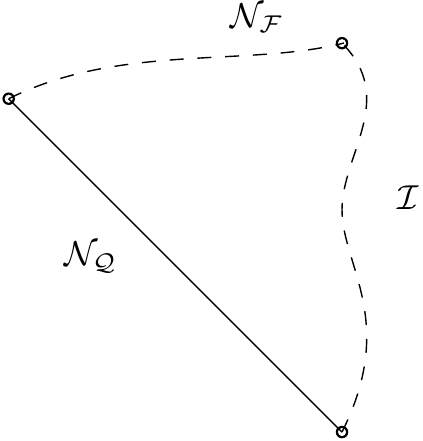}
		\caption{Diagram of a development}
	\end{center}
\end{figure}
\begin{defn}\label{C3GUME}
	Let $\mc{N}$ be an interval of the form $\mc{N} = (u_0,u_1]$. Given an initial data set $(\overline{\rt},\overline{\psi})$ satisfying the initial conditions in section \ref{IaBd} we say a development $\mathfrak{D}$ is a triple $(\mc{M},g,\psi)$ such that $(\mc{M},g)$ is a smooth manifold with $C^0$ Lorentzian metric, $\psi$ is $C^0\Hu^1$ function on $\mc{M}$, and the following hold:
	\begin{itemize}
		\item $(\mc{M},g,\psi)$ is a square flat toroidally symmetric weak solution to the EKG system, with area radius $r$ being a $C^1$ function with $r>0$. 
		\item The quotient manifold $\mc{Q}= \mf / \mathfrak{T}^2$ with its induced Lorentzian metric is a manifold with boundary $\mc{N}_\mc{Q}$ which is a null ray, diffeomorphic to a subset $\mc{N}$ of the form $(u_0,u_0+\epsilon)$, for some $\epsilon>0$. If $\varphi$ is such a diffeomorphism: $\varphi:\mc{N}_\mc{Q}\to(u_0,u_0+\epsilon)$, then\footnote{We abuse notation by using the same letter for objects on $\mathfrak{D}$ and $\mc{Q}$} $\psi\circ\varphi= \overline{\psi}|_{(u_0,u_0+\epsilon)}$ and $\rt\circ\varphi = \overline{\rt}|_{(u_0+u_0+\epsilon)}$.
		\item $\mc{Q}$ admits a a system of global bounded null coordinates, and may be embedded conformally into a subset of $\mathbb{R}^{1+1}$. The boundary of $\mc{Q}$ with respect to the topology of $\mathbb{R}^{1+1}$ is composed of a future boundary $\mc{N}_{\mc{F}}$, a past boundary which coincides with $\mc{N}_\mc{Q}$ and a $C^1$ time-like boundary $\mc{I}$ given by the level set $\mc{I} = \{(u,v)\in\mathbb{R}^{1+1}: \rt(u,v)=0\}$. The metric $g$ is asymptotically AdS, with conformal infinity $\mi$.
		\item `Global hyperbolicity' holds in the sense that all past directed inextendible causal curves in $\mc{Q}$ either intersect $\mc{N}_\mc{Q}$, or have a limit point on $\mi$.
		\item The field $\psi$ satisfies the following integrability conditions:\\
		For each constant $v$ ray, $R_v\subset\mc{Q}$ we have:  
		\begin{equation}
		\int_{R_v} \f{r^4}{\abs{r_u}}(\grt_u\psi)^2	+\f{\abs{r_u}}{r}\psi^2d\bu < \infty,
		\end{equation}
		and each constant $u$ ray, $R_u\subset\mc{Q}$ we have:  
		\begin{equation}
		\int_{R_u} \abs{\f{ r^2 r_u}{\Omega^2}}(\grt_v\psi)^2+ \f{ \abs{r_v} }{r} \psi^2 d\bv < \infty.
		\end{equation}
	\end{itemize}  
\end{defn}

Note that as a consequence of the definition, $\psi$ satisfies the Klein-Gordon equation and boundary conditions in the following weak sense:
				\begin{itemize}
			\item \emph{Dirichlet:} we have
			\begin{equation}
			\int_{\mc{Q}}\left(  \grt^\mu\phi\grt_\mu\psi -V\psi\phi\right) r^2 \Omega^2 du dv= 0,
			\end{equation}
			 for all $\phi \in \Hu^1_{0,g}\left(\mc{Q} \right) $, and moreover $\rt^{-\f{3}{2}+\kappa}\psi=0$ on $\mi$ in a trace sense.
			\item \emph{Neumann:} we have 
			\begin{equation}
			\int_{\mc{Q}}\left(  \grt^\mu\phi\grt_\mu\psi -V\psi\phi\right) r^2 \Omega^2 du dv = 0,
			\end{equation}
			for all $\phi \in \Hu^1_{g}\left(\mc{Q} \right)$  with $\phi|_{\mc{N}_\mc{Q}}=\phi|_{\mc{N}_\mc{F}}=0$ in a trace sense.
		\end{itemize}
		Here $V$ is as defined in Lemma \ref{RenLem}. Stated in this form, the weak formulations (and hence the boundary conditions for $\psi$) are invariant under a change of null coordinates. As a consequence of the fact that the boundary conditions are invariant under a change of $u,v$ coordinates, we have:

\begin{lem}\label{MDl}
	Let $(\overline{\rt},\overline{\psi})$ be a data set satisfying the conditions of section \ref{IaBd}. Let $(\mc{M}_i,g_i,\psi_i)$ be two developments of $(\overline{\rt},\overline{\psi})$ with the same (Dirichlet or Neumann) boundary conditions. Then both $(\mc{M}_i,g_i,\psi_i)$ are extensions of a common development. 
\end{lem}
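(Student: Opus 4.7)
The plan is to reduce the geometric uniqueness question to the analytic uniqueness already (implicitly) contained in Theorem \ref{WP}, by choosing a common double-null gauge on both developments. Given $(\mc{M}_i,g_i,\psi_i)$ for $i=1,2$, each quotient $\mc{Q}_i$ comes with its own conformal embedding into $\mathbb{R}^{1+1}$; the residual gauge freedom is a pair of reparameterisations $u \to U(u)$, $v \to V(v)$. I would fix this freedom in both developments identically, and then invoke the contraction-mapping uniqueness that underlies the proof of Theorem \ref{WP}.

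First I would normalise the $u$-coordinate on each $\mc{Q}_i$ by requiring that along the initial ray $\mc{N}_{\mc{Q}_i}$ the identification with $(u_0,u_0+\epsilon_i)$ be the one furnished by the diffeomorphism $\varphi_i$ of Definition \ref{C3GUME}, so that $\rt_i(u,u_0)=\overline{\rt}(u)$. Since $\overline{\rt}_u>0$ this choice is canonical. Next I would normalise the $v$-coordinate by using the fact that $T=\pa_u+\pa_v$ is tangent to $\mc{I}_i$ together with the trace condition $\rt|_{\mc{I}_i}=0$: this forces the boundary to appear as the diagonal $\{u=v\}$, which fixes the additive and multiplicative constants in $v$ once the vertex $(u_0,u_0)$ has been identified. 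In these coordinates both $\mc{Q}_i$ contain a common triangle $\Delta_{\delta,u_0}$ for some sufficiently small $\delta>0$ (shrinking $\epsilon_i$ if necessary), and each triple $(\rt_i,\varpi_{2,i},\psi_i)$ is an element of $\mathfrak{W}$ solving the renormalised system \eqref{REKG1}--\eqref{REKG4} weakly with the common initial data $(\overline{\rt},\overline{\psi},M_N,\overline{\rt_v})$ along $\{v=u_0\}$.

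The next step is to verify that the Dirichlet or Neumann boundary condition holds on $\mc{I}\cap\Delta_{\delta,u_0}$ for both solutions in this common gauge. This is precisely the content of the reformulation immediately preceding the lemma: the weak equations with test functions in $\Hu^1_{0,g}$ or $\Hu^1_g$ and the trace statement $\rt^{-3/2+\kappa}\psi|_{\mc{I}}=0$ are scalar statements involving only $r^2\Omega^2\,du\,dv$ (the area form on $\mc{Q}$) and the intrinsic scalars $\rt$ and $\psi$, and are therefore invariant under null reparameterisations. Hence both $\psi_i$ satisfy the same weak boundary condition in the common coordinates. At this point I would appeal to the uniqueness assertion buried in Theorem \ref{WP}: the solution produced by the Banach fixed-point argument is the unique fixed point of the contraction on a small ball of $\mathfrak{W}$, and an \emph{a priori} estimate (continuity in $\delta$ of the $\mathfrak{W}$-norm together with the fact that both solutions share the initial data and boundary conditions) shows that, after possibly shrinking $\delta$ once more, both $(\rt_i,\varpi_{2,i},\psi_i)$ lie inside this ball. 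Consequently they coincide on $\Delta_{\delta,u_0}$.

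With the two triples identified on $\Delta_{\delta,u_0}$, the common development $(\mf_0,g_0,\psi_0)$ is obtained by giving $\Delta_{\delta,u_0}\times\mathfrak{T}^2$ the metric \eqref{met} and scalar field read off from the common solution; the gauge diffeomorphisms of the two paragraphs above then isometrically embed $(\mf_0,g_0,\psi_0)$ into each $(\mc{M}_i,g_i,\psi_i)$, which is exactly the statement that both developments are extensions of a common one. The main obstacle I anticipate is the uniqueness input from Theorem \ref{WP}: the statement in the excerpt is phrased as existence only, so one must extract from the contraction argument of \cite{holzegel_einstein-klein-gordon-ads_2013} the quantitative bound showing that any $\mathfrak{W}$-weak solution with the given data is captured by the small ball on which the fixed-point map is a contraction. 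The rest is bookkeeping of coordinate changes and of the fact that weak Dirichlet/Neumann conditions transform covariantly, both of which are routine once the geometric reformulation of the boundary conditions is in place.
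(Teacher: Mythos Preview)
Your proposal is correct and follows essentially the same approach as the paper: reduce by a change of null coordinates to the local well-posedness setup of Theorem \ref{WP}, and invoke the uniqueness furnished by the underlying contraction-mapping argument. The paper's own proof is the one-line version of exactly this, and you have correctly identified both the key geometric input (gauge-invariance of the weak boundary conditions, stated just before the lemma) and the one genuine subtlety (that uniqueness must be extracted from the fixed-point argument of \cite{holzegel_einstein-klein-gordon-ads_2013}, since Theorem \ref{WP} as stated asserts only existence).
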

\begin{proof}
By a change of $u, v$ coordinates, near $\mi$ we can reduce to the local problem studied in the well posedness section above, for which we have already established a uniqueness result. 
\end{proof}

In the usual way, we can extend this local result to a global one by constructing a partial ordering on the set of developments.
\begin{thm}
	A data set $(\rt,\psi)$ satisfying the initial and boundary conditions of section \ref{IaBd} admits a maximal development. This development is unique up to isometry.
\end{thm}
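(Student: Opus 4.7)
The plan is to follow the classical Choquet--Bruhat--Geroch style maximal development argument, using the local geometric uniqueness established in Lemma \ref{MDl} as the key input. First I would introduce a partial order on the class $\mathfrak{S}$ of all developments of the fixed free data $(\overline{\rt},\overline{\psi})$. Say that $\mathfrak{D}_1 \preceq \mathfrak{D}_2$ if there is a smooth isometric embedding $\iota:\mc{Q}_1\hookrightarrow \mc{Q}_2$ of the quotient manifolds, taking $\psi_1$ to $\psi_2$, restricting to the identity on the common initial ray $\mc{N}_\mc{Q}$, with $\iota(\mc{Q}_1)$ an open subset of $\mc{Q}_2$ having $\mc{N}_\mc{Q}$ on its past boundary and $\iota(\mi_1)\subset \mi_2$. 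After identifying developments related by such isometries this becomes a genuine partial order.

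Next I would verify the hypotheses of Zorn's lemma. Given a totally ordered chain $\{\mathfrak{D}_\alpha\}_{\alpha\in A}$, form the upper bound by a direct limit construction: set $\mc{Q}_\infty := \bigsqcup_\alpha \mc{Q}_\alpha / \sim$, where $\sim$ identifies points related by the chain of isometric embeddings. The metric, twist field, and $\mathfrak{T}^2$-fibration descend to $\mc{Q}_\infty\times\mathfrak{T}^2$, and one checks in turn each clause of Definition \ref{C3GUME}: the weak solution property and boundary behaviour for $\rt$ and $\psi$ are local and inherited from the members of the chain; the global null-coordinate system and embedding into $\mathbb{R}^{1+1}$ can be built by choosing compatible charts on each $\mc{Q}_\alpha$; the conformal boundary $\mi_\infty$ is the union of the $\mi_\alpha$; global hyperbolicity is preserved because any inextendible past-directed curve in $\mc{Q}_\infty$ eventually enters some $\mc{Q}_\alpha$; and the integrability of $\psi$ on individual rays is a statement about a single $\mc{Q}_\alpha$. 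Zorn then yields a maximal development $\mathfrak{D}^{\max}$.

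For uniqueness, suppose $\mathfrak{D}^{\max}_1$ and $\mathfrak{D}^{\max}_2$ are two maximal developments. Let $\mathfrak{E}$ be the set of triples $(\mc{U},\iota,\mc{V})$ where $\mc{U}\subset \mc{Q}^{\max}_1$ and $\mc{V}\subset \mc{Q}^{\max}_2$ are open, contain $\mc{N}_\mc{Q}$, are past sets in their respective developments, and $\iota:\mc{U}\to \mc{V}$ is an isometry preserving $\psi$ and fixing $\mc{N}_\mc{Q}$. By Lemma \ref{MDl}, $\mathfrak{E}$ is non-empty. It is partially ordered by extension and admits upper bounds for chains (again by taking unions), so by Zorn it has a maximal element $(\mc{U}^*,\iota^*,\mc{V}^*)$. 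I would then show $\mc{U}^*=\mc{Q}^{\max}_1$: if not, one may pick a boundary point $p$ in $\mc{Q}^{\max}_1\setminus\mc{U}^*$ whose past lies in $\mc{U}^*$, apply Theorem \ref{WP} together with Lemma \ref{MDl} in a small neighbourhood of $p$ (either interior or at $\mi$), and extend $\iota^*$ past $p$, contradicting maximality. Symmetry gives surjectivity, and we obtain an isometry between the two maximal developments.

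The main obstacle is the extension step in the uniqueness argument. Away from the conformal boundary this is the standard Cauchy-stability extension argument, but near $\mi$ one must ensure the local uniqueness of Lemma \ref{MDl} can be used to \emph{extend} an isometry rather than merely produce a new common subdevelopment; concretely, one must check compatibility of the coordinate gauges used to invoke Theorem \ref{WP} on either side and verify that the boundary conditions (Dirichlet or Neumann, characterised weakly and invariantly in the paragraph preceding Lemma \ref{MDl}) are preserved by the extension. There is also the well-known subtlety, highlighted by Sbierski in the classical Choquet--Bruhat--Geroch argument, that the direct limit above may \emph{a priori} fail to be Hausdorff; here however the toroidal symmetry reduces the problem to the $(1{+}1)$-dimensional quotient $\mc{Q}$, where the common embedding into $\mathbb{R}^{1+1}$ and global hyperbolicity rule out non-Hausdorff identifications.
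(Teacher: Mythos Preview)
Your proposal is correct and follows essentially the same approach as the paper: the paper's own proof simply defers to \S8.1 of \cite{holzegel_self-gravitating_2012}, which is precisely the Choquet--Bruhat--Geroch style partial-order/Zorn argument you outline, noting only that minor modifications are needed for the $\Hu^1$ setting. Your write-up is considerably more detailed than the paper's two-sentence referral, but the strategy is the same.
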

\begin{proof}
	We can follow the proof of \S8.1 of \cite{holzegel_self-gravitating_2012}. There are minor modifications required for the $\Hu^1$ setting, but no significant changes.
\end{proof}

\subsection{The final renormalised Hawking mass}

In the discussion above, it suffices to consider the renormalised Hawking mass $\varpi_2$, which has been modified to be finite at $\mi$. In order to discuss the stability of the toroidal AdS Schwarzschild spacetime we will require a further modification which we will later see has some very useful monotonicity properties. We shall introduce this here in order to state the following extension principles, but it will mainly be of use in later sections.

\begin{defn}
	The final renormalised Hawking mass is
	\begin{equation}
	\varpi = \f{2r_ur_vr}{\Omega ^2}e^{4\pi g \psi^2}+\f{r^3}{2l^2},
	\end{equation}
\end{defn}
Note that $\varpi$ is clearly invariant under a change of $u, v$-coordinate.
\begin{lem}\label{HME}
	$\varpi$ satisfies the following differential equations
	\begin{equation}
	\begin{split}
	&\pa_u \varpi =  -\f{8\pi r^2 r_v}{\Omega^2}(\grt_u\psi)^2e^{4\pi g\psi^2} + \f{4\pi g^2 r_u }{r}\varpi \psi^2  +\f{r_ur^2}{l^2}f(\psi^2),\\
	&\pa_v \varpi =  -\f{8\pi r^2 r_u}{\Omega^2}(\grt_v\psi)^2e^{4\pi g\psi^2} + \f{4\pi g^2 r_v }{r}\varpi \psi^2  +\f{r_vr^2}{l^2}f(\psi^2),
	\end{split}
	\end{equation}
	where
	\begin{equation}
	f(\psi^2) = e^{4\pi g \psi^2}\left(4\pi a\psi^2-\f{3}{2}  \right)-2\pi g^2\psi^2 + \f{3}{2}.
	\end{equation}
\end{lem}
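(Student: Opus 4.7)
The plan is a direct calculation: rewrite $\varpi$ in terms of the already-controlled quantity $\varpi_1$, differentiate, and then repackage the result. Specifically, from the definitions one has
\begin{equation*}
\varpi = \left(\varpi_1 - \tfrac{r^3}{2l^2}\right) e^{4\pi g \psi^2} + \tfrac{r^3}{2l^2},
\end{equation*}
because $\frac{2 r_u r_v r}{\Omega^2} = \varpi_1 - \frac{r^3}{2l^2}$. I would apply $\partial_u$ to this and use the product and chain rules, producing three pieces: $E\cdot\partial_u\varpi_1$, $(\varpi_1 - \frac{r^3}{2l^2})\cdot 8\pi g \psi\psi_u\cdot E$, and the $r^3/(2l^2)$ terms that combine into $\frac{3r^2 r_u}{2l^2}(1-E)$ (with $E:=e^{4\pi g \psi^2}$). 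The transport equation \eqref{HM1} then gives $\partial_u\varpi_1$ in terms of $(\partial_u\psi)^2$ and $\psi^2$.

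The next step is to convert ordinary derivatives into the twisted derivative $\grt_u\psi = \partial_u\psi - g\frac{r_u}{r}\psi$ so that the boundary-compatible norm appears. Writing $(\partial_u\psi)^2 = (\grt_u\psi)^2 + 2g\frac{r_u}{r}\psi\grt_u\psi + g^2\frac{r_u^2}{r^2}\psi^2$ and $\psi\psi_u = \psi\grt_u\psi + g\frac{r_u}{r}\psi^2$, one observes the crucial cancellation: the $\psi\grt_u\psi$ cross terms generated by the $(\partial_u\psi)^2$ expansion in $E\cdot\partial_u\varpi_1$ exactly cancel those coming from $(\varpi_1-\frac{r^3}{2l^2})\cdot 8\pi g\psi\psi_u\cdot E$, since $(\varpi_1-\frac{r^3}{2l^2})= \frac{2 r_u r_v r}{\Omega^2}$. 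This cancellation is the only genuinely delicate point in the calculation and is exactly what forces the coefficient $e^{4\pi g\psi^2}$ in the definition of $\varpi$.

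After the cancellation, the surviving $\psi^2$ terms with a $\frac{r_u^2 r_v}{\Omega^2}$ prefactor combine to $8\pi g^2 \frac{r_u^2 r_v}{\Omega^2}\psi^2 E$, and I rewrite this using $\frac{2 r_u r_v r}{\Omega^2}E = \varpi - \frac{r^3}{2l^2}$ to obtain $\frac{4\pi g^2 r_u}{r}\varpi\psi^2 - \frac{2\pi g^2 r_u r^2}{l^2}\psi^2$. The first of these is the claimed $\varpi$-dependent term. The remaining contributions are all of the form $\frac{r_u r^2}{l^2}\times(\text{function of }\psi^2)$: namely $\frac{4\pi a}{l^2}r_u r^2 \psi^2 E$ from $\partial_u\varpi_1\cdot E$, the uncancelled $-\frac{3r^2 r_u}{2l^2}(E-1)$ from the $r^3/(2l^2)$ differentiation, and the $-\frac{2\pi g^2 r_u r^2}{l^2}\psi^2$ term just extracted. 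These collect to $\frac{r_u r^2}{l^2}\bigl[E(4\pi a\psi^2-\tfrac32)-2\pi g^2\psi^2+\tfrac32\bigr] = \frac{r_u r^2}{l^2} f(\psi^2)$, as required.

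The $v$-equation follows by an identical computation with the roles of $u$ and $v$ interchanged, using \eqref{HM2} in place of \eqref{HM1}. I do not anticipate any genuine obstacle: the entire proof is bookkeeping, with the one structural observation being that the coefficient $e^{4\pi g \psi^2}$ is precisely what makes the undifferentiated $\psi\grt_u\psi$ cross terms disappear and leaves a nonnegative-definite principal term $-\frac{8\pi r^2 r_v}{\Omega^2}(\grt_u\psi)^2 E$, which is the feature exploited for monotonicity in the later stability argument.
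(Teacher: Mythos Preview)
Your proposal is correct and follows essentially the same route as the paper: differentiate $\varpi$ via $\varpi_1-\tfrac{r^3}{2l^2}=\tfrac{2r_ur_vr}{\Omega^2}$, substitute the $\partial_u\varpi_1$ equation, convert to twisted derivatives, observe the $\psi\,\grt_u\psi$ cancellation forced by the factor $e^{4\pi g\psi^2}$, and regroup the $\psi^2$ residue into $\tfrac{4\pi g^2 r_u}{r}\varpi\psi^2+\tfrac{r_ur^2}{l^2}f(\psi^2)$. The paper additionally pauses to Taylor-expand $f$ and invoke the algebraic identity $-g^2+2a-3g=0$ to exhibit that the $O(\psi^2)$ part of $f$ vanishes (so $f(\psi^2)=O(\psi^4)$), but this is a remark aimed at the later monotonicity analysis rather than a step needed to establish the formula itself.
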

\begin{proof}
	We begin by directly studying the $u$ derivative of $\varpi$
	\begin{equation}
	\pa_u \varpi = \pa_u\left(\f{2r_ur_v r}{\Omega^2}\right)e^{4\pi g\psi^2} + \f{16g\pi r_ur_vr}{\Omega^2}\psi\psi_ue^{4\pi g\psi^2} +\f{3}{2l^2}r^2r_u,
	\end{equation}
	equation \eqref{EKG1} and \eqref{EKG3} imply that
	\begin{equation}
	\pa_u\left(\f{2r_ur_v r}{\Omega^2}\right) = -8\pi\f{r^2r_v}{\Omega^2}\psi_u^2 + \f{4\pi r^2a}{l^2}r_u\psi^2 - \f{3}{2l^2}r^2r_u,
	\end{equation}
	so
	\begin{equation}
	\pa_u \varpi = -8\pi\f{r^2r_v}{\Omega^2}\psi_u^2e^{4\pi g\psi^2} + \f{4\pi r^2a}{l^2}r_u\psi^2e^{4\pi g\psi^2} - \f{3}{2l^2}r^2r_ue^{4\pi g\psi^2} +\f{16g\pi r_ur_vr}{\Omega^2}\psi\psi_ue^{4\pi g\psi^2} +\f{3}{2l^2}r^2r_u.
	\end{equation}
	Converting to twisted derivatives we see that
	\begin{equation}
	\pa_u \varpi =  -\f{8\pi r^2 r_v}{\Omega^2}(\grt_u\psi)^2e^{4\pi g\psi^2} + \f{8\pi g^2r_u^2 r_v}{\Omega^2}\psi^2e^{4\pi g\psi^2}+\f{4\pi r^2 a}{l^2}r_u\psi^2e^{4\pi g\psi^2} + \f{3}{2l^2}r^2r_u\left( 1- e^{4\pi g\psi^2}\right). 
	\end{equation}  
	We turn to studying
	\begin{equation}
	F := \f{8\pi g^2r_u^2 r_v}{\Omega^2}\psi^2e^{4\pi g\psi^2}+\f{4\pi r^2 a}{l^2}r_u\psi^2e^{4\pi g\psi^2} + \f{3}{2l^2}r^2r_u\left( 1- e^{4\pi g\psi^2}\right), 
	\end{equation} 
	which we write in terms of $\varpi$ as
	\begin{equation}
	F = \f{4\pi g^2 r_u }{r}\varpi \psi^2 - 2\pi \f{r_ur^2g^2}{l^2}\psi^2 +\f{4\pi r^2 a}{l^2}r_u\psi^2e^{4\pi g\psi^2} + \f{3}{2l^2}r^2r_u\left( 1- e^{4\pi g\psi^2}\right),
	\end{equation}
We now expand this expression and factorise. This allows us to see how the divergent terms cancel.
	\begin{equation}
	\begin{split}
	F =& \f{4\pi g^2 r_u }{r}\varpi \psi^2 + \f{2\pi}{l^2}r_ur^2\left( -g^2+2a-3g\right)\psi^2 + \f{4\pi r_ur^2a}{l^2}\underbrace{\left(e^{4\pi g\psi^2}-1 \right)}_{\sim \psi^2}\psi^2\\&+\f{3}{2l^2}r^2r_u\underbrace{\left(1+4\pi g\psi^2 - e^{4\pi g\psi^2} \right)}_{\sim \psi^4},  
	\end{split}
	\end{equation}
	recalling the relation
	\begin{equation}
	-g^2+2a-3g = 0,
	\end{equation}
	we see the divergent terms are no longer present and
	\begin{equation}
	F = \f{4\pi g^2 r_u }{r}\varpi \psi^2  + \f{4\pi r_ur^2a}{l^2}\left(e^{4\pi g\psi^2}-1 \right)\psi^2+\f{3}{2l^2}r^2r_u\left(1+4\pi g\psi^2 - e^{4\pi g\psi^2} \right),  
	\end{equation}
	which factorises to
	\begin{equation}
	F = \f{4\pi g^2 r_u }{r}\varpi \psi^2  +\f{r_ur^2}{l^2}\left(e^{4\pi g \psi^2}\left(4\pi a\psi^2-\f{3}{2}  \right)-2\pi g^2\psi^2 + \f{3}{2}\right),
	\end{equation}
	thus proving the result for the $\pa_u\varpi$ equation.
	By the symmetry of the equations the $\pa_v\varpi$ result is analogous.
\end{proof}
\begin{lem} \label{C3HMCL}
	$\varpi$ is constant along $\mi$. 
\end{lem}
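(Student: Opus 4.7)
The plan is to show $T\varpi|_\mi = 0$, where $T = \pa_u + \pa_v$ is tangent to $\mi$; this is equivalent to constancy along $\mi$. Adding the two formulas in Lemma \ref{HME},
\begin{equation*}
T\varpi = -\f{8\pi r^2}{\Omega^2}\bigl[r_v(\grt_u\psi)^2 + r_u(\grt_v\psi)^2\bigr]e^{4\pi g\psi^2} + (r_u + r_v)\left[\f{4\pi g^2\varpi\psi^2}{r} + \f{r^2 f(\psi^2)}{l^2}\right],
\end{equation*}
and my aim is to show that each piece vanishes in the limit to $\mi$. By the Remark following Theorem \ref{WP} it suffices to work in the higher-regularity setting where the boundary conditions hold classically and one has the expected asymptotic decay rates for $\psi$ and its derivatives; the $C^0\Hu^1$ case then follows by the usual density argument.

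For the terms carrying the factor $(r_u + r_v) = -T(\rt)/\rt^2$, the vanishing of $\rt$ on $\mi$ gives $T(\rt)|_\mi = 0$, but \emph{a priori} $r_u + r_v$ may still diverge like $O(\rho^{-1})$. The saving grace is that $f(\psi^2) = O(\psi^4)$, which follows from the algebraic identity $2a - 3g - g^2 = 0$ already exploited in the proof of Lemma \ref{HME}. Together with $\psi \sim r^g = O(\rho^{3/2-\kappa})$ and $\rt \sim \rho/l^2$ coming from the gauge, a direct power-counting argument shows that both $(r_u+r_v)\cdot \varpi\psi^2/r$ and $(r_u+r_v)\cdot r^2 f(\psi^2)/l^2$ are $o(1)$ as $\rho \to 0$ throughout the admissible range $\kappa < 2/3$.

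For the kinetic term I would substitute $r_u = -r^2\rt_u$, $r_v = -r^2\rt_v$, so the bracket becomes $-r^2[\rt_v(\grt_u\psi)^2 + \rt_u(\grt_v\psi)^2]$. Restricted to $\mi$, where $\rt_u + \rt_v = 0$, it factorises cleanly as $r^2\rt_u(\grt_u\psi + \grt_v\psi)(\grt_u\psi - \grt_v\psi)$. The tangential piece $\grt_u\psi + \grt_v\psi = r^g T(\psi r^{-g})$ decays like $r^g$ because $\psi r^{-g}$ attains a finite boundary trace; the transverse piece $\grt_u\psi - \grt_v\psi = \grt_\rho\psi$ is precisely what the Neumann (resp.\ Dirichlet) boundary condition is designed to control, vanishing at the rate $o(\rho^{1/2+\kappa})$ (resp.\ faster). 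The prefactor $r^2/\Omega^2$ remains bounded on approach to $\mi$ thanks to the auxiliary formula \eqref{auxvar} with $\mu_1\rt^2 \to 1/l^2$, so the whole term is $o(1)$.

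The main obstacle is the careful bookkeeping of these asymptotic rates, most delicately in the Neumann case where $\psi$ decays least rapidly; the Dirichlet case is easier, since the faster decay of $\psi$ makes all the relevant limits trivial.
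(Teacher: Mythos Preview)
Your proposal is correct and follows the same approach as the paper's proof, which simply says to compute $T\varpi$ from the equations of Lemma~\ref{HME} and take the limit $u\to v$, invoking higher regularity and a density argument for the $\Hu^1$ case. One small point to tighten: your factorisation of the kinetic bracket holds exactly only on $\mi$, so away from $\mi$ there is a residual term proportional to $(\rt_u+\rt_v)(\grt_u\psi)^2$ which you should check also vanishes in the limit (it does, since $\rt_u+\rt_v=O(\rho)$ at higher regularity and $(\grt_u\psi)^2$ carries enough decay).
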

\begin{proof}
Assuming sufficient regularity, this follows by computing $T\varpi = (\pa_u + \pa_v)\varpi$ using the equations above and taking the limit $u \to v$. At the $\Hu^1$-level, the result can be established by making use of energy estimates for $\psi$, see \cite{Dunn_Thesis_2018}.
\end{proof}
\section{Extension principles}\label{EP}
In this section we now wish to control aspects of the maximal development's geometry, in particular how singularities may form. For this we need two extension principles for the spacetime. 
\subsection{Interior extension principle}
\begin{prop}
	Let $(\mc{Q}^+ \times \mathfrak{T}^2, g, \psi)$ denote the maximal $\mathfrak{W}$ extension of an asymptotically AdS initial data set for the system \eqref{EKG1}-\eqref{EKG5}. Suppose $p=(U,V)\in \overline{\mc{Q}^+}$. Suppose that
	\begin{itemize}
		\item the set:
		\begin{equation}
		\mc{C}= [U',U] \times [V',V] \backslash \{p\}\subset{Q^+}
		\end{equation}
		has finite spacetime volume for some $(U', V')\neq (U, V)$, and
		\item there exist constants 
		\begin{equation}		
		0<r_0\le r\le R < \infty, \quad \text{for all $(u,v)\in \mc{C}$,}
		\end{equation} 
	\end{itemize}
	then $p\in \mc{Q}^+.$
\end{prop}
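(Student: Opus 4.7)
\emph{Overall approach.} We argue by contradiction: suppose $p=(U,V) \notin \mc{Q}^+$. The aim is to derive uniform bounds on all geometric quantities on $\mc{C}$, deduce that their traces on the two null rays meeting at $p$ extend continuously (in the appropriate norms) up to $p$, and then apply a characteristic local existence result (the analogue of Theorem~\ref{WP} in an $\mi$-free interior setting, which is strictly simpler since $r$ stays bounded away from $0$ and $\infty$) to produce a $\mathfrak{W}$-extension strictly containing $p$, contradicting maximality.

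\emph{Step one: bounds on $r$ and its null derivatives.} The volume form of $(\mc{Q}^+, g)$ being proportional to $\Omega^2 \, du \, dv$, finite volume of $\mc{C}$ together with $r_0 \le r \le R$ gives $\Omega^2 \in L^1(\mc{C})$. By a preliminary reparametrisation of the null coordinates along the initial rays $\{u = U'\} \cap \mc{C}$ and $\{v = V'\} \cap \mc{C}$ we may assume $\Omega$ is bounded above and below there and that $r_u/\Omega^2, r_v/\Omega^2$ are of order unity on these rays. The Raychaudhuri equations \eqref{EKG1}, \eqref{EKG2} yield the monotonicities
\begin{equation*}
\pa_u\left(\f{r_u}{\Omega^2}\right) \le 0, \qquad \pa_v\left(\f{r_v}{\Omega^2}\right) \le 0,
\end{equation*}
which, together with \eqref{EKG3} and the lower bound on $r$, propagate two-sided pointwise bounds on $r_u/\Omega^2$ and $r_v/\Omega^2$ throughout $\mc{C}$.

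\emph{Step two: the Hawking mass and the scalar field.} The transport equations for $\varpi$ provided by Lemma~\ref{HME} involve terms quadratic in $\grt_u\psi, \grt_v\psi$ (weighted by $r_v/\Omega^2$ and $r_u/\Omega^2$ respectively) together with remainders polynomial in $\psi^2$ and $\varpi$. Integrating $\pa_u\varpi$ along constant-$v$ rays and applying Gr\"onwall's inequality (with the bounded initial data on $\{u=U'\}$ and the bounds of step one), one simultaneously obtains uniform bounds on $\varpi$ throughout $\mc{C}$ and on the twisted energy fluxes $\int r^2|r_v|\Omega^{-2}(\grt_u\psi)^2 \, du$; the symmetric $v$-argument handles the $(\grt_v\psi)^2$ flux. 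Invoking the renormalised wave equations \eqref{REKG4}, \eqref{REKG5} then upgrades these fluxes to pointwise bounds on $\psi$. Finally, substituting back into \eqref{EKG4} gives $\log\Omega$ as a transport equation with integrable right-hand side, producing two-sided pointwise bounds on $\Omega$.

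\emph{Extension and main obstacle.} With uniform bounds in hand, dominated convergence yields continuous extensions of the traces of $(\rt, \varpi_2, \psi)$ to the rays $\{u=U, v\in[V',V]\}$ and $\{v=V, u\in[U',U]\}$, to which the characteristic local well-posedness theorem may be applied near $p$; patching with $(\mc{Q}^+, g, \psi)$ produces the desired strict extension and hence the contradiction. The principal technical obstacle is the bootstrap structure of steps one and two: the $L^\infty$ bound on $\Omega$ is recovered only at the end, after $\psi$ has been controlled, yet the earlier estimates are phrased relative to $\Omega^2$; in practice the estimates must be arranged as a simultaneous fixed-point argument in the spirit of the proof of Theorem~\ref{WP}. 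The modifications from the spherical case \cite{holzegel_stability_2013} are cosmetic, affecting only the subleading curvature terms.
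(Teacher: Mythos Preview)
Your proposal is correct and follows essentially the same strategy the paper invokes: the paper's own proof is a brief reference to the standard arguments of \cite{kommemi_global_2013} and \cite{holzegel_self-gravitating_2012}, which proceed exactly as you outline (uniform a priori bounds from Raychaudhuri and the mass transport equations, extension of traces to the characteristic rays through $p$, then characteristic local existence). The one point the paper singles out that you pass over is the regularity gap: here $\psi_v$ lies only in $L^2$ rather than $C^0$, so the final local existence step cannot be run as a $C^k$ contraction and one must instead exploit the absolute continuity of $\psi$ and work in the $\mathfrak{W}$ space; your parenthetical ``in the appropriate norms'' and appeal to the analogue of Theorem~\ref{WP} implicitly cover this, but it is the only genuinely new ingredient relative to the cited references.
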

\begin{proof}
	The proof of this is similar to \cite{kommemi_global_2013} and \cite{holzegel_self-gravitating_2012}. The key difference is that we are working with a slightly lower level of regularity. In particular the function $\psi_v$ belongs to $L^2$, but is not necessarily continuous. This means that standard contraction map argument cannot be performed in $C^k$ spaces. The extension principle can nevertheless be established in a similar manner but exploiting the absolute continuity of $\psi$. The proofs for this have been omitted but can be found in the appendix of \cite{Dunn_Thesis_2018}.
	\begin{coro}\label{iep}
		For a free data set that contains a (marginally) trapped surface (that is a point on $\mc{N}$ such that $\overline{r_v}\le 0$), the quotient of the maximal development of the initial data set contains a subset as shown in the Penrose diagram:\\
		\begin{center}
			\includegraphics[scale=0.7]{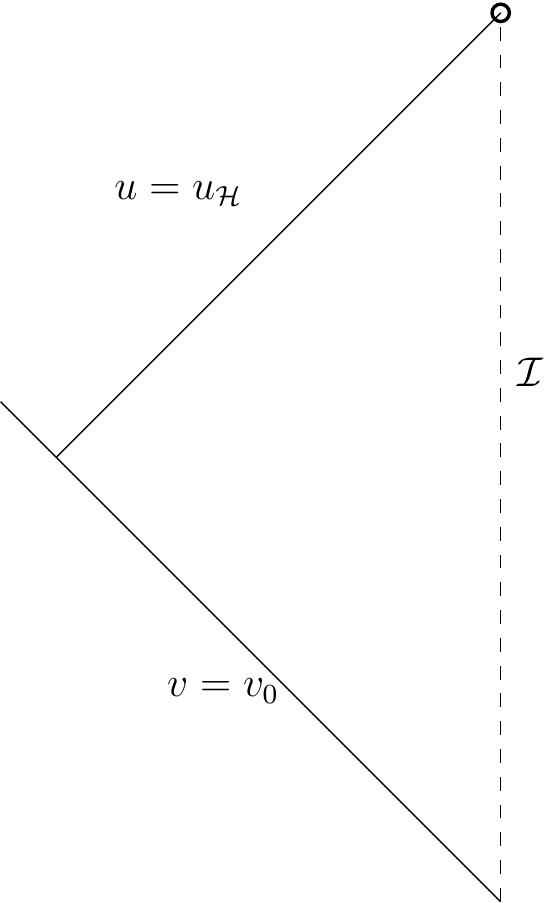}
		\end{center}
		Here $u_\mh$ is the boundary of $u$-constant null rays on which $r\to \infty$ on. Furthermore this set belongs to $\mc{Q}^+$. 
	\end{coro}
\end{proof}
\begin{proof}
	The local wellposedness gives us a solution in a small triangle where $r\to \infty$ along a constant $u$ ray. Now the initial data contains points where $r_v\le 0$. From the Raychaudhuri equation \eqref{EKG2} we see that this inequality propagates in $v$, hence there are points in the spacetime that cannot reach $\mc{I}$. As we have a solution in a small triangle in which $r\to \infty$ along any $u=const$ ray, there exists some $u_\mh$ such that for $u<u_\mh$, $r\not\to \infty$ along these rays. Finally we can see that that the ray $u=u_\mh$ is regular, as $r$ is monotonic, the extension principle (for finite $v$) forbids singularities along it.
\end{proof}
\subsection{Extension principle near $\mi$}
\begin{thm}\label{ENPI}
	For a $\mathfrak{W}$ solution to \eqref{EKG}-\eqref{EKG5} in a triangular region $\Delta_{d,u_0}$, assume that we have:
	\begin{itemize}
		\item The corner condition:
		\begin{equation}
		\lim_{v\to u_0+d}\rt(u_0+d,v)=0.
		\end{equation}
		\item
		For any constant $v$-ray $\mc{N}(v)$ contained in $\Delta_{d,u_0}$ and intersecting $\mi$, there exists a constant $K>0$
		\begin{equation}\label{cond1}
		\int_{u_\mi}^{u} \f{r^4}{-r_u}(\grt_u\psi)^2(\bar{u},v)	-r_u\psi^2(\bar{u},v)d\bu  +\sup_{\mc{N}(v)}\abs{{\f{r^\f{5}{2}}{-r_u}\grt_u\psi}}+\sup_{\mc{N}(v)}\abs{r^{\f{3}{2}-\kappa}\psi}+\sup_{\mc{N}(v)}\abs{\varpi-M}^{\hf} <K.
		\end{equation} 
		(These are all geometric quantities which will be used to form an initial data set on the ray  $\mc{N}(v)$.)
		\item The following bound:
		\begin{equation}\label{cond2}
		\min\left( \inf_{\Delta_{d,u_0}}\abs{\f{1}{l^2}-\f{2\varpi}{r^3}},\inf_{\Delta_{d,u_0}}\abs{\f{1}{l^2}-\f{2\varpi_1}{r^3}}\right)  > c.
		\end{equation}
		(This condition is to prevent potential degenerations away from null infinity.)
		\item There exists a constant $r_{min} >0$ such that in $\Delta_{d,u_0}$
		\begin{equation}
		r\ge r_{min}.
		\end{equation}
		(This will be used in the estimation of $\rt$.)
	\end{itemize}
	Then there exists a $\delta^*>0$ such that the solution can be extended to the set $\Delta_{d+\delta^*, u_0}$.
\end{thm}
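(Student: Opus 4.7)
The plan is to treat this as a continuation argument via Theorem \ref{WP}: choose a constant-$v$ ray $\mathcal{N}(\bar v)$ inside $\Delta_{d,u_0}$ with $\bar v$ close to $u_0 + d$, show that the trace of $(\tilde{r}, \psi)$ on $\mathcal{N}(\bar v)$ forms a free data set whose size is bounded by the constant $K$ of \eqref{cond1}, apply Theorem \ref{WP} to this data to obtain a new local solution, and then glue the old and new solutions via the uniqueness statement of Lemma \ref{MDl}. The essential input is that the existence time furnished by Theorem \ref{WP} depends only on the data size $K$, the lower bound $r_{min}$, the constant $c$ in \eqref{cond2}, and the Klein-Gordon parameter $\kappa$, so that it is uniform in $\bar v$.

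First, for $\bar v$ with $\bar v < u_0 + d$ I would verify that $(\overline{\tilde{r}}, \overline{\psi}) := (\tilde{r}, \psi)|_{\mathcal{N}(\bar v)}$ satisfies the hypotheses of Definition \ref{CH3IDS}. Near the corner of $\mathcal{N}(\bar v)$ with $\mathcal{I}$, the conditions $\overline{\tilde{r}}_u \to \frac{1}{2l^2}$ and $\overline{\tilde{r}}_{uu} \to 0$ follow from equation \eqref{REKG3}, the uniform non-degeneracy \eqref{cond2}, and the vanishing of $\tilde{r}$ and $\psi$ at $\mathcal{I}$ in an appropriate weighted sense. The integrability condition \eqref{WPID1} and the pointwise bounds \eqref{WPID2} are direct consequences of the estimates in \eqref{cond1}, after identifying the weights $(u - u_{\mathcal{I}}(\bar v))^{-2}$ with the geometric weights $r^4/|r_u|$ and $|r_u|/r$ appearing in \eqref{cond1} using the asymptotics $r \sim (u - u_{\mathcal{I}}(\bar v))^{-1}$; all constants are uniform in $\bar v$ and depend only on $K$, $r_{min}$ and $c$.

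Next, I would apply Theorem \ref{WP} to $(\overline{\tilde{r}}, \overline{\psi})$ to produce a $\mathfrak{W}$-solution on a new triangle based at $\bar v$. By inspection of the Banach fixed-point argument of \cite{holzegel_einstein-klein-gordon-ads_2013} underlying Theorem \ref{WP}, the existence time can be bounded below by a positive constant $\delta^*$ depending only on $K$, $r_{min}$, $c$ and $\kappa$. Taking $\bar v$ sufficiently close to $u_0 + d$ so that the new triangle extends past $v = u_0 + d$, the uniqueness in Lemma \ref{MDl} identifies the old and new solutions on their common overlap, producing a single $\mathfrak{W}$-solution on $\Delta_{d + \delta^*, u_0}$ after relabelling $\delta^*$. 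The corner condition $\lim_{v \to u_0 + d} \tilde{r}(u_0+d, v) = 0$ is exactly what is needed to ensure that $(u_0+d, u_0+d)$ sits on $\mathcal{I}$ of the extended spacetime, so that the new triangles fit together coherently as $\bar v \nearrow u_0 + d$.

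The principal obstacle is the uniform-in-$\bar v$ lower bound on the local existence time in Theorem \ref{WP}. This demands a careful re-examination of the contraction argument to verify that every quantity controlling the contraction constant depends only on the data size $K$ and on the fixed constants $r_{min}$, $c$, $\kappa$; the bounds \eqref{cond1}, \eqref{cond2} and $r \geq r_{min}$ are tailored precisely for this purpose, controlling respectively the data norms, the coefficients of the auxiliary variables $\Omega^2$ and $\mu_1$, and the distance from the horizon. A secondary technical point is the verification of the corner asymptotics for $\overline{\tilde{r}}$ on $\mathcal{N}(\bar v)$ at the reduced $\mathfrak{W}$-regularity, which is handled by integrating \eqref{REKG3} once in $u$ and invoking the weak trace behaviour of $\psi$ at $\mathcal{I}$ guaranteed by the finiteness of the energy integral in \eqref{cond1}.
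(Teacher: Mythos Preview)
Your overall strategy matches the paper's: verify that traces on late constant-$v$ rays are admissible free data (this is the content of Lemmas~\ref{iepl1}, \ref{iepl2}, \ref{twsiteq}), then re-apply the local existence theorem with a uniform-in-$\bar v$ lifespan. However, there is a genuine gap.

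The ray $\mathcal N(\bar v)$ inside $\Delta_{d,u_0}$ has $u$-extent only $(\bar v,\,u_0+d]$, of length $u_0+d-\bar v$. Theorem~\ref{WP} applied to data on $(\bar v,\,u_0+d]$ can produce a solution in $\Delta_{\delta',\bar v}$ only with $\delta'\le u_0+d-\bar v$, since the new triangle must sit over its initial ray. As $\bar v\nearrow u_0+d$ this length shrinks to zero, so the new triangle never protrudes past $v=u_0+d$; the ``uniform existence time'' you appeal to is capped by the length of the data interval, not merely by the data size. Your argument therefore cannot extend the solution.

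The paper fixes this with an additional step you have omitted: before invoking the near-$\mi$ local existence, it uses the \emph{interior} well-posedness (characteristic problem away from $\mi$) to push the solution in the $u$-direction, so that the ray $\{v=v_c\}$ with $v_c=u_0+d-\tfrac{\delta}{2}$ acquires data out to some $u=v_c+\delta^*$ with $\delta^*$ bounded below independently of how short the original ray was; continuity is used to ensure the hypotheses \eqref{cond1}, \eqref{cond2} persist (with $K\to 2K$, $c\to c/2$) on this slightly longer ray. Only then does Theorem~\ref{WP}, applied from $\{v=v_c\}$ with the relaxed constants, give a triangle reaching $v=v_c+\delta>u_0+d$. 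You should insert this interior-extension step and make the interplay between $\delta$, $\delta^*$ and the choice of $v_c$ explicit.
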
 
	\begin{lem}\label{iepl1}
	In $\Delta_{d,u_0}$ we have the following estimates
	\begin{equation}
	\rt(u,v)>0\hspace{5pt}, \rt_u(u,v)>0, \hspace{5pt} \lim_{u\to v}\abs{\rt_u-\f{1}{2l^2}} = 0, \text{ and } 	\lim_{u\to v} \rt(u,v)=0.
	\end{equation}
\end{lem}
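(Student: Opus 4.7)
The plan is to propagate the positivity and limiting behaviour of $\rt$ from the initial data ray $\{v=u_0\}$ into the triangle $\Delta_{d,u_0}$, using the evolution equation \eqref{REKG3}, the boundary conditions \eqref{metricbc}, and the hypotheses \eqref{cond1}, \eqref{cond2} together with $r \geq r_{\min}$.

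For the positivity of $\rt_u$ I would run a continuity argument in $v$. On the data ray we have $\overline{\rt}_u > 0$, and assuming $\rt_u > 0$ on a subregion of $\Delta_{d,u_0}$, integrating \eqref{REKG3} in $v$ gives
\[
\rt_u(u,v) = \overline{\rt}_u(u) + \int_{u_0}^v \Omega^2 \rt^2 \left( \f{3}{2}\varpi_2\rt^2 - \f{2\pi g^2}{l^2 \rt}\psi^2 \right) dv'.
\]
Rewriting $\Omega^2\rt^2 = -4 r^2 \rt_u \rt_v/\mu_1$, the integrand is controlled by the bounds on $\psi$ and $\varpi$ from \eqref{cond1}, the nondegeneracy \eqref{cond2} of $\mu_1$, the sign of $\rt_v$ (propagated from its negative initial value $-1/(2l^2)$), and $r\geq r_{\min}$. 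Since $\overline{\rt}_u$ is bounded below on compact subsets of the data ray and the integral is a small perturbation (one can localise near $\mi$ and use the available smallness of $\rt$), the bootstrap closes and $\rt_u > 0$. Integrating $\rt_u$ in $u$ from the diagonal $u=v$, where $\rt=0$ by \eqref{metricbc}, then yields $\rt > 0$ in the interior.

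For the limits, $\lim_{u\to v}\rt(u,v)=0$ is precisely the boundary condition \eqref{metricbc}. For $\lim_{u\to v}\rt_u = 1/(2l^2)$, I would exploit the gauge identity $T\rt = 0$ along $\mi$ together with the constancy of $\varpi$ along $\mi$ established in Lemma \ref{C3HMCL}. Writing the definition of $\varpi$ entirely in terms of $\rt,\rt_u,\rt_v,\psi$ via $\Omega^2 = -4r^4\rt_u\rt_v/\mu_1$, and substituting $\rt_v = -\rt_u$ on $\mi$ (from $T\rt=0$ and $\rt|_\mi = 0$), one obtains an algebraic identity for $\rt_u$ in the limit $\rt \to 0$. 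The boundary decay $\psi \sim r^{-\f{3}{2}+\kappa}$ and the algebraic relation $-g^2+2a-3g=0$ produce exactly the cancellation of the divergent $\rt^{-3}$ terms observed in Lemma \ref{HME}, leaving $\rt_u^2 \to \f{1}{4l^4}$ and hence, by continuity and the correct sign propagated from the data, $\rt_u \to 1/(2l^2)$.

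The main obstacle is this final limit: one must track the renormalisation-style cancellation of divergent terms in the expression for $\varpi$ as one approaches $\mi$, and the argument must be consistent with the low $\mathfrak{W}$ regularity of the solution. In practice this requires using the trace interpretation of the boundary condition for $\psi$ alongside the $C^1$ regularity of $\rt$ in order to make the limit rigorous. The positivity claims by contrast are relatively soft bootstrap statements which close once the integrand in the $v$-integration of \eqref{REKG3} is pointwise controlled.
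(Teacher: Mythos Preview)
Your approach to the positivity claims and the limit $\lim_{u\to v}\rt=0$ is reasonable, though more elaborate than the paper's one-line gauge argument. The paper simply switches to double null coordinates in which $\rt_u=-\rt_v=\tfrac{1}{2l^2}$ and integrates \eqref{REKG3}; since $\rt>0$ and $\rt_u>0$ are preserved under orientation-preserving reparametrisations, and $\lim_{u\to v}\rt=0$ is the boundary condition, these three claims are immediate once one has any admissible gauge.

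There is, however, a genuine gap in your argument for $\lim_{u\to v}\rt_u=\tfrac{1}{2l^2}$. You propose to extract $\rt_u$ algebraically from the definition of $\varpi$ after substituting $\Omega^2=-4r^4\rt_u\rt_v/\mu_1$ and $\rt_v=-\rt_u$. But this substitution makes $\rt_u$ disappear entirely: one has
\[
\varpi=\frac{2r_ur_vr}{\Omega^2}e^{4\pi g\psi^2}+\frac{r^3}{2l^2}
=-\frac{r\mu_1}{2}e^{4\pi g\psi^2}+\frac{r^3}{2l^2},
\]
which depends only on $r$, $\mu_1$ and $\psi$. All of the Hawking masses $\varpi,\varpi_1,\varpi_2$ are built from the gauge-invariant combination $r_ur_v/\Omega^2$, so they cannot possibly determine the gauge-dependent quantity $\rt_u$ on $\mi$. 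The specific value $\tfrac{1}{2l^2}$ is not forced by any geometric identity; it is a normalisation inherited from the initial data condition $\lim_{u\to u_0}\overline{\rt}_u=\tfrac{1}{2l^2}$ together with the convention that $\mi=\{u=v\}$. The paper's proof handles this by simply passing to the gauge in which $\rt_u=\tfrac{1}{2l^2}$ holds, which is exactly the freedom one has when setting up the new initial data on each $v=\text{const}$ ray in the proof of Theorem~\ref{ENPI}. Your cancellation-tracking programme from Lemma~\ref{HME} cannot substitute for this, because those cancellations concern the \emph{finiteness} of $\varpi$, not the individual value of $\rt_u$.
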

\begin{proof}
	We switch to a double null coordinate system where $\rt_u= - \rt_v = \frac{1}{2l^2}$ and integrate \eqref{REKG3}.
\end{proof}
	\begin{lem}\label{iepl2}
	In $\Delta_{d,u_0}$ we have the following estimates
	\begin{equation}
	\lim_{u\to v}\rt_{uu}(u,v)=0.
	\end{equation}
\end{lem}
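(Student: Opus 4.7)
The plan is to combine equation \eqref{REKG3} with an integration along a constant-$u$ segment terminating on $\mi$, using the boundary value $\rt_u\to \tfrac{1}{2l^2}$ from Lemma \ref{iepl1}.

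First, I would show that $\rt_{uv}$ itself vanishes in the limit $u\to v$. Rewriting the auxiliary formula $\Omega^2 = -4r^4\rt_u\rt_v/\mu_1$ with $\mu_1 = r^2/l^2 - 2\varpi_1/r$ gives
\begin{equation*}
\Omega^2\rt^2 \;=\; \frac{-4\,\rt_u\rt_v}{l^{-2}-2\varpi_1\rt^3},
\end{equation*}
which tends to $1/l^2$ as $(u,v)\to\mi$ by Lemma \ref{iepl1} and the boundedness of $\varpi$ (equivalently of $\varpi_1$ up to the correction involving $\psi^2$) from \eqref{cond1}--\eqref{cond2}. The parenthesised factor in \eqref{REKG3}, namely $\tfrac{3}{2}\varpi_2\rt^2 - \tfrac{2\pi g^2}{l^2\rt}\psi^2$, also vanishes in the limit: the first term since $\varpi_2$ is bounded and $\rt\to 0$, the second since the pointwise bound $|\psi|\le C\rt^{3/2-\kappa}$ from \eqref{cond1} yields $\psi^2/\rt \le C^2\rt^{2-2\kappa}\to 0$ whenever $\kappa<1$. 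Thus $\rt_{uv}(u,v)\to 0$ uniformly as $(u,v)\to\mi$.

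Next, fix $u_*$ and integrate $\rt_{uv}$ in $v'$ from $v$ up to the boundary. Lemma \ref{iepl1} provides the boundary value $\lim_{v'\to u_*^-}\rt_u(u_*,v') = \tfrac{1}{2l^2}$, so
\begin{equation*}
\rt_u(u_*,v) \;=\; \frac{1}{2l^2} \;-\; \int_v^{u_*}\rt_{uv}(u_*,v')\,dv'.
\end{equation*}
The Remark after the weak-solution definition grants $\rt_{uuv}\in C^0_{\mathrm{loc}}$, so Leibniz differentiation in $u_*$ is legitimate and delivers
\begin{equation*}
\rt_{uu}(u_*,v) \;=\; -\lim_{v'\to u_*^-}\rt_{uv}(u_*,v') \;-\; \int_v^{u_*}\rt_{uuv}(u_*,v')\,dv'.
\end{equation*}
As $u_*\to v$ the first term vanishes by the previous step, while the integral is bounded in modulus by $|u_*-v|\sup|\rt_{uuv}|$ and hence also vanishes, provided $\rt_{uuv}$ is locally bounded near $\mi$; this last fact follows by differentiating \eqref{REKG3} in $u$ and invoking \eqref{cond1}--\eqref{cond2} together with the bound $r\ge r_{min}$.

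I expect the genuinely delicate step to be establishing the uniform limit $\rt_{uv}\to 0$ at $\mi$, which links two independent pieces of information: the conformal decay rate $|\psi|\lesssim\rt^{3/2-\kappa}$ (where the strict inequality $\kappa<1$ is essential, as otherwise $\psi^2/\rt$ would fail to vanish and a nontrivial quadratic term in $\rt$ could survive at $\mi$) and the non-degeneration in \eqref{cond2} guaranteeing that $\Omega^2\rt^2$ remains finite in the limit. Once $\rt_{uv}\to 0$ is secured, the rest is a routine application of the fundamental theorem of calculus together with the continuity of $\rt_{uuv}$.
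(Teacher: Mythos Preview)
Your approach is essentially the same as the paper's: both integrate in $v$ and reduce the problem to (i) the boundary limit $\rt_{uv}\to 0$ on $\mi$ and (ii) a uniform bound on $\rt_{uuv}$ near $\mi$. The Leibniz-rule presentation you give is a clean way to package this, and is equivalent to the paper's ``integrate $\rt_{uuv}$ in $v$''.

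Two remarks on emphasis. First, your step~1 already uses control of $\rt_v$ near $\mi$ (to show $\Omega^2\rt^2\to l^{-2}$), which is not part of Lemma~\ref{iepl1}; this is precisely one of the auxiliary quantities the paper flags as needing separate control. Second, you have the difficulty assessment somewhat inverted: once the pointwise bound $|\psi|\lesssim \rt^{3/2-\kappa}$ and condition~\eqref{cond2} are in hand, the limit $\rt_{uv}\to 0$ is essentially a direct read-off of \eqref{REKG3}. The genuinely laborious step is bounding $\rt_{uuv}$, since differentiating \eqref{REKG3} in $u$ produces terms involving $\partial_u\varpi_2$, $\Omega_u$ (equivalently $\rt_v$ and its $u$-derivative via the auxiliary formula for $\Omega^2$), and $\psi_u$, none of which are directly contained in \eqref{cond1}--\eqref{cond2}. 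The paper's sketch explicitly lists $\rt_v,\ \Omega^2,\ \varpi_2,\ \partial_u\varpi_2,\ \rt_{uv},\ \varpi_1$ as the quantities requiring control and defers the estimates to \cite{Dunn_Thesis_2018}; your phrase ``follows by differentiating \eqref{REKG3}'' is the right direction but understates the work involved.
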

	\begin{proof}
	The idea is to integrate the quantity $\rt_{uuv}$ in $v$, this however requires control over $\rt_v, \Omega^2, \varpi_2, \pa_u\varpi_2,\rt_{uv}$ and $\varpi_1$ in the region $\Delta_{d,u_0}$. These are established through integrating, and differentiating the system. For details, see \cite{Dunn_Thesis_2018}.
\end{proof}
	\begin{lem}\label{twsiteq}
	There exists a constant $C_{g,M,l}>0$ such that 
	\begin{equation}
	\f{1}{C_{g,M,l}}\left( \left(\hat{\gr}_u\psi \right)^2+\psi^2 \right)\rho^{-2}\le \left(\left(\grt_u\psi \right)^2+\psi^2  \right)r^2\le C_{g,M,l}\left( \left(\hat{\gr}_u\psi \right)^2+\psi^2 \right)\rho^{-2}.
	\end{equation}
	That is twisting with $r$ and $\rho$ are equivalent in $H^1$ type norms.
\end{lem}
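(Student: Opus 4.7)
\medskip

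\noindent\textbf{Proof proposal.} The strategy is to reduce the claim to two pointwise comparisons on $\Delta_{d,u_0}$: an equivalence of the weight factors $r^{2} \asymp \rho^{-2}$, and an equivalence of the twisted derivatives $(\grt_u\psi)^2 + \psi^2 \asymp (\grh_u\psi)^2 + \psi^2$. The starting point is the explicit identities
\begin{equation*}
\grt_u\psi = \psi_u - g\,\tfrac{r_u}{r}\,\psi, \qquad \grh_u\psi = \psi_u + g\,\tfrac{1}{2\rho}\,\psi,
\end{equation*}
so that, using $r = 1/\rt$ and hence $r_u/r = -\rt_u/\rt$,
\begin{equation*}
\grh_u\psi - \grt_u\psi = g\,\psi\, B, \qquad B := \frac{1}{2\rho} - \frac{\rt_u}{\rt} = \frac{\rt - 2\rho\,\rt_u}{2\rho\,\rt}.
\end{equation*}
If $r\rho$ is bounded above and below and $B$ is bounded on $\Delta_{d,u_0}$, the lemma follows from $(a+b)^2 \le 2a^2 + 2b^2$ and multiplication by $r^2$.

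\medskip

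\noindent For the weight equivalence, I would use Lemmas \ref{iepl1} and \ref{iepl2} together with the $C^0_{\mathrm{loc}}$ regularity of $\rt_{uu}$ built into the function space $\mathfrak{W}$. On the diagonal $\mi$ one has $\rt = 0$, $\rt_u = \tfrac{1}{2l^{2}}$, $\rt_{uu} = 0$; Taylor expansion in $u$ around $(v,v)$ then gives $\rt(u,v) = (u-v)/(2l^{2}) + O((u-v)^{2})$ with a uniform constant on compact subsets, so that $r\rho = \rho/\rt \to l^{2}$ as $u \to v$. Combining this limit with the hypothesis $r \ge r_{\min}$ (which gives an upper bound for $\rt$ and hence a lower bound for $\rho/\rt$ away from $\mi$) and with continuity of $r\rho$ on $\overline{\Delta}_{d,u_0}\setminus \mi$, a compactness argument produces constants $0 < C_1^{-1} \le r\rho \le C_1 < \infty$.

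\medskip

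\noindent For the equivalence of twisted derivatives I would show $B$ is uniformly bounded. The same Taylor expansion, integrated twice, yields
\begin{equation*}
\rt - 2\rho\,\rt_u \;=\; -\int_v^{u} (u'-v)\,\rt_{uu}(u',v)\,du' \;=\; O\!\big((u-v)^{2}\big),
\end{equation*}
while $2\rho\,\rt \ge c\,(u-v)^{2}$ for some $c>0$ by the leading-order expansion of $\rt$. Hence $B$ is bounded near $\mi$. Away from $\mi$ the quantities $\rho$ and $\rt$ are bounded below and $\rt_u$ is bounded above, so $B$ is bounded there by inspection. Plugging the bound $|\grh_u\psi - \grt_u\psi|^{2} \le g^{2}B^{2}\psi^{2}$ into the triangle inequality and multiplying by $r^{2}\asymp\rho^{-2}$ closes both sides of the claimed estimate.

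\medskip

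\noindent The main (minor) technical point is a clean justification of the Taylor expansions uniformly on $\Delta_{d,u_0}$. The pointwise limits from Lemmas \ref{iepl1}--\ref{iepl2} need to be upgraded to uniform control of $\rt_{uu}$ on a neighbourhood of $\mi \cap \overline{\Delta}_{d,u_0}$; this is supplied by the $C^0_{\mathrm{loc}}$ regularity of $\rt_{uu}$ in the definition of $\mathfrak{W}$, together with compactness. Beyond this, the proof is an algebraic identity combined with the observation that the leading $(u-v)^{-1}$ singularities of $\rt_u/\rt$ and $1/(2\rho)$ cancel exactly at $\mi$, which is the analytic content motivating the choice of twisting function $r^{g}$ as a substitute for $\rho^{-g}$.
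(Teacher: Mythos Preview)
Your proposal is correct and supplies precisely the details the paper omits: the paper's own proof is a one-line citation to Lemma~5.2 of \cite{holzegel_einstein-klein-gordon-ads_2013}, and your argument---reducing to the two pointwise estimates $r\rho \asymp 1$ and $|B| \lesssim 1$ via the Taylor expansion of $\rt$ about $\mi$---is the standard route behind that citation. The integral identity $\rt - 2\rho\,\rt_u = -\int_v^{u}(u'-v)\rt_{uu}\,du'$ and the use of $\rt_{uu}\in C^0_{\mathrm{loc}}$ to control it are exactly what is needed, and your treatment of the region away from $\mi$ by compactness (using $\rt>0$ from Lemma~\ref{iepl1} and $r\ge r_{\min}$ from the hypotheses of Theorem~\ref{ENPI}) is sound.
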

\begin{proof}
	This follows similarly to Lemma 5.2. in \cite{holzegel_einstein-klein-gordon-ads_2013}.
\end{proof}
\begin{proof}[Proof of Theorem \ref{ENPI}]~\\
From the interior wellposedness results we can extend to the set $\Delta_{d+\tilde{\delta},u_0}\cap\{v\le u_0+d+\tilde{\delta}-\epsilon\}$, for some $\tilde{\delta}>0$, which depends on $\epsilon$ from the continuity. We now extend to a triangle $\Delta_{d+\delta^*,u_0}$. We note that from the previous lemmas we have on each $v=const$ ray in $\Delta_{d,u_0}$ that the function $\rt$ restricted to this space is admissible as part of an initial data set. Lemma \ref{iepl1}, and corollary \ref{twsiteq} show us that $\psi$ restricted to the ray is also admissible as part of an initial data set. Now let $\delta$ be the time of existence of a solution using this data set, but with $K$ replaced by $2K$ and $c$ by $\f{c}{2}$. Now by choosing the ray $v_c = u_0+d-\f{\delta}{2}$. By the above argument (and using continuity), we can extend our solution to the ray $(u_0+d-\f{\delta}{2}+\delta^*]\times \{u_0+d-\f{\delta}{2}\}$ for some $\delta^*<\f{\delta}{2}$ such that the conditions \eqref{cond1}, and \eqref{cond2} hold on 
$(u_0+d-\f{\delta}{2}+\delta^*]\times \{u_0+d-\f{\delta}{2}\}$ with $K$ replaced by $2K$ and $c$ by $\f{c}{2}$. We then apply the local existence result to extend the solution to $\Delta_{d+\delta^*,u_0}.$
\end{proof}
\section{Perturbed toroidal AdS Schwarzschild data and maximal development}\label{IDS}
In view of (a generalisation of) Birkhoff's theorem we know that if we choose $\psi=0$ then our solution will be an isometric subset of the toroidal Schwarzschild-AdS solution. We thus choose initial data for $\psi$ that is quantifiably close to $0$, i.e.\ represents a small perturbation of the black hole. Under this smallness assumption we will then prove various estimates about the derived quantities on the initial data ray that we will need in the evolution.
\subsection{Initial data}
\subsubsection{The free data}
Let $b>0$, we will later take this to be a sufficiently small quantity.
\begin{defn}
	Let $\mc{N} = (u_0,u_1]\times\{v_0\}$.
	We define our initial radial function
	\begin{equation}
	\overline{\rt}(u) = \f{u-u_0}{2l^2},
	\end{equation} 
	consistent with the gauge choice of \eqref{GC}.
	The free data consists of a $C^1\left( \mc{N}\right) $ function $\psi$ such that
	\begin{equation}
	\left( \int_{\mc{N}}\left( \left(\overline{\grt}_u\overline{\psi}\right)^2+\overline{\psi}^2 \right)\overline{r}^{2}du\right)^\hf+ 			\sup_{\mc{N}}\abs{\overline{\psi}\overline{r}^{\f{3}{2}-\kappa}}+\sup_{\mc{N}}\abs{{r^{\hf+\f{s}{2}}}\left( \overline{\grt}_u\overline{\psi}\right) }=:  b^2,
	\end{equation}
	where $0<s<1$. (The choice of $s$ is technical and we only expect to see $r^{-\hf}$ decay of the $u$ derivative propagating in the system, we do however need this initial smallness in the problem in order to prove various results about the spacetime). 
\end{defn}    
\subsubsection{Deduced quantities}
From $\overline{\psi}$ and $\overline{\rt}$, we can define the following derived quantities for our system:\\ \newline
\textbf{Initial Hawking mass}:\\ \newline
We define $\overline{\varpi}$ as the unique $C^1({\mc{N})}$ solution to
\begin{equation}\label{varpiID}
\begin{split}
\pa_u \overline{\varpi} = &\f{2\pi\overline{r}^2}{\overline{r}_u}\left(\f{\overline{r}^2}{l^2} - \f{2\overline{\varpi}}{\overline{r}} \right) (\overline{\grt}_u\overline{\psi})^2 + \f{4\pi g^2 \overline{r}_u }{\overline{r}}\overline{\varpi} \overline{\psi}^2  \\&+\f{\overline{r}_u\overline{r}^2}{l^2}\left( e^{4\pi g \overline{\psi}^2}\left(4\pi a\overline{\psi}^2-\f{3}{2}  \right)+2\pi(3g-2a)\overline{\psi}^2 + \f{3}{2}\right), 
\end{split}
\end{equation}
with boundary condition 
\begin{equation}
\lim_{u\to u_0} \overline{\varpi} = M.
\end{equation}
From this we then define
\begin{equation}
\overline{\varpi_2} = \overline{\varpi} e^{-4\pi g\overline{\psi}^2}-\f{\overline{r}^3}{2l^2}\left( e^{-4\pi g\overline{\psi}^2}+4\pi g\overline{\psi}^2 -1\right).
\end{equation}
(Recall that the local well-posedness is stated in terms of $\overline{\varpi_2}$, but in practice we shall prefer to use $\overline{\varpi}$.)
\\ \newline
\textbf{The quantity $\overline{r_v}$}:\\ \newline
Recall equation \eqref{EKG1} holds classically. Defining the variable 
\begin{equation}
\chi:= -\f{\Omega^2}{4r_u},
\end{equation}
we can rewrite \eqref{EKG1} as
\begin{equation}
\pa_u\log\chi = \f{4\pi r}{r_u}\left( \gr_u\psi\right)^2.
\end{equation}
Solving this ODE and using the definition of the Hawking mass, one gets the following expression for $r_v$
\begin{equation}
r_v=\chi|_\mi \cdot\left(\f{-2\varpi}{r}+\f{r^2}{l^2} \right)e^{4\pi g\psi^2}\exp\left(\int_{u_0}^{u}\f{4\pi r}{r_u}(\gr_u\psi)^2du\right),  
\end{equation}
we will later on make a gauge choice where $\chi|_\mi = \hf$. We choose
\begin{equation}
\overline{r_v}=\hf\left(\f{-2\overline{\varpi}}{\overline{r}}+\f{\overline{r}^2}{l^2} \right)e^{4\pi g\overline{\psi}^2}\exp\left(\int_{u_0}^{u}\f{4\pi \overline{r}}{\overline{r}_u}(\gr_u\overline{\psi})^2du\right),  
\end{equation}
we remark that while we have not used a twisted derivative in the definition of $\overline{r_v}$, the initial data choices allow us to see that is indeed an integrable quantity. It is also easy to see that $\overline{r_v}$ is independent of choice of $u$-coordinate on the data. \\ \newline
\textbf{The quantity $\Omega^2$}: \\ \newline
We finally define the $C^1(\mc{N})$ quantity 
\begin{equation}
\overline{\Omega}^2=- \f{4\overline{r}_u\overline{r_v}}{\left( \f{-2\overline{\varpi}}{\overline{r}}+\f{\overline{r}^2}{l^2}\right) }e^{4\pi g\overline{\psi}^2}.
\end{equation}
\subsection{Consequences of the smallness}~\\ \\
Define the regular and marginally trapped region of the initial data ray to be 
\begin{equation}
\sr_{v_0}\cup\mathcal{A}_{v_0} = \mc{N}\cap\{u\in \mc{N}: \overline{r_v}(u)>0\}\cup\{u\in \mc{N}: \overline{r_v}(u)=0\}.
\end{equation}
\begin{lem}\label{COS}
	We have that for $b>0$ sufficiently small, on $\sr_{v_0}\cup\mathcal{A}_{v_0} $, 
	\begin{equation}\label{HMIV}
	\sup_{u\in\sr_{v_0}\cup\mathcal{A}_{v_0} }\abs{\overline{\varpi} - M} \le C_{l,g}b^2,
	\end{equation}
	note that in particular if $b$ is small enough $\varpi>0$. Further, defining the Schwarzschild value 
	\begin{equation}
	\overline{r_v}^s := \hf\left(-\f{2M}{\overline{r}}+\f{\overline{r}^2}{l^2} \right),
	\end{equation}
	we have
	\begin{equation}\label{rvdataest}
	\sup_{u\in\sr_{v_0}\cup\mathcal{A}_{v_0} }\abs{\overline{r_v}-\overline{r_v}^s} \le C_{l,a,M}b^2.
	\end{equation}
	For small enough initial data there exist points on $\mc{N}$ such that
	\begin{equation}\label{C3MTSE1}
	\overline{r_v}\le 0.
	\end{equation}
	Furthermore there is a unique $u^*\in \mc{N}$ such that 
	\begin{equation}\label{C3MTSE2}
	\overline{r_v}(u^*)=0.
	\end{equation}  
	Defining $r_{min} := \overline{r}(u^*)$
	we have
	\begin{equation} \label{rmine}
	\abs{r_{min}-r_+ } \le  C(b),
	\end{equation}
	where $C(b) \to 0$ as $b\to 0$, and we recall $r_+ := (2Ml^2)^{\f{1}{3}}$.
\end{lem}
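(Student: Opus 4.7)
The proof proceeds in four stages, each estimate building on the preceding one.

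First, to establish the Hawking mass bound \eqref{HMIV}, I would integrate the defining ODE \eqref{varpiID} from $u_0$ using $\overline{\varpi}(u_0^+) = M$. Each term on the right-hand side is at worst quadratic in $\overline{\psi}$ or $\overline{\grt}_u \overline{\psi}$; crucially, the nonlinear potential term is in fact \emph{quartic}, because the algebraic identity $2a - 3g - g^2 = 0$ (exhibited in the proof of Lemma \ref{HME}) forces the $O(\overline{\psi}^2)$ contribution to cancel. Substituting $\overline{r}_u = -\overline{r}^2/(2l^2)$, the weighted energy bound $\int_{\mc{N}}\bigl((\overline{\grt}_u \overline{\psi})^2 + \overline{\psi}^2\bigr)\overline{r}^2\,du \le b^4$ together with the pointwise decay $|\overline{\psi}| \le b^2 \overline{r}^{-3/2+\kappa}$ control each integrated contribution by $Cb^4$, up to a factor proportional to $\sup|\overline{\varpi}|$. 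A standard bootstrap with ansatz $|\overline{\varpi}| \le 2M$ then closes and yields $|\overline{\varpi} - M| \le C_{l,g} b^2$. Positivity of $\overline{\varpi}$ is immediate.

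Next I would exploit the explicit representation
\begin{equation*}
\overline{r_v} = \tfrac{1}{2}\left(-\tfrac{2\overline{\varpi}}{\overline{r}} + \tfrac{\overline{r}^2}{l^2}\right) e^{4\pi g \overline{\psi}^2} \exp\left(\int_{u_0}^u \tfrac{4\pi \overline{r}}{\overline{r}_u}(\gr_u \overline{\psi})^2\, du'\right).
\end{equation*}
Writing $\gr_u \overline{\psi} = \overline{\grt}_u \overline{\psi} - (g/(2l^2))\,\overline{r}\,\overline{\psi}$, both exponential factors differ from $1$ by a quantity controlled by the initial data smallness. Subtracting $\overline{r_v}^s = \tfrac{1}{2}(\overline{r}^2/l^2 - 2M/\overline{r})$ and combining with the Hawking mass estimate from the previous step delivers \eqref{rvdataest}.

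For \eqref{C3MTSE1}--\eqref{C3MTSE2}, the key observation is that $\overline{r_v}^s$ vanishes precisely when $\overline{r} = r_+$. Since the initial data ray is chosen so that $\overline{r}$ takes values on both sides of $r_+$ (this is the content of the informal phrase \emph{small enough initial data}), the estimate \eqref{rvdataest} forces $\overline{r_v}$ to be positive near $u_0$ and negative near $u_1$; the IVT then furnishes at least one $u^*$ with $\overline{r_v}(u^*) = 0$. For uniqueness, I would factor $\overline{r_v}(u) = \frac{E(u)}{2\overline{r}\, l^2}\, G(u)$ with $G(u) := \overline{r}^3 - 2\overline{\varpi}\, l^2$ and $E(u) > 0$; then
\begin{equation*}
G'(u) = 3\overline{r}^2 \overline{r}_u - 2l^2 \overline{\varpi}_u = -\tfrac{3\overline{r}^4}{2l^2} + O(b^4),
\end{equation*}
which is strictly negative for $b$ sufficiently small, so $G$ is monotonically decreasing and its zero is unique. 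Finally, $\overline{r_v}(u^*) = 0$ together with $E(u^*) > 0$ gives $r_{min}^3 = 2\overline{\varpi}(u^*)\, l^2$, so that $|r_{min}^3 - r_+^3| = 2l^2|\overline{\varpi}(u^*) - M| \le C_{l,g} b^2$, and the estimate \eqref{rmine} follows by continuity of the cube root.

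The main obstacle is the weighted estimate in the bootstrap for $\overline{\varpi}$: several terms on the right-hand side of \eqref{varpiID} carry factors of $\overline{r}^2$ or $\overline{r}^3$, which must be balanced against the pointwise and integral decay of $\overline{\psi}$ and $\overline{\grt}_u \overline{\psi}$. The quartic vanishing of $f(\overline{\psi}^2)$ is precisely what makes the otherwise dangerous term $\overline{r}_u\overline{r}^2/l^2 \cdot f(\overline{\psi}^2)$ integrable at $\mi$; without this algebraic cancellation, the leading contribution would be $O(\overline{r}^2 \overline{\psi}^2)$, which could not be absorbed into a $Cb^2$ error. The same mechanism governs the bound on $\overline{r_v} - \overline{r_v}^s$, where the compatibility of the decay rate $\overline{r}^{-3/2+\kappa}$ of $\overline{\psi}$ with the growth of $\overline{r}^2/l^2$ constrains the admissible range of $\kappa$.
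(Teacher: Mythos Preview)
Your overall strategy matches the paper's: bootstrap to control $\overline{\varpi}$, deduce the $\overline{r_v}$ estimate from the explicit formula, obtain the zero by IVT, and read off $r_{min}^3 = 2l^2\overline{\varpi}(u^*)$. Two points deserve comment.

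First, your bootstrap hypothesis $|\overline{\varpi}|\le 2M$ is too weak to run the estimates with constants depending only on $l,g,M$. Several terms in the integrated ODE for $\overline{\varpi}$ (for instance the one coming from $4\pi g^2\,\overline{r}_u\,\overline{\varpi}\,\overline{\psi}^2/\overline{r}$, or the $\overline{\varpi}/\overline{r}$ piece of the first term) acquire inverse powers of $\overline{r}$ after integration, so you need a uniform lower bound on $\overline{r}$ along $\sr_{v_0}\cup\mc{A}_{v_0}$. Your hypothesis allows $\overline{\varpi}<0$, and then $\overline{r_v}\ge 0$ gives no information on $\overline{r}$. The paper fixes this by bootstrapping instead on $\overline{r}\ge r_+/2$ and closing that at the end via the $r_{min}$ estimate; an equivalent repair within your scheme is to bootstrap on $|\overline{\varpi}-M|\le M/2$, since then $\overline{\varpi}\ge M/2$ and on $\sr_{v_0}\cup\mc{A}_{v_0}$ one has $\overline{r}^3\ge 2\overline{\varpi}l^2\ge Ml^2=r_+^3/2$.

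Second, for uniqueness of $u^*$ you differentiate $G(u)=\overline{r}^3-2l^2\overline{\varpi}$, whereas the paper differentiates $\overline{r_v}$ directly via the wave equation for $r$, obtaining $(\overline{r_v})_u = -\tfrac{\overline{\Omega}^2}{2}\bigl(\tfrac{\overline{\varpi}}{\overline{r}^2}e^{-4\pi g\overline{\psi}^2}+\tfrac{\overline{r}}{2l^2}(3-e^{-4\pi g\overline{\psi}^2})\bigr)+\tfrac{2\pi a}{l^2}\overline{\Omega}^2\overline{\psi}^2<0$ for small $b$. Both arguments are correct; yours is slightly more elementary but requires the bound $|\overline{\varpi}_u|=O(b^4)$ to hold at the zero itself, which again needs the lower bound on $\overline{r}$ from the corrected bootstrap. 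Your observation that the energy integral $\int \overline{r}^2(\overline{\grt}_u\overline{\psi})^2\,du\le b^4$ handles the leading $(\overline{\grt}_u\overline{\psi})^2$ term directly is cleaner than the paper's use of the pointwise $r^{1/2+s/2}$ bound, and in fact the extra $s$ is not essential for this particular lemma.
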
 
\begin{proof}
	Define a bootstrap region to be
	\begin{equation}
	\mc{B}_{v_0}:=\sr_{v_0}\cup\mathcal{A}_{v_0} \cap \left\{r\ge \f{r_+}{2}\right\},
	\end{equation}
	Clearly this set is closed, non empty, and connected. We need to show it is open to complete the bootstrap argument.\\
	We define
	\begin{equation}
	f\left( \overline{\psi}^2\right)  = \left( e^{4\pi g \overline{\psi}^2}\left(4\pi a\overline{\psi}^2-\f{3}{2}  \right)+2\pi(3g-2a)\overline{\psi}^2 + \f{3}{2}\right).
	\end{equation}
	The equation for $\overline{\varpi}$ \eqref{varpiID} is thus
	\begin{equation}
	\begin{split}
	\pa_u \overline{\varpi} &= \f{2\pi\overline{r}^2}{\overline{r}_u}\left(\f{\overline{r}^2}{l^2} - \f{2\overline{\varpi}}{\overline{r}} \right) (\overline{\grt}_u\overline{\psi})^2 + \f{4\pi g^2 \overline{r}_u }{\overline{r}}\overline{\varpi} \overline{\psi}^2  +\f{\overline{r}_u\overline{r}^2}{l^2}f(\overline{\psi}^2)\\
	& = \varpi\underbrace{\left(-\f{4\pi \overline{r}}{\overline{r}_u}\left(\overline{\grt}_u\overline{\psi} \right)^2 + \f{4\pi g \overline{r}_u}{\overline{r}}\overline{\psi}^2  \right)}_{=:h} +\f{2\pi \overline{r}^4}{l^2\overline{r}_u}\left(\overline{\grt}_u\overline{\psi} \right)^2 +\f{\overline{r}_u\overline{r}^2}{l^2}f(\overline{\psi}^2)\\
	&= \overline{\varpi} h -4\pi \overline{r}\left(\overline{r}\overline{\grt}_u\overline{\psi} \right)^2 +\f{\overline{r}_u\overline{r}^2}{l^2}f(\overline{\psi}^2).
	\end{split}
	\end{equation}
	Solving this equation
	%	\begin{equation}
	%	\pa_u\left( \overline{\varpi} \exp\left(-\int_{u_0}^u h du \right)  \right) = -4\pi \overline{r}\left(\overline{r}^\hf\overline{\grt}_u\overline{\psi} \right)^2\exp\left(-\int_{u_0}^u h du \right) +  \f{\overline{r}_u\overline{r}^2}{l^2}f(\overline{\psi}^2)\exp\left(-\int_{u_0}^u h du \right),
	%	\end{equation}
	%	and thus
	\begin{equation}
	\begin{split}
	&\overline{\varpi}   - M\exp\left(\int_{u_0}^u h du\right) \\= &\exp\left(\int_{u_0}^u h du\right) \int_{u_0}^u -4\pi \overline{r}\left(\overline{r}^\hf\overline{\grt}_u\overline{\psi} \right)^2\exp\left(-\int_{u_0}^u h du \right)  +  \f{\overline{r}_u\overline{r}^2}{l^2}f(\overline{\psi}^2)\exp\left(-\int_{u_0}^u h du \right)du. 
	\end{split}
	\end{equation}
	We estimate $h$ by
	\begin{equation}
	\begin{split}
	\abs{h}	&\le C_{l,g}b^2\left(\overline{r}^{-{4+s}}+\overline{r}^{-4+2\kappa} \right)(-\overline{r}_u) .
	\end{split}
	\end{equation}
	We see that in $\mc{B}_{v_0}$
	\begin{equation}
	\int_{u_0}^u \abs{h}du \le C_{l,g}\f{b^2}{r_{+}}.
	\end{equation}
	\begin{equation}
	\begin{split}
	\overline{\varpi} - M 
	\ge \exp\left(\int_{u_0}^u h du\right) \int_{u_0}^u -4\pi \overline{r}^{1-s}b^2\exp\left(C_{l,g}\f{b^2}{r_{+}}\right)  +  \f{\overline{r}_u\overline{r}^2}{l^2}f(\overline{\psi}^2)\exp\left(-\int_{u_0}^u h du \right) du.
	\end{split}
	\end{equation}
	(Note $\overline{r}^{1-s}du \sim\overline{r}^{-1-s}d\bar{r}$  hence the need for our initial data to have the additional $s$ smallness). Continuing the estimation 
	\begin{equation}
	\begin{split}
	\overline{\varpi} - M 	\ge& -C_{l,g}\exp\left( C_{l,g}\f{b^2}{r_{+}}\right) \left( \f{r_+}{2}\right) ^{-s}b^2+ \exp\left(\int_{u_0}^u h du\right) \int_{u_0}^u\f{\overline{r}_u\overline{r}^2}{l^2}f(\overline{\psi}^2)\exp\left(-\int_{u_0}^u h du \right) du.
	\end{split}
	\end{equation}
	For $b^2$ small
	\begin{equation}
	f(\overline{\psi}^2) \le 8\pi a g \overline{\psi}^4,
	\end{equation}
	so 
	\begin{equation}
	\begin{split}
	\exp\left(\int_{u_0}^u h du\right) \int_{u_0}^u\f{\overline{r}_u\overline{r}^2}{l^2}f(\overline{\psi}^2)\exp\left(-\int_{u_0}^u h du \right) du &\ge -C_{g,l}\exp\left( C_{l,g}\f{b^2}{r_{+}}\right)b^4\int_{\infty}^{\overline{r}}r^{-4+4\kappa} d\overline{r}\\
	&\ge  -C_{g,l}\exp\left( C_{l,g}\f{b^2}{r_{+}}\right)b^4r^{-3+4\kappa}_{+}.
	\end{split}
	\end{equation}
	And thus
	\begin{equation}
	\overline{\varpi} - M \ge -C_{l,g}\exp\left( C_{l,g}\f{b^2}{r_{+}}\right) r_{+}^{-s}b^2 -C_{g,l}\exp\left( C_{l,g}\f{b^2}{r_{+}}\right)b^4r^{-3+4\kappa}_{+} \ge -b^2f(b),
	\end{equation}
	where $f(b)$ goes to a positive constant as $b\to 0$. \\
	Estimating the other direction 
	\begin{equation}
	\pa_u \overline{\varpi} \le h\overline{\varpi},
	\end{equation}
	we quickly see
	\begin{equation}
	\begin{split}
	\overline{\varpi} - M 	&\le  \f{C_{M,l,g}}{r_{+}}b^2.
	\end{split}
	\end{equation}
	We conclude that
	\begin{equation}
	\abs{\overline{\varpi}-M} \le f(b) b^2,
	\end{equation}
	where $f(b) \to C>0$ as $b\to 0$.\\
	We now show the second statement \eqref{rvdataest}.
	Recall that 
	\begin{equation}\label{rvid}
	\overline{r_v}=\hf\left(\f{-2\overline{\varpi}}{r}+\f{\overline{r}^2}{l^2} \right)\exp\left(4\pi g\overline{\psi}^2+\int_{u_0}^{u}\f{-8\pi l^2}{\overline{r}}(\gr_u\overline{\psi})^2du\right),  
	\end{equation}
	so
	\begin{equation}
	\begin{split}	
	\abs{\overline{r_v}-\overline{r_v}^s} 
	&\le \f{1}{\overline{r}}\abs{\overline{\varpi}-M}+ \abs{\f{-2\overline{\varpi}}{\overline{r}}+\f{\overline{r}^2}{l^2}}\abs{\left(\exp\left(4\pi g\overline{\psi}^2+\int_{u_0}^{u}\f{-8\pi l^2}{\overline{r}}(\gr_u\overline{\psi})^2du\right)-1 \right)}\\
	&\le \f{C_{l,M,a}}{r}b^2 + \abs{\f{-2\overline{\varpi}}{r}+\f{\overline{r}^2}{l^2}}\abs{\left(\exp\left(4\pi g\overline{\psi}^2+\int_{u_0}^{u}\f{-8\pi l^2}{\overline{r}}(\gr_u\overline{\psi})^2du\right)-1 \right)}.
	\end{split}		
	\end{equation}
	Noting the estimate
	\begin{equation}
	\abs{\gr_u\overline{\psi}}^2 \le C_{l,g}b^2\overline{r}^{-3+2\kappa},
	\end{equation}
	we quickly can see that for $b<1$	
	\begin{equation}
	\abs{\left(\exp\left(4\pi g\overline{\psi}^2+\int_{u_0}^{u}\f{-8\pi l^2}{\overline{r}}(\gr_u\psi)^2du\right)-1 \right)} \le C_{g,l}b^2\overline{r}^{-3+2\kappa}.		
	\end{equation}
	From here it follows that
	\begin{equation}\label{rvidest}
	\abs{\overline{r_v}-\overline{r_v}^s} \le  b^2\left( \f{C_{l,M,a}}{r} + C_{g,l}\overline{r}^{-1+2\kappa}\right).
	\end{equation}
	Restricting to $\kappa \in (0,\hf]$ we can find a constant $C_{M,l,a}>0$, such that
	\begin{equation}
	\abs{\overline{r_v}-\overline{r_v}^s} \le  C_{M,l,g}b^2,
	\end{equation}
	proving \eqref{rvdataest}. \\
	%	Finally if we evaluate estimate \eqref{rvidest} at the value $u^+$ (defined through $\overline{r}(u^+)=r_+$).
	%	\begin{equation}
	%	\abs{r_v(u(r_+))} \le  b^2\left( \f{C_{l,M,a}}{r_+} + C_{g,l}r_+^{-1+2\kappa}\right)
	%	\end{equation}
	For statements \eqref{C3MTSE1} and \eqref{C3MTSE2}, define $u^+$ by $\overline{r}(u^+)=r_+$.
	Now in our coordinate system we have 
	\begin{equation}
	\left( \overline{r_v}^s\right) _u = -\f{M}{l^2}-\f{\overline{r}^3}{l^4} <0,
	\end{equation}
	showing that $\overline{r_v}^s$  is monotone on $\mc{N}$. For $\epsilon>0$ consider $\tilde{u} = u^+ +\epsilon$ so
	\begin{equation}
	\overline{r_v}^s({\tilde{u}}) <0,
	\end{equation}
	coupled with the estimate
	\begin{equation}
	\overline{r_v}^s(\tilde{u}) - Cb^2 \le \overline{r_v}(\tilde{u}) \le Cb^2 +\overline{r_v}^s(\tilde{u}), 
	\end{equation}
	implies for small initial data there is a point on $\mc{N}$ where $\overline{r_v}\le0$. 
	Clearly we can repeat this argument and find a $b^2$ and $u$ value say $u_m$ where $\overline{r_v}(u_m)\ge0$.\\
	From continuity we know that there exists at least one value of  $u\in[u_m,\tilde{u}]$ such that $\overline{r_v}(u)=0$. Viewing the radial equation in terms of $\varpi$
	\begin{equation}
	(\overline{r_v})_u = -\f{\overline{\Omega^2}}{2}\left(\f{\overline{r}}{2l^2}\left(3-e^{4\pi g\overline{\psi}^2} \right) + \f{\overline{\varpi}}{\overline{r}^2}e^{4\pi g\overline{\psi^2}}  \right) + \f{2\pi a}{l^2}\overline{\Omega^2}\overline{\psi}^2 < 0,
	\end{equation}
	for small $b$. We have $\overline{r_v}$ is monotonic, and this zero is unique. Let $u^*$ denote this value, and denote it's $\overline{r}$ value by $r_{min}$.
	Recalling the definition of $\overline{r_v}$ we see that for this value of $u$ the relationship 
	\begin{equation}
	r_{min}^3 = 2l^2\overline{\varpi}({u^*}).
	\end{equation}
	So
	\begin{equation}
	\abs{r_{min}^3- r_+^3} =2l^2\abs{\overline{\varpi}({u^*})-M} \le Cb^2.
	\end{equation}
	This implies the inequality
	\begin{equation}
	r_{min} \ge \left(r_+^3-Cb^2 \right)^{\f{1}{3}}, 
	\end{equation}
	and we deduce that
	\begin{equation}
	r_{min} \ge r_+ - C(b),
	\end{equation}
	where $C(b) \to 0$ as $b\to 0$.
	From here we see that for $b$ chosen small enough, the inequality in $\mathcal{B}_{v_0}$
	\begin{equation}
	r\ge r_{min}\ge r_+ - C(b)  > \f{r_+}{2},
	\end{equation}
	holds. Hence $\mathcal{B}_{v_0}$ is open and  
	\begin{equation}
	\mathscr{R}_{v_0}\cup\mathcal{A}_{v_0} \cap \left\{r\ge \f{r_+}{2}\right\}=\mathscr{R}_{v_0}\cup\mathcal{A}_{v_0} .
	\end{equation}
\end{proof}	
\subsubsection{Maximal development and set up}~\\ \\
We let $\mc{Q}\subset \mathbb{R}^2$ denote the quotient by $\mathfrak{T}^2$ of the maximal development from perturbed toroidal Schwarzschild-AdS data. From the geometric uniqueness statement we know that this is unique up to diffeomorphism. Denote the regular region $\sr =\{(u,v)\in\mc{Q} : r_v(u,v) > 0 \}, $ and let $N(u)\subset\mc{Q}$ denote the outgoing characteristic null-line $u=const$ emanating from the initial data.
\begin{lem}
	In the maximal development we have the following properties
	\begin{itemize}
		\item The set of outgoing null rays along which $r \to \infty$ is non-empty:
		\begin{equation}
		\{u>u_0:N(u)\subset \sr \text{ and } \rt\to 0 \text{ along } N(u)\} \ne \emptyset.
		\end{equation}
		\item For 
		\begin{equation}
		u_\mh := \sup_{u>u_0}\{u:N(u)\subset \sr \text{ and } \rt\to 0 \text{ along } N(u)\},
		\end{equation}
		and 
		\begin{equation}
		\sr_\mh := \sr\cap\{u_0<u<u_\mh\}, \quad \text{ and } \quad \overline{\sr_\mh} := \sr_\mh\cup N(u_\mh), 
		\end{equation}
		we have that
		\begin{equation}\label{C3SF}
		\sup_{\sr_\mh}v = \sup_{N(u_\mh)}v.
		\end{equation}
		\item Define `null infinity' $\mi =\{(u,v_{\infty}(u))|u_0<u<u_\mh\}$, where $v_{\infty}(u)$ is the value of $v$ such that:
		$\lim_{v\to v_{\infty(u)}}\rt(u,v)=0$. We can reparametrise $\mi$ by $\{u_\mi(v),v|v\in \mc{Q}\}$, where $u_\mi(v)$ is the $u$ coordinate of the past limit point at which the $v=const$ ray intersects $\mi$. \\
		Then there exists a double null system $(u,v)$ covering $\sr_\mh$, such that
		\begin{equation}
		\chi|_\mi = \hf, \quad \overline{\rt}_u = \f{1}{2l^2}.
		\end{equation}  
	\end{itemize}
\end{lem}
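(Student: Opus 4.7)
The three bullets fall into distinct categories: existence of at least one null ray reaching $\mi$ with $r_v > 0$; a geometric claim about the shape of $\sr_\mh$; and a gauge fixing. I would handle them in order, leaning on the machinery already assembled.

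For the non-emptiness, the plan is to combine Theorem \ref{WP} with Lemma \ref{COS}. Local well-posedness produces a solution on a triangle $\Delta_{\delta, u_0}$ in which each constant-$u$ ray with $u \in (u_0, u_0+\delta)$ meets $\mi=\{u=v\}$ and satisfies $\rt \to 0$. Meanwhile, Lemma \ref{COS} shows $\overline{r_v}>0$ on $(u_0, u^*)$, and by continuity together with the Raychaudhuri equation \eqref{EKG2}, positivity of $r_v$ persists along all sufficiently short constant-$u$ rays emanating from that interval. Hence the set displayed in the first bullet contains a neighbourhood of $u_0$.

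For the equality $\sup_{\sr_\mh} v = \sup_{N(u_\mh)} v$, I would prove both inequalities separately. For $\ge$, pick $v^* < \sup_{N(u_\mh)} v$; by continuity of the solution together with the extension principle near $\mi$ (Theorem \ref{ENPI}), the point $(u, v^*)$ lies in $\sr$ for $u$ slightly less than $u_\mh$, so $(u, v^*) \in \sr_\mh$. For the reverse $\le$, any $(u, v) \in \sr_\mh$ has $u < u_\mh$, so $N(u) \subset \sr$ reaches $\mi$ at some $v_\infty(u) > v$; monotonicity of $u \mapsto v_\infty(u)$ and continuity as $u\to u_\mh^-$ give $v \le \lim_{u\to u_\mh^-} v_\infty(u) = \sup_{N(u_\mh)} v$. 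For the gauge choice, the residual freedom of the double null system consists of independent reparametrisations $u \mapsto U(u)$ and $v \mapsto V(v)$; the first is used to impose $\overline{\rt}_u = (2l^2)^{-1}$ along $\mc{N}$, matching the Schwarzschildean condition \eqref{GC}, and the second fixes $\chi|_\mi = 1/2$, which is admissible because $\chi$ transforms covariantly under $v$-reparametrisations and, by the regularity of the maximal development near $\mi$, is strictly positive and sufficiently regular on $\mi$ to perform the reparametrisation.

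The principal obstacle I expect is the reverse inequality in the middle assertion, namely ruling out the possibility that $\sr_\mh$ terminates at a $v$-value strictly below $\sup_{N(u_\mh)} v$. This requires continuity (or monotonicity) of $v_\infty(u)$ as $u \to u_\mh^-$, and rests on Theorem \ref{ENPI}, which prevents a premature breakdown provided the uniform bounds \eqref{cond1}-\eqref{cond2} hold along the approaching rays; verifying that these quantities do not degenerate along the limiting sequence is the crux of the argument.
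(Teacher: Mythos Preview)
Your treatment of the first and third bullets is fine and matches the paper: the first is exactly the content of Corollary~\ref{iep} (local well-posedness produces a triangle reaching $\mi$, and Lemma~\ref{COS} supplies the trapped point), and the gauge fixing in the third bullet is precisely the reparametrisation $u\mapsto U(u)$, $v\mapsto V(v)$ that the paper writes out explicitly.

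The second bullet is where you diverge from the paper, and where there is a genuine gap. You reach for Theorem~\ref{ENPI}, the extension principle \emph{near} $\mi$, both in your $\ge$ argument and (via the continuity of $v_\infty$) in what you flag as the crux. But the hypotheses \eqref{cond1}--\eqref{cond2} of Theorem~\ref{ENPI} are quantitative bounds on $\psi$, $\grt_u\psi$, $\varpi$ that are only established \emph{later}, in the bootstrap of \S\ref{OSS}; at this point in the argument you have no way to verify them along the limiting sequence $u\to u_\mh^-$. The paper avoids this entirely by using the \emph{interior} extension principle instead: Corollary~\ref{iep} already establishes that $N(u_\mh)$ is regular (because $r$ is monotonic and bounded away from $0$ along it), so a first singularity cannot form on $N(u_\mh)$, and \eqref{C3SF} follows in one line. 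Concretely, if some $(u_1,v_1)\in\sr_\mh$ had $v_1>\sup_{N(u_\mh)}v$, the point $(u_\mh,\sup_{N(u_\mh)}v)$ would be a first singularity in the sense of the interior extension principle, which is forbidden since $r$ stays bounded there.

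Note also that your $\ge$ direction needs no extension principle at all: openness of $\mc{Q}$ together with the definition of $u_\mh$ (which guarantees $N(u)\subset\sr$ for all $u<u_\mh$) already gives it. Your final paragraph also seems to conflate the two directions---``$\sr_\mh$ terminating below $\sup_{N(u_\mh)}v$'' is failure of $\ge$, not of the reverse.
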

\begin{proof}
	From continuity the data set contains a point where $r_v<0$, then we simply apply corollary \ref{iep}. As $N(u_\mh)$ is regular \eqref{C3SF} follows from the fact that a first singularity cannot form along it.
	Letting
	\begin{equation}
	\hat{u} = h(u),\quad \hat{v} = g(v),
	\end{equation}
	we see that under these transformations we have that
	\begin{equation}
	\hat{\chi} := \f{\hat{\Omega}^2}{-4r_{\hat{u}}} = \f{\chi}{g'},
	\end{equation}
	and 
	\begin{equation}
	-r_{\hat{u}}=\f{-r_u}{h'}.
	\end{equation}
	So choosing
	\begin{equation}
	g'(v) = \f{\Omega^2}{-2r_u}(v,v) = 2\chi(v,v),
	\end{equation}
	and 
	\begin{equation}
	f'(u) = \f{-2l^2r_u}{r^2}(u,v_0),
	\end{equation}
	we then have that at $\mi$ 
	\begin{equation}
	\hat{\chi} = \hf,
	\end{equation}
	and on the initial data ray $v=v_0$
	\begin{equation}
	-r_{\hat{u}} = \f{r^2}{2l^2}.
	\end{equation}
	The latter choice being consistent with \eqref{GC}.\\
	We now switch to these coordinates and drop the hats.\\
	\begin{figure}[h!]
		\begin{center}
			\includegraphics[scale=0.7]{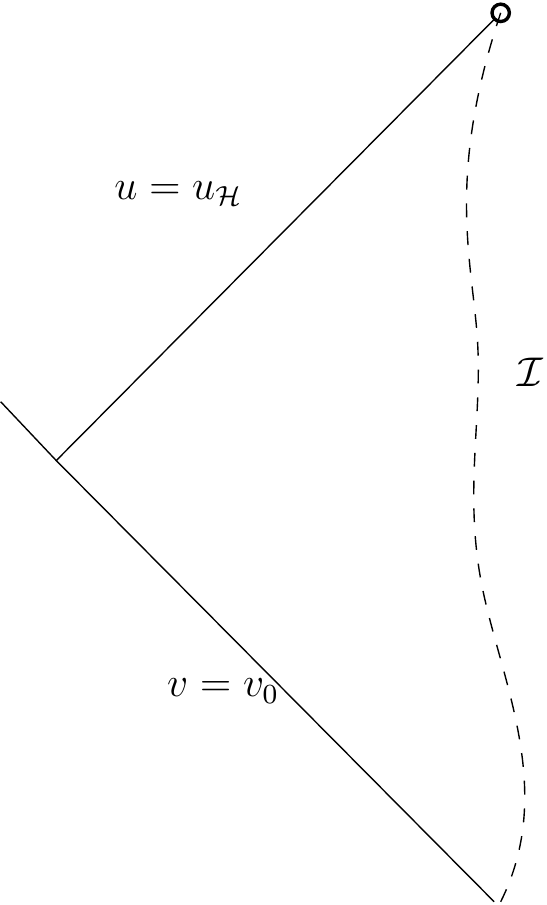}	
		\end{center}
		\caption{Depiction of (a subset of) the Penrose diagram}
	\end{figure}

	Note that in these coordinates null infinity is no longer a straight line. We do not include the future limit point of the ray $u=u_\mh$ in null infinity, as it is a priori possible for $\rt \to 0$ along this ray. We will, as part of the proof of orbital stability, show this not to be the case.   
\end{proof}
\begin{lem}
	In ${\sr}_\mh$ we have that 
	\begin{equation}
	r\ge r_{min},
	\end{equation}
\end{lem}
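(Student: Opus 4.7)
My plan is to combine the Raychaudhuri equation \eqref{EKG2}, which forces $\pa_v(r_v/\Omega^2)\le 0$ and hence the monotonicity of $r_v/\Omega^2$ along outgoing null rays, with the initial-data analysis already carried out in Lemma \ref{COS}. Together these two ingredients let me trace any point of $\sr_\mh$ back to the regular part of $\mc{N}$, where the lower bound $r_{min}$ is already available.

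Fix a point $(u,v)\in\sr_\mh$. By global hyperbolicity (cf.\ Definition \ref{C3GUME}) the backward null segment $\{u\}\times[v_0,v]$ is contained in $\mc{Q}$, since any past-directed causal curve emanating from $(u,v)$ must meet either $\mc{N}_\mc{Q}$ or $\mi$, and here it terminates on $\mc{N}_\mc{Q}$ at $v=v_0$. Integrating \eqref{EKG2} along this segment shows that $r_v/\Omega^2$ is non-increasing in $v$, so from $r_v(u,v)>0$ (which holds because $(u,v)\in\sr$) one concludes
\begin{equation*}
\f{r_v(u,v')}{\Omega^2(u,v')}\;\ge\;\f{r_v(u,v)}{\Omega^2(u,v)}\;>\;0\quad\text{for every }v'\in[v_0,v].
\end{equation*}
In particular $\overline{r_v}(u)=r_v(u,v_0)>0$, so by Lemma \ref{COS}, which identifies the regular part of $\mc{N}$ as $\{u\in\mc{N}:u<u^*\}$, we must have $u<u^*$.

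On $\mc{N}$ the gauge choice $\overline{\rt}_u=1/(2l^2)>0$ makes $\overline{r}$ strictly decreasing in $u$, so $\overline{r}(u)>\overline{r}(u^*)=r_{min}$. Since $r_v>0$ everywhere on $\{u\}\times[v_0,v]$ by the previous step, integrating gives
\begin{equation*}
r(u,v)\;=\;\overline{r}(u)+\int_{v_0}^{v}r_v(u,v')\,dv'\;\ge\;\overline{r}(u)\;>\;r_{min},
\end{equation*}
which is the claim. I do not expect any genuine obstacle: the only point requiring a little care is the a priori assertion that the past-directed constant-$u$ segment remains inside $\mc{Q}$, which is forced by the global hyperbolicity clause of Definition \ref{C3GUME} together with the fact that $v_0$ is the $v$-coordinate of $\mc{N}_\mc{Q}$.
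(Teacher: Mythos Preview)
Your proof is correct and follows the same approach as the paper. The paper's argument is terser---it writes $r(u,v) = \overline{r}(u) + \int_{v_0}^v r_v(u,\hat v)\,d\hat v$, observes the integral is positive since we are in the regular region, and invokes $\overline{r}\ge r_{min}$---while you make explicit the Raychaudhuri backward-propagation of $r_v>0$ and the deduction $u<u^*$ that the paper leaves implicit.
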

\begin{proof}
	We can write
	\begin{equation}
	r(u,v) = \overline{r}(u) + \int_{v_0}^v r_v(u,\hat{v})d\hat{v},
	\end{equation}
	as we are in the regular region we know the integral is positive ($r_v>0$), and lower bounds on the initial data ($\overline{r}\ge r_{min}>0$) prove the result.
\end{proof}
\begin{lem}
	We have in $\mc{Q}^+$ that
	\begin{equation}
	r_u <0.
	\end{equation}
\end{lem}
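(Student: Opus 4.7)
The plan is to integrate the Raychaudhuri equation \eqref{EKG1} along constant-$v$ rays, starting from null infinity. Rewriting \eqref{EKG1} as
$$\partial_u\!\left(\frac{r_u}{\Omega^2}\right) = -\frac{4\pi\, r\, (\partial_u\psi)^2}{\Omega^2},$$
the right-hand side is manifestly non-positive throughout $\mathcal{Q}^+$, since $r > 0$ and $\Omega^2 > 0$ in any Lorentzian double-null chart. Consequently, on every constant-$v$ ray contained in $\mathcal{Q}^+$, the map $u \mapsto r_u/\Omega^2$ is monotone non-increasing.

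For the boundary datum of this ODE, I would use the gauge choice established in the previous lemma. Everywhere on $\mathcal{I}$ we have $\chi = -\Omega^2/(4r_u) = \tfrac{1}{2}$, which is equivalent to $r_u/\Omega^2\big|_{\mathcal{I}} = -\tfrac{1}{2}$. Fix any $(u,v) \in \mathcal{Q}^+$ with $v > u_0$, and let $u_{\mathcal{I}}(v)$ denote the $u$-coordinate of the past endpoint of the constant-$v$ ray on $\mathcal{I}$. Integrating the Raychaudhuri equation from $u_{\mathcal{I}}(v)$ to $u$ then gives
$$\frac{r_u}{\Omega^2}(u,v) \le \frac{r_u}{\Omega^2}(u_{\mathcal{I}}(v),v) = -\frac{1}{2},$$
so $r_u(u,v) \le -\tfrac{1}{2}\Omega^2(u,v) < 0$. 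On the initial data ray $v=v_0$ the gauge choice $\overline{\tilde{r}}_u = 1/(2l^2)$ yields $\overline{r}_u = -\overline{r}^{\,2}/(2l^2) < 0$ directly, consistent with the above (since $v_0$ corresponds to a limit point of $\mathcal{I}$).

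The main technical obstacle is the low regularity. Working in $\mathfrak{W}$, $\psi_u$ only lies in $L^2$ along constant-$v$ slices, so \eqref{EKG1} must be read in an almost-everywhere sense; however, $r$ is $C^1$ and $\Omega^2$ is continuous and strictly positive, so $r_u/\Omega^2$ is a continuous function of $u$ whose distributional $u$-derivative is a non-positive $L^1_{\mathrm{loc}}$ density. This is sufficient to conclude pointwise monotonicity of $r_u/\Omega^2$ in $u$ and to justify the integration step, and no additional differentiability on $\psi$ is required. A second, more cosmetic, point is justifying that the Penrose-diagram picture is correct, namely that every $(u,v) \in \mathcal{Q}^+$ is indeed connected to $\mathcal{I}$ by moving backward in $u$ at fixed $v$; this is built into the structure of $\mathcal{Q}^+$ established in the preceding lemma, where $v$-rays are parametrised precisely by their past intersection with $\mathcal{I}$.
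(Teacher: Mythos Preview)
Your proposal is correct and follows exactly the same route as the paper: integrate the Raychaudhuri equation \eqref{EKG1} from $\mathcal{I}$ using the gauge condition $\chi|_{\mathcal{I}}=\tfrac12$, i.e.\ $r_u/\Omega^2|_{\mathcal{I}}=-\tfrac12$, to obtain $r_u/\Omega^2\le -\tfrac12$ and hence $r_u<0$. One small remark: in the solution space $\mathfrak{W}$ the paper actually has $\psi_u\in C^0_{\mathrm{loc.}}$ (it is $\psi_v$ that may only be $L^2$), so your careful regularity discussion, while not wrong, is unnecessary here.
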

\begin{proof}
	Integrating equation \eqref{EKG1} from $\mi$ yields the inequality
	\begin{equation}
	\f{r_u}{\Omega^2} \le -\hf,
	\end{equation}
	the result follows.
\end{proof}
\textbf{Curves of constant $r_X$ and $r_Y$}\\ \newline
We define $r_X$ to be the solution to
\begin{equation}
-\f{2M}{r_X} + \f{r_X^2}{l^2} = d,
\end{equation}
where $d>0$ has been chosen small enough so that
\begin{equation}\label{C3RX}
\log\f{r_X}{r_{min}} < \f{1}{2\abs{a}}.
\end{equation}
This may be chosen from the continuity of $r$.\\ Now we define $r_Y$ to be the solution to 
\begin{equation}
-\f{2M}{r_Y} + \f{r_Y^2}{l^2} = c,
\end{equation}
with $c<d$. So $r_Y < r_X$.
Our Penrose diagram looks like: \\
\begin{figure}[h!]
	\centering\includegraphics[scale=0.8]{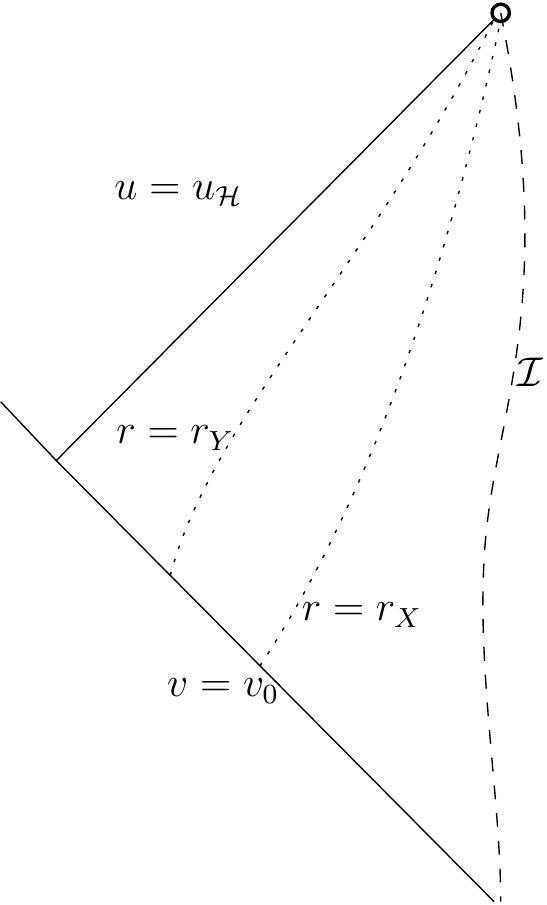}
	\caption{Penrose diagram of spacetime depicting $r=const$ curves.}
\end{figure}
\newpage
\textbf{Geometric norms}\\
\begin{itemize}
	\item The $H^1_{d}$ norm
	\begin{equation}
	\begin{split}
	\norm{\psi}{H^1_{d}}^2(u,v) &=\int_{u_\mi}^{u} \f{r^2 r_v}{\Omega^2}(\gr_u\psi)^2 +\f{(-r_u) }{r} \psi^2d\bu + \int_{v_0}^{v} -\f{ r^2 r_u}{\Omega^2}(\gr_v\psi)^2+ \f{ r_v }{r} \psi^2 d\bv.
	\end{split}
	\end{equation}
	This is the standard degenerate energy norm one expects to see from exploiting the Kodama vector field $\mathcal{T}$ of the system (in context of the energy momentum tensor of the field, see \cite{holzegel_stability_2013} $\S5.2$). It is however not finite for our boundary conditions but will be useful when considering regions of bounded $r$.
	\item The $\Hu^1_{d}$ norm
	\begin{equation}
	\begin{split}
	\norm{\psi}{\Hu^1_{d}}^2(u,v) &=\int_{u_\mi}^{u} \f{r^2 r_v}{\Omega^2}(\grt_u\psi)^2 +\f{(-r_u) }{r} \psi^2d\bu + \int_{v_0}^{v} -\f{ r^2 r_u}{\Omega^2}(\grt_v\psi)^2+ \f{ r_v }{r} \psi^2 d\bv.
	\end{split}
	\end{equation}
	This norm naturally arises from considering the renormalised Hawking mass as an energy potential. It suffers from degeneration on the first order terms at the apparent horizon (where $r_v=0$), and a sub optimal weight on the zeroth order terms. This norm will be the basis of our estimates. 
	\item The $\Hu^1_{1}$ norm   
	\begin{equation}
	\norm{\psi}{\Hu^1_{1}}^2(u,v) =\int_{u_\mi}^{u} \f{r^4}{-r_u}(\grt_u\psi)^2	+\f{(-r_u) }{r} \psi^2d\bu + \int_{v_0}^{v} -\f{ r^2 r_u}{\Omega^2}(\grt_v\psi)^2+ \f{ r_v }{r} \psi^2 d\bv.
	\end{equation}
	After the redshift estimates we will be able to show control of this norm on the spacetime. It does not suffer degeneration at the apparent horizon. In fact it actually controls a zeroth order term with a better weight but this is only clear after exploiting a Hardy type inequality.
	\item The $\Hu^1$ norm
	\begin{equation}
	\norm{\psi}{\Hu^1}^2(u,v) =\int_{u_\mi}^{u} \f{r^4}{-r_u}(\grt_u\psi)^2	-r_u\psi^2d\bu + \int_{v_0}^{v} -\f{ r^2 r_u}{\Omega^2}(\grt_v\psi)^2+ \f{ r_v }{r} \psi^2 d\bv.
	\end{equation}
	After a Hardy inequality we will be able to see this is equivalent to the  $\Hu^1_{1}$ norm. We will however find this form more useful.
	\item  The $\Lu^2$ norm
	\begin{equation}
	\norm{\psi}{\Lu^2}^2(u,v) =\int_{u_\mi}^{u}	-r_u\psi^2d\bu + \int_{v_0}^{v} \f{ r_v }{r} \psi^2 d\bv.
	\end{equation}
	\item For convenience we also define the flux quantity
	\begin{equation}
	\fl(u,v) = \norm{\psi}{\Hu^1}^2(u,v)+\norm{\psi}{\Hu^1}^2(u,v_0).
	\end{equation}
\end{itemize}
It is worth noting that these norms are all invariant under a change of double null coordinates and are thus geometric in their nature.\\ \\
\textbf{Geometric derivatives}~\\ \\
Recall that we defined the Kodama vector field
\begin{equation}
\mc{T} = -\f{r_u}{\Omega^2}\pa_v + \f{r_v}{\Omega^2}\pa_u, 
\end{equation}
we can define its orthogonal compliment 
\begin{equation}
\mc{R} = -\f{r_u}{\Omega^2}\pa_v - \f{r_v}{\Omega^2}\pa_u = (dr)^\sharp. 
\end{equation}
We can also define twisted versions of these operators by
\begin{equation}
\tilde{\mc{T}}\psi = -\f{r_u}{\Omega^2}\grt_v\psi + \f{r_v}{\Omega^2}\grt_u\psi = \mc{T}\psi,
\end{equation}
and
\begin{equation}
\tilde{\mc{R}}\psi = -\f{r_u}{\Omega^2}\grt_v\psi - \f{r_v}{\Omega^2}\grt_u\psi.
\end{equation} 
In terms of this operator, the Neumann boundary condition becomes:
\begin{equation}
\rt^{-\hf-\kappa}\tilde{\mc{R}}\psi =0,  \quad \text{on $\mi$}.
\end{equation}

\section{Orbital stability and completeness of null infinity}\label{OSS}
In this section we prove two key theorems:
\begin{thm}\label{basicests}(Orbital Stability: Basic Estimates)\\
	In  $\mathscr{R}_\mh$ for $\kappa \le \hf$, there exists a constant $b>0$, such that we have the following estimates 
	\begin{equation}\label{OSE}
	\begin{split}
	\abs{2\chi(u,v)-1}^\hf&+\abs{\varpi(u,v)-M}^{\hf}+\abs{\f{r^\f{5}{2}}{-r_u}\grt_u\psi(u,v)}+\abs{\f{r^{\f{5}{2}-\kappa}}{-r_u}\psi_u(u,v)} + \abs{r^{\f{3}{2}-\kappa}\psi(u,v)}\\ &+  \norm{\psi}{\Hu^1}(u,v)\le C_{l,M,\kappa}\left( \norm{\psi}{\Hu^1}(u_{\mh},v_0)+\sup_{I(v_0)}\left(\abs{\f{r^\f{5}{2}}{-r_u}\grt_u\psi} +\abs{r^{\f{3}{2}-\kappa}\psi} \right) \right). 
	\end{split}
	\end{equation}
\end{thm}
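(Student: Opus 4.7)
The plan is a continuity/bootstrap argument in $\sr_\mh$. Denote by $\mathcal{F}(u,v)$ the sum of quantities on the left-hand side of \eqref{OSE}, and by $D$ the parenthesised quantity on the right. I assume $\mathcal{F} \le A\,D$ on the causal past (within $\sr_\mh$) of a point $(u,v)$, with $A$ a large fixed constant. On the initial ray $v=v_0$ the assumption holds with $A=1$ by Lemma~\ref{COS}, and continuity renders the bootstrap set nonempty and closed. The task is to improve to $\mathcal{F} \le C_{l,M,\kappa}\, D$ with constant \emph{independent} of $A$, which closes the argument once $D$ (equivalently $b$) is small enough.

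\textbf{Monotonicity of $\varpi$ and energy estimate.} The key new ingredient, compared to the $\varpi_1$-based argument of \cite{holzegel_stability_2013}, is the monotonicity of $\varpi$ from Lemma~\ref{HME}. The bootstrap bound on $|r^{3/2-\kappa}\psi|$, together with $r \ge r_{\min}$, keeps $\psi^2$ uniformly small, and the Taylor expansion of $f$ (using the algebraic identity $-g^2 + 2a - 3g = 0$ to kill the $O(\psi^2)$ piece) yields $f(\psi^2) = 8\pi^2 g^2 \kappa\, \psi^4 + O(\psi^6) \ge 0$. Since $r_u < 0$ and $r_v > 0$ in $\sr$, every term in $\pa_u \varpi$ is manifestly $\le 0$ and every term in $\pa_v \varpi$ is $\ge 0$. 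Combined with $\varpi|_\mi = M$ (Lemma~\ref{C3HMCL}) and $|\overline{\varpi} - M| \lesssim b^2$ (Lemma~\ref{COS}), flowing in $+u$ from $\mi$ and in $+v$ from the data ray gives $\overline{\varpi}(u) \le \varpi(u,v) \le M$, hence $|\varpi - M| \lesssim D^2$ throughout $\sr_\mh$. Integrating the same transport equations along null rays extracts the degenerate energy estimate $\norm{\psi}{\Hu^1_d}^2 \lesssim D^2$: the $(\grt\psi)^2$ terms in $\pa \varpi$ carry exactly the $\Hu^1_d$-weights, and the $\varpi\psi^2$ terms produce the zeroth-order weights once one uses $0 < c \le \varpi \le M$.

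\textbf{Upgrade to $\Hu^1$ and pointwise bounds on $\psi$.} The $\Hu^1_d$ control degenerates at the apparent horizon $r_v = 0$; to upgrade to $\Hu^1_1$ (and then $\Hu^1$), I would apply a redshift-type multiplier to \eqref{REKG5}, using a weight that is positive at $r = r_{\min}$ to replace the $r_v$ factor by a non-degenerate one, absorbing the arising lower-order errors by bootstrap smallness. A Hardy inequality for the twisted derivative (anchored at $\rt|_\mi = 0$) then trades $\int r^4 (\grt_u\psi)^2/(-r_u)\, du$ for $\int (-r_u)\psi^2\, du$, yielding $\norm{\psi}{\Hu^1} \lesssim D$. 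The pointwise bound $|r^{3/2-\kappa}\psi| \lesssim D$ follows by writing $r^{3/2-\kappa}\psi(u,v) = -\int_u^{u_\mi(v)} \pa_{u'}(r^{3/2-\kappa}\psi)\,du'$ and applying Cauchy--Schwarz against the $\Hu^1$-weight, using the (weak) boundary condition at $\mi$ to handle the trace.

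\textbf{Derivative bounds, $\chi$, and closing.} For $|\f{r^{5/2}}{-r_u}\grt_u\psi|$ I view \eqref{REKG5} as a transport equation for $r\grt_u\psi$ in $v$ along constant-$u$ rays; after dividing by $-r_u$ and introducing the weight $r^{3/2}$, integrate from $v_0$ and control the sources $r_v\grt_u\psi$ and $\Omega^2 r V \psi$ by the already-established $\Hu^1$ and pointwise bounds on $\psi$ and $\grt_v\psi$ (the latter obtained by the symmetric $u$-transport in \eqref{REKG4}). The bound on $\f{r^{5/2-\kappa}}{-r_u}\psi_u$ then follows from $\psi_u = \grt_u\psi + (g r_u / r)\psi$ together with the preceding bounds. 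For $\chi$, rewrite \eqref{EKG1} as $\pa_u \log\chi = (4\pi r/r_u)(\gr_u\psi)^2$, integrate from $\mi$ where the gauge fixes $\chi|_\mi = \hf$, and use the pointwise $\psi_u$ bound (whose integrand becomes $O(D^2 r^{-2-s}(-r_u))$) to obtain $|2\chi - 1| \lesssim D^2$. Each improved estimate carries constants independent of $A$ (nonlinear errors of size $D \cdot A D$ are absorbed by smallness of $D$), closing the bootstrap. The principal obstacle is co-ordinating the $r$-weights near $\mi$ dictated by the Neumann/Dirichlet condition (forcing $\varpi$ rather than $\varpi_1$ or $\varpi_2$) with the redshift upgrade at the horizon, both executed at the $\Hu^1$ level; it is precisely the monotonicity of $\varpi$ that makes this simultaneous bookkeeping tractable.
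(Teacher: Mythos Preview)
Your proposal is correct and follows essentially the same architecture as the paper: bootstrap on the pointwise smallness of $r^{3/2-\kappa}\psi$, use the sign of $f(\psi^2)$ to obtain monotonicity of $\varpi$ and hence the degenerate $\Hu^1_d$ energy, upgrade via a redshift argument in $\{r\le r_X\}$ and norm equivalence in $\{r\ge r_Y\}$, then apply Hardy and Sobolev to recover the pointwise bound and close; the $\chi$ and twisted-derivative estimates are deduced afterwards exactly as you outline (the paper uses \eqref{REKG4}, not \eqref{REKG5}, for the $v$-transport of $r\grt_u\psi$, and the redshift is implemented as a pointwise Duhamel estimate for $r\psi_u/r_u$ rather than an energy multiplier, but these are cosmetic).

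One point deserves care: your Sobolev step writes $r^{3/2-\kappa}\psi(u,v) = -\int_u^{u_\mi} \pa_{u'}(r^{3/2-\kappa}\psi)\,du'$, which presupposes the trace vanishes at $\mi$. This is the Dirichlet condition; for Neumann data $r^{3/2-\kappa}\psi$ has a nonzero limit on $\mi$. The paper's Lemma~\ref{SI} circumvents this by integrating instead from an interior point $z$, then averaging in $z$ over the whole $u$-ray, which produces an additional $\norm{\psi}{\Lu^2}(u_\mh,v)$ term on the right-hand side (itself controlled by the $\Hu^1$ norm via Hardy). With that adjustment your argument goes through for both boundary conditions.
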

\begin{equation}
\end{equation}
\begin{thm}(Completeness of Null Infinity) \\
	Let 
	\begin{equation}
	v_m = \sup_{v\ge v_0}\{v|(u_\mh,v)\in\mc{Q}\}.
	\end{equation}
	Then it is the case that $v_m=\infty$.
\end{thm}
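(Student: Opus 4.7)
The plan is a contradiction argument using the interior extension principle of \S\ref{EP}. Suppose $v_m < \infty$ and write $p := (u_\mh, v_m)$. Theorem \ref{basicests} gives uniform control of $\varpi$, $\chi$, $\psi$ and their first twisted derivatives throughout $\sr_\mh$ in terms of the initial data, together with the lower bound $r \geq r_{min} > 0$. The goal is to show $p \in \mc{Q}^+$: local wellposedness near this interior point would then produce an extension to $(u_\mh, v) \in \mc{Q}$ with $v > v_m$, contradicting the definition of $v_m$ as a supremum.

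The key step is an upper bound on $r$ in a neighborhood of $p$. Since $r_v \geq 0$ on the horizon (by continuity from the strict inequality $r_v > 0$ on $\sr_\mh$), the function $r(u_\mh, \cdot)$ is non-decreasing, so it has a limit $R^* := \lim_{v \to v_m^-} r(u_\mh, v) \in [r_{min}, \infty]$. If $R^* = \infty$ then $\rt \to 0$ along $N(u_\mh)$, i.e.\ $u_\mh$ itself would belong to the set $\{u > u_0 : N(u) \subset \sr \text{ and } \rt \to 0 \text{ on } N(u)\}$. Invoking the extension principle near $\mi$ (Theorem \ref{ENPI}) with $N(u_\mh)$ as the initial-data ray — whose hypotheses are immediate from the orbital stability bounds — one extends the solution past $u = u_\mh$, showing that the same property persists for some $u > u_\mh$, contradicting the defining supremum. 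Hence $R^* < \infty$, and by continuity of $r$ one can pick $U' < u_\mh$ and $V' < v_m$ so that $r \leq 2R^*$ on the rectangle $\mc{C} := ([U', u_\mh] \times [V', v_m]) \setminus \{p\}$, with $\mc{C} \subset \mc{Q}^+$.

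It remains to verify the finite-volume hypothesis. Since $\Omega^2 = -4\chi r_u$ with $\chi$ bounded by Theorem \ref{basicests} and $r_u < 0$,
\begin{equation*}
\int_{\mc{C}} \Omega^2 \, du\, dv \leq C \int_{V'}^{v_m} \int_{U'}^{u_\mh} (-r_u)\, du\, dv \leq C(2R^* - r_{min})(v_m - V') < \infty.
\end{equation*}
Together with the bounds $r_{min} \leq r \leq 2R^*$ on $\mc{C}$, the interior extension principle now yields $p \in \mc{Q}^+$, producing the desired contradiction.

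The main obstacle is establishing that $\mc{C} \subset \mc{Q}^+$: even once $R^*$ is finite, we require the rays $N(u)$ with $u$ close to $u_\mh$ to reach $\mi$ only at some $v_\infty(u) > v_m$, so that the full rectangle minus the corner lies in the development. This is a continuity statement for the conformal boundary in $u$, which should follow from the orbital stability control of $\varpi$ and $\chi$; if needed, a short iterated application of the near-$\mi$ extension principle can be used to push $\mi$ itself beyond the slab $\{v \leq v_m\}$ for $u$ close to $u_\mh$ before the interior argument is invoked.
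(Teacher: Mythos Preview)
Your treatment of the case $R^*=\infty$ is essentially the paper's argument: apply the extension principle near $\mi$ along $N(u_\mh)$ to push past $u_\mh$, contradicting the definition of $u_\mh$. That part is fine.

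The gap is in the case $R^*<\infty$. Your plan is to apply the interior extension principle on a punctured rectangle $\mc{C}=[U',u_\mh]\times[V',v_m]\setminus\{p\}$, but no such rectangle can lie in $\mc{Q}^+$. The paper records (and uses) the structural fact $\sup_{\sr_\mh}v=\sup_{N(u_\mh)}v=v_m$; since $v_\infty$ is strictly increasing ($\mi$ is timelike) and $\sup_{u<u_\mh}v_\infty(u)=v_m$, one has $v_\infty(u)<v_m$ for every $u<u_\mh$. Hence for any $U'<u_\mh$ the point $(U',v_m)$ lies beyond $\mi$ and is not in $\mc{Q}^+$; worse, $u_\mi(v)\to u_\mh$ as $v\to v_m$, so $\mi$ slices through every candidate rectangle near the corner. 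Your proposed fix---``push $\mi$ beyond $v=v_m$''---cannot work: $\mi$ is the conformal boundary of the \emph{maximal} development, not a piece of data that a further application of the near-$\mi$ extension principle can relocate.

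The paper avoids this obstruction entirely. Rather than an extension argument at the corner, it runs a zig-zag between two constant-$r$ curves lying in $\sr_\mh$ whose future limit is $(u_\mh,v_m)$. Using the orbital-stability bounds on $\chi$ and $\mu_1$, each constant-$u$ leg $\mc{U}_i$ of the zig-zag satisfies $\int_{\mc{U}_i}dv \geq c(R)>0$, and since there must be infinitely many such legs one concludes $v_m=\infty$ directly. This argument lives entirely inside $\sr_\mh$, so the geometry of $\mi$ near the corner is irrelevant---which is precisely why it succeeds where the interior extension principle cannot.
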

The idea will be to make use of a bootstrap argument, where we bootstrap on the size of the perturbation.

\subsection{Basic estimates}~
\subsubsection{The bootstrap}~\\ 
Let $\tilde{u}\in[u_0,u_\mh]$ and define the region
\begin{equation}
\hat{\mc{B}}(\tilde{u}) = \mc{R}_\mh\cap\{u_0< u<\tilde{u}\}.
\end{equation}
We will bootstrap on the condition
\begin{equation}\label{BS}
\abs{r^{\f{3}{2}-\kappa}\psi}<b.
\end{equation}
We let 
\begin{equation}
u_{max} := \sup_u \left\{u\in \mathscr{R}:\abs{r^{\f{3}{2}-\kappa}\psi}<b \right\},
\end{equation}
so the bootstrap region is then defined as
\begin{equation}
\mc{B} = \hat{\mc{B}}(u_{max}) \subset \sr_\mh. 
\end{equation} 
We aim to prove that $\mc{B} = \sr_\mh = \hat{\mc{B}}(u_{\mh})$. 
It is clear that $\mc{B}$ is open, connected, and non empty. Hence we aim to show that $\mc{B}$ is a closed subset of $\sr_\mh$. For this we assume that $u_{max}<u_\mh$ is fixed, and that in $\overline{\mc{B}}$ we can improve the bound \eqref{BS} (it trivially holds at $u=u_0$ as this is an initial data point).
%\begin{rem}
%	Globally we will require that $0<b<1$, is chosen small enough that the following bounds
%	\begin{equation}
%	\begin{split}
%	&1\ge e^{4\pi g\psi^2} \ge e^{4\pi gb^2r^{-3+2\kappa}} \ge e^{-1},\\
%	&1 \le e^{-4\pi g\psi^2} \le e^{-4\pi gb^2r^{-3+2\kappa}} \le e,
%	\end{split}
%	\end{equation}
%	hold.
%\end{rem}
\subsection{The renormalised Hawking mass}
%\begin{defn}
	Recall that we defined the final renormalised Hawking mass as
	\begin{equation}
	\varpi = \f{2r_ur_vr}{\Omega ^2}e^{4\pi g \psi^2}+\f{r^3}{2l^2}.
	\end{equation}
	The final renormalised Hawking mass provides a potential for the $\Hu^1$ geometric energy, and for small enough $\psi$ it satisfies monotonicity properties. Coupled with a redshift estimate this leads to an energy estimate for $\psi$. From here Sobolev embeddings can be used to recover the bootstrap assumption.  \\ \\
%\end{defn}
 In order to see that $\varpi$ is montonotic we need to get a sign for it first. To see this  we need to study the region where function $f(\psi^2)$ is positive. Plotting the curve shows the following global behaviour.
 \begin{figure}[!h]
 	\centering\includegraphics[scale=0.8]{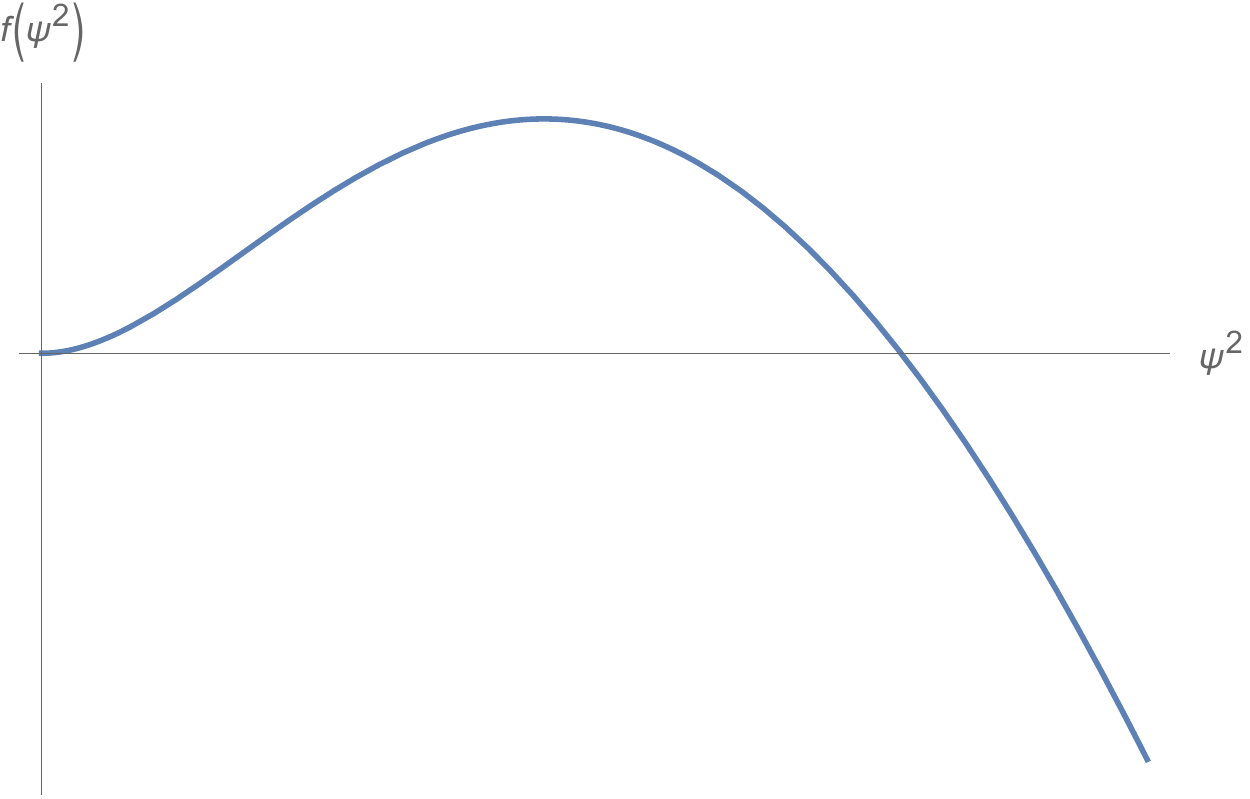}
 	\caption{Plot of $f(\psi^2)$.}
 \end{figure}\\
We thus need to choose $b$ small enough so that $\mc{B}$ is contained in the region where $f$ is positive. The following lemma quantifies the local behaviour of $f$. 
\begin{lem}[Bounds for $f(\psi^2)$] \label{HMTL1}~\\
	For 
	\begin{equation}
	f(\psi^2) = e^{4\pi g\psi^2}4\pi a \psi^2 - \f{3}{2}e^{4\pi g \psi^2}-2\pi g^2\psi^2+\f{3}{2},
	\end{equation}
	we have that for $b$ small enough that in $\mc{B}$ 
	\begin{equation}
	f(\psi^2) \ge 0,
	\end{equation}
	and
	\begin{equation}
	f(\psi^2) \le  8\pi^2ag\psi^4.
	\end{equation}
\end{lem}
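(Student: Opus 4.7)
My plan is to view both estimates as Taylor expansion statements for the scalar function $f(x) = e^{4\pi g x}(4\pi a x - 3/2) - 2\pi g^2 x + 3/2$ of the single variable $x = \psi^2 \ge 0$, and then to exploit the uniform smallness of $\psi$ that follows from the bootstrap assumption. The first observation is that inside $\mathcal{B}$ the bootstrap bound $|r^{3/2-\kappa}\psi| < b$ combined with $r \ge r_{min} > 0$ (which is positive for $b$ small by Lemma \ref{COS}) and $\kappa < 3/2$ gives $|\psi|^2 \le b^2 r_{min}^{2\kappa - 3}$ uniformly, so $\psi^2$ is small uniformly in $\mathcal{B}$ as $b \to 0$. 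This reduces both estimates to statements about the behaviour of $f(x)$ near $x = 0$.

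The second step is to read off the low-order Taylor coefficients of $f$. Clearly $f(0) = 0$, and the linear coefficient is $4\pi a - 6\pi g - 2\pi g^2 = 2\pi(2a - 3g - g^2)$, which vanishes thanks to the algebraic identity $2a - 3g - g^2 = 0$ (equivalently $g^2 + 3g = 2a$) already exploited in Lemma \ref{HME}; this identity is a direct consequence of the definition $g = -\f32 + \kappa$ with $\kappa^2 = \f94 + 2a$. Expanding further,
\begin{equation}
f(x) = 4\pi^2 g(4a - 3g)\,x^2 + O(x^3).
\end{equation}
Using $4a - 3g = \kappa(2\kappa - 3)$, we see that $4a - 3g < 0$ for $\kappa \in (0,1)$, and since $g < 0$ we obtain $4\pi^2 g(4a - 3g) > 0$; hence the quadratic coefficient is strictly positive, and for $\psi^2$ sufficiently small this yields $f(\psi^2) \ge 0$.

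For the upper bound I would instead examine $G(x) := 8\pi^2 a g\, x^2 - f(x)$. Again $G(0) = G'(0) = 0$, and its quadratic coefficient is
\begin{equation}
8\pi^2 a g - 4\pi^2 g(4a - 3g) = 4\pi^2 g(3g - 2a) = -4\pi^2 g\,\bigl(\kappa - \tfrac32\bigr)^2,
\end{equation}
where I have used $3g - 2a = -(\kappa - 3/2)^2$, another algebraic consequence of the definitions of $a, g, \kappa$. Since $g < 0$ and $(\kappa - 3/2)^2 > 0$ for $\kappa \in (0,1)$, this coefficient is strictly positive, so $G$ is bounded below by a positive multiple of $x^2$ for $x$ small, giving $f(\psi^2) \le 8\pi^2 ag\,\psi^4$ on $\mathcal{B}$ once $b$ is taken small enough that the cubic remainder is absorbed.

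There is no real obstacle: the entire argument is a small-data Taylor expansion in $\psi^2$, and the only thing that must be verified carefully are the two algebraic identities $g^2 + 3g - 2a = 0$ and $2a - 3g = (\kappa - 3/2)^2$. The former ensures vanishing of the linear coefficient (so $f$ is genuinely $O(\psi^4)$), and the latter provides the gap between the quadratic coefficients of $f(x)$ and $8\pi^2 ag\, x^2$ needed for the upper bound. Quantitatively, one chooses $b$ small enough that $b^2 r_{min}^{2\kappa - 3}$ is smaller than the threshold at which the remainder terms in the Taylor expansions are dominated by the leading quadratic terms computed above, and then both bounds follow on all of $\mathcal{B}$.
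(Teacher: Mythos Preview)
Your argument is correct and proceeds along the same basic line as the paper: both begin by observing $f(0)=f'(0)=0$ via the identity $g^2+3g=2a$, and then show that the behaviour near $x=0$ forces the two signed inequalities. Your execution differs slightly in that you appeal to the quadratic Taylor coefficients and absorb remainders by smallness of $\psi^2$, whereas the paper argues via differential inequalities: it shows $f''(x)\ge 0$ on an explicit interval to get the lower bound, and for the upper bound it verifies directly that $\bigl(f(x)-8\pi^2 a g\,x^2\bigr)'\le 0$ for \emph{all} $x\ge 0$, so the upper bound in fact holds globally rather than only for small $x$. The advantage of the paper's approach is that it produces explicit ranges of validity (and a global upper bound) without ever invoking a remainder estimate; the advantage of yours is that it makes the role of the algebraic identities $2a-3g-g^2=0$ and $3g-2a=-(\kappa-\tfrac32)^2$ perfectly transparent at the level of coefficients. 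For the lemma as stated (which is only used inside $\mathcal{B}$ where $\psi^2$ is small), both routes are sufficient.
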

\begin{proof}
	Define
	\begin{equation}
	f(x) = e^{4\pi gx}4\pi a x - \f{3}{2}e^{4\pi g x}-2\pi g^2x+\f{3}{2},
	\end{equation}
	and consider its values for $x \ge 0$. As $f(0)=0$, we proceed to study the functions behaviour near this point.\\
	We compute
	\begin{equation}
	f'(x)  = 16\pi^2gaxe^{4\pi gx}+2\pi g^2 e^{4\pi g x}-2\pi g^2,
	\end{equation}
	so $f'(0)=0$. Further computation shows
	\begin{equation}
	f''(x) =  16\pi^2g^2\left(4\pi ax+\kappa \right)e^{4\pi g x} ,
	\end{equation} 
	which remains positive for $x \le \f{1}{2\pi}\f{\kappa}{\f{9}{4}-\kappa^2}$.\\ We thus have the following differential inequality that for $x \in \left[ 0,\f{1}{2\pi}\f{\kappa}{\f{9}{4}-\kappa^2}\right] $
	\begin{equation}
	f''(x)\ge 0, \quad f(0)=f'(0)=0.
	\end{equation}
	Solving this differential inequality yields the first result.\\
	For the second result consider
	\begin{equation}
	g(x) := f(x) - 8\pi^2agx^2,
	\end{equation}
	again we see that $g(0)=0$. Computing
	\begin{equation}
	\begin{split}
	g'(x)&= 16\pi^2 gax\left( e^{4\pi gx} -1\right)  + 2\pi g^2 \left( e^{4\pi g x}-1\right) \le 0,
	\end{split}
	\end{equation}
	and solving this differential inequality yields that
	\begin{equation}
	f(x) \le  8\pi^2agx^2,
	\end{equation}
	for non-negative $x$.
\end{proof}
\begin{coro}\label{HMC1}
	In $\mc{B}$ we have
	\begin{equation}
	\varpi(u,v)\le \varpi(u_\mi,v) = M.
	\end{equation}
\end{coro}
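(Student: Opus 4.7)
The plan is to establish the inequality $\varpi(u,v) \le M$ by exploiting the transport equation for $\varpi$ from Lemma \ref{HME}, absorbing the one indefinite-sign term into an integrating factor. The identification $\varpi(u_\mi,v) = M$ is not the substantive content here: it follows immediately from Lemma \ref{C3HMCL} (constancy of $\varpi$ along $\mi$) together with the initial condition $\lim_{u\to u_0}\overline{\varpi} = M$ prescribed in the initial data setup.

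I would first check the sign of each of the three summands in $\pa_u \varpi$ throughout $\mc{B} \subset \sr_\mh$. In the regular region one has $r_v > 0$, so the Raychaudhuri-type term $-\f{8\pi r^2 r_v}{\Omega^2}(\grt_u\psi)^2 e^{4\pi g \psi^2}$ is non-positive. An earlier lemma established $r_u < 0$ throughout $\mc{Q}^+$, and Lemma \ref{HMTL1} gives $f(\psi^2) \ge 0$ provided $b$ is small enough that the bootstrap bound $|r^{3/2-\kappa}\psi| < b$ keeps $\psi^2 \le b^2 r_{min}^{-3+2\kappa}$ inside the non-negativity range of $f$; hence $\f{r_u r^2}{l^2} f(\psi^2) \le 0$ as well. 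This leaves only the middle term, linear in $\varpi$ with coefficient $h(u,v) := \f{4\pi g^2 r_u}{r}\psi^2$, itself non-positive since $r_u < 0$.

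Next, I would introduce the integrating factor $F(u,v) := \exp\!\bigl(-\int_{u_\mi(v)}^u h(u',v)\,du'\bigr)$. A direct computation gives
\begin{equation*}
\pa_u (\varpi F) \;=\; F \left( -\f{8\pi r^2 r_v}{\Omega^2}(\grt_u\psi)^2 e^{4\pi g\psi^2} + \f{r_u r^2}{l^2} f(\psi^2) \right) \;\le\; 0,
\end{equation*}
so $\varpi F$ is non-increasing along constant-$v$ rays. Taking the limit $u' \to u_\mi(v)^+$, where $F \to 1$ and $\varpi \to M$, yields $\varpi(u,v)F(u,v) \le M$; and since $h \le 0$ forces $F \ge 1$, we conclude $\varpi(u,v) \le M$.

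The only (mild) obstacle is the convergence of $\int_{u_\mi}^u h\, du'$ at $\mi$. Changing variables via $du' = dr/r_u$ reduces this to $\int_{r(u,v)}^\infty 4\pi g^2 \psi^2/r\, dr$, and the bootstrap $|\psi| \lesssim b\, r^{-3/2+\kappa}$ gives integrand $\lesssim b^2 r^{-4+2\kappa}$, which is integrable at infinity for $\kappa < 3/2$. In particular the exponent in $F$ is $O(b^2)$, $F$ is bounded, and the integrating-factor argument is justified.
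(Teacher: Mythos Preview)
Your proof is correct and follows essentially the same approach as the paper: drop the two manifestly non-positive terms in $\pa_u\varpi$ (using $r_v>0$, $r_u<0$, and $f(\psi^2)\ge 0$ from Lemma~\ref{HMTL1}), then apply Gronwall/integrating factor to the remaining linear term to get $\varpi(u,v)\le M\exp\bigl(\int_{u_\mi}^u \tfrac{4\pi g^2 r_u}{r}\psi^2\,du'\bigr)\le M$. Your additional remarks on the identification $\varpi(u_\mi,v)=M$ and on integrability of the exponent at $\mi$ are correct and more explicit than the paper's version.
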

\begin{proof}
	Estimating the $\pa_u\varpi$ from \eqref{HMTL1} we have
	\begin{equation}
	\pa_u \varpi \le \f{4\pi g^2 r_u }{r}\varpi \psi^2.
	\end{equation}
	A Gronwall estimate then implies
	\begin{equation}
	\varpi(u,v) \le  \varpi(u_\mi,v)\exp\left(\int_{u_\mi}^{u}4\pi g^2 \f{r_u}{r}\psi^2du \right)\le \varpi(u_\mi,v) = M. 
	\end{equation}
\end{proof}
\begin{rem}
	We have in $\mc{B}$
	\begin{equation}
	\varpi(u,v_0):=\varpi_0 =\overline{\varpi}\ge M-C_{l,g}b^2\ge \f{M}{2} >0.
	\end{equation}
	for $b^2$  small enough. 
\end{rem}
\begin{proof}
	We use the smallness of $b$, and our initial data estimate \eqref{HMIV}. 
\end{proof}
\begin{coro}\label{HMBB}
	In $\mc{B}$ we have 
	\begin{equation}
	0<\f{M}{2}\le \varpi.
	\end{equation}
\end{coro}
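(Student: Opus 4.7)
The plan is to propagate the positive lower bound on $\varpi$ known on the initial data ray $\{v=v_0\}$ forward in $v$ via the $\pa_v$-evolution equation from Lemma \ref{HME}, whose right-hand side has favourable signs in the regular region.

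First I would record the initial value: from the remark following Corollary \ref{HMC1} (which in turn rests on the bound $|\overline{\varpi}-M|\le C_{l,g}b^2$ of Lemma \ref{COS}), we have $\varpi(u,v_0)=\overline{\varpi}(u)\ge M/2$ for all $u\in\mc{N}$, provided $b$ is chosen sufficiently small.

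Next, I would fix a point $(u,v)\in\mc{B}$ and study $\varpi$ along the constant-$u$ segment $\{u\}\times[v_0,v]$. This segment lies entirely in $\sr_\mh$: indeed, $(u,v)\in\mc{B}\subset\sr$ and the Raychaudhuri-type equation \eqref{EKG2} implies that $r_v/\Omega^2$ is monotone non-increasing in $v$, so $r_v(u,v)>0$ forces $r_v(u,v')>0$ for every $v'\in[v_0,v]$. On this segment we therefore have $r_v>0$; in addition $r_u<0$ throughout $\mc{Q}^+$ by the monotonicity lemma, while Lemma \ref{HMTL1} gives $f(\psi^2)\ge 0$ in $\mc{B}$ for small $b$. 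Substituting into the equation
\begin{equation*}
\pa_v \varpi = -\f{8\pi r^2 r_u}{\Omega^2}(\grt_v\psi)^2 e^{4\pi g\psi^2} + \f{4\pi g^2 r_v}{r}\varpi\psi^2 + \f{r_v r^2}{l^2} f(\psi^2),
\end{equation*}
the first term ($\propto -r_u>0$) and the third term ($\propto r_v f(\psi^2)\ge 0$) are both non-negative, leaving the linear differential inequality
\begin{equation*}
\pa_v \varpi \ge \f{4\pi g^2 r_v}{r}\psi^2\,\varpi.
\end{equation*}

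Finally, I would introduce the non-negative weight $h(u,v'):=\f{4\pi g^2 r_v}{r}\psi^2\ge 0$ and the integrating factor $\mu(v):=\exp\bigl(-\int_{v_0}^{v} h\,dv'\bigr)$, which satisfies $\mu(v_0)=1$ and $0<\mu\le 1$. A direct Gronwall-type computation yields $\pa_v(\mu\varpi)\ge 0$, so integration from $v_0$ to $v$ gives
\begin{equation*}
\varpi(u,v)\ge \mu(v)^{-1}\varpi(u,v_0)\ge \varpi(u,v_0)\ge \f{M}{2}.
\end{equation*}
There is no real obstacle here: the key structural observation is that, unlike the $\pa_u$-equation exploited for the upper bound in Corollary \ref{HMC1}, the $\pa_v$-equation in the regular region has precisely the sign pattern needed to propagate a lower bound, and the Gronwall factor absorbs the $\varpi\psi^2$ term without requiring any a priori sign information on $\varpi$ itself.
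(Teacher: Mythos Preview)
Your proof is correct and takes essentially the same approach as the paper, which simply says ``Identical to the proof of corollary \ref{HMC1}, we estimate $\pa_v\varpi$ and use the lower bound for $\varpi_0$.'' You have filled in the details carefully, including the (implicit in the paper) observation that the constant-$u$ segment from $v_0$ to $v$ remains in $\mc{B}$ so that the sign information on $f(\psi^2)$ from Lemma \ref{HMTL1} is available along it.
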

\begin{proof}
	Identical to the proof of corollary \ref{HMC1}, we estimate $\pa_v\varpi$ and use the lower bound for $\varpi_0$.
\end{proof}
\begin{coro}
	In $\mc{B}$ we have
	\begin{equation} \label{HMC3}
	\pa_u \varpi \ge 8\pi^2g\f{ar_ur^2}{l^2}\psi^4-\f{8\pi r^2 r_v}{\Omega^2}(\grt_u\psi)^2 +\f{4\pi g^2 r_u }{r}M \psi^2,
	\end{equation}
	\begin{equation}
	\pa_v \varpi \le 8\pi^2g\f{ar_vr^2}{l^2}\psi^4-\f{8\pi r^2 r_u}{\Omega^2}(\grt_v\psi)^2 +\f{4\pi g^2 r_v }{r}M \psi^2.
	\end{equation}
\end{coro}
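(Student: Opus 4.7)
The plan is to deduce the corollary as a direct consequence of the identities for $\pa_u\varpi$ and $\pa_v\varpi$ established in Lemma \ref{HME}, combined with the three ingredients already assembled in this subsection: the bounds $0 \le f(\psi^2) \le 8\pi^2 a g\, \psi^4$ from Lemma \ref{HMTL1} (valid in $\mc{B}$ for $b$ small), the upper bound $\varpi \le M$ from Corollary \ref{HMC1}, and the sign information $r_u < 0$ in $\mc{Q}^+$ and $r_v > 0$ in $\sr_\mh$. No new calculation is required; the content of the corollary is essentially a careful tracking of signs in an identity that has already been derived.

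Starting from
\[
\pa_u\varpi \;=\; -\f{8\pi r^2 r_v}{\Omega^2}(\grt_u\psi)^2 e^{4\pi g \psi^2} \;+\; \f{4\pi g^2 r_u}{r}\varpi\psi^2 \;+\; \f{r_u r^2}{l^2}f(\psi^2),
\]
I would estimate the three terms separately in the lower-bound direction. For the first, since $g<0$ we have $e^{4\pi g\psi^2}\le 1$, and the prefactor $-8\pi r^2 r_v/\Omega^2$ is non-positive, so dropping the exponential produces a lower bound. For the second, the coefficient $4\pi g^2 r_u / r$ is negative because $r_u<0$, so substituting the upper bound $\varpi \le M$ decreases the term and yields $\tfrac{4\pi g^2 r_u}{r} M \psi^2$ as a lower bound. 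For the third, $r_u r^2/l^2 < 0$ and $f(\psi^2) \le 8\pi^2 a g\,\psi^4$ (observing that $ag>0$ since both factors are negative), so the product obeys $\tfrac{r_u r^2}{l^2} f(\psi^2) \ge 8\pi^2 g\,\tfrac{a r_u r^2}{l^2}\,\psi^4$. Summing the three lower bounds gives the first inequality.

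The second inequality is obtained by an entirely symmetric argument applied to the analogous identity for $\pa_v\varpi$, estimated from above. Because $r_v>0$ and $-r_u>0$, the relevant prefactors in the three terms carry the opposite sign to those in the $\pa_u\varpi$ case, so that substituting the upper bounds $e^{4\pi g\psi^2}\le 1$, $\varpi \le M$, and $f(\psi^2) \le 8\pi^2 ag\,\psi^4$ each produces an upper rather than a lower bound on its respective term. The only difficulty worth naming is bookkeeping: three negative quantities $a$, $g$, $r_u$ interact, so it is crucial not to accidentally flip a sign when applying the upper bound on $f$. This is why the identical upper bound on $f(\psi^2)$ appears in both inequalities, attached in each case to a factor of $r_u$ or $r_v$ whose sign ensures the overall direction of the inequality is correct.
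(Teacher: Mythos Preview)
Your proof is correct and follows exactly the approach indicated in the paper, which simply cites Lemmas \ref{HME}, \ref{HMTL1} and Corollary \ref{HMC1}. You have carefully spelled out the sign tracking that the paper leaves implicit, and your argument for each term is sound.
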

\begin{proof}
	This follows immediately from the lemmas \ref{HME}, \ref{HMTL1} and corollary \ref{HMC1}.
\end{proof}

\begin{coro} \label{HMC2}
	In $\mc{B}$ we have that
	\begin{equation}\label{HMEE}
	\pa_u \varpi \le -\f{8\pi r^2 r_v}{\Omega^2}(\grt_u\psi)^2e +\f{4\pi g^2 r_u }{r}M \psi^2 \le 0,
	\end{equation}
	\begin{equation}
	\pa_v \varpi \ge  -\f{8\pi r^2 r_u}{\Omega^2}(\grt_v\psi)^2+ \f{4\pi g^2 r_v }{r}M \psi^2 \ge 0.
	\end{equation}
\end{coro}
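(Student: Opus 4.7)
The plan is to read these inequalities off Lemma \ref{HME} by determining the sign of each of the three contributions to $\pa_u \varpi$ (respectively $\pa_v \varpi$) and then combining the signs with the a priori bounds already established for $\varpi$ and $f(\psi^2)$. Thus, start from
\begin{equation*}
\pa_u \varpi =  -\f{8\pi r^2 r_v}{\Omega^2}(\grt_u\psi)^2e^{4\pi g\psi^2} + \f{4\pi g^2 r_u }{r}\varpi \psi^2 +\f{r_ur^2}{l^2}f(\psi^2),
\end{equation*}
and analyse the three terms in turn in $\mc{B}\subset \sr_\mh$.

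For the first term, we are in the regular region so $r_v\ge 0$, and $e^{4\pi g\psi^2}>0$, hence the term is manifestly $\le 0$. For the second term, $r_u<0$ in $\mc{Q}^+$ by the lemma preceding this section, and $\varpi\ge M/2>0$ by Corollary \ref{HMBB}, so this term is also $\le 0$; moreover, using $\varpi\le M$ from Corollary \ref{HMC1} together with $\f{4\pi g^2 r_u}{r}\psi^2\le 0$ lets us bound it by the quantity written in the statement after unpacking the sign. For the third term, Lemma \ref{HMTL1} gives $f(\psi^2)\ge 0$ once $b$ is small enough that $\mc{B}$ sits inside the region where $f$ is nonnegative, while $r_u<0$, so the term is $\le 0$ and can simply be discarded. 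Dropping the nonpositive third term yields the middle expression in the claimed inequality, and the composite bound $\pa_u\varpi\le 0$ is then immediate since each of the two remaining summands is nonpositive.

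The estimate for $\pa_v\varpi$ is entirely symmetric, up to swapping the roles of $r_u$ and $r_v$: now the prefactor $-\f{8\pi r^2 r_u}{\Omega^2}$ carries a favourable sign because $r_u<0$, the term $\f{4\pi g^2 r_v}{r}\varpi\psi^2$ is $\ge 0$ since $r_v\ge 0$ and $\varpi>0$, and $\f{r_v r^2}{l^2}f(\psi^2)\ge 0$ by Lemma \ref{HMTL1}. So each summand of $\pa_v\varpi$ is nonnegative, giving both the stated lower bound and the conclusion $\pa_v\varpi\ge 0$.

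There is no real analytic difficulty here: the entire content has been front-loaded into Lemma \ref{HME} (the transport equations for $\varpi$), Lemma \ref{HMTL1} (sign of $f$), and Corollaries \ref{HMC1}--\ref{HMBB} (two-sided control of $\varpi$). The only point that requires care is bookkeeping: because $r_u<0$, multiplying an inequality between two nonnegative quantities (e.g.\ $\varpi$ and $M$) by the coefficient $\f{4\pi g^2 r_u}{r}\psi^2$ reverses it, so one has to be attentive to which bound on $\varpi$ is being used in each term in order to land on the form displayed. Once the signs are tracked correctly, both lines of the corollary follow in a single calculation.
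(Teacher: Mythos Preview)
Your approach matches the paper's exactly: read off Lemma \ref{HME}, use Lemma \ref{HMTL1} for the sign of $f(\psi^2)$, and use the two-sided control of $\varpi$ from Corollaries \ref{HMC1}--\ref{HMBB}. The monotonicity conclusions $\pa_u\varpi\le 0$ and $\pa_v\varpi\ge 0$ are correctly established, and that is the substantive content.

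There is, however, a sign slip in your treatment of the intermediate expression in the $\pa_u$ line. You invoke $\varpi\le M$ from Corollary \ref{HMC1} and multiply by the coefficient $\f{4\pi g^2 r_u}{r}\psi^2$. Since that coefficient is \emph{negative} (as you yourself note), multiplying reverses the inequality, yielding
\[
\f{4\pi g^2 r_u}{r}\varpi\psi^2 \;\ge\; \f{4\pi g^2 r_u}{r}M\psi^2,
\]
which is a lower bound on this term, not the upper bound you need for the displayed middle expression. The paper's proof instead cites Corollary \ref{HMBB}, i.e.\ $\varpi\ge \f{M}{2}$; multiplying \emph{that} by the negative coefficient gives
\[
\f{4\pi g^2 r_u}{r}\varpi\psi^2 \;\le\; \f{2\pi g^2 r_u}{r}M\psi^2,
\]
which is the correct direction (and indeed this is the form actually used downstream in Lemma \ref{EE1}, with the factor $2\pi$ rather than $4\pi$; the stated corollary appears to carry minor typos in the constant and the factor ``$e$''). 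The same remark applies symmetrically to the $\pa_v$ line: to pass from $\varpi$ to a multiple of $M$ in the middle expression you again need the lower bound on $\varpi$ from Corollary \ref{HMBB}, not the upper bound. Your final paragraph flags exactly this bookkeeping hazard but then lands on the wrong corollary.
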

\begin{proof}
	This follows immediately from the lemmas \ref{HME}, \ref{HMTL1} and corollary \ref{HMBB}.
\end{proof}

\begin{coro}
	We have in $\mc{B}$
	\begin{equation}
	\abs{\varpi-M} \le C_{a,M,l}\left( \norm{\psi}{\underline{H}^1_{d}}^2(u,v)+b^4\right).
	\end{equation}	
\end{coro}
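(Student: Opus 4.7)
The plan is to integrate the transport equation for $\varpi$ along a constant-$v$ ray from $\mi$ into the bulk, making essential use of the sign information from the preceding corollaries. By Lemma \ref{C3HMCL} we have $\varpi(u_\mi,v) = M$, and by Corollary \ref{HMC2} the estimate $\pa_u\varpi \le 0$ holds throughout $\mc{B}$; hence $M - \varpi(u,v) \ge 0$ and it suffices to produce the matching upper bound. Writing $M - \varpi(u,v) = -\int_{u_\mi}^u \pa_u\varpi \, du'$ and substituting the pointwise lower bound \eqref{HMC3}, we obtain three manifestly non-negative integrands to estimate.

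The flux contribution $\int_{u_\mi}^u \f{8\pi r^2 r_v}{\Omega^2}(\grt_u\psi)^2\, du'$ is, up to a universal factor, precisely the $u$-piece of the $\Hu^1_d$ integrand, and similarly the quadratic contribution $\int_{u_\mi}^u \f{4\pi g^2 M(-r_u)}{r}\psi^2\, du'$ matches the $(-r_u)\psi^2/r$ term in the $\Hu^1_d$ norm. Together these yield a bound of the form $C_{g,M}\norm{\psi}{\Hu^1_d}^2(u,v)$ for the energy-flux and zeroth-order parts.

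The main obstacle is the quartic remainder $\int_{u_\mi}^u 8\pi^2\abs{ga}\f{(-r_u)r^2}{l^2}\psi^4\, du'$, which is not directly controlled by any $\Hu^1_d$-norm. Here the bootstrap assumption \eqref{BS} supplies the pointwise bound $\psi^4 \le b^4 r^{-6+4\kappa}$ in $\mc{B}$. Changing variables along the ray via $(-r_u)\,du' = -dr$ reduces the integral to $\int_{r(u,v)}^{\infty} r^{-4+4\kappa}\, dr$. For the range $\kappa \le \hf$ the exponent $-4+4\kappa$ is at most $-2$, so the integral converges and is bounded by $\f{r(u,v)^{-3+4\kappa}}{3-4\kappa} \le \f{r_{min}^{-3+4\kappa}}{3-4\kappa}$, a constant depending only on $M$, $l$, and $\kappa$ by virtue of the lower bound $r \ge r_{min}$ in $\overline{\sr_\mh}$.

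Summing the three contributions gives $M - \varpi(u,v) \le C_{a,M,l}(\norm{\psi}{\Hu^1_d}^2(u,v) + b^4)$, and since $\abs{\varpi - M} = M - \varpi$ by Corollary \ref{HMC1}, this is the required estimate.
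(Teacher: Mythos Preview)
Your proof is correct and follows essentially the same approach as the paper: both use Corollary \ref{HMC1} (or equivalently \ref{HMC2}) to get $\varpi \le M$, then integrate $\pa_u\varpi$ from $\mi$ using the lower bound \eqref{HMC3}, splitting the result into the $\Hu^1_d$-controlled pieces and the quartic remainder handled via the bootstrap bound $\psi^4 \le b^4 r^{-6+4\kappa}$. Your exponent $r^{-4+4\kappa}$ in the quartic integral is in fact the correct one (the paper's displayed $r^{-4+2\kappa}$ appears to be a typo), and your invocation of $r \ge r_{min}$ to close the estimate matches the paper's reasoning.
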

\begin{proof}
	The bound from above follows trivially from corollary \ref{HMC2}, from below we integrate $\pa_u\varpi$ and use estimate \eqref{HMC3}
	\begin{equation}
	\begin{split}
	\varpi-M= \varpi - \varpi(u_\mi,v)  &= \int_{u_\mi}^u-\f{8\pi r^2 r_v}{\Omega^2}(\grt_u\psi)^2e^{4\pi g\psi^2} + \f{4\pi g^2 r_u }{r}\varpi \psi^2  +\f{r_ur^2}{l^2}f(\psi^2)du\\&\ge
	\int_{u_\mi}^u-\f{8\pi r^2 r_v}{\Omega^2}(\grt_u\psi)^2 + \f{2\pi g^2 r_u }{r}M \psi^2du  +\int_{u_\mi}^u\f{8\pi^2ag}{l^2}\psi^4r^2r_udu
	\\&-C_{a,M}\norm{\psi}{\underline{H}^1_{d}}^2(u,v) -C_{a,l}b^4\int_{u_\mi}^ur_ur^{-4+2\kappa}du \\&\ge -C_{a,M,l}\left( \norm{\psi}{\underline{H}^1_{d}}^2(u,v)+b^4\right). 
	\end{split}
	\end{equation}
\end{proof}
\begin{lem}\label{OHME}
	We have in the regular region
	\begin{equation}
	\abs{\f{\varpi_1-M}{r^{2\kappa}}}\le C_{M,g,l}b^2.
	\end{equation}
\end{lem}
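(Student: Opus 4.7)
The plan is to reduce the desired estimate to the already available pointwise control on $\varpi$ obtained from the monotonicity of the final renormalised Hawking mass, then convert between $\varpi$, $\varpi_2$ and $\varpi_1$ using the explicit algebraic relations and the pointwise bootstrap $\abs{r^{\f{3}{2}-\kappa}\psi}\le b$.

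First I would establish the sharp pointwise bound $\abs{\varpi-M}\le C_{M,g,l}b^{2}$ in $\mc{B}$. Corollary \ref{HMC1} gives $\varpi(u,v)\le M$; Corollary \ref{HMC2} gives $\pa_{v}\varpi\ge 0$, hence
\begin{equation*}
\varpi(u,v)\ \ge\ \varpi(u,v_{0})\ =\ \overline{\varpi}(u),
\end{equation*}
and Lemma \ref{COS} supplies $\overline{\varpi}(u)\ge M-C_{l,g}b^{2}$ on the initial data ray. Combining these gives $M-C_{l,g}b^{2}\le \varpi\le M$ throughout $\mc{B}$.

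Next I would decompose
\begin{equation*}
\varpi_{1}-M\ =\ \bigl(\varpi_{1}-\varpi_{2}\bigr)+\bigl(\varpi_{2}-M\bigr)\ =\ 2\pi g\f{r^{3}}{l^{2}}\psi^{2}+\bigl(\varpi_{2}-M\bigr),
\end{equation*}
where the first piece is controlled directly from the bootstrap: $\bigl|2\pi g\f{r^{3}}{l^{2}}\psi^{2}\bigr|/r^{2\kappa}\le 2\pi\abs{g}l^{-2}b^{2}$. For the second piece I would use the algebraic identity (the same one that motivates the initial-data definition of $\overline{\varpi_{2}}$)
\begin{equation*}
\varpi_{2}\ =\ \varpi\, e^{-4\pi g\psi^{2}}-\f{r^{3}}{2l^{2}}\bigl(e^{-4\pi g\psi^{2}}+4\pi g\psi^{2}-1\bigr),
\end{equation*}
and rewrite
\begin{equation*}
\varpi_{2}-M\ =\ (\varpi-M)e^{-4\pi g\psi^{2}}+M\bigl(e^{-4\pi g\psi^{2}}-1\bigr)-\f{r^{3}}{2l^{2}}\bigl(e^{-4\pi g\psi^{2}}+4\pi g\psi^{2}-1\bigr).
\end{equation*}
Taylor expansion gives $\abs{e^{-4\pi g\psi^{2}}-1}\le C\psi^{2}$ and $\abs{e^{-4\pi g\psi^{2}}+4\pi g\psi^{2}-1}\le C\psi^{4}$ for $\psi^{2}$ small, so (using $b$ small so $e^{-4\pi g\psi^{2}}\le e^{Cb^{2}}$ is bounded) I obtain
\begin{equation*}
\abs{\varpi_{2}-M}\ \le\ C\abs{\varpi-M}+CM\psi^{2}+Cr^{3}\psi^{4}.
\end{equation*}

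Finally I would divide by $r^{2\kappa}$ and invoke the bootstrap $\psi^{2}\le b^{2}r^{-3+2\kappa}$ together with $r\ge r_{\min}>0$. The first term contributes $Cb^{2}r^{-2\kappa}\le Cr_{\min}^{-2\kappa}b^{2}$; the second contributes $CMb^{2}r^{-3}\le CMr_{\min}^{-3}b^{2}$; the third contributes $Cb^{4}r^{-3+2\kappa}$, which for $\kappa\le\hf$ is bounded by $Cb^{4}r_{\min}^{-3+2\kappa}$. Collecting everything with the $\varpi_{1}-\varpi_{2}$ contribution gives the desired bound $\abs{(\varpi_{1}-M)/r^{2\kappa}}\le C_{M,g,l}b^{2}$. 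The only genuine subtlety is verifying that the potentially dangerous $r^{3}$ prefactor coming from the $\f{r^{3}}{2l^{2}}$ cancellation between $\varpi$ and its linearised piece is tamed by the quartic smallness of $(e^{-4\pi g\psi^{2}}+4\pi g\psi^{2}-1)$ in $\psi$; this is precisely the cancellation that was built into the definition of $\varpi_{2}$, so no new input is needed beyond the algebra above.
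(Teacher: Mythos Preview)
Your argument is correct and is essentially the paper's approach: both rely on the pointwise control $M-Cb^{2}\le\varpi\le M$ together with the algebraic relation between $\varpi_{1}$ and $\varpi$ and the bootstrap bound $r^{3-2\kappa}\psi^{2}\le b^{2}$. The only difference is cosmetic: the paper writes directly $\varpi_{1}=\varpi e^{-4\pi g\psi^{2}}-\f{r^{3}}{2l^{2}}\bigl(e^{-4\pi g\psi^{2}}-1\bigr)$ and estimates from there with a first-order Taylor bound on $1-e^{-4\pi g\psi^{2}}$, whereas you take a detour through $\varpi_{2}$ (splitting off the linear piece $2\pi g\f{r^{3}}{l^{2}}\psi^{2}$ first and then using a second-order remainder for what is left); summing your pieces reproduces exactly the paper's decomposition, so nothing is gained or lost. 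Two very minor remarks: the restriction ``for $\kappa\le\hf$'' on the $r^{3}\psi^{4}$ term is unnecessary since $-3+2\kappa<0$ throughout the allowed range, and your derivation of the lower bound $\varpi\ge M-Cb^{2}$ (via $\pa_{v}\varpi\ge0$ and \eqref{HMIV}) is precisely the content the paper is invoking when it writes ``applying the estimates $M-C_{M,g,l}b^{2}\le\varpi\le M$''.
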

\begin{proof}
	Expanding
	\begin{equation}
	\f{\varpi_1-M}{r^{2\kappa}} = \f{1}{r^{2\kappa}}\left(\varpi e^{-4\pi g\psi^2}-M \right) + \left(1-e^{-4\pi g\psi^2} \right)\f{r^{3-2\kappa}}{2l^2} , 
	\end{equation}
	and applying the estimates
	\begin{equation}
	M- C_{M,g,l}b^2 \le \varpi \le M,
	\end{equation}
	and (for small $b$)
	\begin{equation}
	8\pi g\psi^2 \le 1- e^{-4\pi g\psi^2}\le 4\pi g\psi^2 <0.
	\end{equation}
	We get
	\begin{equation}
	\begin{split}
	\f{\varpi_1-M}{r^{2\kappa}} &\le \f{M}{r^{2\kappa}}\left( e^{-4\pi g\psi^2}-1 \right)\\
	&\le \f{M}{r^{2\kappa}}\cdot 8\pi(-g)r^{-3+2\kappa}b^2\\
	&\le 8\pi M(-g)r_{min}^{-3}b^2 \le C_{M,g}b^2.
	\end{split}
	\end{equation}
	In the other direction
	\begin{equation}
	\begin{split}
	\f{\varpi_1-M}{r^{2\kappa}} &\ge \f{1}{r^{2\kappa}}\left((M- C_{M,g,l}b^2)e^{-4\pi g\psi^2}-M\right) + \f{8\pi gr^{3-2\kappa}\psi^2}{2l^2}\\
	&\ge \f{M}{r^{2\kappa}}\left(e^{-4\pi g\psi^2}-1 \right) -\f{ C_{M,g,l}b^2}{r^{2\kappa}}e^{-4\pi g\psi^2} +\f{8\pi gr^{3-2\kappa}\psi^2}{2l^2}\\
	&\ge \f{M}{r^{2\kappa}}\cdot(-4\pi g\psi^2)-\f{ C_{M,g,l}b^2}{r^{2\kappa}}e +\f{4\pi g}{l^2}b^2\\
	&\ge -M4\pi gb^2r^{-3}-\f{ C_{M,g,l}b^2}{r^{2\kappa}}e +\f{4\pi g}{l^2}b^2\\
	&\ge -  C_{M,g,l}b^2,
	\end{split}
	\end{equation}
	thus concluding the proof.
\end{proof}

\begin{lem}\label{techv}
	Let $\alpha+\beta >2\kappa$ be non negative numbers, and $D>0$, a positive constant. The following estimate holds in $\mathcal{B}$ (for $b$ small enough)
	\begin{equation}
	\f{\varpi_1}{r^\alpha}+Dr^\beta\ge D(1+\epsilon)r^\beta\ge Dr^{\beta}  >0,
	\end{equation}
	where $0<\epsilon\le\f{M}{D}r_{min}^{-(\beta+\alpha)}-c_{M,g,l}b^2$.
	\begin{proof}
		\begin{equation}
		\begin{split}
		\f{\varpi_1}{r^\alpha}+Dr^\beta &\ge Dr^\beta + \f{M}{r^\alpha}- C_{M,g,l}b^2r^{2\kappa-\alpha} \\
		&\ge Dr^{\beta}\left(1+ \f{M}{D}r_{min}^{-(\beta+\alpha)}- \f{C_{M,g,l}b^2}{D}r^{2\kappa-(\alpha+\beta)} \right) \\
		&\ge Dr^{\beta}\left(1+\f{M}{D}r_{min}^{-(\beta+\alpha)}- \f{C_{M,g,l}b^2}{D}r_{min}^{2\kappa-(\alpha+\beta)} \right)\\
		&\ge D(1+\epsilon)r^\beta\ge Dr^{\beta}>0.
		\end{split}
		\end{equation}
	\end{proof}
\end{lem}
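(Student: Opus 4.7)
The plan is to reduce this to the already-established bound on $\varpi_1$, namely Lemma \ref{OHME}, which gives
\[
\varpi_1 \ge M - C_{M,g,l}\,b^2\,r^{2\kappa}
\]
throughout $\mathcal{B}$. Dividing by $r^\alpha$ (valid since $r \ge r_{min} > 0$) yields
\[
\frac{\varpi_1}{r^\alpha} \ge \frac{M}{r^\alpha} - C_{M,g,l}\,b^2\,r^{2\kappa - \alpha},
\]
and adding $Dr^\beta$ and factoring this out gives
\[
\frac{\varpi_1}{r^\alpha} + Dr^\beta \ge Dr^\beta\!\left(1 + \frac{M}{D\,r^{\alpha+\beta}} - \frac{C_{M,g,l}\,b^2}{D}\,r^{2\kappa-(\alpha+\beta)}\right).
\]

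The next step is to control the quantity in parentheses uniformly in $r \in [r_{min},\infty)$. Here the hypothesis $\alpha+\beta > 2\kappa$ is doing all the work: it makes the exponent $2\kappa - (\alpha+\beta)$ strictly negative, so the (negative) error term $-\frac{C_{M,g,l}\,b^2}{D}\,r^{2\kappa-(\alpha+\beta)}$ attains its worst value at $r = r_{min}$ and is bounded in magnitude by $\frac{C_{M,g,l}\,b^2}{D}\,r_{min}^{2\kappa-(\alpha+\beta)}$. The positive term $\frac{M}{D\,r^{\alpha+\beta}}$ may be dropped for a crude lower bound, or kept at $r = r_{min}$ where it contributes $\frac{M}{D}\,r_{min}^{-(\alpha+\beta)}$. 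Either way, choosing $b$ small enough so that $C_{M,g,l}\,b^2\,r_{min}^{2\kappa-(\alpha+\beta)} < M\,r_{min}^{-(\alpha+\beta)}$, i.e.\ $C_{M,g,l}\,b^2\,r_{min}^{2\kappa} < M$, makes the bracketed expression strictly exceed $1$, giving the claimed strict positivity with $\epsilon$ as in the statement.

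There is no serious analytic obstacle here; the lemma is essentially algebraic once Lemma \ref{OHME} is available. The only thing to be careful about is the sign of the exponent on the error term: the smallness condition $C_{M,g,l}\,b^2\,r_{min}^{2\kappa-(\alpha+\beta)} \ll 1$ is exactly what $\alpha+\beta > 2\kappa$ enables, since it keeps $r^{2\kappa-(\alpha+\beta)}$ bounded as $r$ ranges over the whole region $\mathcal{B}$, including the asymptotic regime $r\to\infty$ near $\mathcal{I}$. Finally, one records that $r_{min}$ is quantitatively close to $r_+$ by Lemma \ref{COS}, so the threshold on $b$ depends only on $M, l, g$ (and on $\alpha,\beta$).
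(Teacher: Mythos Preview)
Your approach is essentially identical to the paper's: invoke Lemma \ref{OHME} to get $\varpi_1 \ge M - C_{M,g,l}b^2 r^{2\kappa}$, divide by $r^\alpha$, factor out $Dr^\beta$, and use $\alpha+\beta>2\kappa$ together with $r\ge r_{min}$ to bound the negative error term uniformly.

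One caveat worth flagging (which the paper's own proof shares): the claim that the bracketed quantity \emph{strictly exceeds $1$} uniformly on $\mathcal{B}$ is not actually correct. Writing the bracket as $1 + r^{-(\alpha+\beta)}\bigl(\tfrac{M}{D} - \tfrac{Cb^2}{D}r^{2\kappa}\bigr)$, one sees that for any fixed $b>0$ the parenthesis becomes negative once $r^{2\kappa} > M/(Cb^2)$, so the bracket dips below $1$ for large $r$; equivalently, you cannot replace the positive term $\tfrac{M}{D}r^{-(\alpha+\beta)}$ by its value at $r_{min}$ for a \emph{lower} bound, since $r^{-(\alpha+\beta)}$ is decreasing. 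What \emph{does} hold uniformly is the crude bound you correctly isolate, namely bracket $\ge 1 - \tfrac{Cb^2}{D}r_{min}^{2\kappa-(\alpha+\beta)}$, which is positive and as close to $1$ as desired for small $b$. This is all that is actually used downstream (the only invocation, in Lemma~\ref{rs1}, takes place in $\{r\le r_X\}$ where $r$ is bounded and the stronger form is genuinely available).
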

\begin{rem}
	This result is purely technical, while we know that $\varpi>0$ we don't necessarily have positivity for $\varpi_1$ in $\sr_\mh$. Estimates for $\varpi_1$ are useful due to how the quantity algebraically interacts with the system. It is often coupled with a term like $Dr^\beta$ which can be used to absorb the negativity.
\end{rem}~\\
\subsection{Estimates for $r_u$}
\begin{thm}\label{thmru}
	In the bootstrap region $\mc{B}$, we have the existence of a uniform constant $C=C(l,g)$ such that
	\begin{equation}
	\f{1}{C}r^2 \le -r_u.
	\end{equation}
\end{thm}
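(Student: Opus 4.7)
The approach is to prove that $-r_u/r^{2}$ is monotone non-decreasing in $v$ throughout $\mathcal{B}$; combined with the gauge \eqref{GC}, which pins $-\bar r_u/\bar r^{\,2}=1/(2l^{2})$ on the initial data ray $\{v=v_{0}\}$, integration along each outgoing null line at fixed $u$ then yields $-r_u\ge r^{2}/(2l^{2})$ throughout $\mathcal{B}$, which is the desired statement with $C=2l^{2}$ (depending only on $l$).

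To derive the monotonicity, I would write $\tilde r_u = -r_u/r^{2}$, differentiate in $v$, and substitute the wave equation \eqref{EKG3} together with the algebraic identity $r_u r_v = -\Omega^{2}\mu_{1}/4$ implicit in \eqref{auxvar}. After the $r_u r_v/r$ and $r\Omega^{2}/l^{2}$ contributions cancel, a direct computation gives
$$
  \partial_v\tilde r_u \;=\; \frac{\Omega^{2}}{r^{2}}\!\left[\frac{3\varpi_{1}}{2r^{2}} - \frac{2\pi a\, r\psi^{2}}{l^{2}}\right].
$$
The second summand in the bracket is non-negative since $a<0$, and the remaining task is to show that $\varpi_{1}\ge 0$ throughout $\mathcal{B}$. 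For this I would invoke Lemma \ref{techv} with $\alpha = 2$ and $\beta>0$ chosen so that $\alpha+\beta > 2\kappa$, which is comfortable in the regime $\kappa\le \tfrac{1}{2}$ under consideration. The lemma then furnishes $\varpi_{1}/r^{2}\ge 0$, hence $\varpi_{1}\ge 0$, once $b$ is taken small.

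The main subtle point is controlling the sign of $\varpi_{1}$ away from the horizon. Lemma \ref{OHME} only provides the bound $|\varpi_{1}-M|\le C_{M,g,l} b^{2} r^{2\kappa}$, which a priori allows $\varpi_{1}$ to become large and negative as $r\to\infty$. Lemma \ref{techv} is tailor-made for this situation: by pairing $\varpi_{1}/r^{\alpha}$ against a positive weight $Dr^{\beta}$ with $\alpha+\beta > 2\kappa$, the bad term $r^{2\kappa-\alpha-\beta}$ is decreasing in $r$ and can be absorbed into an $O(b^{2})$ correction evaluated at $r=r_{\min}$. Once the bracket in the transport equation above is established to be non-negative, the rest of the argument is a routine integration in $v$ starting from the initial data value $\tilde r_u|_{v=v_{0}} = 1/(2l^{2})$; no finer cancellation between $\varpi_{1}$ and $\psi^{2}$ terms is required, since each of the two summands in the bracket is individually non-negative.
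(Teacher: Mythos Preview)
Your monotonicity approach has a genuine gap. The computation of $\partial_v\tilde r_u = \frac{\Omega^2}{r^2}\bigl[\tfrac{3\varpi_1}{2r^2} - \tfrac{2\pi a\, r\psi^2}{l^2}\bigr]$ is correct, and the second summand is indeed non-negative since $a<0$. The problem is the first summand: Lemma~\ref{techv} does \emph{not} yield $\varpi_1\ge 0$. That lemma only asserts that $\varpi_1/r^{\alpha}+Dr^{\beta}>0$ for a \emph{fixed} constant $D>0$; it gives no conclusion about $\varpi_1$ alone, and your bracket contains no term of the form $Dr^{\beta}$ with $D$ a fixed positive constant (the companion term $-2\pi a r\psi^2/l^2$ depends on $\psi$ and vanishes when $\psi=0$). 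In fact, for Neumann boundary conditions the paper explicitly notes that $\varpi_1$ diverges as $r\to\infty$, and a direct computation shows $\varpi_1\sim M - \tfrac{2\pi|g|}{l^2}(r^{3/2-\kappa}\psi)^2\, r^{2\kappa}\to -\infty$. Substituting the asymptotics $\psi\sim c\, r^{-3/2+\kappa}$ into the full bracket gives, using $2|a|-3|g|=-g^2$,
\[
\frac{3\varpi_1}{2r^2}-\frac{2\pi a r\psi^2}{l^2}\;\sim\;\frac{3M}{2r^2}-\frac{\pi g^2 c^2}{l^2}\,r^{2\kappa-2},
\]
which is \emph{negative} for large $r$ since $r^{2\kappa-2}$ decays more slowly than $r^{-2}$. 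Hence $\partial_v\tilde r_u$ is not non-negative throughout $\mathcal B$, and the monotonicity claim fails.

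The paper's argument avoids any sign requirement on $\tilde r_{uv}$ in the unbounded-$r$ region. It splits $\mathcal B$ into $\{r\le r_X\}$ and $\{r\ge r_Y\}$. Near the horizon, the cruder inequality $r_{uv}\le -r_ur_v/r$ and monotonicity of $r$ in $v$ suffice. For $r\ge r_Y$, one writes $\tilde r_{uv}=\tilde r_u\cdot f(u,v)\,r_v$ and shows $|f|\le C r^{-4+2\kappa}$, so that $\int |f|\,r_v\,dv=\int |f|\,dr$ is uniformly bounded; a Gr\"onwall estimate then gives two-sided bounds $\tilde r_u\sim \tfrac{1}{2l^2}$ without ever needing $\tilde r_{uv}\ge 0$. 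Your approach would need to be replaced by this integrability argument in the large-$r$ regime.
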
 
We split the proof into three lemmas.
\begin{lem}
	In the region $\mc{B}\cap\{r_{min}\le r\le r_X\}$,  we have that 
	\begin{equation}
	r_X^{-2} Cr^2 \le -r_u. 
	\end{equation}
\end{lem}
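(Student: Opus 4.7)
The plan is to exhibit a quantity monotone in $v$ which, combined with the gauge fixed on $v = v_0$, yields the desired lower bound on $-r_u$. The key observation will be that the product $(-r_u)\,r$, rather than $-r_u$ itself, is the correct monotone combination to consider, because the Leibniz cross-term cancels the problematic $-\f{r_u r_v}{r}$ appearing in the right-hand side of \eqref{EKG3}.

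First I would compute, directly from \eqref{EKG3},
\begin{equation*}
\pa_v\bigl[(-r_u)\,r\bigr] = -r\,r_{uv} - r_u\,r_v = \f{\Omega^2 r^2}{l^2}\left(\f{3}{4} - 2\pi a\,\psi^2\right).
\end{equation*}
The Leibniz term $-r_u r_v$ exactly cancels the contribution from $-\f{r_u r_v}{r}$ in $r_{uv}$, leaving only the cosmological and matter contributions. Since $a < 0$ and $\Omega^2, r^2 > 0$, this right-hand side is strictly positive (indeed at least $\f{3\Omega^2 r^2}{4 l^2}$), so $(-r_u)\,r$ is non-decreasing in $v$ at fixed $u$ throughout $\mathcal{B}$, where $-r_u > 0$ and $r_v > 0$ hold by earlier lemmas.

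Next I would integrate from the initial data ray. The gauge condition set up earlier prescribes $-r_u(u, v_0) = r(u, v_0)^2/(2l^2)$, so $\bigl((-r_u)\,r\bigr)(u, v_0) = r(u, v_0)^3/(2l^2)$. By the earlier lemma asserting $r \ge r_{min}$ throughout $\sr_\mh$ (and in particular on the data ray), this is bounded below by $r_{min}^3/(2l^2)$. Combining with the monotonicity from the previous step,
\begin{equation*}
-r_u(u, v) \ge \f{r_{min}^3}{2l^2\,r(u, v)} \qquad \text{throughout } \mathcal{B}.
\end{equation*}

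Finally, to convert this to the stated form, I would use $r(u, v) \le r_X$ to estimate
\begin{equation*}
\f{r_{min}^3}{2l^2\,r} = \f{r_{min}^3}{2l^2\,r^3}\,r^2 \ge \f{r_{min}^3}{2l^2\,r_X^3}\,r^2 = r_X^{-2}\left(\f{r_{min}^3}{2l^2\,r_X}\right) r^2.
\end{equation*}
The prefactor $C := r_{min}^3/(2l^2\,r_X)$ is a uniform positive constant depending only on $l, g$ (with $r_{min}$ close to $r_+$ and $r_X$ determined by $l, M, a$). This gives the claim. There is no real obstacle beyond spotting that $(-r_u)\,r$, not $-r_u$, is the quantity to track; the rest is direct computation plus invocation of the earlier gauge normalisation and the lower bound $r \ge r_{min}$.
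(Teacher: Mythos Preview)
Your proposal is correct and follows essentially the same approach as the paper. The paper derives the inequality $r_{uv} \le -\frac{r_u r_v}{r}$ from \eqref{EKG3} and integrates to obtain $-r_u \ge (-\overline{r_u})\,\overline{r}/r \ge r_{min}^3/(2l^2 r_X)$, which is precisely the monotonicity of $(-r_u)\,r$ you exhibit; your version simply makes this monotonicity explicit by computing the exact identity for $\pa_v[(-r_u)\,r]$ rather than working with the inequality.
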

\begin{proof}
	We restrict ourselves to the region $r_{min}\le r\le r_X$, and consider \eqref{EKG3}, 
	\begin{equation}
	r_{uv} = - \frac{r_ur_v}{r}+ \frac{2\pi a r}{l^2}\Omega^2\psi^2 - \frac{3}{4}\frac{r}{l^2}\Omega^2.
	\end{equation}
	It follows that
	\begin{equation}
	r_{uv} \le - \frac{r_ur_v}{r},
	\end{equation} 
	which we integrate and estimate
	\begin{equation}
	-r_u \ge -\overline{r_u}\cdot\overline{r}\cdot \f{1}{r} \ge \f{r_{min}^3}{2l^2r_X} = : C >0.
	\end{equation}
	We extend to
	\begin{equation}
	r_X^{-2} Cr^2 \le -r_u. 
	\end{equation}	
\end{proof}
We now study the set where $r$ is unbounded.
\begin{lem}[Region Splitting]
	We have on the region $\mc{B}\cap\{r\ge r_Y\}$, the following inequality
	\begin{equation}
	\f{r^2}{l^2}-\f{2M}{r} \ge C_{Y,l,M}r^2.
	\end{equation}
\end{lem}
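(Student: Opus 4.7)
The plan is to factorise $r^2$ out of the left-hand side and exploit monotonicity. Specifically, I would write
\begin{equation}
\f{r^2}{l^2}-\f{2M}{r} = r^2\left(\f{1}{l^2}-\f{2M}{r^3}\right),
\end{equation}
and observe that the function $h(r) := \tfrac{1}{l^2}-\tfrac{2M}{r^3}$ is strictly increasing in $r$ since $h'(r) = 6M/r^4 > 0$. Hence for all $r \ge r_Y$ we have $h(r) \ge h(r_Y)$.

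Next I would evaluate $h(r_Y)$ using the defining relation for $r_Y$, namely $-\tfrac{2M}{r_Y}+\tfrac{r_Y^2}{l^2} = c$. Dividing by $r_Y^2$ gives
\begin{equation}
h(r_Y) = \f{1}{l^2}-\f{2M}{r_Y^3} = \f{c}{r_Y^2} > 0.
\end{equation}
Combining the two observations yields
\begin{equation}
\f{r^2}{l^2}-\f{2M}{r} = r^2 h(r) \ge \f{c}{r_Y^2}\, r^2,
\end{equation}
which is the claimed inequality with $C_{Y,l,M} = c/r_Y^2$.

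There is no real obstacle here: the lemma is essentially the elementary statement that the Schwarzschild factor $\tfrac{r^2}{l^2}-\tfrac{2M}{r}$, once we are safely away from the horizon (i.e.\ beyond the curve $r=r_Y$ where it attains the positive value $c$), is controlled from below by a positive multiple of $r^2$. This is exactly the lower bound needed in the subsequent lemma to propagate the estimate $-r_u \gtrsim r^2$ from the region $\{r \le r_X\}$ into the far region $\{r \ge r_Y\}$ via integration of \eqref{EKG3}, since the $\tfrac{r}{l^2}\Omega^2$ term there behaves like $\tfrac{r^2}{l^2}-\tfrac{2M}{r}$ up to controllable corrections from $\varpi$ and $\psi^2$.
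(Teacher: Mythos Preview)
Your proof is correct and follows essentially the same monotonicity argument as the paper: both establish positivity at $r=r_Y$ and propagate it by showing the relevant function is increasing. Your factorisation $r^2 h(r)$ with $h(r)=\tfrac{1}{l^2}-\tfrac{2M}{r^3}$ is in fact slightly cleaner than the paper's version, which instead studies $f(r)=\tfrac{r^2}{l^2}-\tfrac{2M}{r}-Cr^2$ directly and needs the additional constraint $C\le \tfrac{1}{2l^2}$ to ensure $f'\ge 0$; your approach yields the constant $C_{Y,l,M}=c/r_Y^2$ without that extra bookkeeping.
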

\begin{proof}
	Let $C_{Y,l}=\min(\f{1}{2l^2},\f{d}{r_Y^2})$ and define
	\begin{equation}
	f(r) = \f{r^2}{l^2}-\f{2M}{r} - C_{Y,l,M}r^2.
	\end{equation}
	Then 
	\begin{equation}
	f(r_Y) = d - C_{Y,l,M}r_Y^2 \ge 0,
	\end{equation}
	and
	\begin{equation}
	f'(r) = \f{1}{r}\left(\f{r^2}{l^2}+\f{2M}{r} \right) + 2r\left(\f{1}{2l^2}-C_{Y,l,M} \right) \ge 0, 
	\end{equation}
	whence the result follows.
\end{proof}
\begin{lem}\label{struest}
	We have in the region $\mc{B}\cap\{r\ge r_Y\}$ that
	\begin{equation}
	\f{1}{C_{Y,M}}r^2 \le -r_u \le C_{Y,M} r^2.
	\end{equation}
\end{lem}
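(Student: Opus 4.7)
I would establish the two-sided bound by deriving transport equations for $\log(-r_u/r^2)$ and $\log(-r_u/\mu_1)$ along outgoing null rays, and then integrating these from the initial data ray or from the curve $\{r=r_Y\}$.

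Starting from the wave equation \eqref{EKG3} for $r_{uv}$ together with the algebraic identities $r_v = \chi\mu_1$ and $-r_u/\Omega^2 = 1/(4\chi)$ (which follow from the definition of $\chi$ and the Hawking mass relation), a direct calculation gives
\begin{equation*}
\pa_v\log\!\left(\f{-r_u}{r^2}\right) = \f{6\chi\varpi_1}{r^2} + \f{8\pi\abs{a}r\chi\psi^2}{l^2}.
\end{equation*}
Coupling \eqref{EKG3} with the $v$-evolution equation \eqref{HM2} for $\varpi_1$ yields, after cancellation of the common leading terms between $\pa_v\log(-r_u)$ and $\pa_v\log\mu_1$, the monotonicity identity
\begin{equation*}
\pa_v\log\!\left(\f{-r_u}{\mu_1}\right) = \f{16\pi r(-r_u)(\gr_v\psi)^2}{\Omega^2\mu_1} \geq 0.
\end{equation*}

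For the lower bound, the $v$-monotonicity of $-r_u/\mu_1$ combined with the initial value $(-r_u/\mu_1)(u,v_0) = \overline{r}^2/(2l^2\overline{\mu_1})$ (from the gauge $-r_u = r^2/(2l^2)$ on $v_0$ and the control on $\overline{\mu_1}$ from Lemma \ref{OHME}) yields a uniform positive lower bound for $-r_u/\mu_1$ throughout $\mc{B}$. Combined with $\mu_1 \ge C_Y r^2$ in $\{r \ge r_Y\}$ from the region-splitting lemma and Lemma \ref{OHME}, this gives $-r_u \ge C r^2$. For the upper bound, integrate the first transport identity in $v$ and change variables via $dv = dr/(\chi\mu_1)$; the factor of $\chi$ cancels in the integrand, reducing the problem to an $r$-integral. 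In the range $r \ge r_Y$, Lemma \ref{OHME} gives $\abs{\varpi_1} \le 2M$, the region-splitting lemma gives $\mu_1 \ge C_Y r^2$, and the bootstrap gives $\psi^2 \le b^2 r^{-3+2\kappa}$; thus the integrand is $\lesssim r^{-4} + b^2 r^{2\kappa-4}$, integrating to a finite constant over $[r_Y,\infty)$. Combined with the starting value of $\log(-r_u/r^2)$, this yields $-r_u \le C r^2$.

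The main technical obstacle is handling the case $\overline{r}(u) < r_Y$: the natural $v$-integral for the upper bound would then traverse $\{r < r_Y\}$, where $\mu_1$ may degenerate near the apparent horizon and the change-of-variables bound fails. The resolution is to integrate instead from the curve $\{r = r_Y\}$, where an upper bound on $-r_u$ must be supplied as initial data; this requires combining the preceding lemma's lower bound on $-r_u$ in $r \le r_X$ with auxiliary estimates derived from the $u$-monotonicity of $\mu_1/(-r_u)$, whose boundary value at $\mi$ is fixed by the gauge choice $\chi|_\mi = 1/2$.
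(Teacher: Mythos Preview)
Your approach differs from the paper's. The paper writes the evolution as $\rt_{uv}=\rt_u\cdot f(u,v)\cdot r_v$ for a single explicit factor $f$ (built from $\varpi$ and $\mu$), integrates to $\rt_u=\f{1}{2l^2}\exp\bigl(\int_{\bar r}^{r}f\,dr\bigr)$, and bounds $\abs{f}\le C_{Y,M}r^{-4+2\kappa}$ in $\{r\ge r_Y\}$ to obtain both inequalities at once. Your lower bound, via the $v$-monotonicity of $-r_u/\mu_1$, is actually cleaner: since $-r_u/\mu_1=\Omega^2/(4r_v)$, this monotonicity is precisely the Raychaudhuri equation \eqref{EKG2}, and it holds along the entire outgoing ray regardless of whether $\bar r(u)\ge r_Y$. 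You then only need $\mu_1\ge C_Y r^2$ at the \emph{target} point. The paper's integration, read literally, requires the bound on $\abs{f}$ along the full path $[\bar r(u),r]$, which it only establishes for $r\ge r_Y$; your argument sidesteps this.

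For the upper bound, however, your proposed resolution does not work. The quantity $\mu_1/(-r_u)=4r_v/\Omega^2$ is monotone in $v$ (again by \eqref{EKG2}), not in $u$, and that $v$-monotonicity points the wrong way for an upper bound on $-r_u$. More fundamentally, your own first identity has nonnegative right-hand side for small $b$ (since $\varpi_1>0$ via Lemma \ref{techv}), so $-r_u/r^2$ is \emph{increasing} in $v$; changing variables, the contribution over $[\bar r(u),r_Y]$ is $\int_{\bar r(u)}^{r_Y}\f{6\varpi_1}{r^2\mu_1}\,dr$ plus lower order, and this is not uniformly bounded as $u\to u_\mh$ because $\mu_1(u,v_0)\to 0$ there. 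Thus ``supplying an upper bound on $-r_u$ at $\{r=r_Y\}$'' is not a technicality but the entire difficulty, and the $u$-monotonicity you invoke does not furnish it. The paper's own proof glosses over exactly this point (it tacitly assumes the whole integration path lies in $\{r\ge r_Y\}$), so you are right to isolate the obstruction; but the fix you sketch is not a fix.
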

\begin{proof}
	Using the wave equation for the radial function
	\begin{equation}
	\tilde{r}_{uv} = \f{-3\tilde{r}_ur_v}{r}+\f{3}{4l^2}\f{\Omega^2}{r} -\f{2\pi a}{l^2}\f{\psi^2\Omega^2}{r},
	\end{equation}
	we may write this as 
	\begin{equation}
	\tilde{r}_{uv} = \rt_u\cdot f(u,v)r_v,
	\end{equation}
	where
	\begin{equation}
	f(u,v) = \left(\f{6\varpi}{r^2}+\f{3r}{l^2}\left(1-e^{-4\pi g \psi^2}\right) +\f{6\varpi}{r^2}\left(e^{-4\pi g\psi^2}-1 \right) -\f{8\pi a}{l^2}r\psi^2 \right)\f{e^{4\pi g \psi^2}}{\f{r^2}{l^2}-\f{2\varpi}{r}}. 
	\end{equation}
	We have that
	\begin{equation}
	\rt_u =\f{1}{2l^2}\exp\left(\int_{r(v_0)}^{r(v)}f(u,v)dr\right).
	\end{equation}
	We now wish to show that $f$ is integrable. By an elementary estimation, and noting that
	\begin{equation}
	\abs{1-e^{-4\pi g \psi^2}} \le -8\pi g \psi^2,
	\end{equation} 
	for $b^2$ small, we see
	\begin{equation}
	\abs{f} \le \f{1}{\abs{\f{r^2}{l^2}-\f{2M}{r}}}\left(\abs{6Mr^{-2}}+\abs{\f{24\pi(-g)b^2}{l^2}r^{-2+2\kappa}}+\abs{48\pi(-g)b^2r^{-3+2\kappa}}+\abs{\f{2\pi(-a)}{l^2}b^2 r^{-2+2\kappa}}\right).
	\end{equation}
	We now restrict to $r\ge r_Y$, here we have
	\begin{equation}
	\abs{f} \le C_{Y,M}r^{-4+2\kappa},
	\end{equation}
	so there exists a constant $C_{Y,M}$ such that
	\begin{equation}
	\f{1}{2l^2}\exp\left(-C_{Y,M}\int_{r(v_0)}^{r(v)}r^{-4+2\kappa}dr \right)\le \rt_u \le \f{1}{2l^2}\exp\left(C_{Y,M}\int_{r(v_0)}^{r(v)}r^{-4+2\kappa}dr \right).
	\end{equation}
	Integrating gives
	\begin{equation}
	\f{1}{2l^2}\exp\left(C_{Y,M}\left(r^{-3+2\kappa} - \overline{r}^{-3+2\kappa} \right)  \right)\le \rt_u \le \f{1}{2l^2}\exp\left(C_{Y,M}\left(\overline{r}^{-3+2\kappa} - r^{-3+2\kappa} \right) \right),
	\end{equation}
	and we deduce the bound
	\begin{equation}
	\f{1}{2l^2}\exp\left(-C_{Y,M}r_{min}^{-3+2\kappa} \right)\le \rt_u \le \f{1}{2l^2}\exp\left(C_{Y,M}r_{min}^{-3+2\kappa} \right).
	\end{equation}
	From here the result follows.
\end{proof}
From these three lemmas the proof of theorem \ref{thmru} follows.
\subsection{Energy estimates}
\subsubsection{Degenerate energy estimates from the Hawking mass}
\begin{lem}\label{EE1}
	In $\mc{B}$ we have
	\begin{equation}
	\norm{\psi}{\underline{H}^1_{d}}(u,v)\le C_{g,M,\kappa, Y}\left( \norm{\psi}{\underline{H}^1_{d}}(u_{\mh},v_0)+b^4\right).
	\end{equation}
\end{lem}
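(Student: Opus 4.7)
The strategy is to use $\varpi$ itself as a potential for the $\Hu^1_d$ energy, exploiting the two-sided monotonicity of $\varpi$ along $u$ and $v$ rays established in Corollary~\ref{HMC2}. This is exactly the renormalisation that the introduction of $\varpi$ (as opposed to $\varpi_1$ or $\varpi_2$) was designed to permit, and it is what makes the Neumann boundary conditions tractable.

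First, I would integrate the upper bound for $\pa_u\varpi$ along a $v$-constant ray from $\mi$ to $u$. Both terms in $\pa_u\varpi \le -\f{8\pi r^2 r_v}{\Omega^2}(\grt_u\psi)^2 + \f{4\pi g^2 M r_u}{r}\psi^2$ are non-positive in $\sr_\mh$ (since $r_v>0$ and $r_u<0$), and $\varpi(u_\mi,v)=M$ by Lemma~\ref{C3HMCL}, so rearranging yields
\begin{equation*}
\int_{u_\mi}^u \left(\f{r^2 r_v}{\Omega^2}(\grt_u\psi)^2 + \f{(-r_u)}{r}\psi^2\right) du' \le c_{g,M}^{-1}\bigl(M - \varpi(u,v)\bigr),
\end{equation*}
where $c_{g,M}=\min(8\pi, 4\pi g^2 M)>0$ (using $g=-\tfrac32+\kappa\ne 0$ for $\kappa\in(0,1)$). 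A symmetric integration of the lower bound for $\pa_v\varpi$ from $v_0$ to $v$ along a $u$-constant ray gives
\begin{equation*}
\int_{v_0}^v \left(-\f{r^2 r_u}{\Omega^2}(\grt_v\psi)^2 + \f{r_v}{r}\psi^2\right) dv' \le c_{g,M}^{-1}\bigl(\varpi(u,v) - \varpi(u, v_0)\bigr).
\end{equation*}

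Adding these, the terms involving $\varpi(u,v)$ telescope, so
\begin{equation*}
\norm{\psi}{\Hu^1_d}^2(u,v) \le c_{g,M}^{-1}\bigl(M - \overline{\varpi}(u)\bigr).
\end{equation*}
To control the right-hand side in terms of initial data, I apply the corollary $|\varpi-M|\le C_{a,M,l}(\norm{\psi}{\Hu^1_d}^2+b^4)$ at the point $(u,v_0)$. Since the $v$-integral is empty when $v=v_0$ and the $u$-integrand is non-negative on $\sr_\mh$, one has $\norm{\psi}{\Hu^1_d}^2(u,v_0)\le \norm{\psi}{\Hu^1_d}^2(u_\mh,v_0)$. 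Combining and taking square roots delivers the stated estimate.

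There is no real analytic obstruction here: all of the work has been done in establishing the sign of $f(\psi^2)$ (Lemma~\ref{HMTL1}), the bound $\varpi\le M$ (Corollary~\ref{HMC1}), the lower bound $\varpi\ge M/2$ (Corollary~\ref{HMBB}), and the differential inequalities in Corollary~\ref{HMC2}. The proof of the lemma itself is just an integration and a bookkeeping check that the algebraic factors $8\pi$ and $4\pi g^2 M$ can be absorbed into the geometric norm with a positive constant; the only point requiring care is verifying that the coefficient of the zeroth-order term is strictly positive, which uses $g\ne 0$.
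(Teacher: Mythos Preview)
Your proposal is correct and follows essentially the same approach as the paper: both telescope $M-\overline{\varpi}(u)=\bigl(M-\varpi(u,v)\bigr)+\bigl(\varpi(u,v)-\varpi(u,v_0)\bigr)$ and use the differential inequalities of Corollary~\ref{HMC2} to bound the $\Hu^1_d$ norm from above by $M-\overline{\varpi}(u)$. The only cosmetic difference is that you invoke the already-established corollary $|\varpi-M|\le C(\norm{\psi}{\Hu^1_d}^2+b^4)$ at $(u,v_0)$ to close, whereas the paper re-derives this inline by integrating the upper bound $-\pa_u\varpi\le C(\text{energy density})+C'\psi^4 r^2(-r_u)$ and controlling the $\psi^4$ term via the bootstrap; this is the same content repackaged.
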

\begin{proof}
	We start by noting
	\begin{equation}
	\begin{split}
	\int_{u_\mi}^{u_\mh}-\pa_{\bu}\varpi(\bu,v_0)d\bu  &\ge \int_{u_\mi}^u -\pa_{\bu}\varpi(u,v_0)d\bu\\
	=&\varpi|_\mi-\varpi(u,v_0) + \varpi(u,v) - \varpi(u,v)\\ =& \int_{u_\mi}^{u}-\pa_{\bu}\varpi(\bu,v)d\bu + \int_{v_0}^{v} \pa_{\bv}\varpi(u,\bv)d\bv\\
	\ge&\int_{u_\mi}^{u} \f{8\pi r^2 r_v}{\Omega^2}(\grt_u\psi)^2e^{-1} +\f{2\pi g^2 (-r_u) }{r}M \psi^2d\bu \\&+ \int_{v_0}^{v} -\f{8\pi r^2 r_u}{\Omega^2}(\grt_v\psi)^2+ \f{2\pi g^2 r_v }{r}M \psi^2 d\bv\\
	\ge&C_{g,M}\left( \int_{u_\mi}^{u} \f{r^2 r_v}{\Omega^2}(\grt_u\psi)^2 +\f{(-r_u) }{r} \psi^2d\bu + \int_{v_0}^{v} -\f{r^2 r_u}{\Omega^2}(\grt_v\psi)^2+ \f{r_v }{r} \psi^2 d\bv\right). \\
	\end{split}
	\end{equation}
	This gives us control of the $\Hu^1_d$ norm from initial data, providing we can prove the LHS integral is controlled by the $\Hu^1_d$ norm. Recall 
	\begin{equation}
	-\pa_u \varpi \le  \f{8\pi r^2 r_v}{\Omega^2}(\grt_u\psi)^2e^{4\pi g\psi^2} - \f{4\pi g^2 r_u }{r}\varpi \psi^2  -\f{r_ur^2}{l^2}8\pi^2ag\psi^4, 
	\end{equation}
	we see the first two terms form the $\Hu_d^1$ norm. We need to control the final term.\\
	From the bootstrap assumption we form the bound
	\begin{equation}
	\psi^4r^2 \le b^4 r^{-4+4\kappa},
	\end{equation}
	using this in the integral
	\begin{equation}
	\begin{split}
	\int_{u_\mi}^{u}  -8\pi^2g\f{ar_ur^2}{l^2}\psi^4du &\le  b^4 C_{g}\int_{u_\mi}^u\f{r_u}{3-4\kappa}\pa_u\left(r^{-3+4\kappa} \right)du\\
	&= \f{b^4C_a}{3-4\kappa} \left[r^{-3+4\kappa} \right]^u_{u_\mi} = \f{b^4C_g}{3-4\kappa}r^{-3+4\kappa}\\ &\le C_{g,r_{min}}\f{1}{3-4\kappa}b^4\le C_{g,r_{min}} b^4.  
	\end{split}
	\end{equation}
	We conclude
	\begin{equation}
	\int_{u_\mi}^{u_\mh}-\pa_{\bu}\varpi(\bu,v_0)d\bu  
	\le C_{M,g}\norm{\psi}{\underline{H}^1_{d}}^2(u_{\mh},v_0)+C_{g}b^4. 
	\end{equation}
	The result then follows.
\end{proof}~\\
\subsubsection{Energy estimates in $\{r\ge r_Y\}$}~\\ \\
We now wish to improve our estimates to the $\Hu^1$ norm and recover pointwise estimates. We begin in the region where $r_v$ is bounded away from $0$. We will see that the standard theory of Hardy and Sobolev estimates can be recovered with twisted derivatives. 
\begin{lem}[Norm equivalence away from degeneration] \label{NERY} ~\\
	For $\mc{B}\cap\{r\ge r_Y\}$ we have that
	\begin{equation}
	C_{l} \norm{\psi}{\underline{H}_d^1}^2(u,v) \le \norm{\psi}{\underline{H}^1_1}^2(u,v)\le C_{l,Y} \norm{\psi}{\underline{H}_d^1}^2(u,v).
	\end{equation}
\end{lem}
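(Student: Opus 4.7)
The plan is to reduce the equivalence to a pointwise comparison of weights, and then to a two-sided estimate on $\mu_1$. The zeroth-order $u$-integrand $\f{-r_u}{r}\psi^2$ and the entire $v$-integral are identical in the two norms, so they contribute equally to both sides. The only nontrivial comparison is between the first-order $u$-weights $\f{r^2 r_v}{\Omega^2}$ (in $\Hu^1_d$) and $\f{r^4}{-r_u}$ (in $\Hu^1_1$).

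From the definition $\varpi_1 = \f{2 r_u r_v r}{\Omega^2} + \f{r^3}{2l^2}$, a short algebraic manipulation gives
\begin{equation*}
\f{r^2 r_v}{\Omega^2} \;=\; \f{r^2 \mu_1}{-4 r_u}, \qquad \mu_1 = \f{r^2}{l^2} - \f{2\varpi_1}{r},
\end{equation*}
so the ratio of the two weights is exactly $\f{4 r^2}{\mu_1}$. It therefore suffices to prove the two-sided bound $c\, r^2 \le \mu_1 \le C\, r^2$ on $\mc{B}\cap\{r\ge r_Y\}$ for constants depending only on $l$, $Y$, $M$ (and not on the solution), since pointwise comparability of the two $u$-integrand weights, combined with the coincidence of all other contributions, transfers immediately to the claimed inequality between the squared norms.

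For the lower bound on $\mu_1$, the sign computation
$$\varpi - \varpi_1 = \f{2 r_u r_v r}{\Omega^2}\bigl(e^{4\pi g \psi^2} - 1\bigr) \ge 0$$
(using $r_u<0$, $r_v>0$, $g<0$, and hence $e^{4\pi g\psi^2}\le 1$) gives $\varpi_1 \le \varpi$ in $\sr_\mh$, and Corollary \ref{HMC1} then yields $\varpi_1 \le M$. Hence $\mu_1 \ge \f{r^2}{l^2} - \f{2M}{r}$, which by the Region Splitting Lemma is bounded below by $C_{Y,l,M}\, r^2$ on $\{r\ge r_Y\}$. For the upper bound, Lemma \ref{OHME} gives $\varpi_1 \ge M - C_{M,g,l}\, b^2 r^{2\kappa}$, whence
\begin{equation*}
\mu_1 \le \f{r^2}{l^2} + 2 C_{M,g,l}\, b^2 r^{2\kappa-1};
\end{equation*}
since $\kappa\le \hf$, the correction term satisfies $r^{2\kappa-1}\le r_Y^{2\kappa-1}$ on $\{r\ge r_Y\}$, so it is bounded by $C_{l,Y,M}\, r^2$ after choosing $b$ small.

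The only mildly delicate point is the upper bound on $\mu_1$: because $\varpi_1$ is not a priori non-negative in the regular region, one cannot simply write $\mu_1 \le \f{r^2}{l^2}$, and the proof must appeal to Lemma \ref{OHME}, which in turn depends on the bootstrap assumption \eqref{BS} and the smallness of $b$. Everything else is an algebraic identity together with the region-splitting estimate already established.
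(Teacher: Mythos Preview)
Your proof is correct and follows the same overall strategy as the paper: reduce to a pointwise comparison of the first-order $u$-weights via a Hawking-mass identity, then bound the resulting factor two-sidedly by $r^2$ using the region-splitting estimate.

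The only difference is which Hawking mass you route through. You use $\varpi_1$, obtaining $\f{r^2 r_v}{\Omega^2} = \f{r^2\mu_1}{-4r_u}$ exactly, and must then contend with the fact that $\varpi_1$ need not be non-negative, forcing you to invoke Lemma~\ref{OHME} for the upper bound on $\mu_1$. The paper instead uses $\varpi$, writing $\f{r^2 r_v}{\Omega^2} = \bigl(\f{r^2}{l^2}-\f{2\varpi}{r}\bigr)\f{e^{-4\pi g\psi^2}}{-4r_u}r^2$; since $\varpi\ge \f{M}{2}>0$ by Corollary~\ref{HMBB}, the upper bound $\mu\le \f{r^2}{l^2}$ is immediate, and the exponential is harmlessly bounded in the bootstrap region. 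The lower bound in both arguments is identical ($\varpi,\varpi_1\le M$, then region splitting). So the paper's route is marginally cleaner for the upper bound, while yours makes the algebra of the weight ratio slightly more transparent; substantively they are equivalent.
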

\begin{proof}
	Note that 	on the domain $\{r\ge r_Y\}\cap\mc{B}$
	\begin{equation}
	\begin{split}
	\f{r^2r_v}{\Omega^2} &= \left(\f{r^2}{l^2}-\f{2\varpi}{r} \right) \f{e^{-4\pi g\psi^2}}{-4r_u}r^2,
	\end{split}
	\end{equation}
	estimating from either side
	\begin{equation}
	\begin{split}
	C_{l}\f{r^4}{-r_u} \ge \left(\f{r^2}{l^2}-\f{2\varpi}{r} \right) \f{e^{-4\pi g\psi^2}}{-4r_u}r^2
	\ge\left(\f{r^2}{l^2}-\f{2M}{r} \right) \f{r^2}{-4r_u}\ge C_{l,Y}\f{r^4}{-r_u}.
	\end{split}
	\end{equation}	
\end{proof}
\begin{lem}[Hardy Inequality]\label{HI}
	We have in $\mc{B}$
	\begin{equation}
	\f{1}{C_g}\norm{\psi}{\Hu_1^1}^2(u,v)\le\norm{\psi}{\Hu^1}^2(u,v) \le C_{g}\norm{\psi}{\Hu_1^1}^2(u,v).
	\end{equation}
\end{lem}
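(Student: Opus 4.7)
Only the $u$-integral zeroth-order weights of the two norms differ: $\Hu^1$ uses $-r_u\psi^2$ while $\Hu_1^1$ uses $\f{-r_u}{r}\psi^2$. The $v$-integrals and first-order $u$-derivative terms are identical. Writing
\begin{equation*}
A:=\int_{u_\mi}^u(-r_u)\psi^2\,d\bu,\quad B:=\int_{u_\mi}^u\f{-r_u}{r}\psi^2\,d\bu,\quad F:=\int_{u_\mi}^u\f{r^4}{-r_u}(\grt_u\psi)^2\,d\bu,
\end{equation*}
the claim reduces to proving $B\le C_g A$ (the easy direction) and $A\le C_g(B+F)$ (the Hardy direction). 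The first is immediate since $r\ge r_{min}$ throughout $\mc{B}$ gives $B\le r_{min}^{-1}A$.

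For the Hardy direction, the plan is to pass to the radial variable along a constant-$v$ ray via $du=dr/r_u$ and set $\tilde\psi:=r^{-g}\psi$, so that $(\grt_u\psi)^2=r^{2g}r_u^2(\pa_r\tilde\psi)^2$ and the three quantities become
\begin{equation*}
A=\int_{r_0}^\infty r^{2g}\tilde\psi^2\,dr,\quad B=\int_{r_0}^\infty r^{2g-1}\tilde\psi^2\,dr,\quad F=\int_{r_0}^\infty r^{2g+4}(\pa_r\tilde\psi)^2\,dr,
\end{equation*}
with $r_0:=r(u,v)\ge r_{min}$. This reduces the problem to a one-dimensional weighted Hardy inequality on $[r_0,\infty)$.

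I would prove this Hardy inequality by integrating the identity $\pa_r(r^{2g+1}\tilde\psi^2)=(2g+1)r^{2g}\tilde\psi^2+2r^{2g+1}\tilde\psi\pa_r\tilde\psi$ from $r_0$ to $\infty$. The condition $g\in(-3/2,-1/2)$ gives $2g+1<0$, and combined with the bootstrap bound $|\tilde\psi|\le b$ this kills the boundary term at infinity. Applying Cauchy--Schwarz to the resulting cross-term via the split $r^{2g+1}=r^{(2g-1)/2}\cdot r^{(2g+3)/2}$ and absorbing $\int r^{2g+3}(\pa_r\tilde\psi)^2\,dr\le r_{min}^{-1}F$ produces, after choosing the free parameter optimally,
\begin{equation*}
A\le C_g\bigl(B+F+r_0^{2g+1}\tilde\psi^2(r_0)\bigr).
\end{equation*}

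The main obstacle is the boundary term $r_0^{2g+1}\tilde\psi^2(r_0)$. To absorb it, I would apply the fundamental theorem of calculus in the $u$-direction, $\tilde\psi(u,v)=\tilde\psi(u_\mi(v),v)+\int_{u_\mi(v)}^u\pa_{u'}\tilde\psi\,du'$, with a weighted Cauchy--Schwarz pairing $(\pa_u\tilde\psi)^2=r^{-2g}(\grt_u\psi)^2$ against the weight that reproduces the integrand of $F$. A short computation using the algebraic identity $2g+1-2\kappa=-2$ gives the pointwise bound
\begin{equation*}
r_0^{2g+1}\tilde\psi^2(r_0)\le Cr_0^{-2}F+Cr_0^{2g+1}\tilde\psi^2(u_\mi(v),v),
\end{equation*}
and the first term is absorbed using $r_0\ge r_{min}$. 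In the Dirichlet case the trace $\tilde\psi(u_\mi(v),v)$ vanishes and we are done. In the Neumann case it does not, and the trace must instead be controlled by running the fundamental theorem in the $v$-direction, incorporating initial data at $v_0$ and a Cauchy--Schwarz against the first-order $v$-integrand present in both norms. The most delicate part is keeping all constants dependent only on $g$ and on the fixed-background quantity $r_{min}$.
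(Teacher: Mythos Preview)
Your reduction to the one-dimensional integrals $A,B,F$ and the change of variables to $r$ are correct, and your integration-by-parts identity is right: you do pick up the interior boundary term $r_0^{2g+1}\tilde\psi^2(r_0)=r_0\psi^2(u,v)$, with an unfavourable sign. This is a genuine obstacle for your direct approach.

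The paper avoids this obstacle entirely by introducing a smooth cutoff $\chi=\chi(r)$ and splitting $\psi=\chi\psi+(1-\chi)\psi$. The piece $(1-\chi)\psi$ lives where $r$ is bounded by a fixed constant, so its $\Lu^2$-norm is trivially controlled by $B$. The piece $\chi\psi$ is supported near $\mi$; the Hardy integration by parts is performed on $\chi\psi$, and now \emph{both} boundary terms vanish: the one at $\mi$ by the decay $r(\chi\psi)^2\sim r^{1+2g}\to 0$ (which uses only the bootstrap bound $|r^{-g}\psi|\le b$, not any boundary condition), and the one at the interior endpoint because $\chi$ vanishes there. The commutator terms coming from $\grt_u(\chi\psi)=\chi\grt_u\psi+(\pa_u\chi)\psi$ are supported where $r$ is bounded and are again absorbed into $B$. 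In particular the paper's argument never distinguishes Dirichlet from Neumann.

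Your proposed Neumann fix has a real gap. You want to control the trace $\tilde\psi(u_\mi(v),v)$ by ``running the fundamental theorem in the $v$-direction, incorporating initial data at $v_0$''. But the lemma is a pure norm equivalence at a fixed $(u,v)$: the right-hand side $\norm{\psi}{\Hu_1^1}^2(u,v)$ contains only the $u$-integral along $\{v=\text{const}\}$ from $u_\mi$ to $u$ and the $v$-integral along $\{u=\text{const}\}$ from $v_0$ to $v$. Initial data on $\{v=v_0\}$ is not part of this quantity, and the $v$-integral lives on the fixed-$u$ ray, not on $\mi$. So neither ingredient you invoke is available to bound the trace at $\mi$ in terms of $\norm{\psi}{\Hu_1^1}^2(u,v)$. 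Concretely, if $\tilde\psi$ were a nonzero constant (so $F=0$), one would have $A\sim r_0^{2g+1}$ and $B\sim r_0^{2g}$, hence $A/(B+F)\sim r_0$, which shows that the boundary term cannot simply be dropped and must be removed structurally. The cutoff does exactly this; your FTC route does not, in the Neumann case.

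Your Dirichlet argument is fine, but the cleaner and boundary-condition-agnostic route is the cutoff.
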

\begin{proof}
	Fix $u_1<u_2<u_\mi \in \mc{B}$, and let $\chi$ be a bump function with the following properties 
	\begin{equation}
	\chi(r(u))=
	\begin{cases}
	0  &u\le u_1,\\
	1  &u\ge u_2,\\
	\text{Smooth and bounded by $1$} &\text{ otherwise.}\\
	\end{cases}
	\end{equation}
	Then
	\begin{equation}
	\norm{(1-\chi)\psi}{\Lu^2}^2 = \int_{u_2}^{u} -r_u (1-\chi)^2\psi^2 du \le \sup_{(u_2,u)}r\cdot \int_{u_2}^{u} \f{-r_u}{r} \psi^2du \le C_{r(u_2)} \int_{u_\mi}^{u} \f{-r_u}{r}\psi^2du.
	\end{equation}
	Looking at
	\begin{equation}
	\begin{split}
	\norm{\chi\psi}{\Lu^2}^2&=\int_{u_\mi}^{u_1}\left( \chi\psi r^{\f{3}{2}-\kappa}\right)^2\pa_u\left(\f{r^{-2+2\kappa}}{2-2\kappa} \right)du\\ &= \left[\f{r(\chi\psi)^2}{2-2\kappa} \right]^{u_1}_{u_\mi}+ \int_{u_\mi}^{u_1} \f{1}{1-\kappa}\chi r\psi\grt_u\left( \chi\psi\right) rdu\\
	&\le\f{1}{1-\kappa} \norm{\chi\psi}{\Lu^2} \cdot \left( \int_{u_\mi}^{u_1} \left( \grt_u\left( \chi\psi\right)\right) ^2 \f{r^2}{-r_u}du\right)^\hf,
	\end{split}
	\end{equation}
	we have
	\begin{equation}
	\begin{split}
	\norm{\chi\psi}{\Lu^2}^2 &\le \f{1}{(1-\kappa)^2}\int_{u_\mi}^{u_1} \left( \grt_u\left( \chi\psi\right)\right) ^2 \f{r^2}{-r_u}du\\
	&\le C_{\kappa} \int_{u_\mi}^{u_1}\chi^2(\grt_u\psi)^2\f{r^2}{-r_u} + \f{r^2}{-r_u}\left(\psi\pa_u\chi \right)^2 du.
	\end{split}
	\end{equation}
	Studying the latter term
	\begin{equation}
	\f{r^2}{-r_u}\left(\psi\pa_u\chi \right)^2 = \f{r^2}{-r_u}\left(r_u\psi\pa_r\chi \right)^2 = -r_u\psi^2r^2(\pa_r\chi)^2,
	\end{equation}
	note that due to regularity of $\chi$, and the boundedness of the region
	\begin{equation}
	\int_{u_\mi}^{u_1} -r_u\psi^2r^2(\pa_r\chi)^2 du = \int_{u_2}^{u_1} -r_u\psi^2r^2(\pa_r\chi)^2 du \le C \int_{u_\mi}^{u} \f{-r_u}{r}\psi^2du.
	\end{equation} 	
	This then implies
	\begin{equation}
	\f{1}{C_{\kappa}}\norm{\chi\psi}{\Lu^2}^2 \le \int_{u_\mi}^{u_1}(\grt_u\psi)^2\f{r^2}{-r_u}du + \int_{u_\mi}^{u} \f{-r_u}{r}\psi^2du,
	\end{equation}
	combining all the results
	\begin{equation}
	\norm{\psi}{\Lu^2}^2 = \int_{u_\mi}^{u} -r_u \psi^2 du \le C_\kappa\left( \int_{u_\mi}^{u_1}(\grt_u\psi)^2\f{r^2}{-r_u}du + \int_{u_\mi}^{u} \f{-r_u}{r}\psi^2du\right). 
	\end{equation}
	The result follows.	
\end{proof}
\begin{lem}[Sobolev Inequality]\label{SI}
	For $\psi \in \Hu^1(\mc{B})$ there  exists $C_{g,M,l}>0$ such that
	\begin{equation}
	\abs{r^{-g}\psi}(u,v) \le C_{g,M,l}\left( \norm{\psi}{\Hu^1}(u,v) +\norm{\psi}{\Lu^2}(u_\mh,v)\right)  .
	\end{equation}
\end{lem}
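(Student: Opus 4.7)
Set $\phi := r^{-g}\psi$. By the definition of the twisted derivative with twisting function $f = r^g$, one has the key identity $\pa_u\phi = r^{-g}\grt_u\psi$, reducing the problem to a weighted one-dimensional Sobolev embedding for $\phi$ along the $v$-constant ray. The plan is: (i) locate a reference point $u_\star$ on this ray where $\phi^2(u_\star,v)$ is controlled by $\norm{\psi}{\Lu^2}(u_\mh,v)$ via an integral mean value theorem, and then (ii) control the deviation $|\phi(u,v)-\phi(u_\star,v)|^2$ by $\norm{\psi}{\Hu^1}(u,v)$ using the fundamental theorem of calculus and Cauchy--Schwarz with a carefully chosen weight pairing.

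\textbf{Step 1 (FTC with weighted Cauchy--Schwarz).} For any reference $u_\star$,
\begin{equation*}
\phi(u,v)-\phi(u_\star,v) = \int_{u_\star}^u r^{-g}\grt_u\psi\, d\bar u,
\end{equation*}
and Cauchy--Schwarz gives
\begin{equation*}
|\phi(u,v)-\phi(u_\star,v)|^2 \le \left(\int_I r^{-1-2\kappa}(-r_u)\,d\bar u\right)\left(\int_I \f{r^4}{-r_u}(\grt_u\psi)^2\,d\bar u\right),
\end{equation*}
with $I$ the interval between $u$ and $u_\star$. Since $\kappa>0$ and $r\ge r_{min}$, the first factor is bounded by $\int_{r_{min}}^\infty\rho^{-1-2\kappa}\,d\rho = r_{min}^{-2\kappa}/(2\kappa)$, a $C_{g,M,l}$-constant.

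\textbf{Step 2 (Reference value via mean value theorem).} Apply the integral mean value theorem on $[u_\mi(v),u_\mh]$ with weight $w := r^{2g}(-r_u) = r^{-3+2\kappa}(-r_u)$, chosen so that $\phi^2 w = \psi^2(-r_u)$. There exists $u_\star\in[u_\mi(v),u_\mh]$ with
\begin{equation*}
\phi^2(u_\star,v)\int_{u_\mi(v)}^{u_\mh}r^{-3+2\kappa}(-r_u)\,d\bar u = \int_{u_\mi(v)}^{u_\mh}\psi^2(-r_u)\,d\bar u \le \norm{\psi}{\Lu^2}^2(u_\mh,v).
\end{equation*}
The denominator equals $r(u_\mh,v)^{-2+2\kappa}/(2-2\kappa)$, and is uniformly bounded below by a $C_{g,M,l}$-constant using the a priori upper bound $r(u_\mh,v)\le r_X$ from the region-splitting setup. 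Hence $\phi^2(u_\star,v)\le C_{g,M,l}\norm{\psi}{\Lu^2}^2(u_\mh,v)$.

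\textbf{Step 3 (Combining, with obstacle).} From $\phi^2(u,v)\le 2\phi^2(u_\star,v)+2|\phi(u,v)-\phi(u_\star,v)|^2$, the first term is bounded by Step 2. When $u_\star\le u$, the interval $I\subseteq[u_\mi(v),u]$ so the second Cauchy--Schwarz factor in Step 1 is bounded by $\norm{\psi}{\Hu^1}^2(u,v)$, completing the proof in this case. The main obstacle is the case $u_\star\in(u,u_\mh]$, when $I=[u,u_\star]$ lies outside the $u$-support of $\norm{\psi}{\Hu^1}(u,v)$. I plan to resolve this by the following device: express $\phi(u,v)$ instead via the boundary value at $\mi$, writing $\phi(u,v) = \phi|_\mi(v) + \int_{u_\mi(v)}^u r^{-g}\grt_u\psi\,d\bar u$, so the integral is now controlled by $\norm{\psi}{\Hu^1}(u,v)$ directly. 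The boundary value $\phi|_\mi(v)$ is then controlled by writing $\phi|_\mi(v) = \phi(u_\star,v) - \int_{u_\mi(v)}^{u_\star}r^{-g}\grt_u\psi\,d\bar u$ and splitting the integral at $u$ as $\int_{u_\mi(v)}^{u} + \int_{u}^{u_\star}$; the first piece is again controlled by $\norm{\psi}{\Hu^1}(u,v)$, while the second piece $\int_u^{u_\star}\f{r^4}{-r_u}(\grt_u\psi)^2\,d\bar u$ (which lies on $[u,u_\mh]$, where $r\le R$ is bounded) is estimated using an auxiliary Poincar\'e/AM--GM argument against $\norm{\psi}{\Lu^2}(u_\mh,v)$, exploiting the bootstrap assumption and the range $\kappa\le \tfrac{1}{2}$ relevant for the orbital stability theorem. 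The most technical part of the proof is this auxiliary estimate on $[u,u_\mh]$; everything else is standard weighted Sobolev bookkeeping.
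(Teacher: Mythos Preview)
Your overall structure --- set $\phi=r^{-g}\psi$, apply FTC with weighted Cauchy--Schwarz along the $v$-constant ray, then anchor to a reference point via an averaging argument --- is exactly the paper's approach. There are, however, two genuine gaps in your execution.

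\textbf{Step 2 gap.} To bound $\phi^2(u_\star,v)$ you need the weight integral $\int_{u_\mi}^{u_\mh}r^{-3+2\kappa}(-r_u)\,d\bar u = r(u_\mh,v)^{-2+2\kappa}/(2-2\kappa)$ to be bounded \emph{below}, which requires $r(u_\mh,v)$ to be bounded \emph{above}. Your claimed bound $r(u_\mh,v)\le r_X$ is not available at this stage of the bootstrap: nothing yet forces the horizon radius to stay below $r_X$, and the Penrose-type inequality $\sup_{\mh}r\le r_+$ is only established \emph{after} asymptotic stability. The paper sidesteps this by averaging with plain Lebesgue measure $dz$ instead of your weight $w$ (so the normalisation is the coordinate length $u_\mh-u_\mi(v)$, a domain constant), and then bounds the resulting $\int r^{-g}|\psi|\,dz$ via Cauchy--Schwarz with the pairing $\sqrt{-r_z}\,|\psi|\cdot r^{-g}/\sqrt{-r_z}$, using the lower bound $-r_z\ge Cr^2$ from Theorem~\ref{thmru} to make the second factor finite.

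\textbf{Step 3 gap.} You correctly spot that when $u_\star>u$ the Cauchy--Schwarz factor $\int_u^{u_\star}\tfrac{r^4}{-r_u}(\grt_u\psi)^2\,d\bar u$ lies outside the $u$-support of $\norm{\psi}{\Hu^1}(u,v)$; the paper in fact glosses over this same point. But your proposed fix cannot work: you want to bound this \emph{derivative} integral by the zeroth-order quantity $\norm{\psi}{\Lu^2}(u_\mh,v)$ via a ``Poincar\'e/AM--GM'' argument, which is impossible in general (an $H^1$-seminorm is not dominated by an $L^2$-norm), and your assertion that $r\le R$ on $[u,u_\mh]$ fails when $u$ is close to $u_\mi$. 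The honest resolution is much simpler: for \emph{any} $z\in[u_\mi,u_\mh]$ the FTC step gives $|\phi(u,v)-\phi(z,v)|\le C\,\norm{\psi}{\Hu^1}(u_\mh,v)$, and since the lemma is only ever invoked after one already has uniform control of $\norm{\psi}{\Hu^1}(u',v)$ over all $u'\in[u_\mi,u_\mh]$, replacing $\norm{\psi}{\Hu^1}(u,v)$ by $\norm{\psi}{\Hu^1}(u_\mh,v)$ on the right-hand side is harmless and closes the argument cleanly.
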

\begin{proof}
	We begin by writing
	\begin{equation}
	\begin{split}
	r^{-g}\psi(u,v)-r^{-g}\psi(z,v)&= \int_z^u\pa_{u'}(r^{-g}\psi)(u',v)du' =\int_z^u(\grt_{u'}\psi)r^{-g}du'\\
	&=\int_z^u\left( \f{\left( \grt_{u'}\psi\right)  r^2}{\sqrt{-r_{u'}}}\cdot r^{-2-g}\sqrt{-r_{u'}}\right) (u',v)du'\\
	&\le \left(\int_z^u\left( \f{(\grt_{u'}\psi)^2r^4}{-r_{u'}}\right)(u',v) du'\right)^\hf \left(\int_z^u\left( -r^{-4-2g}r_{u'}\right)(u',v)du' \right)^\hf \\
	&\le \left(\int_{u_\mi}^u\left( \f{(\grt_{u'}\psi)^2r^4}{-r_{u'}}\right)(u',v) du'\right)^\hf \left(\int_{u_\mi}^u\left( -r^{-4-2g}r_{u'}\right)(u',v)du' \right)^\hf 
	\\
	&\le \norm{\psi}{\Hu^1}(u,v) \left(\int_{u_\mi}^{u}\left( -r^{-4-2g}r_{u'}\right)(u',v) du' \right)^\hf.
	\end{split}
	\end{equation}
	Now as
	\begin{equation}
	\int_{u_\mi}^{u}\left( -r^{-4-2g}r_{u'}\right) (u',v)du' = \left[\f{r^{-3-2g}}{-3-2g} \right]^{r(u,v)}_{\infty} = \f{1}{2\kappa} r^{-2\kappa}(u,v),
	\end{equation}
	we have
	\begin{equation}
	\abs{r^{-g}\psi(u,v)} \le \abs{r^{-g}\psi(z,v)} + \f{1}{2\kappa} r^{-\kappa}(u,v)\norm{\psi}{\Hu^1}(u,v).
	\end{equation}
	Integrating $z$ over the whole $u$ ray
	\begin{equation}
	r^{-g}\psi(u,v)\int_{u_\mi}^{u_\mh}dz \le \int_{u_\mi}^{u_\mh} r^{-g}\abs{\psi(z,v)}dz + Cr^{-\kappa}(u,v)\norm{\psi}{\Hu^1}(u,v)\int_{u_\mi}^{u_\mh}dz.
	\end{equation}
	As $\int_{u_\mi}^{u_\mh}dz=C_{dom.}$ a domain dependant constant. We need only worry about
	\begin{equation}
	\begin{split}
	\int_{u_\mi}^{u_\mh} r^{-g}\abs{\psi(z,v)}dz  &= \int_{u_\mi}^{u_\mh} \sqrt{-r_z}\abs{\psi(z,v)}\cdot \f{r^{-g}}{\sqrt{-r_z}}(z,v)dz \\
	& \le \left(\int_{u_\mi}^{u_\mh}-r_z\psi^2(z,v)dz \right)^\hf \left(\int_{u_\mi}^{u_\mh}\f{r^{-2g}}{-r_z}(z,v)dz \right)^\hf \\
	&\le \norm{\psi}{\Lu^2}(u_\mh,v)\left(\int_{u_\mi}^{u_\mh}\f{r^{-2g}}{-r_z}(z,v)dz \right)^\hf.
	\end{split}
	\end{equation}
	Estimating the latter term
	\begin{equation}
	\int_{u_\mi}^{u_\mh}\f{r^{-2g}}{-r_z}(z,v)dz = \int_{u_\mi}^{u_\mh}\f{r^{-2g}}{r_z^2}\cdot-r_z(z,v)dz = \int_{r_\mh}^{\infty}\f{r^{-2g}}{r_z^2}dr \le C \left[\f{ r^{-2g-3}}{-2g-3} \right]_{r_{\mh}}^{\infty} <C_g r_{min}^{-2\kappa}.
	\end{equation}
	(recalling the results of theorem \ref{thmru}). We thus conclude
	\begin{equation}
	\abs{r^{-g}\psi(u,v)} \le C_{g,M,l}\left( \norm{\psi}{\Hu^1}(u,v) + \norm{\psi}{\Lu^2}(u_\mh,v)\right) . 
	\end{equation}
	where $C_{g,M,l}$ is a positive constant depending on the domain. The result then follows.
\end{proof}
\begin{coro}\label{RSO}
	For $\psi \in \Hu_d^1\left( \{r\ge r_Y\}\cap\mc{B}\right) $  we have that
	\begin{equation}
	\abs{r^{\f{3}{2}-\kappa}\psi}(u,v) \le C_{g,M,l}\left( \norm{\psi}{\underline{H}_d^1}(u,v)+\norm{\psi}{\Lu^2}(u_\mh,v)\right) .
	\end{equation}
\end{coro}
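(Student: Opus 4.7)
The plan is to chain together the three preceding lemmas, since the corollary is essentially a packaging result. The starting point is Lemma \ref{SI}, which, upon recalling that $-g = \frac{3}{2} - \kappa$, directly yields
\begin{equation}
\abs{r^{\frac{3}{2}-\kappa}\psi}(u,v) \le C_{g,M,l}\left(\norm{\psi}{\Hu^1}(u,v) + \norm{\psi}{\Lu^2}(u_\mh, v)\right).
\end{equation}
The left-hand side is already exactly the quantity we wish to control, so the whole task reduces to replacing $\norm{\psi}{\Hu^1}(u,v)$ on the right-hand side by $\norm{\psi}{\Hu_d^1}(u,v)$ under the hypothesis that we are working in $\{r \ge r_Y\} \cap \mc{B}$.

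For this, I would first apply the Hardy-type inequality of Lemma \ref{HI} to obtain $\norm{\psi}{\Hu^1}(u,v) \le C_g \norm{\psi}{\Hu^1_1}(u,v)$. Then I would invoke the norm equivalence of Lemma \ref{NERY} to pass from $\Hu_1^1$ to $\Hu_d^1$: on the region $\{r \ge r_Y\}$ one has $\norm{\psi}{\Hu_1^1}(u,v) \le C_{l,Y} \norm{\psi}{\Hu_d^1}(u,v)$. The use of Lemma \ref{NERY} is legitimate at $(u,v)$ because $r_u < 0$ in $\mc{B}$ (Theorem \ref{thmru} and the discussion preceding it), so the outgoing null segment from $u_\mi$ down to $u$ at fixed $v$ has $r$ monotonically decreasing and remains in $\{r \ge r_Y\}$; the ingoing segment from $v_0$ to $v$ at fixed $u$ is by hypothesis contained in the restricted domain on which the $\Hu_d^1$ norm is taken.

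Concatenating these three estimates gives the stated bound with a constant that depends on $g, M, l$ (and implicitly on the fixed geometric threshold $Y$ chosen earlier). There is no genuine obstacle here; the substantive work has already been carried out in Lemmas \ref{NERY}, \ref{HI}, and \ref{SI}, and this corollary merely records the consequence most convenient for the subsequent orbital stability argument, namely pointwise control of the conformally weighted scalar field by the degenerate energy $\Hu_d^1$ plus a zeroth-order flux on the final outgoing ray.
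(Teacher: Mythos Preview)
Your proposal is correct and follows exactly the same approach as the paper, which simply states that the result is an application of Lemmas \ref{NERY}, \ref{HI}, and \ref{SI}. Your write-up is in fact more detailed than the paper's one-line proof, and your justification for why the outgoing segment remains in $\{r\ge r_Y\}$ (monotonicity of $r$ in $u$) is a reasonable extra clarification.
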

\begin{proof}
	This is just an application of lemma \ref{NERY}, \ref{HI} and \ref{SI}.
\end{proof}~\newline
\subsection{Red shift estimates in  $\{r\le r_X\}$}~ \\ \\
In this region we are bounded away from $\mi$. As such we are not worried about diverging fluxes there, and we do not need to work within the twisted framework. Primarily we are concerned with the degeneration of $r_v$. To combat this we use a redshift argument from \cite{holzegel_stability_2013} adapted to this setting.
\begin{lem}\label{RSNE}
	In the region $\{r\le r_X\}\cap\mc{B}$ we have a constant $C_{X,g}>0$, such that
	\begin{equation}
	\f{1}{C_{X,g}}\norm{\psi}{\Hu_d^1}(u,v)\le \norm{\psi}{H_d^1}(u,v)\le C_{X,g}\norm{\psi}{\Hu_d^1}(u,v).
	\end{equation}
\end{lem}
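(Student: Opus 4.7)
The plan is to exploit the fact that the twisted derivative differs from the untwisted one only by a zeroth order term, namely
\begin{equation}
\grt_\mu \psi \;=\; \pa_\mu \psi - g\,\frac{r_\mu}{r}\,\psi,\qquad \mu\in\{u,v\},
\end{equation}
so that the zeroth order parts of $\norm{\psi}{H^1_d}^2$ and $\norm{\psi}{\Hu^1_d}^2$ coincide and one only needs to compare the first order pieces. Squaring and weighting gives
\begin{equation}
\frac{r^2 r_v}{\Omega^2}(\grt_u\psi)^2 \;=\; \frac{r^2 r_v}{\Omega^2}(\pa_u\psi)^2 \;-\; 2g\,\frac{r\,r_u r_v}{\Omega^2}\,\psi\,\pa_u\psi \;+\; g^2\,\frac{r_u^2\,r_v}{\Omega^2}\,\psi^2,
\end{equation}
together with the analogous expression for the $v$-integrand. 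The strategy is to apply AM--GM to the cross term so as to absorb half of the first order contribution, and then to show that the remaining quadratic-in-$\psi$ error is controlled by the zeroth order term $\frac{-r_u}{r}\psi^2$ already present in the norm (respectively $\frac{r_v}{r}\psi^2$ for the $v$-integral). Running this argument in both directions yields the two-sided bound.

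The crucial algebraic input is the identity
\begin{equation}
\frac{r_u r_v}{\Omega^2} \;=\; -\frac{\mu_1}{4}, \qquad \mu_1 = -\frac{2\varpi_1}{r}+\frac{r^2}{l^2},
\end{equation}
which follows directly from the definition of $\varpi_1$ in \eqref{DHM} (equivalently from \eqref{auxvar}). Hence
\begin{equation}
\frac{r_u^2\,r_v}{\Omega^2} \;=\; \frac{\mu_1}{4}(-r_u), \qquad \frac{-r_u\,r_v^2}{\Omega^2} \;=\; \frac{\mu_1}{4}\,r_v.
\end{equation}
In the region $\{r \le r_X\}\cap \mc{B}$ one has $r_{\min}\le r\le r_X$, and Lemma \ref{OHME} together with the boundedness of $r$ gives $|\varpi_1|\le C_{M,g,l}$, whence $|\mu_1|\le C_{X,M,l}$. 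Therefore the Cauchy--Schwarz error terms obey
\begin{equation}
g^2\,\frac{r_u^2\,r_v}{\Omega^2}\,\psi^2 \;\le\; C\,(-r_u)\,\psi^2 \;\le\; C\,r_X\,\frac{-r_u}{r}\,\psi^2, \qquad g^2\,\frac{-r_u\,r_v^2}{\Omega^2}\,\psi^2 \;\le\; C\,r_X\,\frac{r_v}{r}\,\psi^2,
\end{equation}
and these are exactly the zeroth order densities appearing in both $H^1_d$ and $\Hu^1_d$.

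With these preparations the forward estimate $\norm{\psi}{\Hu^1_d} \le C_{X,g}\norm{\psi}{H^1_d}$ follows by expanding, applying $2|ab|\le\tfrac{1}{2}a^2 + 2 b^2$ to the cross terms, and using the bounds above to control the residual zeroth order piece by a constant multiple of the $\frac{-r_u}{r}\psi^2$ and $\frac{r_v}{r}\psi^2$ terms already present in $\norm{\psi}{H^1_d}^2$. The reverse direction is symmetric: write $\pa_\mu\psi = \grt_\mu\psi + g\,(r_\mu/r)\,\psi$, square, weight and repeat verbatim. The only point requiring any care, and the main conceptual obstacle, is that $r_v$ may vanish inside the region (at the apparent horizon), so one cannot treat the weights $r^2 r_v/\Omega^2$ and $-r^2 r_u/\Omega^2$ as uniformly positive; however the identity $r_u r_v/\Omega^2 = -\mu_1/4$ precisely cancels this potential degeneration against a simultaneous vanishing of $\Omega^2$, and with $\mu_1$ bounded the whole argument closes with a constant $C_{X,g}$ depending only on $r_X$, $r_{\min}$, $M$, $l$, and $g$.
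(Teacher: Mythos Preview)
Your argument is correct and is precisely the elementary pointwise comparison the paper has in mind; indeed the paper states this lemma without proof, treating it as immediate. Your use of the identity $\frac{r_u r_v}{\Omega^2}=-\frac{\mu_1}{4}$ to convert the error term $g^2\frac{r_u^2 r_v}{\Omega^2}\psi^2$ into $\frac{g^2\mu_1}{4}(-r_u)\psi^2\le C_{X,g}\frac{-r_u}{r}\psi^2$ (using $r_{\min}\le r\le r_X$ and the bound on $\varpi_1$ from Lemma~\ref{OHME}) is exactly the right mechanism, and the symmetric treatment of the $v$-integrand and of the reverse direction is equally routine. One small clarification worth making explicit: the statement should be read as a pointwise equivalence of the integrands on the portion of each null ray lying in $\{r\le r_X\}$, since the $u$-integral defining $\norm{\psi}{H^1_d}$ from $u_{\mi}$ is not finite for Neumann data; this is consistent with how the paper actually uses the lemma (with the indicator $\mathbbm{1}_{\{r\le r_X\}}$ in Lemma~\ref{RSE}) and with your own pointwise argument.
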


\begin{lem}
	We have in the region $\{r\le r_X\}\cap\mc{B}$
	\begin{equation}
	\abs{r^{\f{3}{2}}\f{r\psi_u}{r_u}} \le  C_{X,g}\left(\abs{\f{r^\f{5}{2}}{-r_u}\grt_u\psi} +\abs{r^{\f{3}{2}-\kappa}\psi} \right). 
	\end{equation}
\end{lem}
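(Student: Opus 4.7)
The proof is a short algebraic manipulation that converts between the ordinary and the twisted $u$-derivative of $\psi$, followed by a bounded-$r$ estimate that absorbs the twisting correction.

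First I would recall that the twisting function is $f = r^g$ with $g = -\f{3}{2}+\kappa$, so by the definition of the twisted derivative
\begin{equation}
\grt_u\psi = r^g\pa_u\!\left(\f{\psi}{r^g}\right) = \pa_u\psi - g\f{r_u}{r}\psi,
\end{equation}
hence
\begin{equation}
\pa_u\psi = \grt_u\psi + g\f{r_u}{r}\psi.
\end{equation}

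Multiplying this identity by $r^{3/2}\cdot r/r_u$ and applying the triangle inequality gives
\begin{equation}
\abs{r^{\f{3}{2}}\f{r\psi_u}{r_u}} \le \abs{\f{r^{\f{5}{2}}}{-r_u}\grt_u\psi} + \abs{g}\,\abs{r^{\f{3}{2}}\psi}.
\end{equation}
The first term on the right is already in the desired form. For the second term, I would factor out $r^\kappa$ so that the remaining $\psi$-factor matches the pointwise quantity on the right-hand side:
\begin{equation}
\abs{g}\,\abs{r^{\f{3}{2}}\psi} = \abs{g}\, r^{\kappa}\,\abs{r^{\f{3}{2}-\kappa}\psi}.
\end{equation}
Since we are restricted to $\{r\le r_X\}\cap\mc{B}$, the factor $r^\kappa$ is bounded above by $r_X^\kappa$, which is a constant depending only on $M$, $l$ and the fixed level set value defining $r_X$ (and hence on $X$ and $g$ through $\kappa$).

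Combining the two bounds yields
\begin{equation}
\abs{r^{\f{3}{2}}\f{r\psi_u}{r_u}} \le \max\!\left(1,\,\abs{g}\,r_X^{\kappa}\right)\left(\abs{\f{r^{\f{5}{2}}}{-r_u}\grt_u\psi}+\abs{r^{\f{3}{2}-\kappa}\psi}\right),
\end{equation}
which is exactly the claimed inequality with $C_{X,g} := \max(1,\abs{g}r_X^\kappa)$. There is no genuine obstacle in this statement: the whole content of the lemma is the observation that passing between $\pa_u\psi$ and $\grt_u\psi$ produces an error controlled by $r_u\psi/r$, and that in the bounded-$r$ region $\{r\le r_X\}$ the positive power $r^\kappa$ coming out of this conversion can be absorbed into a constant depending only on $X$ and $g$.
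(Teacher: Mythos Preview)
Your proof is correct and is exactly the approach the paper has in mind. The paper states this lemma without proof, treating it as trivial; the identical algebraic conversion (expanding $\psi_u$ in terms of $\grt_u\psi$ and $\psi$, then using $r\le r_X$ to absorb the leftover $r^\kappa$) appears explicitly later in the proof of Corollary~\ref{rsc} for the reverse inequality and again in the proof of Lemma~\ref{RSE}.
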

\begin{lem}[Basic Red Shift Estimate] \label{rs1}
	We have that in the region $\{r\le r_X\}\cap \mc{B}$
	\begin{equation}
	\abs{\f{r\psi_u}{r_u}(u,v)} \le C_{a,l,X}\left[\sup_{D(u,v)}\norm{\psi}{\Hu^1_{d}} +\sup_{I(v_0)}\abs{r^{\f{3}{2}}\f{r\psi_u}{r_u}} + \sup_{I(v_0)}\abs{r^{\f{3}{2}-\kappa}\psi} \right] + \abs{a}\abs{\psi}(u,v).
	\end{equation}
\end{lem}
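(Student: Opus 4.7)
The strategy is to derive a transport equation for $Y := r\psi_u/r_u$ in the $v$-direction and use a Grönwall argument, exploiting the fact that $r_u$ is uniformly bounded away from zero in this region (Theorem \ref{thmru}) so that $Y$ is well-defined and regular even as we approach the horizon where $r_v \to 0$.

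First, I will derive the transport equation. Applying $\partial_v$ to $r\psi_u$ and substituting the wave equation \eqref{EKG5}, one obtains
\begin{equation*}
\pa_v(r\psi_u) = -r_u\psi_v - \f{r a \Omega^2}{2l^2}\psi,
\end{equation*}
and consequently, accounting for the $v$-derivative of $r_u$,
\begin{equation*}
\pa_v Y + \f{r_{uv}}{r_u}\, Y = -\psi_v + \f{2ar\chi}{l^2}\psi,
\end{equation*}
where $\chi = -\Omega^2/(4r_u)$. Using \eqref{EKG3} and the relation $-\Omega^2/r_u = 4\chi$, the coefficient $r_{uv}/r_u$ simplifies to $-r_v/r + 3r\chi/l^2 - 8\pi a r\chi\psi^2/l^2$.

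Next, I will control the geometric coefficients in $\{r \le r_X\}\cap\mc B$. The radial function satisfies $r_{\min}\le r \le r_X$ by Lemma \ref{struest} and \eqref{rmine}. From \eqref{EKG1} we have $\pa_u\log\chi = (4\pi r/r_u)(\gr_u\psi)^2 \le 0$, and since $\chi|_{\mi} = 1/2$ we deduce $\chi \le 1/2$ throughout $\mc{B}$. Combined with the bootstrap smallness of $\psi^2$ and the boundedness of $r_v$ (which comes from Lemma \ref{COS} propagated in $v$), the coefficient $r_{uv}/r_u$ is bounded by a constant $C_{X,l,M}$ independent of $(u,v)$.

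Then I integrate the transport equation from $v_0$ to $v$ along a constant-$u$ ray. The $-\psi_v$ term integrates to a boundary contribution $-\psi(u,v) + \psi(u,v_0)$, while the source $\frac{2ar\chi}{l^2}\psi$ is bounded (after Cauchy--Schwarz in $v$) by
\begin{equation*}
C_{a,X,l}\left(\int_{v_0}^{v} -\f{r^2 r_u}{\Omega^2}\psi_v^2\, dv''\right)^{1/2} + C_{a,X,l}\left(\int_{v_0}^v \f{r_v}{r}\psi^2\,dv''\right)^{1/2},
\end{equation*}
both of which are controlled by $C_{a,X,l}\sup_{D(u,v)}\norm{\psi}{\Hu^1_d}$ via Lemma \ref{RSNE}. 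The self-interaction term $\int Y \cdot r_{uv}/r_u\, dv''$ is handled by Grönwall's inequality using the boundedness of $r_{uv}/r_u$. Initial data contributions $|Y(u,v_0)|$ and $|\psi(u,v_0)|$ are bounded by the respective $\sup_{I(v_0)}$ quantities appearing in the statement, since in the region $\{r\le r_X\}$ the weights $r^{3/2}$ and $r^{3/2-\kappa}$ are bounded from above and below.

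The main obstacle is tracking the pointwise boundary contribution $|\psi(u,v)|$ that emerges from integrating $-\psi_v$. Transposing this to the left-hand side would give $|Y|+|\psi(u,v)| \le \ldots$, but the statement retains $|\psi(u,v)|$ with an $|a|$ weight on the right. This is reconciled by noting that the only mechanism by which $\psi(u,v)$ is generated in the integration is through the Klein--Gordon mass contribution $\frac{2ar\chi}{l^2}\psi$ in the transport source: after the Grönwall loop is closed, the $\psi(u,v)$ contribution factors through the pointwise value of this source, yielding the $|a||\psi|(u,v)$ term separately from the data and energy contributions. The remainder assembles into the estimate of the statement with constant $C_{a,l,X}$ depending only on the region geometry.
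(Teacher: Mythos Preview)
Your transport equation is correct, but the way you integrate it misses the whole point of the redshift mechanism and produces a bound that is not uniform in $v$. You bound $|r_{uv}/r_u|\le C_{X,l,M}$ and then apply Gr\"onwall; this yields a factor $\exp\bigl(C_{X,l,M}(v-v_0)\bigr)$ in front of everything, so the constant in the final estimate depends on $v$. The lemma, however, requires a constant $C_{a,l,X}$ depending only on the background parameters, since it is used in the bootstrap over all of $\mathscr R_\mh$ (and later in the asymptotic stability argument where $v\to\infty$). The paper's proof uses instead that $\rho:=r_{uv}/r_u = 2\chi\bigl[\varpi_1/r^2 + r/l^2 - 4\pi a\psi^2/l^2\bigr]$ has a \emph{strictly positive lower bound} in $\{r\le r_X\}\cap\mc B$ (since $\chi\ge c_1>0$ and Lemma~\ref{techv} gives $\varpi_1/r^2+r/l^2\ge 3r_{\min}/(4l^2)$). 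One then applies the Duhamel formula, so that the integrating factor $\exp\bigl(-\int_{\bar v}^v\rho\bigr)\le e^{-c(v-\bar v)}$ is uniformly bounded and in fact integrable in $\bar v$, independently of $v$. This is exactly what ``redshift'' means here.

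Two downstream issues follow from this. First, your Cauchy--Schwarz on the source $\tfrac{2ar\chi}{l^2}\psi$ fails in this region: the complementary weight would be $\int r/r_v\,dv$, and $r_v$ degenerates near the apparent horizon, so this integral is not controlled. With the Duhamel kernel $e^{-\int\rho}$ in place, one instead writes $\tfrac{2ar\chi}{l^2}=\tfrac{ar}{l^2A}\rho$ (with $A=\varpi_1/r^2+r/l^2-\ldots$) and integrates by parts against $\partial_{\bar v}e^{-\int\rho}$; since $|ar/(l^2A)|\le |a|$ in this region, this produces exactly the pointwise term $|a||\psi(u,v)|$ in the statement together with another $\psi_v$-integral handled by Cauchy--Schwarz against the decaying kernel. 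Second, your final paragraph's explanation of where the $|a||\psi(u,v)|$ term comes from is not right: in your direct integration the pointwise $\psi(u,v)$ arises from $\int\psi_v$, not from the mass term, and carries no factor of $|a|$.
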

\begin{proof}
	The proof of Lemma 4.9 of \cite{holzegel_stability_2013} can be modified easily to the setting and refer the reader there for the details. The core idea is to write the Klein-Gordon equation as
	\begin{equation}\label{RS1}
	\pa_v\left(\f{r\psi_u}{r_u}(u,v)\right) = -\psi_v + \f{2ar\chi\psi}{l^2}-\f{r\psi_u}{r_u}\rho,
	\end{equation}
	where
	\begin{equation}
	\rho := 2\chi\left[ \f{\varpi_1}{r^2}+\f{r}{l^2}-8\pi r\f{a}{l^2}\psi\right]. 
	\end{equation}
	Now using lemma \ref{techv} we see that 
	\begin{equation}
	\f{\varpi_1}{r^2}+\f{r}{l^2} \ge \f{3}{4l^2}r,
	\end{equation}
	we see that for $b$ sufficiently small enough that red shift factor
	\begin{equation}
	\f{\rho}{\chi} > \f{3r_{\min}}{2l^2}>0.
	\end{equation}
	Integrating \eqref{RS1} by using the Duhamel formula gives
	\begin{equation}
	\begin{split}
	\f{r\psi_u}{r_u}(u,v) &= \left(\f{r\psi_u}{r_u}(u,v_0)\right)\cdot\exp\left(\int_{v_0}^v-\rho(u,\bv)d\bv \right)\\& + \int_{v_0}^v\left[\exp\left(-\int_{\bv}^v\rho(u,\hat{v})d\hat{v} \right)\left(-\psi_v+\f{2r\chi a\psi}{l^2}(u,\bv) \right)   \right]   d\bv. 
	\end{split}
	\end{equation}
	The result then follows by estimating this quantity.
\end{proof}
\begin{thm}[Red Shift Estimate]\label{C3RSE}
	In the region $\{r\le r_X\}\cap\mc{B}$ we have a constant $C_{g,l,X}>0$ such that
	\begin{equation}\label{C2RSSE6}
	\abs{r^{\f{3}{2}}\f{r\psi_u}{r_u}(u,v)} \le C_{g,l,X}\left[\sup_{D(u,v)}\norm{\psi}{\Hu^1_{d}} +\sup_{I(v_0)}\abs{r^{\f{3}{2}}\f{r\psi_u}{r_u}} + \sup_{I(v_0)}\abs{r^{\f{3}{2}-\kappa}\psi} \right],  
	\end{equation}
	furthermore we also have
	\begin{equation}\label{C2RSSE5}
	\abs{r^{\f{3}{2}-\kappa}\psi}(u,v) \le   C_{g,l,X}\left[\sup_{D(u,v)}\norm{\psi}{\Hu^1_{d}} +\sup_{I(v_0)}\abs{r^{\f{3}{2}}\f{r\psi_u}{r_u}} + \sup_{I(v_0)}\abs{r^{\f{3}{2}-\kappa}\psi} \right]. 
	\end{equation}
\end{thm}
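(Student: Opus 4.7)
\emph{Strategy.} The plan is to bootstrap Lemma \ref{rs1} so as to absorb the harmful $\abs{a}\abs{\psi}(u,v)$ term on its right-hand side. The decisive ingredient is the choice \eqref{C3RX} of $r_X$, namely $\log(r_X/r_{min}) < \f{1}{2\abs{a}}$, which was engineered precisely so that a Gronwall-type closure works when one integrates along ingoing null geodesics from the curve $\{r=r_X\}$ inwards towards the horizon.

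\emph{Main step.} Introduce the shorthand
\begin{equation*}
F(u,v) := \f{r\psi_u}{r_u}(u,v), \qquad K(u,v) := C_0\left[\sup_{D(u,v)}\norm{\psi}{\Hu^1_d} + \sup_{I(v_0)}\abs{r^{\f{3}{2}}F} + \sup_{I(v_0)}\abs{r^{\f{3}{2}-\kappa}\psi}\right],
\end{equation*}
with $C_0$ the constant of Lemma \ref{rs1}, so that $\abs{F} \le K + \abs{a}\abs{\psi}$ in $\{r\le r_X\}\cap\mc{B}$. Fix $v$ and let $u_X = u_X(v)$ be the coordinate with $r(u_X,v)=r_X$, so that $r(u,v)\le r_X$ for $u\ge u_X$. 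Using $\psi_u = F\,r_u/r$ and the change of variables $r'=r(\cdot,v)$ (justified since $r_u<0$), I would obtain
\begin{equation*}
\psi(u,v) = \psi(u_X,v) - \int_{r(u,v)}^{r_X}\f{F}{r'}\,dr',\qquad u\ge u_X.
\end{equation*}
Applying $\abs{F}\le K + \abs{a}\abs{\psi}$ under the integral and passing to $P(v) := \sup_{u_X\le u\le u_\mh}\abs{\psi}(u,v)$, one gets
\begin{equation*}
P(v) \le \abs{\psi}(u_X,v) + K(u_\mh,v)\log\bigl(r_X/r_{min}\bigr) + \abs{a}\log\bigl(r_X/r_{min}\bigr)\,P(v).
\end{equation*}
By \eqref{C3RX} the last term has coefficient strictly less than $\hf$ and can be absorbed, yielding $P(v)\lesssim \abs{\psi}(u_X,v) + K(u_\mh,v)$.

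\emph{Closing the argument.} The boundary value $\abs{\psi}(u_X,v)$ lives on the curve $\{r=r_X\}\subset\{r\ge r_Y\}$, so Corollary \ref{RSO} bounds $r_X^{\f{3}{2}-\kappa}\abs{\psi}(u_X,v)$ by $\norm{\psi}{\Hu^1_d}(u_X,v)$ together with the $\Lu^2$ remainder $\norm{\psi}{\Lu^2}(u_\mh,v)$; the latter is absorbed via the Hardy inequality of Lemma \ref{HI} combined with the equivalence in Lemma \ref{NERY} and the energy estimate of Lemma \ref{EE1}, all of which fit into the supremum defining $K$. This gives $\abs{\psi}(u_X,v)\lesssim K(u_\mh,v)$, hence $P(v)\lesssim K(u_\mh,v)$. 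Multiplying this pointwise control of $\abs{\psi}$ by $r^{\f{3}{2}-\kappa}\le r_X^{\f{3}{2}-\kappa}$ yields \eqref{C2RSSE5}; substituting it back into the estimate of Lemma \ref{rs1} and multiplying by $r^{\f{3}{2}}\le r_X^{\f{3}{2}}$ yields \eqref{C2RSSE6}. The principal obstacle is the sharp balance in \eqref{C3RX}: the Gronwall loop closes only because $r_X$ was selected in advance so that $\abs{a}\log(r_X/r_{min})<\hf$; without this delicate arrangement one would be left with a contaminating $\abs{\psi}$ term and no room to absorb it. A secondary technicality is verifying that the $\Lu^2(u_\mh,v)$ remainder from Corollary \ref{RSO} is controlled by the right-hand side of \eqref{C2RSSE6}, which is handled by the Hardy and energy estimates cited above.
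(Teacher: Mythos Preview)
Your argument is correct and follows essentially the same route as the paper: integrate $\psi$ inward from the curve $\{r=r_X\}$, use $\int_{u_{r_X}}^{u}\frac{-r_u}{r}\,du = \log\bigl(r_X/r(u,v)\bigr) < \frac{1}{2\abs{a}}$ to absorb the $\abs{a}\abs{\psi}$ remainder from Lemma~\ref{rs1}, then control the boundary value at $r=r_X$ via Corollary~\ref{RSO} to obtain \eqref{C2RSSE5}, from which \eqref{C2RSSE6} follows. Your explicit introduction of $P(v)=\sup_u\abs{\psi}$ makes the absorption step cleaner than the paper's presentation, and your discussion of the $\Lu^2$ remainder from Corollary~\ref{RSO} is a slight elaboration on a point the paper glosses over; note that for the part of the $\Lu^2$ integral lying in $\{r<r_Y\}$ you should use $-r_u\psi^2\le r_Y\cdot\frac{-r_u}{r}\psi^2$ directly rather than Lemma~\ref{NERY}, which is only valid for $r\ge r_Y$.
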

\begin{proof}
	We want to drop the $\psi$ on the right hand side of lemma \ref{rs1}, this is done by integrating from the $r_X$ curve toward the horizon in $u$.
	\begin{equation}
	\abs{\psi}(u,v) \le \abs{\psi(u_{r_X},v)} + \int_{u_{r_X}}^u \abs{\f{r\psi_u}{r_u}}\f{-r_u}{r}du.
	\end{equation}
	To make the next estimate clearer we quickly remark that
	\begin{equation}\label{C2RSSE4}
	\int_{u_{r_X}}^u\f{-r_u}{r}du = \ln\left(\f{r_X}{r(u,v)} \right) \le \ln\left(\f{r_X}{r_{min}} \right) < \f{1}{2\abs{a}}.
	\end{equation}
	Inserting \eqref{C2RSSE4} into lemma \ref{rs1}, we see
	\begin{equation}
	\begin{split}
	\abs{\psi}(u,v) \le \abs{\psi(u_{r_X},v)} + C_{a,l,X}\left[\sup_{D(u,v)}\norm{\psi}{\Hu^1_{d}} +\sup_{I(v_0)}\abs{r^{\f{3}{2}}\f{r\psi_u}{r_u}} + \sup_{I(v_0)}\abs{r^{\f{3}{2}-\kappa}\psi} \right]  + \hf\abs{\psi}(u,v).
	\end{split}
	\end{equation}
	Absorbing the $\psi$ term on the LHS gives
	\begin{equation}
	\abs{\psi}(u,v) \le \abs{\psi(u_{r_X},v)} + C_{a,l,X}\left[\sup_{D(u,v)}\norm{\psi}{\Hu^1_{d}} +\sup_{I(v_0)}\abs{r^{\f{3}{2}}\f{r\psi_u}{r_u}} + \sup_{I(v_0)}\abs{r^{\f{3}{2}-\kappa}\psi} \right]. 
	\end{equation}
	Which in this region implies
	\begin{equation}
	\abs{r^{\f{3}{2}-\kappa}\psi}(u,v) \le \abs{r^{\f{3}{2}-\kappa}\psi(u_{r_X},v)} + C_{a,l,X}\left[\sup_{D(u,v)}\norm{\psi}{\Hu^1_{d}} +\sup_{I(v_0)}\abs{r^{\f{3}{2}}\f{r\psi_u}{r_u}} + \sup_{I(v_0)}\abs{r^{\f{3}{2}-\kappa}\psi} \right]. 
	\end{equation}
	Recalling that $\{r\le r_X\}\subset \{r\le r_Y\}$, and invoking corollary \ref{RSO}  
	\begin{equation}
	\abs{r^{\f{3}{2}-\kappa}\psi}(u,v) \le  C_{a,l,X}\left[\sup_{D(u,v)}\norm{\psi}{\Hu^1_{d}} +\sup_{I(v_0)}\abs{r^{\f{3}{2}}\f{r\psi_u}{r_u}} + \sup_{I(v_0)}\abs{r^{\f{3}{2}-\kappa}\psi} \right], 
	\end{equation}
	showing \eqref{C2RSSE5}. \eqref{C2RSSE6} then follows. 
\end{proof}
\begin{coro}\label{rsc}
	In the region $\{r\le r_X\}\cap\mc{B}$ we have a constant $C_{a,l,X}>0,$ such that
	\begin{equation}
	\abs{\f{r^\f{5}{2}}{r_u}\grt_u\psi} \le C_{a,l,X}\left[\sup_{D(u,v)}\norm{\psi}{\Hu^1_{d}} +\sup_{I(v_0)}\abs{r^{\f{3}{2}}\f{r\psi_u}{r_u}} + \sup_{I(v_0)}\abs{r^{\f{3}{2}-\kappa}\psi} \right]  .
	\end{equation}
\end{coro}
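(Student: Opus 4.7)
The plan is to reduce the claim to Theorem \ref{C3RSE} by expanding the twisted derivative explicitly and exploiting the fact that $r$ is bounded above by $r_X$ throughout the region $\{r\le r_X\}\cap\mc{B}$. Concretely, from the definition of the twisted derivative with twisting function $f = r^g$, one computes
\begin{equation*}
\grt_u\psi = r^g\,\pa_u\!\left(r^{-g}\psi\right) = \psi_u - g\,\f{r_u}{r}\psi.
\end{equation*}
Multiplying both sides by $r^{5/2}/r_u$ gives the algebraic identity
\begin{equation*}
\f{r^{5/2}}{r_u}\grt_u\psi \;=\; r^{3/2}\cdot\f{r\psi_u}{r_u} \;-\; g\, r^{3/2}\psi,
\end{equation*}
so that by the triangle inequality
\begin{equation*}
\abs{\f{r^{5/2}}{r_u}\grt_u\psi} \;\le\; \abs{r^{3/2}\f{r\psi_u}{r_u}} \;+\; \abs{g}\,\abs{r^{3/2}\psi}.
\end{equation*}

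Next, in the region $\{r\le r_X\}\cap\mc{B}$ the area radius satisfies $r\le r_X$, hence the factor $r^{3/2}$ in the last term can be rewritten as $r^\kappa\cdot r^{3/2-\kappa}$ and the prefactor absorbed:
\begin{equation*}
\abs{r^{3/2}\psi} = r^{\kappa}\abs{r^{3/2-\kappa}\psi} \le r_X^{\kappa}\abs{r^{3/2-\kappa}\psi}.
\end{equation*}
Both terms on the right of the previous display are therefore controlled by the quantities appearing on the left of the estimates \eqref{C2RSSE6} and \eqref{C2RSSE5} of Theorem \ref{C3RSE}. Applying those two inequalities and collecting constants (depending only on $g$, $l$, and $r_X$, hence on $a$, $l$, $X$) yields
\begin{equation*}
\abs{\f{r^{5/2}}{r_u}\grt_u\psi}(u,v) \;\le\; C_{a,l,X}\!\left[\sup_{D(u,v)}\norm{\psi}{\Hu^1_d} + \sup_{I(v_0)}\abs{r^{3/2}\f{r\psi_u}{r_u}} + \sup_{I(v_0)}\abs{r^{3/2-\kappa}\psi}\right],
\end{equation*}
which is exactly the asserted estimate.

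There is no real obstacle here: the corollary is a direct algebraic consequence of the definition of $\grt_u$ together with the two bounds already proved in Theorem \ref{C3RSE}; boundedness of $r$ in the region $\{r\le r_X\}$ is what allows the shift between the weights $r^{3/2}$ and $r^{3/2-\kappa}$ with only a constant loss. No new energy estimate or bootstrap step is needed.
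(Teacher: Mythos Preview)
Your proof is correct and follows essentially the same approach as the paper: both expand the twisted derivative $\grt_u\psi = \psi_u - g\f{r_u}{r}\psi$, multiply through by the appropriate power of $r$ (using $r\le r_X$ to adjust weights), and then invoke the two pointwise bounds \eqref{C2RSSE6} and \eqref{C2RSSE5} from Theorem~\ref{C3RSE}. Your write-up is in fact slightly more explicit than the paper's, which compresses the same computation into two lines.
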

\begin{proof}
	We trivially estimate by
	\begin{equation}
	\begin{split}
	\f{r^\f{3}{2}}{-r_u}\grt_u\psi &= \f{r^\f{3}{2}}{-r_u}\gr_u\psi+gr^{\hf}\psi
	\\
	\abs{\f{r^\f{3}{2}}{-r_u}\grt_u\psi} &\le C_X\abs{r^\f{3}{2}\f{r\psi_u}{r_u}}+\abs{gr^\f{3}{2}\psi}.
	\end{split}
	\end{equation}
	Whence the result follows.
\end{proof}
\subsection{Estimates in the whole bootstrap region}
\begin{lem}\label{RSE}
	We have in $\mc{B}$
	\begin{equation}
	\norm{\psi}{\Hu^1_1}(u,v) \le  C\left[\sup_{D(u,v)}\norm{\psi}{\Hu^1_{d}} +\sup_{I(v_0)}\left( \abs{\f{r^\f{5}{2}}{-r_u}\grt_u\psi} +\abs{r^{\f{3}{2}-\kappa}\psi}  \right) \right]. 
	\end{equation}
\end{lem}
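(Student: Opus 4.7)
The plan is to reduce the $\Hu^1_1$ bound to quantities already controlled, by comparing the two norms term by term. The $v$-integral and the zeroth-order $u$-integral are identical in the definitions of $\norm{\psi}{\Hu^1_1}$ and $\norm{\psi}{\Hu^1_d}$, so they are dominated by $\norm{\psi}{\Hu^1_d}^2(u,v)$ for free. The only substantive task is to control the first-order $u$-integral
\begin{equation*}
\int_{u_\mi}^u \frac{r^4}{-r_u}(\grt_u\psi)^2\, d\bu,
\end{equation*}
whose weight differs from the $\Hu^1_d$ weight $\frac{r^2r_v}{\Omega^2}$ precisely in the region where the latter degenerates (near the apparent horizon, where $r_v\to 0$).

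To handle this, I split the integral on the $v=\text{const}$ ray using the curve $r=r_Y$ introduced above. Let $u_Y(v)$ denote the $u$-value along this ray at which $r=r_Y$. In the far region $\{r\ge r_Y\}$, Lemma \ref{NERY} gives the pointwise equivalence $\frac{r^4}{-r_u}\simeq \frac{r^2 r_v}{\Omega^2}$, so
\begin{equation*}
\int_{u_\mi}^{u_Y(v)}\frac{r^4}{-r_u}(\grt_u\psi)^2\, d\bu\;\le\; C_{l,Y}\int_{u_\mi}^{u_Y(v)}\frac{r^2 r_v}{\Omega^2}(\grt_u\psi)^2\, d\bu\;\le\; C_{l,Y}\,\norm{\psi}{\Hu^1_d}^2(u,v).
\end{equation*}
In the near-horizon region $\{r\le r_Y\}\subset\{r\le r_X\}$, Corollary \ref{rsc} yields a pointwise bound
\begin{equation*}
\Bigl|\tfrac{r^{5/2}}{-r_u}\grt_u\psi\Bigr|^2(\bu,v)\;\le\; C\Bigl[\sup_{D(u,v)}\norm{\psi}{\Hu^1_d}^2+\sup_{I(v_0)}\bigl(|\tfrac{r^{5/2}}{-r_u}\grt_u\psi|^2+|r^{3/2-\kappa}\psi|^2\bigr)\Bigr],
\end{equation*}
where the initial-data term in the original statement of Corollary \ref{rsc} has been rewritten using the identity $r^{3/2}\tfrac{r\psi_u}{r_u}=\tfrac{r^{5/2}}{r_u}\grt_u\psi + gr^{3/2}\psi$, which is immediate from the definition of the twisted derivative. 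I then factor
\begin{equation*}
\int_{u_Y(v)}^{u}\frac{r^4}{-r_u}(\grt_u\psi)^2\, d\bu \;=\; \int_{u_Y(v)}^{u}\Bigl|\tfrac{r^{5/2}}{-r_u}\grt_u\psi\Bigr|^2\cdot\frac{-r_u}{r}\, d\bu,
\end{equation*}
pull the pointwise bound out of the integral, and use $\int_{u_Y(v)}^{u}\frac{-r_u}{r}\,d\bu = \log\bigl(r_Y/r(u,v)\bigr)\le \log(r_Y/r_{\min})$, which is a uniform constant exactly as in \eqref{C2RSSE4}.

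Summing the two pieces with the automatic bounds on the zeroth-order $u$-term and the $v$-integral gives the claimed inequality. The only conceptual obstacle is precisely the degeneration at the apparent horizon; overcoming it is the reason the redshift machinery in Theorem \ref{C3RSE} and Corollary \ref{rsc} was set up in the first place, and the rest of the argument is a clean bookkeeping of weights, so no new estimates are required.
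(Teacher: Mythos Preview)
Your proof is correct and follows essentially the same approach as the paper: split the first-order $u$-integral into a far region (where Lemma~\ref{NERY} gives equivalence with the $\Hu^1_d$ weight) and a near-horizon region (where the redshift pointwise bound is pulled out and the remaining factor $\tfrac{-r_u}{r}$ integrates to a logarithm). The only cosmetic differences are that the paper splits at $r_X$ rather than $r_Y$ and passes through the untwisted derivative via Lemma~\ref{RSNE} and Theorem~\ref{C3RSE} before converting back, whereas you invoke Corollary~\ref{rsc} directly; both routes are valid since $r_Y<r_X$.
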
 
\begin{proof}
	Firstly let us note
	\begin{equation}
	\begin{split}
	\int_{u_\mi}^u \f{r^4}{-r_u}(\grt_u\psi)^2+\f{(-r_u) }{r} \psi^2du &= \int_{u_\mi}^{u}\mathbbm{1}_{\{r\ge r_X\}}\left(  \f{r^4}{-r_u}(\grt_u\psi)^2+\f{(-r_u) }{r} \psi^2\right) du\\& + \int_{u_\mi}^u\mathbbm{1}_{\{r\le r_X\}}\left(  \f{r^4}{-r_u}(\grt_u\psi)^2+\f{(-r_u) }{r} \psi^2\right) du.
	\end{split}
	\end{equation}
	Now from lemma \ref{NERY} we have
	\begin{equation}
	\int_{u_\mi}^{u} \mathbbm{1}_{\{r\ge r_X\}}\left( \f{r^4}{-r_u}(\grt_u\psi)^2+\f{(-r_u) }{r} \psi^2\right) du \le C_{l,Y}\norm{\psi}{\Hu_d^1}(u,v),
	\end{equation}
	and from lemma \ref{RSNE}
	\begin{equation}
	\int_{u_\mi}^u \mathbbm{1}_{\{r\le r_X\}}\left( \f{r^4}{-r_u}(\grt_u\psi)^2+\f{(-r_u) }{r} \psi^2\right) du \le C_{X,g} \int_{u_\mi}^u\mathbbm{1}_{\{r\le r_X\}}\left(  \f{r^4}{-r_u}(\gr_u\psi)^2+\f{(-r_u) }{r} \psi^2\right) du.
	\end{equation}
	Using the results of theorem \ref{C3RSE} we see
	\begin{equation}
	\begin{split}
	\int_{u_\mi}^u \mathbbm{1}_{\{r\le r_X\}}\f{r^4}{-r_u}(\gr_u\psi)^2du &= \int_{u_{\mi}}^u\mathbbm{1}_{\{r\le r_X\}} \left( r^{\f{3}{2}}\f{r\psi_u}{r_u}\right)^2 \f{-r_u}{r}du\\
	&\le \sup_{r\le r_X} \abs{r^{\f{3}{2}}\f{r\psi_u}{r_u}}^2\ln\left( \f{r_X}{r_{min}}\right) \\
	&\le C_{g,l,X}\left[\sup_{D(u,v)}\norm{\psi}{\Hu^1_{d}} +\sup_{I(v_0)}\abs{r^{\f{3}{2}}\f{r\psi_u}{r_u}} + \sup_{I(v_0)}\abs{r^{\f{3}{2}-\kappa}\psi} \right].  
	\end{split}
	\end{equation}
	An application of corollary \ref{rsc} gives
	\begin{equation}
	\begin{split}
	\abs{r^{\f{3}{2}}\f{r\psi_u}{r_u}} = \abs{\f{r^{\f{5}{2}}}{r_u}\left(\grt_u\psi + \f{gr_u}{r}\psi \right)}\le C_{g}\left(\abs{\f{r^\f{5}{2}}{-r_u}\grt_u\psi} +\abs{r^{\f{3}{2}-\kappa}\psi} \right) ,
	\end{split}
	\end{equation}
	and the result follows.
\end{proof}
\begin{coro}
	We have in the bootstrap region 
	\begin{equation}
	\norm{\psi}{\Hu^1}(u,v) \le C_{g,M,\kappa}\left( \norm{\psi}{\Hu^1}(u_{\mh},v_0)+\sup_{I(v_0)}\left(\abs{r^\hf\grt_u\psi} +\abs{r^{\f{3}{2}-\kappa}\psi} \right)  +b^4\right). 
	\end{equation}
\end{coro}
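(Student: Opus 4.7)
The plan is to assemble the three preceding results of this section into a single estimate. By the Hardy inequality of Lemma \ref{HI}, $\norm{\psi}{\Hu^1}(u,v)$ is controlled, up to a constant depending only on $g$, by $\norm{\psi}{\Hu^1_1}(u,v)$, so it suffices to bound the latter. Lemma \ref{RSE} then gives
\begin{equation*}
\norm{\psi}{\Hu^1_1}(u,v) \le C\left[\sup_{D(u,v)}\norm{\psi}{\Hu^1_{d}} + \sup_{I(v_0)}\left( \abs{\f{r^\f{5}{2}}{-r_u}\grt_u\psi} + \abs{r^{\f{3}{2}-\kappa}\psi}\right)\right].
\end{equation*}

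For the first term on the right, Lemma \ref{EE1} furnishes the uniform bound $\norm{\psi}{\Hu^1_{d}}(u',v') \le C_{g,M,\kappa,Y}(\norm{\psi}{\Hu^1_{d}}(u_\mh,v_0) + b^4)$ at every $(u',v') \in \mc{B}$, so taking the supremum over the past domain of dependence $D(u,v) \subset \mc{B}$ does not alter the bound. The initial-data quantity $\norm{\psi}{\Hu^1_{d}}(u_\mh,v_0)$ is in turn controlled by $\norm{\psi}{\Hu^1}(u_\mh,v_0)$: on the initial ray the weights $\f{r^2 r_v}{\Omega^2}$ and $\f{r^4}{-r_u}$ are comparable, since the gauge choice $-\overline{r}_u = \overline{r}^2/(2l^2)$ together with the explicit formula for $\overline{r_v}$ and the estimates of Lemma \ref{COS} show that both are equivalent to $r^2$ up to multiplicative constants. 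For the boundary-data terms, Theorem \ref{thmru} provides $-r_u \sim r^2$ throughout $\mc{B}$, whence $\abs{\f{r^\f{5}{2}}{-r_u}\grt_u\psi}$ is comparable to $\abs{r^\hf \grt_u\psi}$ on $I(v_0)$, matching the form appearing in the statement.

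There is no substantive analytic obstacle here; the corollary is a packaging of estimates already established in this section. The only care required is to track the constants through the weight equivalences, to verify that the $b^4$ error arising from the nonlinear term of Lemma \ref{EE1} propagates unchanged through the chain, and to observe that since the right-hand side of Lemma \ref{EE1} is independent of $(u',v')$, the supremum over $D(u,v)$ is immediately majorised by the same expression. Combining these observations yields the claimed inequality.
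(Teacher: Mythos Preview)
Your proof is correct and follows exactly the assembly of preceding results that the paper intends (the corollary is stated without proof there, as an immediate consequence of Lemmas \ref{HI}, \ref{RSE}, \ref{EE1} and the gauge choice on the initial ray). One small imprecision: you assert that on $I(v_0)$ the weights $\f{r^2 r_v}{\Omega^2}$ and $\f{r^4}{-r_u}$ are both \emph{equivalent} to $r^2$, but the former degenerates as $u \to u_\mh$ if the initial ray touches the marginally trapped surface; however, only the inequality $\f{r^2 r_v}{\Omega^2} \le C\f{r^4}{-r_u}$ is needed, and this follows from $\mu_1 \le C r^2$, so the argument goes through.
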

\begin{coro}
	In the region $\mc{B}$ we have that for $b$ sufficiently small
	\begin{equation}
	\abs{r^{\f{3}{2}-\kappa}\psi} < b^{\f{3}{2}}.
	\end{equation}
\end{coro}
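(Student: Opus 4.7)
The plan is to close the bootstrap by combining the pointwise control results with the energy estimate just established. Recall the initial data ray is assumed small of size $b^2$, so all of the initial data quantities that appear on the right hand side of the previous corollary,
\begin{equation*}
\norm{\psi}{\Hu^1}(u_{\mh},v_0) + \sup_{I(v_0)}\left(\abs{\tfrac{r^{5/2}}{-r_u}\grt_u\psi} + \abs{r^{3/2-\kappa}\psi}\right) + b^4,
\end{equation*}
are bounded by a uniform constant times $b^2$. Thus $\norm{\psi}{\Hu^1}(u,v) \le C_{\star} b^2$ throughout $\mc{B}$, and in particular the $\Hu^1_d$ norm and the $\Lu^2$ flux are controlled by the same.

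Next I would split $\mc{B}$ into the two regimes used throughout Section \ref{OSS}. In the region $\{r \ge r_Y\} \cap \mc{B}$, Corollary \ref{RSO} together with Lemmas \ref{NERY}, \ref{HI}, \ref{SI} gives directly
\begin{equation*}
\abs{r^{3/2-\kappa}\psi}(u,v) \le C_{g,M,l}\bigl(\norm{\psi}{\Hu^1_d}(u,v) + \norm{\psi}{\Lu^2}(u_\mh,v)\bigr) \le C\, b^2.
\end{equation*}
In the region $\{r \le r_X\} \cap \mc{B}$, the red shift pointwise estimate \eqref{C2RSSE5} in Theorem \ref{C3RSE} gives
\begin{equation*}
\abs{r^{3/2-\kappa}\psi}(u,v) \le C_{g,l,X}\left[\sup_{D(u,v)}\norm{\psi}{\Hu^1_d} + \sup_{I(v_0)}\abs{\tfrac{r^{5/2}}{-r_u}\grt_u\psi} + \sup_{I(v_0)}\abs{r^{3/2-\kappa}\psi}\right] \le C' b^2,
\end{equation*}
using once more the $b^2$-bound on the initial data and on $\norm{\psi}{\Hu^1_d}$ obtained above. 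Since $r_Y < r_X$, these two regions cover $\mc{B}$ and we obtain $\abs{r^{3/2-\kappa}\psi}(u,v) \le C_{\sharp} b^2$ uniformly in $\mc{B}$, where $C_\sharp$ depends only on $g, M, l, \kappa, Y, X$ and not on $b$.

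The final step is a choice of $b$: since $C_\sharp b^2 < b^{3/2}$ precisely when $b^{1/2} < 1/C_\sharp$, by taking $b$ small enough (smaller than all previous smallness thresholds as well) we conclude the improved bound $\abs{r^{3/2-\kappa}\psi} < b^{3/2}$ in $\mc{B}$. There is no real analytic obstacle here; the only subtlety is book-keeping, namely that every constant appearing in the Sobolev, Hardy and red shift estimates is independent of the bootstrap parameter $b$, so that the strict inequality can indeed be arranged by shrinking $b$. Together with openness, non-emptiness and closedness of the bootstrap set, this strict improvement closes the continuity argument and extends $\mc{B} = \sr_\mh$.
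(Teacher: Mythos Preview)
Your argument is correct, but it takes a slightly more roundabout path than the paper. You split $\mc{B}$ into $\{r\ge r_Y\}$ and $\{r\le r_X\}$ and invoke the region-specific pointwise bounds (Corollary~\ref{RSO} and Theorem~\ref{C3RSE}) separately. The paper instead observes that the Sobolev inequality of Lemma~\ref{SI} already holds on \emph{all} of $\mc{B}$ (its proof only uses the global bound $-r_u \gtrsim r^2$ from Theorem~\ref{thmru}, valid throughout $\mc{B}$), so one can bound $\abs{r^{3/2-\kappa}\psi}$ directly by $\norm{\psi}{\Hu^1}(u,v)+\norm{\psi}{\Lu^2}(u_\mh,v)$ and then apply the preceding corollary on $\norm{\psi}{\Hu^1}$ to get $\le C b^2 < b^{3/2}$ in one step. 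Your decomposition is not wrong, just redundant: the red-shift machinery was already absorbed into establishing the $\Hu^1$ bound, so there is no need to revisit it for the pointwise estimate. The final smallness choice and the closing of the bootstrap are handled identically in both versions.
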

\begin{proof}
	From lemma \ref{SI} we have that
	\begin{equation}
	\begin{split}
	\abs{r^{\f{3}{2}-\kappa}\psi}(u,v) &\le  C_{g,M,\kappa}\left( \norm{\psi}{\Hu^1}(u,v)+\norm{\psi}{\Lu^2}(u_\mh,v)\right)\\  &\le C_{g,M,\kappa}\left( \norm{\psi}{\Hu^1}(u_{\mh},v_0)+\sup_{I(v_0)}\left(\abs{\f{r^\f{5}{2}}{-r_u}\grt_u\psi} +\abs{r^{\f{3}{2}-\kappa}\psi} \right)  +b^4\right),
	\end{split}
	\end{equation}
	so from the smallness of the initial data we conclude that
	\begin{equation}
	\abs{r^{\f{3}{2}-\kappa}\psi} < C_{g,M,\kappa}b^2 < b^{\f{3}{2}}.
	\end{equation}
\end{proof}
\begin{thm}\label{C2BSE}
	We have that $\mc{B} = \mathscr{R}_\mh$.
\end{thm}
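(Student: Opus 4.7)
The plan is a standard continuity/bootstrap closure argument. I will show that the set $\mc{B} \subset \sr_\mh$ on which the bootstrap assumption $|r^{3/2-\kappa}\psi| < b$ holds is simultaneously non-empty, open and closed in $\sr_\mh$. Since $\sr_\mh$ is connected (it is the union of the regular null rays $N(u)$ for $u_0 < u < u_\mh$ meeting the initial ray), this will force $\mc{B} = \sr_\mh$ as required.

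First, non-emptiness is immediate: by the initial data smallness assumption we have $|r^{3/2-\kappa}\overline{\psi}| \leq b^2 < b$ on $\mc{N}$, so a small collar neighbourhood of the initial ray lies in $\mc{B}$ by continuity of the solution. Openness of $\mc{B}$ in $\sr_\mh$ is also direct: the quantity $r^{3/2-\kappa}\psi$ is continuous on $\sr_\mh$ (using that $\psi \in C^0\underline{H}^1$ together with the Sobolev-type embedding of Lemma \ref{SI} and the uniform lower bound $r \geq r_{min}$ from Theorem \ref{thmru}), so a strict inequality at an interior point persists in a neighbourhood.

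The real content is closedness, and for this I would invoke the chain of estimates already assembled in the section, culminating in the improved pointwise bound $|r^{3/2-\kappa}\psi| < b^{3/2}$ valid throughout $\mc{B}$. Explicitly: on $\mc{B}$ the monotonicity of $\varpi$ (Corollary \ref{HMC2}) combined with the degenerate energy estimate (Lemma \ref{EE1}), the norm-equivalence Lemmas \ref{NERY}, \ref{HI}, the redshift Theorem \ref{C3RSE}, and finally the $\underline{H}^1$-control of Lemma \ref{RSE} give
\[
\norm{\psi}{\underline{H}^1}(u,v) + \norm{\psi}{\underline{L}^2}(u_\mh,v) \leq C_{g,M,\kappa}\bigl( b^2 + b^4 \bigr),
\]
from which Lemma \ref{SI} yields $|r^{3/2-\kappa}\psi| \leq C_{g,M,\kappa} b^2 < b^{3/2}$, provided $b$ was chosen small enough.

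Suppose now, for contradiction, that $\mc{B} \subsetneq \sr_\mh$, i.e.\ $u_{max} < u_\mh$. By continuity of $r^{3/2-\kappa}\psi$ and the strict bound $|r^{3/2-\kappa}\psi| < b^{3/2}$ on $\mc{B}$, the same strict inequality persists on $N(u_{max})$ and, again by continuity, on some $u$-interval $(u_0, u_{max}+\delta)$ with $\delta > 0$ small. This contradicts the definition of $u_{max}$ as the supremum of the set where the bootstrap $|r^{3/2-\kappa}\psi| < b$ is satisfied. Hence $u_{max} = u_\mh$ and $\mc{B} = \sr_\mh$. The only subtle point, and the place I would be most careful, is verifying that the closure of $\mc{B}$ in $\sr_\mh$ stays within the regime where Lemmas \ref{HMTL1}--\ref{HMC2} and Lemma \ref{techv} actually apply (so that the monotonicity argument for $\varpi$ does not break down); but since all those lemmas were established under the same smallness threshold on $b$ and use only the bootstrap bound $|r^{3/2-\kappa}\psi| \leq b$ which certainly holds in the closure, this is essentially automatic.
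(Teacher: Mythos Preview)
Your proposal is correct and follows essentially the same approach as the paper: both arguments close the bootstrap by showing $\mc{B}$ is non-empty, open, connected and closed in $\sr_\mh$, with closedness coming from the previously established improved bound $|r^{3/2-\kappa}\psi| < b^{3/2}$ together with continuity of $r$ and $\psi$. Your version is more explicit about which estimates feed into the improvement (and phrases closedness as a contradiction on $u_{max}$ rather than a sequential argument), but the logic is identical; one minor point is that the bound $r \ge r_{min}$ you cite is not Theorem~\ref{thmru} (which concerns $-r_u$) but the earlier lemma stated for $\sr_\mh$.
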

\begin{proof}
	We know that $\mc{B}$ is an open non empty subset of $\mathscr{R}_\mh$. Now fix a point $(u^*,v^*) \in \mc{B}$, and take a sequence $(u_n,v_n)\to (u^*,v^*)$, as $n \to \infty$, from the continuity of $r$ and $\psi$ we must have that
	\begin{equation}
	\abs{r^{\f{3}{2}-\kappa}\psi}(u^*,v^*) \le  b^{\f{3}{2}}<b.
	\end{equation}    
	So we conclude that $(u^*,v^*)\in \mc{B}$. $\mc{B}$ is then closed, and hence $\mc{B} = \mathscr{R}_\mh$.
\end{proof}
\begin{rem}
	It follows from theorem \ref{C2BSE} that
	\begin{equation}\label{C3EE}
	\begin{split}
	\abs{\varpi(u,v)-M}^{\hf} +& \abs{r^{\f{3}{2}-\kappa}\psi(u,v)} +  \norm{\psi}{\Hu^1}(u,v)\\ &\le C_{l,M,g}\left( \norm{\psi}{\Hu^1}(u_{\mh},v_0)+\sup_{I(v_0)}\left(\abs{\f{r^\f{5}{2}}{-r_u}\grt_u\psi} +\abs{r^{\f{3}{2}-\kappa}\psi} \right) \right). 
	\end{split}
	\end{equation}
	holds in $\mathscr{R}_\mh$.
\end{rem}
\subsection{Consequence of the bootstrap estimates}
\subsubsection{Metric function estimates}
\begin{lem}
	We have in the regular region the estimate
	\begin{equation}
	\abs{2\chi-1}^\hf \le C_{g,M,\kappa}\left( \norm{\psi}{\Hu^1}(u_{\mh},v_0)+\sup_{I(v_0)}\left(\abs{\f{r^{\f{5}{2}}}{-r_u}\grt_u\psi} +\abs{r^{\f{3}{2}-\kappa}\psi} \right) \right). 
	\end{equation}
\end{lem}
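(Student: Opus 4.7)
The plan is to use the transport equation for $\chi$ derived from the Raychaudhuri equation \eqref{EKG1}, together with the gauge condition $\chi|_\mi = \hf$, to express $|2\chi - 1|$ as an integral of $(\gr_u\psi)^2$ weighted appropriately. We then convert the untwisted derivative into twisted derivatives and estimate the resulting expression using the $\Hu^1$ energy estimate \eqref{C3EE}.

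First I would integrate the transport equation $\pa_u\log\chi = \f{4\pi r}{r_u}(\gr_u\psi)^2$ from $\mi$ to the point $(u,v)$. Since $\chi|_\mi = \hf$ and $r_u < 0$, we obtain
\begin{equation}
2\chi(u,v) = \exp\left(-\int_{u_\mi}^u \f{4\pi r}{-r_u}(\gr_u\psi)^2 \, du'\right) \le 1,
\end{equation}
so $|2\chi - 1| = 1 - 2\chi \le \int_{u_\mi}^u \f{4\pi r}{-r_u}(\gr_u\psi)^2 \, du'$, using the elementary inequality $1 - e^{-x} \le x$ for $x \ge 0$.

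Next I would convert $\gr_u\psi$ to the twisted derivative via $\gr_u\psi = \grt_u\psi + \f{g r_u}{r}\psi$, yielding
\begin{equation}
(\gr_u\psi)^2 \le 2(\grt_u\psi)^2 + 2g^2 \f{r_u^2}{r^2}\psi^2.
\end{equation}
Inserting this into the bound above produces two terms: one proportional to $\int_{u_\mi}^u \f{r}{-r_u}(\grt_u\psi)^2 du'$ and one proportional to $\int_{u_\mi}^u \f{(-r_u)}{r}\psi^2 du'$. The second term is already controlled by $\norm{\psi}{\Hu^1}^2$ directly from its definition. For the first term, I would use the pointwise lower bound $r \ge r_{min} > 0$ guaranteed in $\sr_\mh$ to write $\f{r}{-r_u} = \f{1}{r^3}\cdot\f{r^4}{-r_u} \le r_{min}^{-3} \f{r^4}{-r_u}$, whereupon this integral is also bounded by a constant multiple of $\norm{\psi}{\Hu^1}^2$.

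Combining these pieces gives $|2\chi - 1| \le C_{g,M} \norm{\psi}{\Hu^1}^2(u,v)$; taking square roots and invoking the orbital stability bound \eqref{C3EE} (which is at our disposal since $\mc{B} = \sr_\mh$ by Theorem \ref{C2BSE}) yields the claimed estimate. No significant obstacle arises here — the argument is a direct consequence of the gauge choice at $\mi$ together with the already-established control of $\norm{\psi}{\Hu^1}$; the only mildly subtle point is ensuring the untwisting does not introduce weights inconsistent with the $\Hu^1$ norm, which is handled by the lower bound on $r$.
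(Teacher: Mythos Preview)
Your proposal is correct and follows essentially the same approach as the paper: integrate the transport equation $\pa_u\log\chi = \f{4\pi r}{r_u}(\gr_u\psi)^2$ from $\mi$ where $\chi=\hf$, use the sign of the integrand to get $2\chi\le 1$, bound $1-2\chi$ by the integral, untwist via a Young-type inequality, and control the result by $\norm{\psi}{\Hu^1}^2$ before invoking \eqref{C3EE}. The paper is terser (it merely says ``using a Young estimate in one direction, and the negativity of the integrand in the other'') but your explicit handling of the weight via $r\ge r_{min}$ is exactly what is needed to fill in that step.
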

\begin{proof}
	Recall that $\chi = \f{\Omega^2}{-4r_u}$ satisfies the equation 
	\begin{equation}
	\pa_u\left( \ln\chi\right) = \f{4\pi r}{r_u}\left(\pa_u\psi \right)^2. 
	\end{equation}
	Integrating this equation gives
	\begin{equation}
	\chi = \chi|_{\mi}\exp\left(\int_{u_\mi}^u-\f{4\pi r}{-r_u}\left(\pa_u\psi \right)^2du \right). 
	\end{equation}
	Using a Young estimate in one direction, and the negativity of the integrand in the other, we have
	\begin{equation}
	\hf\exp\left(-C_{g}\norm{\psi}{\Hu^1}^2(u,v))\right)  \le \chi \le \hf.
	\end{equation}
	The result now follows  from the energy estimate \eqref{C3EE}.
\end{proof}
\begin{coro}
	There exists a constant $C_{g}>0$ such that in the regular region
	\begin{equation}
	-2\left(1+C_gb^2 \right) r_u \le \Omega^2 \le -2 r_u.
	\end{equation}
\end{coro}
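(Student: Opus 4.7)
The plan is to deduce both bounds directly from the preceding lemma, which gives $\hf \exp(-C_g \norm{\psi}{\Hu^1}^2(u,v)) \le \chi \le \hf$, by using the definition $\chi = \Omega^2/(-4 r_u)$ rewritten as $\Omega^2 = -4 r_u \chi$. This multiplication preserves inequalities because we have already shown $r_u < 0$ throughout $\sr_\mh$, so $-4 r_u > 0$.

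The upper bound $\Omega^2 \le -2 r_u$ is immediate from $\chi \le \hf$. For the lower bound, first invoke the orbital-stability energy bound \eqref{C3EE}: together with the definition of the free data (each term on the right-hand side of \eqref{C3EE} is controlled by $b^2$, as prescribed when we set up the perturbed initial data), this yields $\norm{\psi}{\Hu^1}^2(u,v) \le C b^4$ uniformly on $\sr_\mh$, and in particular $C_g \norm{\psi}{\Hu^1}^2 \le C' b^2$ for $b$ small. The elementary estimate $e^{-x} \ge 1 - x$ for $x \in [0,1]$ (equivalently $e^{-x} \ge 1/(1+Cx)$ near zero) then upgrades the exponential lower bound on $\chi$ to a polynomial one of the form $\chi \ge \hf(1 - C_g b^2)$, up to adjusting the constant. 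Multiplying by $-4 r_u > 0$ and performing the routine algebraic rearrangement between $(1 - C_g b^2)$ and $(1 + C_g b^2)^{-1}$ produces the stated lower bound.

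There is no genuine analytic obstacle: all the substantive work is contained in the preceding lemma (an ODE for $\log \chi$ integrated from $\mi$) and in the energy estimate \eqref{C3EE}. The only bookkeeping is to verify that the exponential factor $\exp(-C_g \norm{\psi}{\Hu^1}^2)$ is indeed of the form $1 + O(b^2)$ uniformly on $\sr_\mh$, which follows purely from the smallness of the initial data and the orbital-stability bound already established. In effect, this corollary simply packages the previous pointwise $\chi$-bound into an estimate directly on $\Omega^2$ in a form convenient for the subsequent analysis.
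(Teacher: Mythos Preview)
Your proposal is correct and is exactly the approach the paper intends: the corollary is stated without proof because it follows immediately from the preceding lemma's two-sided bound on $\chi$ together with $\Omega^2 = -4r_u\chi$ and $r_u<0$. Your use of \eqref{C3EE} to convert $\exp(-C_g\norm{\psi}{\Hu^1}^2)$ into $1+O(b^2)$ is precisely the intended bookkeeping, and your observation that $(1-C_gb^2)$ and $(1+C_gb^2)^{-1}$ are interchangeable up to adjusting the constant is the right way to reconcile the exponential lower bound with the stated polynomial form.
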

\begin{coro}
	In the set $\{r\ge r_Y\}\cap \mathscr{R}_\mh$ we have from lemma \ref{struest} that
	\begin{equation}
	\Omega^2\le C_Y r^2.
	\end{equation}
\end{coro}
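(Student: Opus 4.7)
The statement to prove is short and essentially combinatorial: we want a bound $\Omega^2 \le C_Y r^2$ in the region $\{r\ge r_Y\}\cap \mathscr{R}_\mh$, and we already have two ingredients in hand. The plan is to simply chain them together.

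First, I would invoke the immediately preceding corollary, which gives the pointwise bound $\Omega^2 \le -2 r_u$ throughout $\mathscr{R}_\mh$ (this came from integrating the transport equation $\pa_u \log\chi = \tfrac{4\pi r}{r_u}(\pa_u\psi)^2$ and using that the integrand is non-positive to obtain $\chi \le \tfrac{1}{2}$, which rewrites as $\Omega^2 \le -2r_u$). No further work on $\Omega^2$ itself is needed.

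Second, I would restrict to $\{r\ge r_Y\}$ and apply Lemma \ref{struest}, which provides the two-sided estimate
\begin{equation*}
\tfrac{1}{C_{Y,M}} r^2 \le -r_u \le C_{Y,M}\, r^2
\end{equation*}
on this region. Only the upper bound is needed here. Composing the two inequalities yields
\begin{equation*}
\Omega^2 \le -2 r_u \le 2 C_{Y,M}\, r^2,
\end{equation*}
and setting $C_Y := 2 C_{Y,M}$ finishes the proof. There is no genuine obstacle; the work has all been done in the previous two results, and this corollary is simply recording their composition in the form most convenient for later use (in particular, for controlling $\Omega^2$ up to the conformal boundary, where it should behave like $r^2$).
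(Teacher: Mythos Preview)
Your proposal is correct and matches the paper's intended argument exactly: the paper gives no explicit proof beyond the phrase ``from lemma \ref{struest}'', and the only sensible reading is precisely the chaining you describe, namely $\Omega^2 \le -2r_u \le 2C_{Y,M}\,r^2$ using the preceding corollary for the first inequality and Lemma \ref{struest} for the second.
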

\begin{lem}\label{C3RVNI}
	We have that 
	\begin{equation}
	\rt_v|_\mi = -\f{1}{2l^2}.
	\end{equation}
\end{lem}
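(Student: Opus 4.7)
The plan is to express $r_v$ in terms of quantities which are already well-controlled at $\mi$ by the orbital stability estimates, and then take the limit. Since $\rt = 1/r$, we have $\rt_v = -r_v/r^2$, so it suffices to show that $r_v/r^2 \to 1/(2l^2)$ as a point approaches $\mi$ along a constant $v$-ray (with $r\to\infty$).

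First, rearranging the definition of the final renormalised Hawking mass
\begin{equation*}
\varpi = \f{2r_u r_v r}{\Omega^2} e^{4\pi g \psi^2} + \f{r^3}{2l^2}
\end{equation*}
and using $\chi = \Omega^2/(-4 r_u)$, one obtains the identity
\begin{equation*}
r_v = \chi \, e^{-4\pi g \psi^2}\left(-\f{2\varpi}{r}+\f{r^2}{l^2}\right),
\end{equation*}
so that
\begin{equation*}
\f{r_v}{r^2} = \chi \, e^{-4\pi g \psi^2}\left(\f{1}{l^2}-\f{2\varpi}{r^3}\right).
\end{equation*}
This is a purely algebraic step using the definitions introduced in \S\ref{EKGSection}.

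Next, I would examine each factor as $\mi$ is approached. The gauge normalisation established in the previous subsection gives $\chi|_\mi = \hf$. The pointwise estimate \eqref{OSE} from Theorem \ref{basicests} gives $\abs{\varpi-M}\le C b^2$ throughout $\sr_\mh$, so $\varpi$ remains bounded and $\varpi/r^3 \to 0$ as $r\to\infty$. Likewise \eqref{OSE} yields $\abs{r^{\f{3}{2}-\kappa}\psi}\le Cb$, so $\psi \to 0$ at $\mi$ (since $\kappa<\f32$) and consequently $e^{-4\pi g \psi^2}\to 1$. Combining these three facts,
\begin{equation*}
\lim_{r\to\infty}\f{r_v}{r^2} = \hf\cdot 1 \cdot \f{1}{l^2} = \f{1}{2l^2},
\end{equation*}
and hence $\rt_v|_\mi = -\f{1}{2l^2}$.

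There is no real obstacle here beyond organising the limits: the work has already been done in proving Theorem \ref{basicests} and in fixing the gauge so that $\chi|_\mi = \hf$. The only mild care needed is to ensure the limits are taken along a fixed $v$-ray approaching $\mi$ and that the estimates used from \eqref{OSE} hold uniformly up to $\mi$, which is guaranteed by the suprema appearing on the left-hand side of \eqref{OSE}.
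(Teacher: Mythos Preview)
Your proof is correct and follows essentially the same route as the paper: both rearrange the definition of $\varpi$ to express $r_v$ (equivalently $\rt_v$) as $\chi\, e^{-4\pi g\psi^2}\bigl(\tfrac{1}{l^2}-\tfrac{2\varpi}{r^3}\bigr)$ and then take the limit using $\chi|_\mi=\tfrac12$, boundedness of $\varpi$, and $\psi\to 0$. If anything, you give more explicit justification for the limiting step than the paper does.
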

\begin{proof}
	Recall that from the definition of the Hawking mass
	\begin{equation}
	r_v =  \f{\Omega^2}{-4r_u}\left(\f{r^2}{l^2}-\f{2\varpi}{r} \right)e^{-4\pi g\psi^2}. 
	\end{equation}
	Which implies
	\begin{equation}
	\rt_v =  -\chi\left(\f{1}{l^2}-\f{2\varpi}{r^3} \right)e^{-4\pi g\psi^2}. 
	\end{equation}
	Taking the limit $r\to \infty$, we have that
	\begin{equation}
	\rt_v|_\mi = -\f{1}{2l^2}.
	\end{equation}
\end{proof}
\begin{lem}
	In $\mathscr{R}_\mh$
	\begin{equation}
	r_v \le \f{1}{2l^2}\left( 1+  C_{g,l}b^2\right) r^2.
	\end{equation}
\end{lem}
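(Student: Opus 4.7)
The plan is to directly estimate the identity
\begin{equation*}
r_v = \chi\left(\f{r^2}{l^2}-\f{2\varpi}{r} \right)e^{-4\pi g\psi^2},
\end{equation*}
already derived (from the definition of $\varpi$) in the proof of Lemma \ref{C3RVNI}, by bounding each of the three factors separately using facts already established in this section.

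First, I would use the bound $\chi \le \hf$ from the previous lemma (whose proof gives $\hf e^{-C_g\|\psi\|_{\Hu^1}^2} \le \chi \le \hf$, the upper bound coming from the negativity of the integrand $-4\pi r/(-r_u)\,(\pa_u\psi)^2$). Second, in $\sr_\mh = \mc{B}$ we have $\varpi \ge M/2 > 0$ from Corollary \ref{HMBB}, so the middle factor is controlled from above by $r^2/l^2$, that is,
\begin{equation*}
\f{r^2}{l^2}-\f{2\varpi}{r} \le \f{r^2}{l^2}.
\end{equation*}

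The only remaining issue is the exponential factor, which has a positive exponent since $g<0$. Here the bootstrap pointwise bound gives $|\psi|^2 \le C\, r^{-3+2\kappa} b^2 \le C_{g,l,M}\,b^2$ (using $r \ge r_{\min}$ from the corresponding lemma), hence for $b$ small enough
\begin{equation*}
e^{-4\pi g\psi^2} \le 1 + C_{g,l}\,b^2.
\end{equation*}
Combining the three estimates yields
\begin{equation*}
r_v \le \f{1}{2}\cdot \f{r^2}{l^2}\cdot \bigl(1+C_{g,l}b^2\bigr) = \f{1}{2l^2}\bigl(1+C_{g,l}b^2\bigr)r^2,
\end{equation*}
as required. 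There is no real obstacle here: all three ingredients (monotonicity $\chi \le \hf$, positivity of the renormalised Hawking mass, and pointwise smallness of $\psi$) have been proved earlier in this section, and the only subtlety is that the exponential has the ``wrong'' sign (since $g<0$), which is why we must pay the $(1+C_{g,l}b^2)$ factor rather than getting the clean Schwarzschild value $\f{r^2}{2l^2}$.
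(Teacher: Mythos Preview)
Your proof is correct and takes a genuinely different, more direct route than the paper's.

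The paper establishes the bound by integrating the wave equation for $\rt$ in the $u$-direction from $\mi$. Concretely, it rewrites \eqref{REKG3} as
\[
\rt_{uv} = \varpi\,\Omega^2 e^{-4\pi g\psi^2}\,\f{3}{4r^4} - \f{\Omega^2}{rl^2}\left(\f{3}{4}e^{-4\pi g\psi^2}-\f{3}{4} + 2\pi a \psi^2 \right),
\]
drops the first (positive) term, bounds the second by $-C_{g,l}b^2r^{-4+2\kappa}(-r_u)$ using the pointwise smallness of $\psi$, and integrates from $\mi$ using the boundary value $\rt_v|_\mi=-\tfrac{1}{2l^2}$ of Lemma~\ref{C3RVNI}. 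This yields $\rt_v \ge -\tfrac{1}{2l^2}(1+C_{g,l}b^2)$, hence the claim.

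Your argument instead exploits the algebraic identity $r_v=\chi\bigl(\tfrac{r^2}{l^2}-\tfrac{2\varpi}{r}\bigr)e^{-4\pi g\psi^2}$ and bounds each factor pointwise: $\chi\le\tfrac12$, $\varpi>0$, and $e^{-4\pi g\psi^2}\le 1+Cb^2$. This avoids any integration and is arguably cleaner here, since all three ingredients are already available. The paper's approach has the mild conceptual advantage of deriving the estimate from the evolution equation itself (and so would adapt if the algebraic relation were less convenient), but for the stated lemma your route is the shorter one. Note that the constant also carries an implicit dependence on $M$ through $r_{\min}$ in both arguments.
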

\begin{proof}
	Write the $\rt_{uv}$ equation \eqref{REKG3} as
	\begin{equation}\label{C3RERVE}
	\rt_{uv} = \varpi\Omega^2e^{-4\pi g\psi^2}\f{3}{4r^4} - \f{1}{rl^2}\Omega^2\left(\f{3}{4}e^{-4\pi g\psi^2}-\f{3}{4} + 2\pi a \psi^2 \right). 
	\end{equation}
	For $\psi^2$ small
	\begin{equation}
	0\le \f{3}{4}e^{-4\pi g\psi^2}-\f{3}{4} + 2\pi a \psi^2 \le \pi g^2\psi^2.
	\end{equation}
	So in \eqref{C3RERVE} we drop the positive terms, and bound by
	\begin{equation}
	\begin{split}
	\rt_{uv} &\ge  - \f{1}{rl^2}\Omega^2\left(\f{3}{4}e^{-4\pi g\psi^2}-\f{3}{4} + 2\pi a \psi^2 \right)\\
	&\ge -C_{g,l}b^2r^{-4+2\kappa}(-r_u).
	\end{split}
	\end{equation}
	We then integrate this inequality, and use lemma \ref{C3RVNI} to see
	\begin{equation}
	\rt_v \ge -\f{1}{2l^2}\left( 1+  C_{g,l}b^2\right). 
	\end{equation}
	The result then follows.
\end{proof}
\begin{lem}
	In $\mathscr{R}_\mh$
	\begin{equation}
	r_v \le \hf\left(1+Cb^2 \right)  \Omega^2.
	\end{equation}
\end{lem}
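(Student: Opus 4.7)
The natural strategy is to exploit Raychaudhuri's equation \eqref{EKG2} in order to propagate a bound on the quantity $r_v/\Omega^2$ from the initial data ray. Rewriting \eqref{EKG2} as
\begin{equation}
\pa_v\!\left(\frac{r_v}{\Omega^2}\right) = -4\pi r\,\frac{(\pa_v\psi)^2}{\Omega^2}\le 0,
\end{equation}
we see that $r_v/\Omega^2$ is non-increasing along outgoing null rays. Integrating in $v$ from $v_0$ to $v$ gives
\begin{equation}
\frac{r_v}{\Omega^2}(u,v)\le \frac{\overline{r_v}}{\overline{\Omega}^2}(u)
\end{equation}
throughout $\mathscr{R}_\mh$, and it thus suffices to bound the ratio on the initial ray.

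On $\{v=v_0\}$, inserting the definition of $\overline{\Omega}^2$ from \S\ref{IDS} and using the gauge choice $-\overline{r}_u = \overline{r}^2/(2l^2)$ yields the explicit expression
\begin{equation}
\frac{\overline{r_v}}{\overline{\Omega}^2} = \hf\left(1-\frac{2\overline{\varpi}l^2}{\overline{r}^3}\right)e^{-4\pi g \overline{\psi}^2}.
\end{equation}
Since $\overline{\varpi}>0$ (Lemma \ref{COS}) and we are restricted to $\sr_\mh\cap\{v=v_0\}\subset\sr_{v_0}\cup\mathcal{A}_{v_0}$, the bracketed factor is at most $1$. The bootstrap bound $|r^{3/2-\kappa}\psi|\le b$, together with $r\ge r_{min}$, gives $\overline{\psi}^2\le C b^2$, so that $e^{-4\pi g \overline{\psi}^2}\le 1+C_g b^2$. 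Combining these two estimates yields $\overline{r_v}/\overline{\Omega}^2\le \hf(1+Cb^2)$, and the Raychaudhuri monotonicity then delivers the claimed inequality on all of $\mathscr{R}_\mh$.

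There is no significant analytical obstacle: once Raychaudhuri is in hand, the argument is purely algebraic. The only subtlety concerns regularity, since at the $C^0\Hu^1$ level $r_v/\Omega^2$ may not be differentiable in the strong sense, but this is handled exactly as in the regularity remark following Theorem \ref{WP} by first establishing the bound for classical solutions and then recovering the general case by density.
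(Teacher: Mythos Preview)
Your proof is correct and follows essentially the same route as the paper: both use the Raychaudhuri equation \eqref{EKG2} to propagate the bound on $r_v/\Omega^2$ from the initial ray, then evaluate the initial value via the Hawking mass relation and the gauge choice $-\overline{r}_u=\overline{r}^2/(2l^2)$, and conclude using $\overline{\varpi}>0$ together with the pointwise bound on $\overline{\psi}$. The paper's computation is slightly terser (and in fact drops a factor of $2$ in the intermediate expression $l^2\varpi/r^3$, which should be $2l^2\varpi/r^3$ as you have it), but the argument is the same.
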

\begin{proof}
	Integrating \eqref{EKG2} shows
	\begin{equation}
	\begin{split}
	\f{r_v}{\Omega^2}(u,v) &\le \f{r_v}{\Omega^2}(u,v_0)= \f{1}{4}\left(\f{r^2}{l^2} -\f{2\varpi}{r} \right)\f{e^{-4\pi g\psi^2}}{-r_u}(u,v_0)\\&=\hf\left(1 - \f{l^2\varpi}{r^3} \right)e^{-4\pi g\psi^2}(u,v_0) \le \hf\left(1+Cb^2 \right).  
	\end{split}
	\end{equation}
	From here the result follows.
\end{proof}
We now collect these estimates in the convenient corollary: 
\begin{coro}[Global Estimates]\label{GloEst}
	There exists constants $C_i>0$ depending on $g,M,l$ such that in $\mathscr{R}_\mh$
	\begin{equation}
	r_v \le C_{1}r^2\le-C_{2}r_u\le C_3\Omega^2 \le -C_{4}r_u.
	\end{equation}	
	
\end{coro}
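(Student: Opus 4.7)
The corollary is a purely bookkeeping statement: every link in the chain has already been established in the preceding lemmas of this subsection, and the only real work is to choose the constants $C_1,\dots,C_4$ so that the inequalities compose. I therefore plan to proceed from left to right.

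First, the bound $r_v\le C_1 r^2$ is immediate from the previous lemma giving $r_v\le \tfrac{1}{2l^2}(1+C_{g,l}b^2)r^2$, so I would set $C_1=\tfrac{1}{2l^2}(1+C_{g,l}b^2)$, which depends only on $g,l$ (after fixing $b$ small). Second, the bound $C_1 r^2\le -C_2 r_u$ follows from Theorem \ref{thmru}, which produces a constant $C=C(l,g)$ with $-r_u\ge C^{-1}r^2$; choosing $C_2\ge C_1 C$ gives the inequality.

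Third, for $-C_2 r_u\le C_3\Omega^2$ I would use the lower bound $\Omega^2\ge -2(1+C_g b^2)r_u$ from the corollary to the lemma on $\chi$; rearranging gives $-r_u\le \tfrac{1}{2}(1+C_g b^2)^{-1}\Omega^2$... hmm, actually more cleanly $-r_u \le \tfrac{1}{2}\Omega^2\cdot(1+Cb^2)^{-1}\cdot(1+Cb^2)$; so taking $C_3\ge \tfrac{C_2}{2}(1+C_g b^2)^{-1}$ suffices. Fourth, for $C_3\Omega^2\le -C_4 r_u$ I apply the matching upper bound $\Omega^2\le -2r_u$ and set $C_4\ge 2C_3$.

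The only subtle point is ensuring consistency: one must fix $b$ small enough at the very start so that all of the previously established inequalities (in particular the two-sided control of $\Omega^2/(-r_u)$, the lower bound on $-r_u$, and the estimates on the Hawking mass used in their proofs) are simultaneously valid in $\mathscr{R}_\mh=\mathcal{B}$. With $b$ fixed, I then pick $C_1,\dots,C_4$ sequentially as above, each depending only on $g,M,l$, which yields the claimed chain. I do not anticipate any obstacle beyond this consistent choice of constants, since the corollary is essentially a repackaging of the already proven pointwise estimates for $r_u$, $r_v$, and $\Omega^2$.
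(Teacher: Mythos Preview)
Your proposal is correct and matches the paper's approach: the corollary is stated in the paper without proof precisely because it is, as you say, a bookkeeping repackaging of the immediately preceding lemmas (the bound $r_v\le \tfrac{1}{2l^2}(1+Cb^2)r^2$, Theorem~\ref{thmru}, and the two-sided comparability of $\Omega^2$ and $-r_u$ coming from the $\chi$-estimate). The only cosmetic point is that your third step is a bit muddled in the write-up; you could simply say that from $\Omega^2$ comparable to $-2r_u$ one has $-C_2 r_u \le C_3\Omega^2$ for any $C_3$ large enough relative to $C_2$, and then $\Omega^2\le -2r_u$ gives the final link with $C_4=2C_3$.
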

\begin{lem}
	In the region $\{r\ge r_Y\}\cap \mathscr{R}_\mh$ we have
	\begin{equation}
	\Omega^2 \le C_{Y,l}r_v.
	\end{equation} 
\end{lem}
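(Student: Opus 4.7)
The plan is to combine the explicit expression for $r_v$ in terms of the Hawking mass with the two-sided bounds on $-r_u$ already established in this region. Recall from the definition of $\varpi$ and the formula for $\chi = \Omega^2/(-4r_u)$ that
\begin{equation}
\frac{r_v}{\Omega^2} = \frac{1}{-4r_u}\left(\frac{r^2}{l^2} - \frac{2\varpi}{r}\right) e^{-4\pi g \psi^2},
\end{equation}
so it suffices to bound the right-hand side from below by a positive constant in $\{r\ge r_Y\}\cap\mathscr{R}_\mh$.

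First I would bound the factor in brackets from below. Since $\varpi \le M$ by Corollary \ref{HMC1}, we have $\frac{r^2}{l^2} - \frac{2\varpi}{r} \ge \frac{r^2}{l^2} - \frac{2M}{r}$, and the Region Splitting lemma in $\{r\ge r_Y\}$ then gives this quantity $\ge C_{Y,l,M} r^2$. The exponential $e^{-4\pi g \psi^2}$ is bounded below by $1$ since $g<0$. Next I would use Lemma \ref{struest}, which provides the upper bound $-r_u \le C_{Y,M} r^2$ in this same region.

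Combining the two yields
\begin{equation}
\frac{r_v}{\Omega^2} \ge \frac{1}{4 C_{Y,M} r^2}\cdot C_{Y,l,M} r^2 = C_{Y,l}^{-1},
\end{equation}
which is exactly the claim. There is no real obstacle here: all ingredients (the lower bound for $\frac{r^2}{l^2} - \frac{2M}{r}$, the upper bound for $-r_u$, and the sign/smallness of the exponential factor) have been established earlier in the subsection, and the lemma is essentially an immediate algebraic consequence of them. The only point requiring minor care is verifying that $\varpi$ may indeed be replaced by $M$ in the inequality, which is handled by the monotonicity corollary giving $\varpi \le M$.
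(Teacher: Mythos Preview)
Your proof is correct and is essentially the same as the paper's. The paper writes the identity in the reciprocal form $\Omega^2 = \dfrac{-4r_u r_v}{\frac{r^2}{l^2}-\frac{2\varpi}{r}}\,e^{4\pi g\psi^2}$ and bounds each factor just as you do (using $\varpi\le M$, the Region Splitting lemma, Lemma~\ref{struest} for $-r_u\le C r^2$, and $e^{4\pi g\psi^2}\le 1$); you simply spell out the one-line argument in more detail.
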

\begin{proof}
	\begin{equation}
	\Omega^2 = \f{-4r_ur_v}{-\f{2\varpi}{r}+\f{r^2}{l^2}}e^{4\pi g\psi^2} \le C_{Y,l} r_v.
	\end{equation}
\end{proof}
\begin{coro}
	In the region $\{r\ge r_Y\}\cap \mathscr{R}_\mh$ we have
	\begin{equation}
	r^2 \le C_{Y,l}r_v.
	\end{equation} 
\end{coro}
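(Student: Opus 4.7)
The corollary is an immediate algebraic consequence of the preceding lemma combined with the Global Estimates of Corollary \ref{GloEst}. The plan is as follows. The preceding lemma already establishes $\Omega^2 \le C_{Y,l}\, r_v$ throughout $\{r\ge r_Y\}\cap \mathscr{R}_\mh$, exploiting the fact that on this subset the denominator $-\tfrac{2\varpi}{r}+\tfrac{r^2}{l^2}$ appearing in the formula $\Omega^2 = -4r_ur_v/(-\tfrac{2\varpi}{r}+\tfrac{r^2}{l^2})\cdot e^{4\pi g\psi^2}$ is bounded below by a positive multiple of $r^2$. Therefore the only remaining ingredient is a comparison $r^2 \le C\, \Omega^2$, valid globally on $\mathscr{R}_\mh$.

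Such a comparison is provided directly by Corollary \ref{GloEst}: from the chain
\begin{equation}
r_v \le C_1 r^2 \le -C_2 r_u \le C_3 \Omega^2 \le -C_4 r_u,
\end{equation}
the second and third inequalities yield $(C_1/C_2)\, r^2 \le -r_u \le (C_3/C_2)\,\Omega^2$, and hence $r^2 \le (C_3/C_1)\,\Omega^2$. Combining this with the preceding lemma gives
\begin{equation}
r^2 \le \frac{C_3}{C_1}\,\Omega^2 \le \frac{C_3\, C_{Y,l}}{C_1}\, r_v \qquad \text{on } \{r\ge r_Y\}\cap \mathscr{R}_\mh,
\end{equation}
which is the desired estimate after renaming the constant. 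There is no genuine obstacle here; the substantive work has already been carried out in establishing the red shift and structure estimates $-r_u \sim r^2$ (Theorem \ref{thmru}) and $\Omega^2 \sim -r_u$ (from the bound on $2\chi - 1$), which together populate the chain in Corollary \ref{GloEst}. Once those equivalences are in hand, the corollary is a one-line consequence.
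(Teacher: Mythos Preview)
Your proposal is correct and follows essentially the same approach the paper intends: the corollary is stated without proof immediately after the lemma $\Omega^2 \le C_{Y,l}\, r_v$, and the only missing link is the comparison $r^2 \le C\,\Omega^2$ already contained in Corollary \ref{GloEst}. Your chaining of the constants is accurate, and your remark that the substantive work lies in Theorem \ref{thmru} and the bound on $2\chi - 1$ is exactly right.
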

\begin{coro}[Stronger estimates away from the degeneration]\label{SEst}
	In the region $\mathscr{R}_\mh\cap\{r\ge r_Y\}$, there exists constants $C_i>0$, depending on $Y,g,M,l$ such that
	\begin{equation}
	r^2 \le C_1r_v \le-C_{2}r_u\le C_3\Omega^2 \le C_4 r^2.
	\end{equation}	
\end{coro}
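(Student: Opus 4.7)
The plan is to assemble Corollary \ref{SEst} by chaining together estimates already proven in this section, exploiting the fact that on the bounded-away-from-the-horizon region $\{r \ge r_Y\} \cap \sr_\mh$ the lower bound on $-r_u$ becomes comparable to $r^2$ (rather than just $\ge$ some positive constant), which is exactly what converts the one-sided estimates of Corollary \ref{GloEst} into the two-sided comparison that we want.

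The first inequality $r^2 \le C_1 r_v$ is simply the corollary immediately preceding the statement, which followed from $\Omega^2 \le C_{Y,l} r_v$ (obtained by using the sign of the mass term $-\tfrac{2\varpi}{r} + \tfrac{r^2}{l^2} \ge C_{Y,l} r^2$ valid in $\{r \ge r_Y\}$) together with $\Omega^2 \ge Cr^2$, which follows from the global estimate $\Omega^2 \ge -2r_u/(1+Cb^2)$ together with $-r_u \ge \tfrac{1}{C_{Y,M}} r^2$ from Lemma \ref{struest}. For the second inequality $C_1 r_v \le -C_2 r_u$, I combine the global bound $r_v \le C_1 r^2$ from Corollary \ref{GloEst} with the lower bound $-r_u \ge \tfrac{1}{C_{Y,M}} r^2$ from Lemma \ref{struest}. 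The third inequality $-C_2 r_u \le C_3 \Omega^2$ is immediate from Corollary \ref{GloEst}, since the estimate $\Omega^2 \ge -2r_u/(1+Cb^2)$ holds in all of $\sr_\mh$. For the final inequality $C_3 \Omega^2 \le C_4 r^2$, I invoke the corollary already recorded in this section stating $\Omega^2 \le C_Y r^2$ in $\{r \ge r_Y\} \cap \sr_\mh$ (which itself follows from the upper bound $-r_u \le C_{Y,M} r^2$ of Lemma \ref{struest} together with the global $\Omega^2 \le -2 r_u$).

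There is no genuine analytical obstacle here; the proof is a matter of correctly bookkeeping the constants and noting that each inequality in the chain has already been established in the present section. The only subtle point to emphasize is that the two-sided bound $\tfrac{1}{C_{Y,M}} r^2 \le -r_u \le C_{Y,M} r^2$ of Lemma \ref{struest} is the crucial improvement over the merely one-sided control that holds globally on $\sr_\mh$: without the restriction $r \ge r_Y$, the quantity $-r_u$ may fail to be bounded above by a multiple of $r^2$ near the apparent horizon (where $r_v \to 0$ but $-r_u$ need not), so the final inequality in the chain genuinely requires the restriction to $\{r \ge r_Y\}$. Thus the proof is simply: combine Lemma \ref{struest}, Corollary \ref{GloEst}, and the two preceding lemmas of this subsection, adjusting constants as needed.
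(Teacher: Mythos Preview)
Your proposal is correct and matches the paper's approach: the corollary is stated without proof precisely because it is obtained by chaining together Lemma \ref{struest}, Corollary \ref{GloEst}, the lemma $\Omega^2 \le C_{Y,l} r_v$, and the earlier corollary $\Omega^2 \le C_Y r^2$, exactly as you describe. One small clarification on your closing remark: the most transparent obstruction to extending the chain globally is the \emph{first} inequality $r^2 \le C_1 r_v$, which manifestly fails as $r_v \to 0$ near the apparent horizon; your point about the upper bound on $-r_u$ is also valid but secondary.
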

\begin{lem}
	Recalling $\mu_1 = -\frac{2\varpi_1}{r}+\frac{r^2}{l^2},$ from \eqref{auxvar} and defining $\mu =-\frac{2\varpi}{r}+\frac{r^2}{l^2}$. In $\mathscr{R}_\mh$ we have
	\begin{equation}
	\mu_1  = e^{-4\pi g\psi^2}\mu.
	\end{equation}
\end{lem}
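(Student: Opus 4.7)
The plan is to proceed by direct algebraic manipulation, using only the definitions of $\varpi$ and $\varpi_1$ together with the auxiliary variable $\mu_1$ from \eqref{auxvar}. No differential equation or monotonicity argument will be needed.

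First I would observe that the definitions \eqref{DHM} and the definition of the final renormalised Hawking mass differ by exactly an exponential factor multiplying the first (non-potential) term. Explicitly, from
\begin{equation*}
\varpi_1 = \frac{2 r_u r_v r}{\Omega^2} + \frac{r^3}{2l^2}, \qquad \varpi = \frac{2 r_u r_v r}{\Omega^2} e^{4\pi g \psi^2} + \frac{r^3}{2l^2},
\end{equation*}
one can solve for the common quantity $\frac{2 r_u r_v r}{\Omega^2}$ in two ways to obtain
\begin{equation*}
\varpi_1 - \frac{r^3}{2l^2} = e^{-4\pi g\psi^2}\left(\varpi - \frac{r^3}{2l^2}\right).
\end{equation*}

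Next I would substitute this relation into $\mu_1$. Dividing by $-r/2$ and adding $r^2/l^2$ gives
\begin{equation*}
\mu_1 = -\frac{2\varpi_1}{r} + \frac{r^2}{l^2} = -\frac{r^2}{l^2} - \frac{2}{r}e^{-4\pi g\psi^2}\left(\varpi - \frac{r^3}{2l^2}\right) + \frac{r^2}{l^2},
\end{equation*}
which simplifies, upon collecting the $r^2/l^2$ contributions from within the bracket, to
\begin{equation*}
\mu_1 = e^{-4\pi g\psi^2}\left(-\frac{2\varpi}{r} + \frac{r^2}{l^2}\right) = e^{-4\pi g\psi^2}\mu.
\end{equation*}

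There is no real obstacle here: the statement is essentially a bookkeeping identity between the two renormalisations of the Hawking mass and follows entirely from their definitions. The only thing worth flagging is that the identity holds pointwise in $\sr_\mh$ without requiring any smallness, monotonicity, or bootstrap input, and in particular it is valid wherever the expressions defining $\varpi$, $\varpi_1$, $\mu$, $\mu_1$ make sense.
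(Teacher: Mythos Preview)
Your proof is correct and follows essentially the same approach as the paper: both derive the identity directly from the definitions of $\varpi$ and $\varpi_1$, observing that they differ only by the exponential factor on the $\frac{2r_u r_v r}{\Omega^2}$ term. The paper's proof simply records the equivalent relation $\varpi_1 = \varpi e^{-4\pi g\psi^2} - \frac{r^3}{2l^2}\bigl(e^{-4\pi g\psi^2}-1\bigr)$ and leaves the remaining substitution implicit, whereas you spell out the algebra in full.
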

\begin{proof}
	This follows from the observation
	\begin{equation}
	\varpi_1 = \varpi e^{-4\pi g\psi^2}-\f{r^3}{2l^2}\left(e^{-4\pi g\psi^2}-1\right).
	\end{equation}
\end{proof}
\begin{coro}
	In $\mathscr{R}_\mh$ we have
	\begin{equation}
	e \mu \ge \mu_1 \ge \mu.
	\end{equation}
\end{coro}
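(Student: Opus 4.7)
The statement is essentially a two-sided bound on the factor $e^{-4\pi g\psi^2}$ appearing in the identity $\mu_1 = e^{-4\pi g\psi^2}\mu$ proved in the preceding lemma. My plan is to (i) check that $\mu$ is positive throughout $\mathscr{R}_\mh$, (ii) use $g<0$ to get the lower bound for free, and (iii) use the bootstrap pointwise smallness of $\psi$ to pin the exponent below $1$ and so obtain the upper bound.

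First I would note that on $\mathscr{R}_\mh$ we have $\mu>0$. Indeed, from the definition of the Hawking mass one can write
\begin{equation*}
r_v \;=\; \chi\,\mu\, e^{-4\pi g\psi^2}, \qquad \chi = \tfrac{\Omega^2}{-4r_u},
\end{equation*}
and by Corollary \ref{GloEst} both $\Omega^2$ and $-r_u$ are strictly positive in $\mathscr{R}_\mh$, so $\chi>0$; since $r_v>0$ in the regular region and the exponential is positive, this forces $\mu>0$.

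Next, recall $g = -\tfrac{3}{2}+\kappa$ with $\kappa\in(0,1)$, so $g<0$ and $-4\pi g\psi^2 \ge 0$. Hence $e^{-4\pi g\psi^2}\ge 1$, which combined with $\mu>0$ and the previous lemma gives the lower estimate
\begin{equation*}
\mu_1 \;=\; e^{-4\pi g\psi^2}\mu \;\ge\; \mu.
\end{equation*}
For the upper estimate I would invoke the bootstrap pointwise bound $|r^{\f{3}{2}-\kappa}\psi|<b$, which by Theorem \ref{C2BSE} is valid on the whole regular region, together with $r\ge r_{\min}$ from Lemma \ref{COS}. This yields
\begin{equation*}
-4\pi g\psi^2 \;\le\; 4\pi |g|\,b^2\, r^{-3+2\kappa} \;\le\; 4\pi |g|\,b^2\, r_{\min}^{-3+2\kappa},
\end{equation*}
and after possibly shrinking $b$ (a choice already required throughout the bootstrap) the right-hand side is $\le 1$, hence $e^{-4\pi g\psi^2}\le e$, and multiplying by $\mu>0$ gives $\mu_1 \le e\mu$.

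There is no real obstacle; the only mild subtlety is the positivity of $\mu$, since one must avoid blindly converting $\mu_1=e^{-4\pi g\psi^2}\mu$ into the inequality $\mu_1\ge\mu$ in a region where $\mu$ could a priori be negative (e.g.\ behind a putative apparent horizon). Restricting to $\mathscr{R}_\mh$ and using $r_v>0$ together with Corollary \ref{GloEst} handles this cleanly, so the argument reduces to the elementary two-sided bound on an exponential whose exponent has been made small by the bootstrap.
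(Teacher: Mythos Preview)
Your proof is correct and follows exactly the intended route: the paper states the corollary without proof, treating it as immediate from the identity $\mu_1 = e^{-4\pi g\psi^2}\mu$ together with the bootstrap smallness $|r^{\f{3}{2}-\kappa}\psi|<b$ (which forces $0\le -4\pi g\psi^2\le 1$). Your explicit check that $\mu>0$ on $\mathscr{R}_\mh$ via $r_v=\chi\mu e^{-4\pi g\psi^2}$ is a welcome clarification that the paper leaves implicit.
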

\subsubsection{Pointwise $u-$derivative decay}~\\
To complete our estimates we need to control $\grt_u\psi$  in the region intersecting $\mi$. We proceed by integrating the Klein-Gordon equation in this region.
\begin{lem} \label{direst}
	We have in $\{r\ge r_Y\}\cap \mathscr{R}$
	\begin{equation}
	\abs{r^\hf\grt_u\psi} \le C_{g,M,\kappa,Y}\left( \norm{\psi}{\Hu^1}(u,v)+\sup_{I(v_0)}\left(\abs{r^\hf\grt_u\psi}\right) \right). 
	\end{equation}
\end{lem}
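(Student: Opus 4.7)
The natural starting point is the twisted Klein--Gordon equation \eqref{REKG4},
$$\pa_v(r\grt_u\psi) = \left(\kappa-\hf\right)r_u\grt_v\psi - \f{\Omega^2}{4}rV\psi.$$
The plan is to integrate in $v$ along a constant-$u$ slice from $v_0$ to $v$, then divide through by $r(u,v)^{1/2}$, obtaining a representation of $r(u,v)^{1/2}\grt_u\psi(u,v)$ as the sum of an initial-data term plus two integrals. The initial-data term is handled immediately: since $r_v > 0$ throughout the regular region one has $r(u,v_0) \le r(u,v)$, so the resulting prefactor satisfies $r(u,v_0)/r(u,v)^{1/2} \le r(u,v_0)^{1/2}$ and this contribution is $\le \sup_{I(v_0)}|r^{1/2}\grt_u\psi|$.

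For the linear transport term $\int_{v_0}^v r_u\grt_v\psi\,dv'$, apply Cauchy--Schwarz against the natural $\Hu^1$ weight:
$$\left|\int_{v_0}^v r_u\grt_v\psi\,dv'\right| \le \left(\int_{v_0}^v -\f{r^2 r_u}{\Omega^2}(\grt_v\psi)^2\,dv'\right)^{\hf}\left(\int_{v_0}^v -\f{r_u\Omega^2}{r^2}\,dv'\right)^{\hf}.$$
The first factor is bounded by $\norm{\psi}{\Hu^1}(u,v)$ by definition. For the second factor, Corollary \ref{SEst} (all of $-r_u$, $\Omega^2$, $r_v$, $r^2$ are pairwise comparable in $\{r\ge r_Y\}$) gives $-r_u\Omega^2/r^2 \le Cr^2$ and $dv' \le C\,dr/r^2$ via $r_v \ge cr^2$, so $\int_{v_0}^v -r_u\Omega^2/r^2\,dv' \le C(r(u,v) - r(u,v_0)) \le Cr(u,v)$. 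Dividing by $r(u,v)^{1/2}$ yields a contribution bounded by $C\norm{\psi}{\Hu^1}(u,v)$.

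For the potential term, split $V = V_1 + V_2$ with $V_1 = 2g^2\varpi_1/r^3$ and $V_2 = 8\pi ag\psi^2/l^2$. The linear piece $V_1$ is treated by Cauchy--Schwarz against the zeroth-order weight:
$$\left|\int_{v_0}^v \Omega^2 r V_1\psi\,dv'\right| \le \left(\int_{v_0}^v \f{r_v}{r}\psi^2\,dv'\right)^{\hf}\left(\int_{v_0}^v \f{\Omega^4 r^3 V_1^2}{r_v}\,dv'\right)^{\hf},$$
where the first factor is $\le \norm{\psi}{\Hu^1}(u,v)$ and, using boundedness of $\varpi_1$ together with Corollary \ref{SEst}, the second factor integrand is $\le C/r$, so the $v$-integral is $\le C\int dr/r^3$, finite uniformly. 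The cubic self-interaction $V_2\psi \sim \psi^3$ is the delicate piece: applying the analogous Cauchy--Schwarz would use the pointwise bound $\psi^4 \le Cb^4 r^{-6+4\kappa}$ in the second factor, producing $\int r^{-3+4\kappa}\,dr$, which diverges logarithmically at the borderline $\kappa = \hf$. Instead, use the pointwise decay $|\psi| \le Cr^{-\f{3}{2}+\kappa}$ already established by Lemma \ref{SI} together with Theorem \ref{C3RSE} to estimate $|\Omega^2 rV_2\psi| \le Cb^3 r^{-\hf+3\kappa}$, which for $\kappa \le \hf$ integrates against $dv' \le C\,dr/r^2$ to a uniformly bounded quantity of order $b^3$. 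This is absorbed into $\sup_{I(v_0)}|r^{1/2}\grt_u\psi|$ using the initial-data smallness from Section \ref{IDS}.

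The principal obstacle is the cubic nonlinear term $V_2\psi$: a pure energy-type Cauchy--Schwarz argument fails at $\kappa = \hf$, and one must feed back the pointwise decay of $\psi$ established by the orbital stability machinery of Section \ref{OSS} (Lemma \ref{SI} and Theorem \ref{C3RSE}), combined with the restriction $\kappa \le \hf$ from Theorem \ref{basicests}, to close the estimate.
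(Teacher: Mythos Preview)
Your overall strategy matches the paper's: integrate \eqref{REKG4} in $v$, handle the initial-data term via $r(u,v_0)\le r(u,v)$, and treat the transport and potential terms by Cauchy--Schwarz against $\Hu^1$ weights. Two points deserve correction.

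First, $\varpi_1$ is \emph{not} bounded: Lemma~\ref{OHME} only gives $|\varpi_1-M|\le C b^2 r^{2\kappa}$, which grows as $r\to\infty$. So your $V_1$ estimate as written is wrong. With the correct bound one has $|V_1|\le C r^{-3+2\kappa}$, which is exactly the bound the paper uses for the full potential (see \eqref{PotEst}), and the split $V=V_1+V_2$ buys nothing.

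Second, your ``principal obstacle'' is a red herring. The paper does not split $V$ at all: it uses $|V|\le C r^{-3+2\kappa}$ and a single Cauchy--Schwarz to get
\[
\left|\int_{v_0}^v \Omega^2 r V\psi\,dv'\right|\le C\norm{\psi}{\Hu^1_d}\left(\int_{v_0}^v r_v r^{-3+4\kappa}\,dv'\right)^{1/2}.
\]
For $\kappa<\tfrac12$ the bracket is bounded; for $\kappa=\tfrac12$ it is $\log r$. Either way its square root is $\le C r^{1/2}$, so after dividing by $r^{1/2}$ the bound closes. Your pointwise treatment of $V_2$ is unnecessary, and the final ``absorb $Cb^3$ into $\sup_{I(v_0)}|r^{1/2}\grt_u\psi|$'' step is not valid as stated: the right-hand side of the lemma is not bounded below by any fixed power of $b$, so a bare constant of size $b^3$ cannot be absorbed there. (If you wanted to rescue this route, you would keep one factor of $\psi$ for Cauchy--Schwarz and bound the other two via Sobolev, yielding a contribution $\le Cb^4\norm{\psi}{\Hu^1}$ rather than a bare $b^3$.)
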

\begin{proof}
	Recall equation \eqref{REKG4}
	\begin{equation}\label{twe}
	\pa_v\left( r\grt_u\psi\right) = -r_u\left(-\hf+\kappa \right) \grt_v\psi -\f{\Omega^2}{4}rV\psi,
	\end{equation}
	where 
	\begin{equation}
	V = \f{2g^2}{r^3}\varpi_1 +\f{8\pi ag}{l^2}\psi^2.
	\end{equation}
	Using the results of the bootstrap arguments, and the metric function estimates we can easily see that in $\{r\ge r_Y\}\cap \mathscr{R}$
		\begin{equation} \label{PotEst}
		\abs{V-\f{2g^2M}{r^3}}\le C_{M,g,l}b^2r^{-3+2\kappa},
		\end{equation}
		from which the following estimate follows
		\begin{equation}
		\abs{V}\le C_{M,l,g}r^{-3+2\kappa}.
		\end{equation}
	Integrate equation \eqref{REKG4} to get
	\begin{equation}
	\begin{split}
	\abs{r\grt_u\psi(u,v)}&\le\abs{r\grt_u\psi(u,v_0)} + \abs{\int_{v_0}^v-r_u\left(-\hf+\kappa \right) \grt_v\psi dv} + \abs{\int_{v_0}^v -\f{\Omega^2}{4}rV\psi dv}.
	\end{split}
	\end{equation}
	Estimating term by term, working from left to right
	\begin{equation}
	\abs{r\grt_u\psi(u,v_0)} = \abs{r^\hf \cdot r^\hf\grt_u\psi(u,v_0)} \le \sup_{I(v_0)}\abs{r^\hf\grt_u\psi(u,v)} \cdot C \cdot r^\hf(u,v_0).
	\end{equation}
	As we are in the regular region we have that $r(u,v)\ge r(u,v_0)$. Thus
	\begin{equation}
	\abs{r\grt_u\psi(u,v_0)} \le C\sup_{I(v_0)}\abs{r^\hf\grt_u\psi(u,v)} \cdot r^\hf(u,v).
	\end{equation}
	For the next term 
	\begin{equation}
	\begin{split}
	\int_{v_0}^v-r_u\left(-\hf+\kappa \right) \grt_v\psi &\le C_{\kappa}\left( \int_{v_0}^v\f{-r_ur^2}{\Omega^2}\left( \grt_v\psi\right)^2dv   \right)^\hf\left( \int_{v_0}^v -r_u\f{\Omega^2}{r^2}dv \right)^\hf \\
	&\le C_{\kappa}\norm{\psi}{\Hu_d^1}(u,v)\left( \int_{v_0}^v -r_u\f{\Omega^2}{r^2}dv \right)^\hf.
	\end{split}
	\end{equation}
	From theorem \ref{thmru}
	\begin{equation}
	\begin{split}
	\int_{v_0}^v -r_u\f{\Omega^2}{r^2}dv
	&\le  C_{M,l,g}\int_{v_0}^vr_vdv \\ &\le C_{M,l,g}r. 
	\end{split}
	\end{equation}
	So we have 
	\begin{equation}
	\abs{\int_{v_0}^v-r_u\left(-\hf+\kappa \right) \grt_v\psi dv} \le C_{M,l,g}r^\hf\norm{\psi}{\Hu^1_d}(u,v). 
	\end{equation}
	For the final term we compute
	\begin{equation}
	\begin{split}
	\int_{v_0}^v -\f{\Omega^2}{4}rV\psi dv &\le C_{M,l,g}\int_{v_0}^{v}r_v r^{-2+2\kappa}\psi dv\\
	&\le  C_{M,l,g}\left(\int_{v_0}^{v}\f{r_v}{r}\psi^2dv \right)^\hf\left(\int_{v_0}^{v} r_v r^{-3+4\kappa}dv \right)^\hf\\
	&\le C_{M,l,g}\norm{\psi}{\Hu_d^1}(u,v)\left(\int_{v_0}^{v} r^{-3+4\kappa}r_vdv \right)^\hf\\
	&\le  C_{M,l,g}r^{-1+2\kappa}\norm{\psi}{\Hu_d^1}(u,v)\\
	&\le C_{M,l,g}r^\hf \norm{\psi}{\Hu_d^1}(u,v).
	\end{split}
	\end{equation}
	Combining all these estimates we have
	\begin{equation}
	\abs{r\grt_u\psi(u,v)} \le C_{M,l,g}r^\hf\left( \norm{\psi}{\Hu_d^1}(u,v)+ \sup_{I(v_0)}\abs{\grt_u\psi(u,v)}   \right),
	\end{equation}
	from which we deduce the result.
\end{proof}
\begin{coro}
	Using theorem \ref{thmru} we see this estimate is equivalent to
	\begin{equation}
	\abs{\f{r^\f{5}{2}}{-r_u}\grt_u\psi} \le C_{g,M,l,Y}\left( \norm{\psi}{\Hu^1}(u,v)+\sup_{I(v_0)}\left(\abs{\f{r^\f{5}{2}}{-r_u}\grt_u\psi} +\abs{r^{\f{3}{2}-\kappa}\psi} \right) \right),
	\end{equation}
	or alternatively
	\begin{equation}
	\abs{\f{r^{\f{5}{2}-\kappa}}{-r_u}\psi_u}\le C_{g,M,l,Y}\left( \norm{\psi}{\Hu^1}(u,v)+\sup_{I(v_0)}\left(\abs{\f{r^\f{5}{2}}{-r_u}\grt_u\psi} +\abs{r^{\f{3}{2}-\kappa}\psi} \right) \right).
	\end{equation}
\end{coro}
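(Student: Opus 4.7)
The plan is to treat the corollary as a routine bookkeeping consequence of Lemma \ref{direst}, with Theorem \ref{thmru} (and its sharper two-sided refinement in Lemma \ref{struest}) supplying the comparison $-r_u \sim r^2$ needed to interchange the weights $r^{\f{1}{2}}$ and $r^{\f{5}{2}}/(-r_u)$. Since the underlying lemma lives in $\{r \ge r_Y\} \cap \mathscr{R}$, the corollary inherits this region implicitly.

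For the first inequality I would use the lower bound $-r_u \ge C^{-1} r^2$ from Theorem \ref{thmru} to estimate
\[
\abs{\f{r^{\f{5}{2}}}{-r_u}\grt_u\psi} \le C \abs{r^\hf \grt_u\psi}
\]
and then apply Lemma \ref{direst} directly. The only delicate point is that Lemma \ref{direst} produces $\sup_{I(v_0)} \abs{r^\hf \grt_u\psi}$ on the right-hand side, whereas the statement requires $\sup_{I(v_0)} \abs{r^{\f{5}{2}}/(-r_u) \grt_u\psi}$. This is handled by the gauge choice on the initial data ray: since $\overline{\rt}_u = 1/(2l^2)$ is equivalent to $-r_u = r^2/(2l^2)$ along $I(v_0)$, the two initial suprema differ only by a factor of $2l^2$, and the estimate passes through.

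For the second inequality I would recall the definition of the twisted derivative with twisting function $r^g$, which gives the pointwise identity $\psi_u = \grt_u\psi + g \f{r_u}{r}\psi$. Multiplying by $r^{\f{5}{2}-\kappa}/(-r_u)$ and rearranging yields
\[
\f{r^{\f{5}{2}-\kappa}}{-r_u}\psi_u = \f{r^{\f{5}{2}-\kappa}}{-r_u}\grt_u\psi - g\, r^{\f{3}{2}-\kappa}\psi.
\]
The first term on the right is bounded in absolute value by $r_Y^{-\kappa} \abs{r^{\f{5}{2}}/(-r_u) \grt_u\psi}$ in the region $\{r \ge r_Y\}$, and so is already controlled by the first inequality. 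The second term is a pointwise quantity which, by the Sobolev estimate of Lemma \ref{SI} together with the orbital stability bound \eqref{C3EE}, is majorised by $\norm{\psi}{\Hu^1}(u,v)$ and the initial data suprema already appearing on the right-hand side. Collecting these two contributions delivers the second claim.

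I do not anticipate any serious obstacle: the genuine analytic content sits in Lemma \ref{direst} and in the comparison $-r_u \sim r^2$, while what remains is algebraic manipulation plus careful bookkeeping of constants so that they depend only on $g, M, l$ and $Y$.
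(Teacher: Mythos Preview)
Your proposal is correct and matches the paper's approach: the paper gives no proof beyond the phrase ``Using theorem \ref{thmru} we see this estimate is equivalent to\ldots'', and your argument spells out exactly the intended bookkeeping --- the two-sided comparison $-r_u\sim r^2$ from Lemma~\ref{struest} (together with the exact gauge $-r_u=r^2/(2l^2)$ on $I(v_0)$) converts between $r^{1/2}$ and $r^{5/2}/(-r_u)$, and the untwisting identity $\psi_u=\grt_u\psi+g\tfrac{r_u}{r}\psi$ yields the second form once $|r^{3/2-\kappa}\psi|$ is controlled via the already-established pointwise bound \eqref{C3EE}.
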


\subsection{Completeness of null infinity}
\begin{prop}
	Let $v_m = \sup_{v\ge v_0}\{v|(u_\mh,v)\in\mc{Q}\}$, then $v_m=\infty$.
\end{prop}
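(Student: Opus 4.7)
The strategy is proof by contradiction via the interior extension principle of \S\ref{EP}: assuming $v_m < \infty$, I will produce a punctured rectangular neighborhood $\mc{C}$ of $p = (u_\mh, v_m)$ satisfying the hypotheses of that principle, which forces $p \in \mc{Q}^+$ and contradicts the definition of $v_m$ as a supremum.

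The bootstrap estimates of Theorem \ref{basicests} together with the structural bounds of Corollary \ref{GloEst} give uniform control on $\psi, \varpi, r_u, r_v, \Omega^2$ throughout $\sr_\mh$, with $\varpi \le M$ from Corollary \ref{HMC1} and $r \ge r_{min} > 0$. The key remaining task is to bound $r$ from above on some rectangle $[u_\mh - \delta, u_\mh] \times [v_m - \epsilon, v_m]$. First I would argue that $r$ stays bounded along the horizon ray $N(u_\mh) \cap \{v_0 \le v \le v_m\}$: by continuity from $\sr_\mh$ one has $r_v \ge 0$ on $N(u_\mh)$, so $r$ is non-decreasing in $v$; and were $r \to \infty$ along $N(u_\mh)$ at some finite $v^* \le v_m$, then $N(u_\mh)$ itself would reach $\mi$, contradicting the supremum characterisation of $u_\mh$ (by continuity of $r$ and $r_v$ in $u$, the property that $N(u) \subset \sr$ with $\rt \to 0$ along $N(u)$ is stable under small perturbations of $u$). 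Hence $r \le R_0 < \infty$ on this ray, and by $r_u < 0$ together with continuity this upper bound extends to a suitable rectangle.

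It remains to check that the punctured rectangle $\mc{C} = [u_\mh - \delta, u_\mh] \times [v_m - \epsilon, v_m] \setminus \{p\}$ lies in $\mc{Q}^+$; the only obstruction would be $\mi$ passing through $\mc{C}$, i.e.\ $v_\infty(u) < v_m$ for some $u \in (u_\mh - \delta, u_\mh)$, but this would force $r \to \infty$ along $N(u)$ inside $\mc{C}$, violating the uniform bound $r \le R_0$. Thus $\mc{C} \subset \mc{Q}^+$ for $\delta$ sufficiently small, $\Omega^2 \lesssim R_0^2$ on $\mc{C}$ yields finite spacetime volume, and $r_{min} \le r \le R_0$ supplies the $r$-bound hypothesis, whence the extension principle delivers $p \in \mc{Q}^+$ and the contradiction. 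The main obstacle is the uniform upper bound on $r$ along $N(u_\mh) \cap \{v \le v_m\}$: the pointwise estimate $r_v \lesssim r^2$ from Corollary \ref{GloEst} alone permits finite-time blow-up, so the bound must be derived geometrically, using that $r \to \infty$ along a null ray is synonymous with reaching $\mi$, which cannot happen along $N(u_\mh)$ without contradicting the sup-definition of $u_\mh$.
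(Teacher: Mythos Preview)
Your strategy has a structural gap: the interior extension principle cannot be applied at $(u_\mh, v_m)$, because no punctured rectangle around this point lies inside $\mc{Q}^+$ with $r$ bounded above. The corner $(u_\mh, v_m)$ is precisely where $\mi$ accumulates. For every $u \in (u_0, u_\mh)$ the ray $N(u)$ reaches $\mi$ at some $v_\infty(u) < v_m$, and by \eqref{C3SF} together with $\rt_v|_\mi = -\tfrac{1}{2l^2} \ne 0$ one has $v_\infty(u)\nearrow v_m$ as $u\to u_\mh^-$. Hence any rectangle $[u_\mh-\delta,u_\mh]\times[v_m-\epsilon,v_m]$ necessarily contains, for each $u<u_\mh$, the segment $\{u\}\times[v_\infty(u),v_m]$, which is not in $\mc{Q}^+$; and on the part that \emph{is} in $\mc{Q}^+$, $r\to\infty$ as $v\to v_\infty(u)$. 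Your step ``by $r_u<0$ this upper bound extends to a suitable rectangle'' goes the wrong way: $r_u<0$ means $r$ is \emph{larger} for smaller $u$, so a bound on $N(u_\mh)$ gives no upper bound to its left.

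The related claim that $r$ must stay bounded along $N(u_\mh)$ is also incomplete. You argue that if $r\to\infty$ along $N(u_\mh)$ then ``by continuity in $u$'' the same holds for $u>u_\mh$, contradicting the supremum; but invoking continuity for $u>u_\mh$ presupposes the solution already exists there up to the relevant $v$, which is exactly what is at issue. The paper does not exclude this scenario a priori --- it is their first case, in which $\lim_{v\to v_m} r(u_\mh,v)=\infty$ --- and handles it with the \emph{extension principle near $\mi$} (Theorem~\ref{ENPI}), which is designed for regions where $r$ is unbounded, to produce the extension past $u_\mh$ and obtain the contradiction. The complementary case (some curve $r=R$ limits to $(u_\mh,v_m)$, so $r$ stays bounded on $N(u_\mh)$) is handled by a direct zig-zag argument showing each constant-$u$ leg has $v$-length uniformly bounded below, forcing $v_m=\infty$ without any extension principle. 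Your single-tool approach via the interior extension principle cannot cover either case.
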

\begin{proof}
	This result is an adaptation of the proof in \cite{holzegel_stability_2013} to this setting. Consider curves of constant $r$. In $\mathscr{R}_\mh$ we have that these are timelike and foliate $\mathscr{R}_\mh$. We now have two cases:
	\begin{itemize}
		\item None of the constant $r$ curves have a future limit $(u_\mh,v_m),$ (i.e. they all intersect the horizon).
		\item There is an $R$, such that $r=R$ has a future limit point $(u_\mh,v_m)$. (And hence also true for all $r=R'$ with $R'>R$).
	\end{itemize}
	We deal with the latter case first.\\
	Consider the infinite `zig-zag' curve as depicted below:
	\begin{center}
		\includegraphics[scale=1]{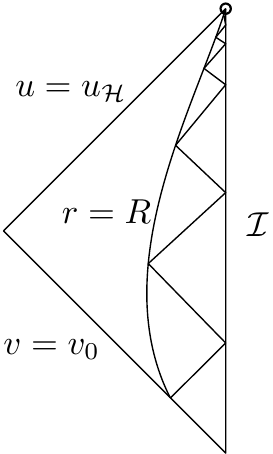}	
	\end{center}
	Now we see that the $v$-length of each constant $u$-piece $\mathcal{U}_i$ is uniformly bounded below. This is done by checking the bounds on $\chi$, and $\f{\mu_1}{r^2}\le \f{2}{l^2}$
	for large enough $R$. In this case we have
	\begin{equation}
	\f{l^2}{2}\chi\f{\mu_1}{r^2} \le \f{1}{2} < 1. 
	\end{equation}
	Recalling that $\chi\mu_1=r_v$, we derive
	\begin{equation}
	\int_{\mc{U}_i}dv \ge \f{l^2}{2}\int\chi\f{\mu_1}{r^2} dv= \f{l^2}{2}\int \f{r_v}{r^2}dv\ge \f{l^2}{2R}. 
	\end{equation}
	There are infinity many $\mc{U}_i$ in the zig-zag curve. (If there were a finite number then there must be some $N\in\mathbb{N}$ such that $\mc{U}_N$ is the ray $\gamma:(u_\mh,v), v\in (v_N,v_m)$. This ray is bounded to right of $r=R$, so we must have that $r=R$ has become null, a contradiction). It follows that $v_m=\infty$.\\ \\
	We now deal with the first case, here we must have that $\lim_{v\to v_m}r(u_\mh,v)=\infty$. We will assume that $v_m=V<\infty$, and contradict that $u=u_\mh$ is the last $u$-ray in which $r=\infty$ can be reached. First pick $r=R$ very large, in view of the bounds on $\varpi,\varpi_1,\chi$ we have that $\mu_1>c>0,$ and $\mu \ge e^{-1}c $ hold in $\overline{\mathscr{R}_\mh}$. This is trivial in $\overline{\mathscr{R}_\mh}\cap \{r\ge R\}$ by computation. For  $\overline{\mathscr{R}_\mh}\cap \{r\le R\}$ we have it is true by compactness, (since $r_v=0$ cannot hold, as this would contradict that $r\to\infty$ along any $u=const$ ray in $\overline{\mathscr{R}_\mh}$). Note that $\mu\ge e^{-1}\mu_1 = \f{r_v}{\eta}e^{-1}>ce^{-1}>0$, thus \eqref{cond2} holds. We now satisfy the conditions of the extension principle near infinity. We extend our spacetime to the depicted triangle
	\begin{center}
		\includegraphics[scale=0.7]{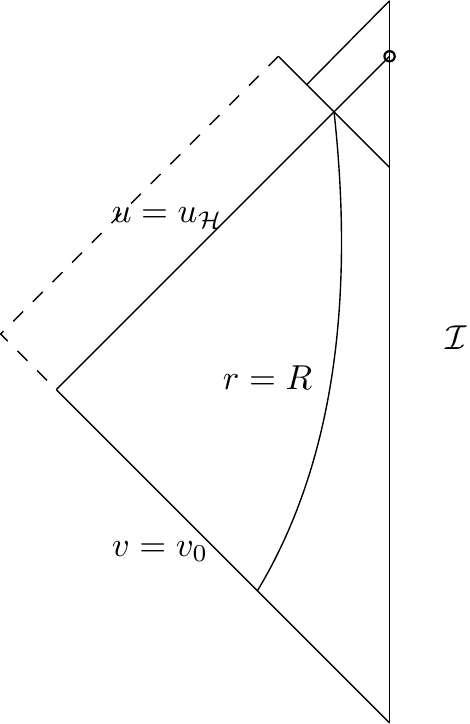}
	\end{center}
	This contradicts the assumption that $u_\mh$ is the last ray along which $\mi$ can be reached.	
\end{proof}
\section{Asymptotic stability}\label{AS}
We have now established that we have a complete black hole spacetime, which is asymptotically AdS. We seek now to prove that the $\Hu^1$ norm of the field $\psi$ is decaying exponentially in the $v$-coordinate to $0$. From here we can see that across any $u=const$ slice, that metric is decaying uniformly to a toroidal AdS Schwarzschild solution of mass $M$. This is in contrast to what was seen for  the linear problem in \cite{dunn_kleingordon_2016}. In that setting it was shown that polynomial decay of the field holds, but exponential decay does not.
The barrier to exponential decay was shown to be linked to null geodesics far away from the  horizon, possessing non-zero angular momentum, which take an arbitrary long time to fall into the black hole. For solutions to our toroidally symmetric problem we now show that this is no longer the case. The symmetry restrictions no longer allow us to construct null geodesics with this property, and the field decays exponentially.
We establish asymptotic stability through Morawetz estimates. The central result we aim to prove in this section is
\begin{equation}\label{asymstab}
\int_D \f{r^4}{-r_u}(\grt_u\psi)^2+(-r_u)\psi^2 +\f{-r_ur^2}{\Omega^2}(\grt_v\psi)^2 + \f{r_v}{r}\psi^2  dudv \le C_{l,g,M}\mathbb{F}(u,v).
\end{equation}
From this we can extract exponential decay of a key flux quantity, implying decay of the field. We follow the vector field method. However due to the complexities from the non-linearities and the ungeometrical nature of twisting, we prefer to work with the standard energy-momentum tensor rather than the twisted version of \cite{holzegel_boundedness_2014} . We use this to prove a global but low weighted integrated decay estimate. The low weights ensure that the technicalities of infinite fluxes are not present, and so twisting is not needed. We then optimise the weights by directly multiplying the Klein-Gordon equation in twisted form, as seen in the classical methods of Morawetz \cite{morawetz_decay_1961}. \\ 
We recall the definition of $\chi$ and introduce $\gamma$ as:
\begin{equation}
\chi := \f{\Omega^2}{-4r_u}, \quad \gamma := \f{\Omega^2}{4r_v}.
\end{equation}
In order to make the proof of this result more manageable we split it into three theorems
\begin{thm}[Low Weighted Degenerate Global Estimate]
	In $\mathscr{R}_\mh$, for $\kappa< \hf$, we have the following estimate 
	\begin{equation}
	\mathbb{I}_{deg}[\psi] := \int_D r^{-6}\left(\f{1}{\gamma^2}\psi_u^2+\f{1}{\chi^2}\psi_v^2\right) \f{\Omega^2r^2}{2}dudv+\int_D \left(\f{1-2 \kappa}{r}\right) \psi^2\f{\Omega^2}{2}r^2dudv \le  C_{l,g,M}\fl(u,v).
	\end{equation}
\end{thm}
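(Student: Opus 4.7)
My plan is to use the vector field method with the untwisted energy--momentum tensor $T_{\mu\nu}$ of the Klein-Gordon field, following the classical Morawetz approach. This choice matches the authors' stated preference and is natural here because the low weights in $\mathbb{I}_{deg}[\psi]$ ensure the relevant fluxes are finite. I would take a radial multiplier of the form $X = f(r)\mc{R}$, where $\mc{R}$ is the radial Kodama-orthogonal vector field of the excerpt and $f(r)$ is a weight, decaying as $r \to \infty$, chosen to reproduce the precise weights $r^{-6}/\gamma^2$ and $r^{-6}/\chi^2$ on the derivative bulk. Since $\psi$ satisfies its equation of motion, the current $J^{(X)}_\mu := T_{\mu\nu}X^\nu$ has divergence equal to the deformation contraction $\pi^{(X)\mu\nu}T_{\mu\nu}$, and the divergence theorem applied on $D$ converts this into a spacetime integral on one side and fluxes on $\pa D$ on the other.

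First I would compute the deformation tensor $\pi^{(X)}_{\mu\nu}$ in $(u,v,x,y)$ coordinates using the metric \eqref{met}, and identify the specific $f(r)$ whose bulk contraction $\pi^{(X)\mu\nu}T_{\mu\nu}$ produces the desired coercive quadratic form in $\psi_u, \psi_v$. The zeroth-order bulk term $\f{1-2\kappa}{r}\psi^2$ arises after augmenting this identity with an auxiliary multiplier of the form $h(r)\psi$ tested against the Klein-Gordon equation \eqref{EKG5} and integrated by parts. The algebraic combination is then simplified using the Einstein constraint \eqref{EKG3} to eliminate $r_{uv}$, together with the identity $2a = \kappa^2 - \f{9}{4}$ to collapse the Klein-Gordon potential contributions. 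The net Schwarzschild coefficient of $\psi^2/r$ works out to $1-2\kappa$, which is strictly positive precisely when $\kappa < \hf$, matching the hypothesis of the theorem.

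The boundary terms on $\pa D$ are then handled as follows. The contribution on $\mi$ vanishes under either the Dirichlet or the Neumann boundary condition, since the $r^{-6}$ weighting combined with $\psi \sim r^{-\f{3}{2}+\kappa}$ and the boundary condition on $\grt_\rho\psi$ leaves no surviving trace, even in the Neumann case. The fluxes along $\{v=v_0\}$, along the future slice, along the outgoing ray $\{u=u_\mh\}$, and along $\mi$ itself are each bounded by $\norm{\psi}{\Hu^1}(u,v)+\norm{\psi}{\Hu^1}(u_\mh,v_0) = \fl(u,v)$, using Corollary \ref{GloEst} to exchange $\pa_u\psi, \pa_v\psi$ for twisted derivatives and the uniform pointwise bounds on $\Omega^2, r_u, r_v$ from the orbital stability section. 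Cross-terms mixing $\psi$ and $\grt\psi$ are handled by Cauchy--Schwarz, absorbing one half of each into the principal bulk and the other into $\fl(u,v)$.

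The main obstacle will be twofold. First, the weight $f(r)$ must simultaneously generate the correct $r^{-6}$ derivative weight, give the zeroth-order coefficient the right sign (which is precisely what forces $\kappa<\hf$), and keep all flux boundary terms finite; this last requirement is especially delicate at $\mi$ for Neumann data, where the pointwise asymptotics of $\psi$ are weaker. Second, the nonlinear error terms arising from the departures of $\Omega^2, r_u, r_v, \varpi$ from their exact toroidal AdS--Schwarzschild values must be absorbed: by the orbital stability bounds $\abs{\varpi-M}\le Cb^2$ and $\abs{r^{\f{3}{2}-\kappa}\psi}<b^{\f{3}{2}}$ established in Theorem \ref{basicests}, these errors contribute at worst $Cb^2\,\mathbb{I}_{deg}[\psi]+C\fl(u,v)$, and hence can be absorbed into the principal term for $b$ sufficiently small.
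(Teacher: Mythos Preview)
Your overall strategy is essentially the paper's: apply the divergence theorem to the current $J^X_\mu = T_{\mu\nu}X^\nu$ with $X = F(r)\mc{R}$, control the fluxes by $\fl(u,v)$, and extract a coercive bulk. Two points deserve comment, one a genuine difference in method and one a gap.

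\textbf{Different handling of the zeroth-order term.} You propose to obtain the $\psi^2$ bulk by adding a Lagrangian correction $h(r)\psi$ tested against \eqref{EKG5}. The paper does \emph{not} do this. Instead, for $F(r) = -r^{-n}$ the raw bulk contains a bad-signed term $\frac{a}{l^2}r^{-n}\psi^2\bigl(\frac{\varpi_1}{r^2}+\frac{r}{l^2}-\frac{4\pi a r}{l^2}\psi^2\bigr)$, and the paper absorbs it by a Hardy-type inequality derived directly from the Hawking-mass transport equations (rewriting $\frac{\varpi_1}{r^2}+\frac{r}{l^2}-\frac{4\pi a r}{l^2}\psi^2$ in terms of $\partial_u\mu_1$, $\partial_v\mu_1$ and integrating by parts). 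This costs a controlled multiple of $\int r^{-n-1}(\mc{R}\psi)^2$, which is then subtracted from the good first-order bulk. Your Lagrangian-correction route is a legitimate alternative in principle, but you have not said what $h$ is or verified that the resulting coefficient is actually positive; in this problem the presence of $\varpi_1$ and the nonlinear $\psi^2$ terms in $r_{uv}$ make the Hardy route cleaner, because it exploits the exact algebraic structure of the Einstein equations rather than treating those terms as perturbations.

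\textbf{A missing structural step.} A single choice of $F$ does not suffice. The paper runs the argument \emph{twice}: first with $F=-r^{-2}$, which kills the $(\mc{T}\psi)^2$ coefficient (it is $n-2$) and yields control of $\int r^{-3}(\mc{R}\psi)^2$ together with the zeroth-order $L^2$ bound $\int r^{-1}\psi^2\,\Omega^2 r^2\,dudv\le C\fl$; then with $F=-r^{-5}$, which has a good $(\mc{T}\psi)^2$ coefficient but a wrong-signed $\psi^2$ bulk that is now absorbed using the $L^2$ bound just established. Only after combining the two does one arrive at the stated $r^{-6}$ weights on both $\gamma^{-2}\psi_u^2$ and $\chi^{-2}\psi_v^2$. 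Your proposal implies a single $f$ does the job, and it will not: for any $F=-r^{-n}$ the $(\mc{T}\psi)^2$ and $\psi^2$ bulk coefficients cannot both be made favourable at $n=2$. You should make this two-pass structure explicit.

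A minor point: the vanishing of the $\mi$ flux is not merely a matter of $r^{-6}$ decay. The paper computes that the $\mi$ integrand reduces exactly to $r^{2-n}\frac{\Omega^2}{-r_u}(\mc{R}\psi)^2$ and then uses $\mc{R}\psi = \tilde{\mc{R}}\psi - \frac{2g r_u r_v}{r\Omega^2}\psi$ together with the boundary condition on $\tilde{\mc{R}}\psi$ to conclude.
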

\begin{rem}
	While this estimate is low weighted it has the advantage of holding globally on the spacetime. It is insufficient to prove exponential decay of $\psi$, due the degeneration at the boundary of the regular region appearing in the $\gamma$ factor, and that the powers of $r$ are too low to control an integrated $\Hu^1$ norm. However it allow us to localise estimates to either a region near $\mi$, or to a region $\{r\le r_X\}$, where more specialised vector fields can be used. The proof of this theorem is inspired from $\S 5.3$ of \cite{holzegel_stability_2013} but has been expanded into more detail (in particular towards the boundary terms and generalised to cover more choices of multiplier).   	
\end{rem}
\begin{thm}[Red Shift estimate]\label{REDSHIFTMOR}
	In $\mathscr{R}_\mh$ for $\kappa < \hf$, we have the following estimate
	\begin{equation}
	\begin{split}
	&\int_{D}  \f{1}{r^{7}}\left( \f{r^4}{-r_u}(\grt_u\psi)^2+\f{-r_ur^2}{\Omega^2}(\grt_v\psi)^2\right)  d\bar{u}d\bar{v}\\+&\int_D \left(\left( 1-2 \kappa \right) r\right)(-r_u) \psi^2+ \left(\left( 1-2\kappa\right) r\right)r_v \psi^2d\bar{u}d\bar{v}  \le C_{l,g,M}\fl(u,v).
	\end{split}
	\end{equation}
\end{thm}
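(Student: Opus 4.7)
The plan is to split the integration domain into a far region $D \cap \{r \ge r_X\}$ and a near-horizon region $D \cap \{r \le r_X\}$, treating each by different techniques and then combining. In the far region, the global estimates of Corollary~\ref{SEst} yield $r_v, -r_u, \Omega^2 \sim r^2$, so a direct pointwise comparison shows that the integrand of the target inequality is dominated (up to multiplicative constants) by the integrand of $\mathbb{I}_{deg}[\psi]$ from the previous theorem; thus the contribution from the far region is controlled immediately by $\fl(u,v)$.

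The real work lies in the near-horizon region, where $r_v$ may vanish on the apparent horizon and the degenerate estimate does not suffice. I would apply a red-shift vector-field multiplier to the twisted Klein-Gordon equations \eqref{REKG4}--\eqref{REKG5}. Concretely, take a causal, future-directed vector field $N = f_1(r)\,\pa_u + f_2(r)\,\pa_v$ with $f_1, f_2 > 0$ chosen so that $N$ is uniformly timelike across the apparent horizon, and dot it against an appropriate combination of $r\grt_u\psi$ and $r\grt_v\psi$. Integrating by parts produces a divergence identity whose bulk positive part, for a suitable choice of $f_1,f_2$, is exactly the desired non-degenerate $r^{-7}$-weighted first-order expression of the target. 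The essential non-degeneracy ingredient is the uniform lower bound $-r_u \ge c_0 > 0$ from Theorem~\ref{thmru} together with $\Omega^2 \sim -r_u$ from Corollary~\ref{GloEst}; these provide a positive red-shift coefficient that persists as $r_v \to 0$.

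The coefficient $(1-2\kappa)$ on the zeroth-order bulk terms arises from the potential $V = \f{2g^2}{r^3}\varpi_1 + \f{8\pi a g}{l^2}\psi^2$ appearing on the right-hand side of \eqref{REKG4}. Writing $\varpi_1 = M + O(b^2 r^{2\kappa})$ via Lemma~\ref{OHME} and invoking the algebraic identity $-g^2 + 2a - 3g = 0$ that drove the cancellation in Lemma~\ref{HME}, a Hardy-type integration by parts against the quadratic derivative terms isolates the factor $(1 - 2\kappa)$ on the $\psi^2$ piece; this is strictly positive precisely because $\kappa < \f{1}{2}$, as hypothesised. Boundary contributions on $\{v = v_0\}$ and along $u = u_\mi$ are absorbed into $\fl(u,v)$ by its very definition, while the interface terms at $\{r = r_X\}$ are handled by a standard cutoff matching and already-controlled quantities. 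The nonlinear $O(b^2)$ and $O(\psi^4)$ errors from $\varpi_1 - M$ and from the quadratic piece of $V$ are absorbed into the main positive terms using the orbital-stability bound \eqref{OSE} for $b$ sufficiently small.

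The main obstacle is the algebraic bookkeeping needed to confirm that the bulk integrand of the near-horizon divergence identity has the correct positivity structure after all integrations by parts. This requires choosing $f_1, f_2$ so that the mixed cross terms $(\grt_u\psi)(\grt_v\psi)$ can be absorbed by Cauchy--Schwarz into the two pure squares with the weights dictated by the target, while simultaneously ensuring that the identity $-g^2 + 2a - 3g = 0$ produces the expected cancellation of what would otherwise be divergent error terms in the $r$-weights, exactly as it did in Lemma~\ref{HME}.
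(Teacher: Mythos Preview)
Your high-level strategy---split into a far region handled by the degenerate estimate and a near-horizon region handled by a redshift multiplier---matches the paper's. However, several of your mechanism claims diverge from the actual proof, and one of them is a genuine misconception.

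First, the paper does \emph{not} work with the twisted equations \eqref{REKG4}--\eqref{REKG5} for the redshift step. It uses the standard (untwisted) energy-momentum tensor and the vector field $Y = (-r_u)^{-1}\eta(r)\,\pa_u$, with only a $\pa_u$ component and no $\pa_v$ component. The redshift bulk term $K^Y$ produces a positive $\frac{r}{r_u^2}\psi_u^2$ piece precisely because $\pa_v(-1/r_u)$ has a good sign, driven by the Raychaudhuri equation and the bound $\frac{\varpi_1}{r^2}+\frac{2r}{l^2}>0$. The cross term $\psi_u\psi_v$ and the zeroth-order term $\frac{a}{l^2}\psi^2$ that appear are \emph{absorbed using the already-established degenerate estimate} $\mathbb{I}_{deg}$, not by a new Hardy argument. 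After obtaining non-degenerate control of $\psi_u^2/r_u^2$ in untwisted form, the paper converts to twisted derivatives by the elementary pointwise inequality $r^4\psi_u^2/r_u^2 \ge C\bigl(\frac{r}{2r_u^2}(\grt_u\psi)^2 - \frac{g^2}{r}\psi^2\bigr)$ and again uses $\mathbb{I}_{deg}$ to absorb the loss.

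Second, and more importantly: your account of where the coefficient $(1-2\kappa)$ comes from is incorrect. It is \emph{not} produced in the redshift argument, nor does the algebraic identity $-g^2+2a-3g=0$ play any role here. That identity governs cancellations near $\mi$ in the Hawking-mass computation (Lemma~\ref{HME}) and is irrelevant for the near-horizon redshift. The zeroth-order term with coefficient $(1-|a|)$ (equivalently $(1-2\kappa)$ up to the precise constant, both positive exactly when $\kappa<\tfrac12$) is simply \emph{inherited} from the previous theorem's estimate $\mathbb{I}_{deg}$; the redshift step only repairs the degeneration in the first-order $\grt_u\psi$ term. Your proposed Hardy-type derivation of this coefficient from the potential $V$ would be new work with no guarantee of the required sign.

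There is also a subtlety you have not addressed: the flux term $\int_{v_0}^v \chi\psi^2 r^2\eta\,d\bar v$ arising from $J^Y(\pa_v)$ is not obviously controlled by $\fl$, since $r$ is bounded but this is a $v$-flux with no decay weight. The paper handles it by rewriting it as a $D$-integral via $\int_{v_0}^u d\bar u\,\pa_u(\cdots)$, passing the derivative through, and then absorbing an $\epsilon$-portion into the good bulk term; this step is essential and does not fit your sketch.
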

\begin{rem}
	With the global estimate proven we use smoothed cut-off functions in order to remove the degeneration coming from the $\gamma$ term. This is done using a redshift vector field localised to the region $\{r\le r_X\}$. We may also convert back to using twisted derivatives. This estimate implies local energy decay, that is in any compact region we have the field is decaying exponentially but it is insufficient for a global decay statement.
\end{rem}
\begin{thm}[Morawetz Estimate]
	In $\mathscr{R}_\mh$, for $\kappa< \hf$, we have the following estimate
	\begin{equation}
	\int_D \f{r^4}{-r_u}(\grt_u\psi)^2+(-r_u)\psi^2 +\f{-r_ur^2}{\Omega^2}(\grt_v\psi)^2 + \f{r_v}{r}\psi^2  dudv \le C_{l,g,M}\mathbb{F}(u,v).
	\end{equation}
\end{thm}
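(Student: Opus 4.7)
The plan is to upgrade the redshift estimate of Theorem \ref{REDSHIFTMOR}, which controls the same bulk integrand but with a weight $r^{-7}$ too weak for the stated target, by adding a higher-weight estimate localised to the asymptotic region $\{r \ge r_X\}$. In the region $\{r \le r_X\}$, the metric-function bounds of Corollary \ref{GloEst} show that $\f{r^4}{-r_u}$, $(-r_u)$, $\f{-r_ur^2}{\Omega^2}$, and $\f{r_v}{r}$ are all comparable up to constants depending on $r_X$ to the corresponding weights appearing in Theorem \ref{REDSHIFTMOR}, so the redshift estimate already controls the contribution from this compact region by $C_{l,g,M}\fl(u,v)$. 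Thus it remains to establish the bulk bound in $\{r\ge r_X\}$.

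In the asymptotic region I would apply a classical Morawetz-type multiplier directly to the twisted Klein--Gordon equations \eqref{REKG4} and \eqref{REKG5}. Concretely, multiplying \eqref{REKG4} by $\f{r^2}{-r_u}\grt_u\psi$ and \eqref{REKG5} by $\f{r^2}{-r_v/(-r_u\Omega^{-2})}\grt_v\psi$ (or a symmetric variant) and adding gives, after using $\pa_v r = r_v$ and $\pa_u r = r_u$ to identify total derivatives, an identity of the schematic form
\begin{equation*}
\pa_u \bigl(w_1(r)(\grt_v\psi)^2\bigr) + \pa_v\bigl(w_2(r)(\grt_u\psi)^2\bigr) = \mc{K}_1[\psi] + \mc{K}_2[V\psi,\grt\psi],
\end{equation*}
where $\mc{K}_1$ is a quadratic form in $(\grt_u\psi,\grt_v\psi)$ with the desired bulk weights, and $\mc{K}_2$ collects the zeroth-order terms involving the potential $V$ and the cross term $(\kappa-\hf)r_u\grt_v\psi$. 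Integrating over $\{r \ge r_X\}\cap D$ and applying the divergence theorem produces the target bulk on the left, with boundary terms on $\{r = r_X\}$, on $\{v=v_0\}$, on $N(u_\mh)$, and at $\mi$.

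The cross term $(\kappa-\hf)r_u\grt_v\psi \cdot \f{r^2}{-r_u}\grt_u\psi$ is borderline; it is absorbed by Cauchy--Schwarz, splitting between the $(\grt_u\psi)^2$ and $(\grt_v\psi)^2$ bulk terms which both appear with the right weights. The zeroth-order contribution is handled by integrating by parts once more, using the asymptotic expansion \eqref{PotEst}, $V = \f{2g^2 M}{r^3} + O(r^{-3+2\kappa})$, to produce a $\psi^2$ bulk term whose sign is favourable precisely when $\kappa < \hf$ (the coefficient $1 - 2\kappa$ in the target reflects this). The remaining lower-order errors are controlled by the redshift estimate applied to the low-weight $\Hu^1$-type integrand. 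The boundary contribution on $\{r = r_X\}$ is a flux at bounded $r$ dominated by the redshift estimate; the contributions on $\{v = v_0\}$ and $N(u_\mh)$ are precisely $\mathbb{F}(u,v)$ up to constants.

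The main obstacle is the boundary term at $\mi$. For multipliers of order $r^2$, the naive flux at infinity is of the form $r^2 (\grt_u\psi)^2$ which, although finite thanks to the $\Hu^1$-regularity (as $\tilde{\mc{R}}\psi$ decays appropriately under either Dirichlet or Neumann boundary conditions), must be shown to vanish after the identification $T = \pa_u + \pa_v$ on $\mi$ cancels the contributions from the two multipliers. This uses that $\chi|_\mi = \hf$, that $\rt_u + \rt_v = 0$ on $\mi$, and the weak formulation of the boundary conditions recorded in Section \ref{GU}; a density argument as in the remark following Theorem \ref{WP} promotes the cancellation to the $\Hu^1$ level of regularity.
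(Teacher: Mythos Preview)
Your overall strategy matches the paper's: localise a high-weight twisted Morawetz multiplier to the asymptotic region and combine with the already established low-weight estimate (Theorem \ref{REDSHIFTMOR}) to cover the compact region. The paper uses exactly the identity you describe, derived from \eqref{REKG4}--\eqref{REKG5}, with multipliers $f=-\tfrac{r_v}{\Omega^2}r\eta$, $h=-\tfrac{r_u}{\Omega^2}r\eta$, and shows the $\mi$-flux reduces to $\int_\mi \tfrac{\Omega^2}{-r_u}r(\tilde{\mc{R}}\psi)^2$, which vanishes under either boundary condition.

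There is one concrete issue. You propose a hard restriction to $\{r\ge r_X\}$ and claim the resulting flux on the timelike curve $\{r=r_X\}$ is ``dominated by the redshift estimate''. But Theorem \ref{REDSHIFTMOR} is a spacetime (bulk) integral bound; it does not directly control a flux along a timelike hypersurface. The paper avoids this by using a \emph{smooth} cutoff $\eta(r)$ vanishing for $r\le r_Z$ and equal to $1$ for $r\ge r_M$: the would-be boundary contribution then becomes a bulk term carrying a factor $\eta'(r)$, supported in the compact-$r$ strip $\{r_Z\le r\le r_M\}$, where it is trivially controlled by the low-weight integrated estimate. This is the standard fix and your argument goes through once you make this change.

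A smaller point: the coefficient $1-2\kappa$ you attribute to an integration-by-parts in the Morawetz step actually enters earlier, in Theorem \ref{REDSHIFTMOR}; in the paper's Morawetz argument the zeroth-order $V\psi$ term is simply split by Young's inequality into an $r^2\psi^2$ error (absorbed by the already proven $\int r(-r_u)\psi^2$ bound) and lower-order derivative errors absorbed for $r_M$ large. Your proposed Hardy-type treatment of the potential term would also work, but is not what the paper does.
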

\begin{rem}
	To show this, we localise a vector field to a region near $\mi$ where the spacetime is behaving like AdS. In this region the estimates of corollary \ref{SEst} hold. With these estimates we can sharpen the $r$ weights of theorem \ref{REDSHIFTMOR} to a Morawetz estimate. Exponential decay of the fields follows from this estimate. 
\end{rem}
\subsection{Useful estimates and identities}
\begin{lem}[Hardy estimate in $v$]\label{HestV}
	In $\{r\ge r_Y\}\cap \mathscr{R}_\mh$ the following estimate holds 
	\begin{equation}
	\int_{v_0}^v \psi^2 r_vdv \le C_{Y,l,g}\fl(u,v).
	\end{equation}	
\end{lem}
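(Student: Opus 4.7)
The plan is a weighted integration by parts in $v$, exploiting the identity $r_v\,dv = dr$ and the twisted derivative, then controlling the boundary term pointwise and the cross term by Cauchy--Schwarz. Concretely, one writes
\begin{equation*}
\int_{v_0}^v \psi^2 r_v \, d\bar v = \bigl[\psi^2 r\bigr]_{v_0}^v - 2\int_{v_0}^v r\,\psi\, \psi_v \, d\bar v
\end{equation*}
and substitutes $\psi_v = \grt_v\psi + (g\,r_v/r)\,\psi$. The $g$-term is proportional to the left-hand integral, so reabsorbing it gives
\begin{equation*}
(1+2g)\int_{v_0}^v \psi^2 r_v\, d\bar v = \bigl[\psi^2 r\bigr]_{v_0}^v - 2\int_{v_0}^v r\,\psi\,\grt_v\psi\, d\bar v.
\end{equation*}
Since $1+2g = 2\kappa-2$, which is bounded away from zero for $\kappa\in(0,\tfrac12]$, it is enough to estimate the right-hand side.

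For the boundary term, I would write $\psi^2 r = r^{-2+2\kappa}\bigl(r^{3/2-\kappa}\psi\bigr)^2$. In $\{r\ge r_Y\}$ the factor $r^{-2+2\kappa}$ is bounded above by $r_Y^{-2+2\kappa}$ (since $-2+2\kappa<0$), while the Sobolev-type estimate of Lemma \ref{SI} (equivalently Corollary \ref{RSO}, once $\Hu^1_d$ is related to $\Hu^1$ via Lemmas \ref{NERY} and \ref{HI}) gives $\bigl(r^{3/2-\kappa}\psi\bigr)^2(u,v') \lesssim \fl(u,v)$ at both endpoints $v'=v$ and $v'=v_0$. At $v'=v_0$ this is essentially the initial flux along $\{v=v_0\}$; at $v'=v$ the residual $\|\psi\|_{\Lu^2}(u_\mh,v)$ appearing in Lemma \ref{SI} is reabsorbed using the uniform bounds coming from the orbital stability estimate (Theorem \ref{basicests}).

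For the cross term, apply Cauchy--Schwarz with flux-compatible weights:
\begin{equation*}
\Bigl|\int_{v_0}^v r\,\psi\,\grt_v\psi\, d\bar v\Bigr| \le \Bigl(\int_{v_0}^v \tfrac{r_v}{r}\psi^2\, d\bar v\Bigr)^{\!1/2} \Bigl(\int_{v_0}^v \tfrac{r^3}{r_v}(\grt_v\psi)^2\, d\bar v\Bigr)^{\!1/2}.
\end{equation*}
The first factor is already part of $\|\psi\|^2_{\Hu^1}(u,v)\le \fl(u,v)$. For the second, Corollary \ref{SEst} yields $r_v \gtrsim r^2$ and $\Omega^2\sim -r_u$ throughout $\{r\ge r_Y\}$, so $\tfrac{r^3}{r_v}\le C\,r_Y^{-1}\,\tfrac{-r_u r^2}{\Omega^2}$, reducing the second factor to the $\grt_v$-part of $\|\psi\|^2_{\Hu^1}(u,v)$. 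Combining with the boundary-term estimate gives the claim.

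The only place requiring care is the boundary estimate at $v'=v$: Lemma \ref{SI} produces an $\Lu^2$ contribution integrated along $u=u_\mh$, which is not manifestly part of $\fl(u,v)$. The bookkeeping is handled by invoking the orbital stability estimate of Theorem \ref{basicests} to propagate the flux control from the ray at $u$ to the ray at $u_\mh$, at the cost only of a constant depending on $l,g,M$. Once this is done the rest is a direct comparison of weights in $\{r\ge r_Y\}$, and the constant $C_{Y,l,g}$ is accumulated from the three factors $r_Y^{-2+2\kappa}$ (boundary term), $r_Y^{-1}$ (weight ratio in Cauchy--Schwarz), and the structural constants of Corollary \ref{SEst} and Lemma \ref{SI}.
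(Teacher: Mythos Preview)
Your argument is correct and follows essentially the same integration-by-parts route as the paper, but you make life harder than necessary at one point. After your rearrangement
\[
(2\kappa-2)\int_{v_0}^v \psi^2 r_v\,d\bar v \;=\; \bigl[\psi^2 r\bigr]_{v_0}^{v} - 2\int_{v_0}^v r\,\psi\,\grt_v\psi\,d\bar v,
\]
note that $2\kappa-2<0$. Multiplying by $-1$ places $+r\psi^2(u,v)$ on the left, where it is non-negative and can simply be dropped. The paper does exactly this: only the boundary contribution at $v_0$ survives, and it is controlled directly by the initial flux. Your detour through Lemma~\ref{SI} at the endpoint $v'=v$, with the attendant worry about the $\Lu^2(u_\mh,v)$ term and the appeal to Theorem~\ref{basicests}, is therefore avoidable.

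For the cross term the two proofs diverge slightly in flavour: the paper applies Young's inequality with a small parameter, producing $\epsilon\int\psi^2 r_v\,d\bar v$ which is reabsorbed into the left-hand side, whereas you split via Cauchy--Schwarz into $\bigl(\int (r_v/r)\psi^2\bigr)^{1/2}\bigl(\int (r^3/r_v)(\grt_v\psi)^2\bigr)^{1/2}$ and bound each factor directly by flux using Corollary~\ref{SEst}. Both are valid; yours avoids the reabsorption step, while the paper's avoids comparing $r^3/r_v$ to the flux weight. The net effect is the same estimate with the same dependence of the constant on $Y,l,g$.
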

\begin{proof}
	\begin{equation}
	\begin{split}
	\int_{v_0}^v \psi^2 r_vdv & = \int_{v_0}^v\left(\psi r^{\f{3}{2}-\kappa} \right)^2\pa_v\left(\f{r^{-2+2\kappa}}{-2+2\kappa} \right)dv\\
	&= \left[ r\psi^2\f{1}{-2+2\kappa}\right]^v_{v_0} + \f{1}{2-2\kappa}\int_{v_0}^vr\psi\grt_v\psi dv \\
	&\le  C_g\norm{\psi}{\Hu^1}(u,v_0)+ \int_{v_0}^v 2 \left( \sqrt{r_v}\psi\right) \left( \f{r}{\sqrt{r_v}\left(2-2\kappa \right) }\grt_v\psi\right)  dv \\
	&\le   C_g\norm{\psi}{\Hu^1}(u,v_0) + \epsilon\int_{v_0}^v \psi^2 r_vdv +\f{1}{\epsilon\left(4-4\kappa \right)^2 } \int_{v_0}^v \f{r^2}{r_v}\left( \grt_v\psi\right) ^2dv,
	\end{split}
	\end{equation}	
	choosing $\epsilon<1$ and using corollary \ref{SEst} completes the proof.
\end{proof}~\\
\subsubsection{Vector field identities}~\\ \\
Let $X$ be a vector field of the following form: $X^u(u,v)\pa_u+X^v(u,v)\pa_v$.\\ We define the deformation tensor as
\begin{equation}
2{}^{X}\pi^{\alpha\beta}=\gr^\alpha X^\beta + \gr^\beta X^\alpha=g^{\alpha\gamma}\pa_\gamma X^\beta +g^{\beta\delta}\pa_\delta X^\alpha + g^{\alpha\gamma}g^{\beta\delta}g_{\gamma\delta,\mu}X^\mu.
\end{equation} 
We will usually suppress the $X$ in the notation for convenience. We compute the non zero components
\begin{equation}
\pi^{uu} = -\f{2}{\Omega^2}\pa_vX^u, \quad \pi^{vv} = -\f{2}{\Omega^2}\pa_uX^v,
\end{equation}
\begin{equation}
\pi^{uv}= -\f{1}{\Omega^2}\left(\pa_vX^v+\pa_uX^u \right) -\f{2}{\Omega^2}\left(\f{\Omega_u}{\Omega}X^u+\f{\Omega_v}{\Omega}X^v \right), 
\end{equation}
\begin{equation}
\pi^{xx} = \f{1}{r^3}\left(r_uX^u+r_vX^v \right),\quad \pi^{yy} = \f{1}{r^3}\left(r_uX^u+r_vX^v \right) .
\end{equation}
Defining the energy momentum tensor
\begin{equation}
\emt_{\mu\nu}[\psi]= \gr_u\psi\gr_\nu - \hf g_{\mu\nu}\gr_\sigma\psi\gr^\sigma\psi - \f{a}{l^2}\psi^2g_{\mu\nu},
\end{equation}
The non zero components are
\begin{equation}
\emt_{uu} = (\gr_u\psi)^2,\quad \emt_{vv} = (\gr_v\psi)^2,\quad\emt_{uv}=\f{a\Omega^2}{2l^2}\psi^2,
\end{equation}
\begin{equation}
\emt_{xx}=\emt_{yy} =  \f{2}{\Omega^2}r^2\gr_u\psi\gr_v\psi-\f{a}{l^2}r^2\psi^2.
\end{equation}
Computing the divergence of $\mathbb{T}$, we get the usual relation
\begin{equation}
\gr^\mu\emt_{\mu\nu}[\psi]= (\gr_\nu\psi)\left( \Box_g\psi -\f{2a}{l^2}\psi\right) =0.
\end{equation}
Now defining the energy current
\begin{equation}
J_\mu^{X}[\psi] = \emt_{\mu\nu}^XX^\nu,
\end{equation}
and the associated bulk term
\begin{equation}
\begin{split}
K^{X}[\psi]=&\gr^\mu J_\mu^{X}
= \emt_{\mu\nu}\pi^{\mu\nu}. 
\end{split}
\end{equation}
We can expand $K^X$ as
\begin{equation}
\begin{split}
K^{X}[\psi]=& -\f{2}{\Omega^2}(\pa_vX^u)(\pa_u\psi)^2 - \f{2}{\Omega^2}(\pa_uX^v)(\pa_v\psi)^2 \\& +(\pa_u\psi)(\pa_v\psi)\left(\f{4r_u}{\Omega^2 r}X^u+\f{4r_v}{\Omega^2r}X^v\right)\\& -\f{a}{l^2}\psi^2\left( \pa_uX^u+\left(2\f{r_u}{r}+2\f{\Omega_u}{\Omega} \right)X^u+\pa_vX^v+\left(2\f{r_v}{r}+2\f{\Omega_v}{\Omega} \right)X^v  \right). 
\end{split}
\end{equation}
Motivated heavily by the linear theory in \cite{dunn_kleingordon_2016} ($X = F(r)\mc{R}$), we consider a vector field of the form 
\begin{equation}
X = -\f{r_v}{\Omega^2}\cdot F(r)\pa_u +-\f{r_u}{\Omega^2}\cdot F(r)\pa_v. 
\end{equation}
Where $F$ is bounded and sufficiently smooth. We thus compute
\begin{equation}
\pa_vX^u = 4\pi r\f{(\pa_v\psi)^2}{\Omega^2}F-\f{r^2_v}{\Omega^2}F',
\end{equation}
\begin{equation}
\pa_uX^v = 4\pi r\f{(\pa_u\psi)^2}{\Omega^2}F-\f{r^2_u}{\Omega^2}F',
\end{equation}
\begin{equation}
\pa_uX^u +2\f{\Omega_u}{\Omega}X^u +\pa_vX^v +2\f{\Omega_v}{\Omega}X^v = -2\f{r_ur_v}{\Omega^2}F' -2\f{r_{uv}}{\Omega^2}F.
\end{equation}
Then we express $K^X$ as
\begin{equation}
K^{X} = K^{X}_{main}+K^{X}_{error},
\end{equation}
where the terms are defined by
\begin{equation}
\begin{split}
K^{X}_{main} &= 2F'\left( \f{r_v}{\Omega^2}\psi_u+\f{r_u}{\Omega^2}\psi_v\right)^2\\
&+\psi_u\psi_v\left(-\f{4r_ur_v}{\Omega^4}\left(F'+\f{2}{r}F \right) \right)\\
&-\f{a}{l^2}\psi^2\left(-2f-2\f{r_vr_u}{\Omega^2}\left(F'+\f{2}{r}F \right) -\f{2r_{uv}}{\Omega^2}F   \right), \\
K^{X}_{error}  &= \f{-16}{\Omega^4}\pi r \psi_u^2\psi_v^2F.
\end{split}
\end{equation}\\
\subsection{Low weighted global energy estimate}~\\ \\
We now prove the low weighted degenerate global Morawetz estimate. To do this we will study multipliers of the form 
\begin{equation}
F(r) = -r^{-n}.
\end{equation}
We need several lemmas to construct the estimate.
\begin{lem}
	In $\mathscr{R}_\mh$ for $\kappa <\f{1}{2}$ and $n\geq 2$
	\begin{equation}
	\int_{D(u,v)\times \mathbb{T}^2} \gr^\mu J_\mu^{X}  \le C_{l,g,M}\fl(u,v).
	\end{equation}
\end{lem}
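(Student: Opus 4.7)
The plan is to apply the divergence theorem to the current $J^X$ on the four-dimensional region $D(u,v)\times \mathbb{T}^2$. Since $X$ and $J^X$ are $\mathbb{T}^2$-invariant the torus integration is trivial, and one obtains
\[ \int_{D\times\mathbb{T}^2}\gr^\mu J^X_\mu\, dV_4 = -\sum_i \int_{\mc{S}_i} J^X_\mu n_i^\mu\, dS_i, \]
where the $\mc{S}_i$ are the four boundary components of $D$ in the quotient: the initial data ray $\{v=v_0\}$, the two future null segments at the current values of $u$ and $v$, and null infinity $\mi$. The lemma then reduces to bounding each boundary flux by $\fl(u,v)$.

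Next I would compute the flux through each of the characteristic segments explicitly. With $X=-r^{-n}\mc{R}$ and accounting for the $r^2$ from the torus volume form, direct calculation gives, on a $v$-constant ray,
\[ |J^X_\mu n^\mu|\, r^2 \le C\, r^{2-n}\left(\f{r_v(\pa_u\psi)^2}{\Omega^2} + \f{(-r_u)}{r}\psi^2\right), \]
and an analogous expression on $u$-constant rays. Since $n\ge 2$ the $r$-weight is no worse than those in the $\Hu^1$ norm, and after converting $\pa_u\psi$ to $\grt_u\psi$ (which produces a lower-order $r^{-2}\psi^2$ correction controlled by Cauchy--Schwarz using the global estimates of Corollary \ref{GloEst}), each of these three fluxes is dominated by $\fl(u,v)$.

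The real difficulty is the flux through $\mi$. Using the asymptotic behaviour $\Omega^2\sim r^2$, the bootstrap bounds $|r^{\f{3}{2}-\kappa}\psi|\le Cb$ and $\left|\f{r^{5/2}}{-r_u}\grt_u\psi\right|\le C$ from Theorem \ref{basicests}, and the fact that $r\to\infty$ there, each contribution to the $\mi$-flux carries an overall factor of $r^{-n+\alpha}$ with $\alpha$ determined by the weights above; for $n\ge 2$ and $\kappa<\hf$ these all have the correct sign to yield either vanishing or integrable boundary limits. The zeroth-order term $\f{a}{l^2}\psi^2$ in $J^X$ is the one whose weight is tightest; the assumption $\kappa<\hf$ is what ensures its trace at $\mi$ decays strictly to zero rather than merely remaining bounded.

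The principal obstacle is making this $\mi$-boundary argument rigorous at the $\Hu^1$ level of regularity: since $\psi$ is only a weak solution, its pointwise trace on $\mi$ is not directly defined. I would deal with this in the standard way indicated by the remarks following Theorem \ref{WP}, namely by first establishing the flux bound at a higher level of regularity where the boundary condition (Dirichlet or Neumann) holds classically and all boundary integrals converge absolutely, and then recovering the estimate for $\mathfrak{W}$-solutions by the density/continuity argument sketched in the wellposedness section. Once all four boundary contributions are controlled by $\fl(u,v)$ with a constant depending only on $l$, $g$ and $M$, summing them yields the claimed inequality.
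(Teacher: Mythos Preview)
Your overall strategy matches the paper's: divergence theorem, then bound each of the four boundary fluxes. The treatment of $\mi$ in the paper is sharper than yours---rather than estimating by decay rates, the paper computes the $\mi$-flux exactly as $r^{2-n}\frac{\Omega^2}{-r_u}(\mc{R}\psi)^2$ and then uses $\mc{R}\psi=\tilde{\mc{R}}\psi-\frac{2gr_ur_v}{r\Omega^2}\psi$ together with the boundary condition to see it vanishes---but your heuristic is not wrong.

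There is, however, a genuine gap in your handling of the $u$-constant flux. Your claimed bound $r^{2-n}\bigl(\frac{-r_u}{\Omega^2}(\pa_v\psi)^2+\frac{r_v}{r}\psi^2\bigr)$ has the wrong weight on the zeroth-order term: the actual contribution is $\frac{|a|}{2l^2}r^{2-n}r_v\psi^2$, with no extra $\frac{1}{r}$. For $n=2$ this is $\int_{v_0}^v r_v\psi^2\,dv$, and the $\Hu^1$ norm only carries $\int\frac{r_v}{r}\psi^2\,dv$, so neither Cauchy--Schwarz nor the global bounds of Corollary~\ref{GloEst} close this directly. The paper deals with it by invoking the Hardy inequality in $v$ (Lemma~\ref{HestV}), which gives $\int_{v_0}^v r_v\psi^2\,dv\le C\,\fl(u,v)$ in $\{r\ge r_Y\}$; the same issue arises when you convert $(\pa_v\psi)^2$ to $(\grt_v\psi)^2$, since the cross term again produces a $r_v\psi^2$ contribution. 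Once you insert this Hardy step, your argument goes through.
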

\begin{proof}
	Studying the surface terms
	\begin{equation}
	\begin{split}
	\int_{D(u,v)\times \mathfrak{T}^2} \gr^\mu J_\mu^{X} &=  \int_{v_0}^v \left( \emt_{vv}V^v+\emt_{uv}X^u\right) r^2(u,\overline{v})d\overline{v} + \int_{u_\mi}^u\left(\emt_{uu}X^u+\emt_{uv}X^v \right)r^2(\overline{u},v)d\overline{u} 
	\\&-\int_{u_0}^u\left(\emt_{uu}X^u+\emt_{uv}X^v \right)r^2(\overline{u},v_0)d\overline{u}- \int_{\mi}\emt_{\mu\nu}X^{\mu}\hat{n}^\nu d\sigma_{\mi}.
	\end{split}
	\end{equation}~\\
	\textit{$\mi$ surface}~\\ \\
	The metric restricted to constant $\rt$ surfaces is given by
	\begin{equation}
	h=-\Omega^2\f{r_v}{-r_u}dv^2 + r^2\left(dx^2+dy^2 \right). 
	\end{equation}
	The surface form is given by
	\begin{equation}
	d\sigma_{\rt = const}=\sqrt{\abs{\Omega^2\f{r_v}{-r_u}r^4}}dvdxdy,
	\end{equation}
	and the unit normal 
	\begin{equation}
	\hat{n} = \sqrt{\f{-r_v}{\Omega^2r_u}}\pa_u -\sqrt{\f{-r_u}{\Omega^2r_v}}\pa_v.
	\end{equation}
	As this surface is timelike we consider the inward pointing unit vector. We compute
	\begin{equation}
	\hat{n}\sqrt{\abs{h}} = r^2\f{r_v}{-r_u}\pa_u-r^2\pa_v.
	\end{equation}
	Exploring the flux terms we see
	\begin{equation}
	\begin{split}
	\emt_{\mu\nu}X^{\mu}\hat{n}^\nu d\sigma_{\mi}=&\left( \emt_{uu}X^u+\emt_{uv}X^v\right)\f{r_v}{-r_u}r^2 -\left( \emt_{vv}X^v+\emt_{uv}X^u\right)r^2\\=& \left((\gr_u\psi)^2\f{r_v}{\Omega^2}r^{-n} +\f{a\Omega^2}{2l^2}\psi^2\f{r_u}{\Omega^2}r^{-n} \right)\f{r_v}{-r_u}r^2 - \left((\gr_v\psi)^2\f{r_u}{\Omega^2}r^{-n} +\f{a\Omega^2}{2l^2}\psi^2\f{r_v}{\Omega^2}r^{-n} \right)r^2\\
	=&r^{2-n}\f{\Omega^2}{-r_u}\left(\left(\mc{R}\psi \right)^2 - \left(\f{r_u+r_v}{\Omega^2}\right)\gr_u\psi\gr_v\psi+\f{a}{l^2}\f{r_ur_v}{\Omega^2}\psi^2   \right)\\
	=& r^{2-n}\f{\Omega^2}{-r_u}\left(\mc{R}\psi \right)^2  ,
	\end{split}
	\end{equation}
	Now using the relationship 
	\begin{equation}
	\mc{R}\psi = \tilde{\mc{R}}\psi -\f{2gr_ur_v}{r\Omega^2}\psi,
	\end{equation}	
	we see that all these terms vanish on the boundary.\\ \\
	\textit{Fixed  $u$, $v$ surfaces}~\\	
	We study the flux on the surface of fixed $u$. Splitting the region into a section where $r\le r_Y$, we see
	\begin{equation}
	\begin{split}
	\int_{v_0}^{v}\left(\emt_{vv} X^v + \emt_{uv}X^u\right)r^2(u,\bar{v}) d\bar{v} &= \int_{v_0}^{v}\f{-r_u}{\Omega^2r^{n-2}}\psi_v^2 -\f{a}{2l^2r^{n-2}}{r_v}\psi^2dv\\
	&\le \int_{v_0}^v \f{-2r_u}{\Omega^2r^{n-2}}(\grt_v\psi)^2 + \f{-2r_u}{\Omega^2}\cdot\f{g^2r_v^2}{r^n}\psi^2 -\f{a}{2l^2r^{n-2}}{r_v}\psi^2dv.
	\end{split}
	\end{equation}
	The middle term of the integrand can be estimated by 
	\begin{equation}
	\f{-2r_u}{\Omega^2}\cdot\f{g^2r_v^2}{r^n}\psi^2 \le C_{M,l,g}r^{2-n}r_v\psi^2 \le C_{Y,M,l,g}\f{r_v}{r}\psi^2. 
	\end{equation}
	The final term of the integrand, for $n\ge 3$  we can bound by 
	\begin{equation}
	-\f{a}{2l^2r^{n-2}}{r_v}\psi^2 \le -\f{a}{2l^2}\f{r_v}{r}\psi^2.
	\end{equation}
	For $n<3$ we can estimate by
	\begin{equation}
	-\f{a}{2l^2r^{n-2}}{r_v}\psi^2 \le -\f{ar_Y^{-n+3}}{2l^2}\f{r_v}{r}\psi^2.
	\end{equation}
	In the region $r\ge r_Y$, assuming $n \le 2$ 
	\begin{equation}
	\f{-2r_u}{\Omega^2}\cdot\f{g^2r_v^2}{r^n}\psi^2 \le C_{Y,l,g}r^{2-n}r_v\psi^2 \le C_{Y,l,g}r_v\psi^2.
	\end{equation} 
	We can control this by the initial energy using lemma \eqref{HestV}.
	We thus see that
	\begin{equation}
	\int_{v_0}^{v}\left(\emt_{vv} X^v + \emt_{uv}X^u\right)r^2(u,\bar{v}) d\bar{v} \le C_{l,g,M}\fl(u,v).
	\end{equation}
	As the integrand for the other surface is of the form
	\begin{equation}
	\left(\emt_{uu} X^u + \emt_{uv}X^v\right)r^2 = \f{-r_v}{\Omega^2r^{n-2}}\psi_u^2 +\f{a}{2l^2r^{n-2}}r_u\psi^2,
	\end{equation}
	it may be treated in the same way.\\
\end{proof}~\\
\begin{lem}\label{lwe}
	In $\mathscr{R}_\mh$, for $\kappa < \hf$, we have the following global integrated decay estimate
	\begin{equation}
	\mathbb{I}_{deg}[\psi] := \int_D r^{-6}\left(\f{1}{\gamma^2}\psi_u^2+\f{1}{\chi^2}\psi_v^2\right) \f{\Omega^2r^2}{2}dudv+\int_D \left(\f{1-2\kappa}{r}\right) \psi^2\f{\Omega^2}{2}r^2dudv \le  C_{l,g,M}\fl(u,v).
	\end{equation}
\end{lem}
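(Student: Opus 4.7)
My plan is to apply the divergence theorem in the region $D\times\mathfrak{T}^2$ for the multiplier vector field $X = F(r)\mathcal{R}$ introduced above, with $F(r)$ chosen as a suitable inverse power of $r$. The previous lemma already establishes that the boundary fluxes are controlled by $\mathbb{F}(u,v)$ (the $\mathcal{I}$-flux vanishes because $\mathcal{R}\psi$ may be replaced by $\tilde{\mathcal{R}}\psi$ modulo a $\psi$ boundary term which is killed by either the Dirichlet or Neumann condition after twisting, and the constant-$u$ and constant-$v$ fluxes were estimated in terms of $\mathbb{F}(u,v)$ after using Lemma~\ref{HestV} to absorb the off-sign pieces). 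The remaining task is to show
\begin{equation*}
\int_{D} K^{X}[\psi]\,\frac{\Omega^{2}r^{2}}{2}\,du\,dv \;\ge\; c\,\mathbb{I}_{\mathrm{deg}}[\psi]\;-\;C_{l,g,M}\mathbb{F}(u,v),
\end{equation*}
so that, modulo the boundary terms, $\mathbb{I}_{\mathrm{deg}}[\psi]$ is controlled.

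Using the identities $(\mathcal{R}\psi)^{2}\pm(\mathcal{T}\psi)^{2}=2(r_{u}^{2}\psi_{v}^{2}+r_{v}^{2}\psi_{u}^{2})/\Omega^{4}$ and $\pm 4r_{u}r_{v}\psi_{u}\psi_{v}/\Omega^{4}$, I would first rewrite
\begin{equation*}
K^{X}_{\mathrm{main}} \;=\; \bigl(F'-\tfrac{2F}{r}\bigr)(\mathcal{R}\psi)^{2}+\bigl(F'+\tfrac{2F}{r}\bigr)(\mathcal{T}\psi)^{2}-\tfrac{a}{l^{2}}\psi^{2}\,\bigl[-2F-\tfrac{2r_{u}r_{v}}{\Omega^{2}}(F'+\tfrac{2F}{r})-\tfrac{2r_{uv}}{\Omega^{2}}F\bigr].
\end{equation*}
The factor $1/\gamma^{2}\psi_{u}^{2}+1/\chi^{2}\psi_{v}^{2}$ appearing in $\mathbb{I}_{\mathrm{deg}}$ is, up to constants, exactly $(\Omega^{2}/r^{2})\,[(\mathcal{R}\psi)^{2}+(\mathcal{T}\psi)^{2}]$ when multiplied by the extra $r^{2}$, so for the gradient part we only need both bracketed coefficients to be strictly positive and comparable to $r^{-n-1}$ for an appropriate $n$. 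With an ansatz $F(r)=-r^{-n}$ (or a small linear combination thereof) this reduces to $n>2$ and $n$ small enough that the overall $r$-weight, when multiplied by the volume element $\Omega^{2}r^{2}/2$, matches the $r^{-6}(\Omega^{2}r^{2}/2)\cdot 1/\gamma^{2}$ scaling.

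For the zeroth-order term I would substitute $r_{uv}$ using \eqref{EKG3}, and then the combination $\frac{r_{u}r_{v}}{\Omega^{2}}=\frac{\varpi_{1}}{2r}-\frac{r^{2}}{4l^{2}}$. After this substitution the cosmological-constant pieces conspire, and using the algebraic identity $2a=g^{2}+3g$ together with $g=-3/2+\kappa$, the leading $r^{-n+1}$ behaviour at $\mathcal{I}$ collapses to precisely the coefficient $(1-2\kappa)$ times a positive quantity; positivity forces the restriction $\kappa<\tfrac{1}{2}$ stated in the hypothesis. The $\varpi_{1}$-dependent remainder is handled by Lemma~\ref{OHME} (so $\varpi_{1}\approx M>0$) and the technical Lemma~\ref{techv}, and any lower-order $r^{-m}$ pieces with $m$ larger are absorbed since $r\ge r_{\min}>0$.

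Finally, the error term $K^{X}_{\mathrm{error}}=-16\pi r\psi_{u}^{2}\psi_{v}^{2}F/\Omega^{4}$ and the quartic $\psi^{4}$ piece produced when we use \eqref{EKG3} are both controlled by the bootstrap bound $|r^{3/2-\kappa}\psi|<b$ from Theorem~\ref{C2BSE}, which makes them a factor of $b^{2}$ smaller than the main positive terms and therefore absorbable for $b$ sufficiently small. The main obstacle, in my view, is exactly the zeroth-order computation: making the exponent $n$ and the combination of $-aF$, $r_{uv}F$ and $\frac{r_{u}r_{v}}{\Omega^{2}}(F'+\tfrac{2F}{r})$ line up so that all divergent or sign-indefinite pieces at $\mathcal{I}$ cancel and leave exactly $(1-2\kappa)/r$ with a positive coefficient. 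Once this algebraic miracle is in place the rest is a Cauchy--Schwarz/Young book-keeping to absorb cross-terms and error terms into the main positive quadratic form.
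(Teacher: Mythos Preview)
Your handling of the gradient bulk and of the boundary fluxes is fine and matches the paper. The gap is in the zeroth--order term, and it is a genuine one: the algebraic cancellation you describe does not occur. After substituting for $r_{uv}$ and $\frac{r_u r_v}{\Omega^2}$ the zeroth--order bulk becomes
\[
\frac{a}{l^2}\,\psi^2\,r^{-n}\left[\frac{(n-1)\varpi_1}{r^2}+\frac{(4-n)r}{2l^2}-\frac{4\pi a r}{l^2}\psi^2\right],
\]
and for any $n\ge 2$ this has the wrong sign somewhere in $\mathscr{R}_\mh$ (for $2\le n\le 4$ the bracket is globally positive while $a<0$; for $n>4$ it is still positive near the horizon). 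No twisting is in play here---this lemma works with the ordinary energy--momentum tensor---so the relation $2a=g^2+3g$ has no entry point and cannot produce the $(1-2\kappa)$ factor you claim.

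What the paper actually does is absorb the bad $\psi^2$ term by a Hardy argument built from the Hawking--mass structure. One writes
\[
\frac{r}{l^2}+\frac{\varpi_1}{r^2}-\frac{4\pi a r}{l^2}\psi^2=\frac{1}{2r_u}\partial_u\mu_1-\frac{8\pi r r_v}{r_u\Omega^2}(\partial_u\psi)^2
\]
(and the $v$--analogue), integrates the resulting $\partial\mu_1$ terms by parts, and after a Young inequality converts the bad zeroth--order contribution into $\frac{4|a|\alpha}{r^{n+1}}(\mathcal{R}\psi)^2$ with $\alpha=\bigl(1+\frac{\varpi_1 l^2}{r^3}-4\pi a\psi^2\bigr)^{-1}$. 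This is then absorbed into the good $(n+2)r^{-n-1}(\mathcal{R}\psi)^2$ bulk; the positivity of the net coefficient is what forces $|a|<1$, i.e.\ $\kappa<\tfrac12$. Two passes are needed: $n=2$ yields the $(\mathcal{R}\psi)^2$ and $\psi^2$ control, and then $n=5$ (using the already--established $\psi^2$ bound) recovers the $(\mathcal{T}\psi)^2$ term with the $r^{-6}$ weight. Your outline should be revised to include this Hardy step; without it the bulk is not coercive.
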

\begin{proof}
	\textit{Bulk terms}~\\
We now look at the bulk term
\begin{equation}
\begin{split}
K^{X}_{main} =& (2+n)r^{-n-1}\left( \f{r_v}{\Omega^2}\psi_u+\f{r_u}{\Omega^2}\psi_v\right)^2\\&-4(2-n)\f{r_ur_v}{\Omega^4}r^{-n-1}\psi_u\psi_v +\f{a}{l^2}\psi^2r^{-n}\left(2(n-2)\f{r_ur_v}{r\Omega^2}-2\f{r_{uv}}{\Omega^2} \right) \\
=&(2+n)r^{-n-1}\left( \f{r_v}{\Omega^2}\psi_u+\f{r_u}{\Omega^2}\psi_v\right)^2 + (n-2)r^{-n-1}\left( \f{r_v}{\Omega^2}\psi_u-\f{r_u}{\Omega^2}\psi_v\right)^2\\&+\f{a}{l^2}\psi^2r^{-n}\left(2(n-2)\f{r_ur_v}{r\Omega^2}-2\f{r_{uv}}{\Omega^2} \right) \\
=&(2+n)r^{-n-1}\left( \mathcal{R}\psi\right)^2 + (n-2)r^{-n-1}\left( \mathcal{T}\psi\right)^2+\f{a}{l^2}\psi^2r^{-n}\left(2(n-2)\f{r_ur_v}{r\Omega^2}-2\f{r_{uv}}{\Omega^2} \right),
\end{split}
\end{equation}
so the first order terms have a good sign for $n\ge 2$.\\
As for the zeroth order terms
\begin{equation}
\underbrace{\f{a}{l^2}\psi^2r^{-n-1}\f{2(n-2)r_ur_v}{\Omega^2}}_{\ge 0}-\f{2a}{l^2}\psi^2r^{-n}\f{r_{uv}}{\Omega^2}.
\end{equation}
The first term has a good sign for ($n\ge 2$). For the second term we analyse through the $r_{uv}$ equation \eqref{EKG3}. We recall this may be expressed as
\begin{equation}
r_{uv} = -\f{\Omega^2}{2}\left(\f{\varpi_1}{r^2}+\f{r}{l^2}-\f{4\pi ar}{l^2}\psi^2 \right). 
\end{equation}
The term thus has the form
\begin{equation}\label{BWZO}
-\f{a}{l^2}r^{-n}\psi^2\left(2\f{r_{uv}}{\Omega^2}\right) = \f{a}{l^2r^n}\underbrace{\left(\f{\varpi_1}{r^2}+\f{r}{l^2}-\f{4\pi a r}{l^2}\psi^2 \right)}_{\ge 0}\psi^2, 
\end{equation}
so it has a negative sign. We proceed by proving a Hardy type inequality to absorb it.\\
First note that from the ($\varpi_1$) Hawking mass equations we can show that 
\begin{equation}
\f{1}{2r_u}\pa_u\mu_1 - \f{8\pi rr_v}{r_u\Omega^2}(\pa_u\psi)^2 = \f{r}{l^2}+\f{\varpi_1}{r^2}-\f{4\pi a r}{l^2}\psi^2,
\end{equation}
and
\begin{equation}
\f{1}{2r_v}\pa_v\mu_1 - \f{8\pi rr_u}{r_v\Omega^2}(\pa_v\psi)^2 = \f{r}{l^2}+\f{\varpi_1}{r^2}-\f{4\pi a r}{l^2}\psi^2.
\end{equation}
We now integrate \eqref{BWZO}  (recall $\sqrt{g} = \f{\Omega^2}{2}r^2$)
\begin{equation}
\begin{split}
&\int_{D(u,v)} \f{a}{l^2r^n}\left(\f{\varpi_1}{r^2}+\f{r}{l^2}-\f{4\pi a r}{l^2}\psi^2 \right)\psi^2 \f{\Omega^2}{2}r^2dudv\\ =& \hf\int_D \f{a}{l^2r^{n-2}}\psi^2\left(-\chi\cdot\mu_{1,u}- \f{4\pi rr_v}{r_u}\left(\pa_u\psi\right)^2\right)  dudv\\
&+ \hf\int_D \f{a}{l^2r^{n-2}}\psi^2\left(\gamma\cdot\mu_{1,v}- \f{4\pi rr_u}{r_v}\left(\pa_v\psi\right)^2\right)  dudv.
\end{split}
\end{equation}
Integrating the first term by parts and recalling the following relation
\begin{equation}
\mu_1 = \f{-4r_ur_v}{\Omega^2},
\end{equation}
and from the Raychaudhuri equations
\begin{equation}
\pa_u\chi = -\f{\Omega^2}{r_u^2}r\pi\psi_u^2,
\end{equation}
and
\begin{equation}
\pa_v\gamma = \f{\Omega^2}{r_v^2}r\pi\psi_v^2.
\end{equation}
We can then compute
\begin{equation}
\mu_1\pa_u\left(\chi r^{-n+2}\psi^2\right) = r^{-n+2}\left( \f{(2-n)r_ur_v}{r}\psi^2+2r_v\psi\psi_u + \f{4\pi rr_v}{r_u}\psi_u^2\right),
\end{equation}
and
\begin{equation}
-\mu_1\pa_v\left(\gamma r^{-3}\psi^2\right) = r^{-n+2}\left( \f{(2-n)r_ur_v}{r}\psi^2+2r_u\psi\psi_v + \f{4\pi rr_u}{r_v}\psi_v^2\right). 
\end{equation}
We see that
\begin{equation}
\begin{split}
&\int_{D(u,v)} \f{a}{l^2r^{n}}\left(\f{\varpi_1}{r^2}+\f{r}{l^2}-\f{4\pi a r}{l^2}\psi^2 \right)\psi^2 \f{\Omega^2}{2}r^2dudv  =\\& \int_{D}\f{4a}{l^2r^{n-2}}\gamma\chi\mu_1\psi\left(\f{1}{4\gamma}\psi_u-\f{1}{4\chi}\psi_v \right)+\f{(2-n)a}{l^2r^{n-1}}r_ur_v\psi^2 dudv - \int_{u_0}^u \f{a}{l^2r^{n-2}}\psi^2(-r_u)(\bar{u},v_0)d\bar{u}\\& + \int_{u_\mi}^u \f{a}{l^2r^{n-2}}\psi^2(-r_u)(\bar{u},v)d\bar{u} - \int_{v_0}^v \f{a}{l^2r^{n-2}}\psi^2r_v(u,\bar{v})d\bar{v}.
\end{split}
\end{equation}
We rewrite as 
\begin{equation}\label{HEp1}
\begin{split}
&\int_{D(u,v)} \f{-a}{l^2r^{n}}\left(\f{\varpi_1}{r^2}+\f{r}{l^2}-\f{4\pi a r}{l^2}\psi^2 \right)\psi^2 \f{\Omega^2}{2}r^2dudv  =\\& \int_{D}\f{-4a}{l^2r^{n-2}}\gamma\chi\mu_1\psi\left(\f{1}{4\gamma}\psi_u-\f{1}{4\chi}\psi_v \right)+\f{(n-2)a}{l^2r^{n-1}}r_ur_v\psi^2 dudv\\ & - \int_{u_0}^u \f{-a}{l^2r^{n-2}}\psi^2(-r_u)(\bar{u},v_0)d\bar{u} + \int_{u_\mi}^u \f{-a}{l^2r^{n-2}}\psi^2(-r_u)(\bar{u},v)d\bar{u} - \int_{v_0}^v \f{-a}{l^2r^{n-2}}\psi^2r_v(u,\bar{v})d\bar{v}. 
\end{split}
\end{equation}
Now using Young's inequality we see that
\begin{equation}
\begin{split}
&\int_{D}\f{4\abs{a}}{l^2r^{n-2}}\gamma\chi\mu_1\psi\left(\f{1}{4\gamma}\psi_u-\f{1}{4\chi}\psi_v \right)dudv\\ \le&  \int_D \f{\abs{a}}{2l^2r^n}\psi^2\left(\f{r}{l^2}+\f{\varpi_1}{r^2} -\f{4\pi r a\psi^2}{l^2} \right) \f{r^2\Omega^2}{2}dudv\\
&+\int_{D}\f{32\abs{a}}{l^2r^{n-2}}\f{\gamma^2\chi^2\mu_1^2}{\Omega^4}\left(\f{r}{l^2}+\f{\varpi_1}{r^2} -\f{4\pi r a\psi^2}{l^2} \right)^{-1}\left(\f{\psi_u}{4\gamma}-\f{\psi_v}{4\chi} \right)^2 \f{\Omega^2}{2}dudv.
\end{split}
\end{equation}
Noting the relation
\begin{equation}
\f{\gamma^2\chi^2\mu_1^2}{\Omega^4} =\f{1}{16},
\end{equation}
we see that
\begin{equation}\label{HEp2}
\begin{split}
&\int_{D}\f{4\abs{a}}{l^2r^{n-2}}\gamma\chi\mu\psi\left(\f{1}{4\gamma}\psi_u-\f{1}{4\chi}\psi_v \right)dudv\\ \le & \int_D \f{\abs{a}}{2l^2r^{n}}\psi^2\left(\f{r}{l^2}+\f{\varpi_1}{r^2} -\f{4\pi r a\psi^2}{l^2} \right) \f{r^2\Omega^2}{2}dudv\\
&+\int_{D}\f{2\abs{a}}{l^2r^{n-1}}\left(\f{1}{l^2}+\f{\varpi_1}{r} -\f{4\pi a\psi^2}{l^2} \right)^{-1}\left(\f{\psi_u}{4\gamma}-\f{\psi_v}{4\chi} \right)^2 \f{\Omega^2}{2}dudv.
\end{split}
\end{equation}
For ease we now define
\begin{equation}
\alpha := \left(1+\f{\varpi_1l^2}{r^{3}}-4\pi a \psi^2 \right)^{-1},
\end{equation}
substituting \eqref{HEp2} into \eqref{HEp1} gives
\begin{equation}
\begin{split}
-\int_{D} \f{2(n-2)a}{l^2r^{n-1}}r_ur_v\psi^2 dudv&+\int_D \f{-a}{l^2r^{n}}\psi^2\left(\f{r}{l^2}+\f{\varpi_1}{r^2}-\f{4\pi ra\psi^2}{l^2} \right)\f{\Omega^2}{2}r^2dudv \le C_{X,Y,l,g,M}\fl(u,v)\\ &+\int_D \f{4\abs{a}\alpha}{r^{n-1}}\left(\f{1}{4\gamma}\psi_u - \f{1}{4\chi}\psi_v \right)^2 \f{\Omega^2}{2}dudv.
\end{split}
\end{equation}
Now
\begin{equation}
\int_{D} -\f{2(n-2)a}{l^2r^{n-1}}r_ur_v\psi^2 dudv = \f{(n-2)a}{l^2}\int_D \f{1}{r^n}\left(\f{-2\varpi_1}{r^2}+\f{r}{l^2} \right)\psi^2 \f{\Omega^2r^2}{2}dudv.
\end{equation}
We then rewrite \eqref{HEp2} as the following Hardy estimate
\begin{equation}
\begin{split}
&\int_{D} \f{2(n-2)a}{l^2r^{n-1}}r_ur_v\psi^2 dudv-\int_D \f{-a}{l^2r^{n}}\psi^2\left(\f{r}{l^2}+\f{\varpi_1}{r^2}-\f{4\pi ra\psi^2}{l^2} \right)\f{\Omega^2}{2}r^2dudv\\ =
&\int_D \f{-a}{l^2r^n}\psi^2\left(\f{r(n-1)}{l^2}+\f{\varpi_1(5-2n)}{r^2}-\f{4\pi ra\psi^2}{l^2} \right)\f{\Omega^2}{2}r^2dudv \le C_{l,g,M}\fl(u,v)\\ &+\int_D \f{4\abs{a}\alpha}{r^{n+1}}\left(\mathcal{R}\psi\right)^2 \f{\Omega^2r^2}{2}dudv.
\end{split}
\end{equation}
The LHS is clearly non negative for $1\le n \le \f{7}{2}$. Examining the integral of $K^{X}$ we have
\begin{equation}
\begin{split}
\int_D K^{X}dudv\f{\Omega^2r^2}{2}dudv =& \int_D(2+n)r^{-n+1}\left( \mathcal{R}\psi\right)^2\f{\Omega^2}{2}dudv + \int_D(n-2)r^{-n-1}\left( \mathcal{T}\psi\right)^2\f{\Omega^2r^2}{2}dudv\\&+\int_D\f{a}{l^2}\psi^2r^{1-n}(n-2)r_ur_vdudv\\&+\int_{D(u,v)} \f{a}{l^2r^n}\left(\f{\varpi_1}{r^2}+\f{r}{l^2}-\f{2\pi a r}{l^2}\psi^2 \right)\psi^2 \f{\Omega^2}{2}r^2dudv\\
\ge &\int_D\left( n+2 -4\abs{a}\alpha\right) r^{-n+1}\left( \mathcal{R}\psi\right)^2\f{\Omega^2}{2}dudv\\&+\int_D(n-2)r^{-n-1}\left( \mathcal{T}\psi\right)^2\f{\Omega^2r^2}{2}dudv\\&-\int_D\f{a}{l^2}\psi^2r^{1-n}(n-2)r_ur_vdudv- C_{l,g,M}\fl(u,v)\\
=&\int_D\left( n+2 -4\abs{a}\alpha\right) r^{-n+1}\left( \mathcal{R}\psi\right)^2\f{\Omega^2}{2}dudv\\&+\int_D(n-2)r^{-n-1}\left( \mathcal{T}\psi\right)^2\f{\Omega^2r^2}{2}dudv\\&+\f{(n-2)a}{2l^2}\int_D \f{1}{r^n}\left(\f{-2\varpi_1}{r^2}+\f{r}{l^2} \right)\psi^2 \f{\Omega^2r^2}{2}dudv - C_{l,g,M}\fl(u,v).\\
\end{split}
\end{equation}
That is
\begin{equation}\label{bulkest}
\begin{split}
&C_{l,g,M}\fl(u,v)+\int_D K^{X}dudv\f{\Omega^2r^2}{2}dudv \ge \int_D\left( n+2 -4\abs{a}\alpha\right) r^{-n-1}\left( \mathcal{R}\psi\right)^2\f{\Omega^2r^2}{2}dudv\\&+\int_D(n-2)r^{-n-1}\left( \mathcal{T}\psi\right)^2\f{\Omega^2r^2}{2}dudv+\f{(n-2)a}{2l^2}\int_D \f{1}{r^n}\left(\f{-2\varpi_1}{r^2}+\f{r}{l^2} \right)\psi^2 \f{\Omega^2r^2}{2}dudv.
\end{split}
\end{equation}
An inspection of the terms implies that if we choose $n=2$ we will lose the clearly negatively signed $\psi^2$ bulk terms. We however need an estimate on the quantity $1-\alpha$. 
\begin{equation}
\begin{split}
1-\alpha =1-\abs{a}\left(1+\f{\varpi_1 l^2}{r^3}-4\pi a \phi^2 \right)^{-1} &= \f{1-\abs{a}}{1+\f{\varpi_1 l^2}{r^3}-4\pi a \phi^2 } +\f{\f{\varpi l^2}{r^3}-4\pi a \phi^2}{1+\f{\varpi l^2}{r^3}-4\pi a \phi^2 }\\
&\ge \f{1}{1+\f{M l^2+cb^2}{r^3}} \left( 1-\abs{a} + \f{\varpi_1 l^2}{r^3}\right) \\
&\ge \f{1}{1+\f{M l^2}{r_{min}^3}+\f{cb^2}{r_{min}^3}} \left( 1-\abs{a} + \f{\varpi_1 l^2}{r^3}\right). \\
\end{split}
\end{equation}
Then recalling $r_+ = 2Ml^2$ using the estimate
\begin{equation}
\abs{r_+^3-r_{min}^3} \le Cb^2,
\end{equation}
we see
\begin{equation}
\begin{split}
1-\abs{a}\left(1+\f{\varpi l^2}{r^3}-4\pi a \phi^2 \right)^{-1} 
&\ge \f{1}{1+\f{M l^2}{r_{min}^3}+\f{cb^2}{r_{min}^3}} \left( 1-\abs{a} + \f{\varpi l^2}{r^3}\right) \\
&\ge \f{1}{1+\hf+\f{cb^2}{r_{min}^3}} \left( 1-\abs{a} + \f{\varpi l^2}{r^3}\right)\\
&\ge \hf \left( 1-\abs{a} + \f{M l^2}{4r^3}\right).
\end{split}
\end{equation}
Choose $n=2$ and restricting to $\abs{a}<1$,  we have
\begin{equation}
\int_D\left( 1 -\abs{a}\right) r^{-3}\left( \mathcal{R}\psi\right)^2\f{\Omega^2r^2}{2}dudv\le C_{l,g,M}\fl(u,v),
\end{equation}
and the $L^2$ estimate
\begin{equation}\label{l2m}
\int_D \left(1-\abs{a}\right)r^{-1} \psi^2\f{\Omega^2}{2}r^2dudv \le  C_{l,g,M}\fl(u,v).
\end{equation}
We now need to recover the $\left( \mc{T}\psi\right) ^2$ terms. To do this we choose $n=5$ in \eqref{bulkest}
\begin{equation}
\begin{split}
&C_{l,g,M}\fl(u,v)+\int_D K^{X}dudv\f{\Omega^2r^2}{2}dudv \ge \int_D\left( 7 -4\abs{a}\right) r^{-6}\left( \mathcal{R}\psi\right)^2\f{\Omega^2r^2}{2}dudv\\&+\int_D3r^{-6}\left( \mathcal{T}\psi\right)^2\f{\Omega^2r^2}{2}dudv+\f{3a}{2l^2}\int_D \f{1}{r^5}\left(\f{-2\varpi_1}{r^2}+\f{r}{l^2} \right)\psi^2 \f{\Omega^2r^2}{2}dudv. 
\end{split}
\end{equation}
Then using estimate \eqref{l2m}
\begin{equation}
\begin{split}
&C_{l,g,M}\fl(u,v)+\int_D K^{X}dudv\f{\Omega^2r^2}{2}dudv\\ \ge &\int_D r^{-6}\left( \mathcal{R}\psi\right)^2\f{\Omega^2r^2}{2}dudv+\int_Dr^{-6}\left( \mathcal{T}\psi\right)^2\f{\Omega^2r^2}{2}dudv+\int_D \f{(1-\abs{a})}{r}\psi^2\f{\Omega^2}{2}r^2dudv,
\end{split}
\end{equation}
allows us to recover the $\mc{T}\psi$ term. Expressing in terms of $u$ and $v$ derivatives
\begin{equation}
\int_D r^{-6}\left(\f{1}{\gamma^2}\psi_u^2+\f{1}{\chi^2}\psi_v^2\right) \f{\Omega^2r^2}{2}dudv+\int_{D}\left(\f{1-\abs{a}}{r}\right) \psi^2\f{\Omega^2}{2}r^2dudv \le  C_{l,g,M}\fl(u,v).
\end{equation}
\end{proof}

\begin{rem}
	This theorem is where the restriction $\kappa<\hf$ comes from. This is due to the choice of $n=2$, it produces terms of the form $r^{n}\psi^2$ which will only decay for $\kappa<\hf$ for the boundary conditions we are considering. If one where to choose $n<2$ to remedy this problem, the issue of positivity from \eqref{bulkest} arises. The bulk term can't be seen to be positive. However as in \cite{holzegel_stability_2013} we expect faster decay for Dirichlet boundary conditions. If one were to follow that scheme in the toroidal setting at the $H^2$ level, one would expect to extend the result to $\kappa=\hf$. Beyond this value seems to be out of reach technically. It is expected to require more sophisticated multipliers.    
\end{rem}~\\
\subsection{The redshift vector field}~\\ \\
We now seeking to remove the degeneration in the estimates due to the $\left( \f{1}{\gamma^2}\right) $ factor. To do this we localise a vector field to a region near the horizon, and exploit a red shift effect. The result of lemma \ref{C3VFRS} is an adapted version of \cite{holzegel_stability_2013} to this setting.
\begin{lem}\label{C3VFRS}
	In $\mathscr{R}_\mh$ we have the following estimate for $\kappa\in(0,\hf)$
	\begin{equation}
	\begin{split}
	&\int_{D}  \f{1}{r^{7}}\left( \f{r^4}{-r_u}(\grt_u\psi)^2+\f{-r_ur^2}{\Omega^2}(\grt_v\psi)^2\right)  d\bar{u}d\bar{v}\\+&\int_D \left(\left( 1-2\kappa\right) r\right)(-r_u) \psi^2+ \left(\left( 1-2\kappa\right) r\right)r_v \psi^2d\bar{u}d\bar{v}  \le C_{l,g,M}\fl(u,v).
	\end{split}
	\end{equation}
\end{lem}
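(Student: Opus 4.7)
The plan is to upgrade the degenerate low-weighted estimate of Lemma \ref{lwe} to the stated non-degenerate estimate by means of a localized redshift multiplier. Observe that the degeneration in Lemma \ref{lwe} affects only the $\psi_u^2$ term, whose coefficient scales like $r_v^2$ (since $1/\gamma^2 = 16 r_v^2/\Omega^4$) and therefore vanishes as $r_v \to 0$ on approach to $\mh$. The $\psi_v^2$ coefficient $\sim r_u^2/\Omega^2$ stays bounded near $\mh$ because $-r_u$ and $\Omega^2$ are uniformly positive there by Theorem \ref{thmru} and Corollary \ref{GloEst}. Moreover, the zeroth-order part of the new estimate follows at once from that of Lemma \ref{lwe}, since $-r_u \lesssim \Omega^2$ and $r_v \lesssim \Omega^2$ globally by Corollary \ref{GloEst}, so that $(-r_u) r + r_v r \lesssim \Omega^2 r$.

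The first step is to split $D$ using a smooth cutoff $\zeta(r)$ into a near-horizon region $\{r \le r_X\}$ and a far region $\{r \ge r_Y\}$ transitioning in a bounded annulus. In the far region, Corollary \ref{SEst} gives $r_v, -r_u, \Omega^2 \sim r^2$, so $\gamma \sim 1$ and the $\psi_u^2$ coefficient of Lemma \ref{lwe}, which is $\sim r^{-2}$ in this regime, dominates the target weight $1/((-r_u) r^3) \sim r^{-5}$; the identical scaling comparison handles the $\psi_v^2$ term. Passing from $\psi_u, \psi_v$ to $\grt_u\psi, \grt_v\psi$ introduces zeroth-order errors of the form $g^2 r_\mu^2 \psi^2 / r^2$, which are absorbed into the new zeroth-order bulk using $r_\mu^2 / r^2 \lesssim r^2$ together with the Hardy inequality of Lemma \ref{HI}.

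The key step is in the near-horizon region, where we introduce the multiplier $X_{rs} = \zeta(r)\, N$ with $N$ a future-directed timelike vector field tailored (as in \cite{holzegel_stability_2013}) to exploit the positive surface gravity $\kappa_+ = M/r_+^2 + r_+/l^2 > 0$ of the background toroidal AdS--Schwarzschild. The characterising property of $N$ is that the bulk current $K^{X_{rs}}[\psi]$ carries a coercive coefficient of $\psi_u^2$ of order unity near $\mh$, independent of $r_v$, reflecting the classical Dafermos--Rodnianski redshift effect; in these coordinates the bound is sourced by the lower bounds on $-r_u$ and $\Omega^2$ furnished by Theorem \ref{thmru} and Corollary \ref{GloEst}. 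The cross-term $\psi_u\psi_v$ in $K^{X_{rs}}$ is handled by Cauchy--Schwarz, splitting $\epsilon \psi_u^2$ against the newly coercive direction and $\epsilon^{-1}\psi_v^2$ against the non-degenerate $\psi_v^2$ bulk already provided by Lemma \ref{lwe}; the zeroth-order $\psi^2$ contributions are treated by a Hardy identity analogous to the one used in the proof of Lemma \ref{lwe}. The cutoff-derivative terms, supported where $\zeta' \ne 0$ in an annulus bounded away from both $\mh$ and $\mi$ on which $r_v$ is strictly positive, are directly absorbed by Lemma \ref{lwe}, and boundary fluxes on $\{v = v_0\}$, $\{u = u_\mh\}$, and the rays bounding $D$ are bounded by $\mathbb{F}(u,v)$; the cutoff removes any flux contribution from $\mi$.

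The principal obstacle is verifying coercivity of $K^{X_{rs}}$ in the perturbed, nonlinear geometry: although positive surface gravity gives the redshift at the background level, the nonlinear perturbation deforms $\Omega^2$, $r_u$, $r_v$, and $\varpi$, and the positivity of the bulk must survive this deformation. Here the quantitative bounds of Theorem \ref{basicests} and Corollary \ref{GloEst} are essential, as they pin the relevant coefficients within $\mc{O}(b)$ of their background values; for $b$ sufficiently small, the background coercivity is preserved. Combining the far-region contribution of Lemma \ref{lwe} with the near-horizon redshift estimate, converting to twisted derivatives, and tidying residual weights via Lemma \ref{HI} completes the argument.
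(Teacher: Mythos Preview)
Your overall strategy matches the paper's: upgrade Lemma~\ref{lwe} by applying a localized redshift multiplier near the horizon, absorb the cutoff errors in the transition annulus using $\mathbb{I}_{deg}$, and convert to twisted derivatives at the end. The paper uses the explicit null multiplier $Y=(-r_u)^{-1}\eta(r)\partial_u$ rather than an abstract timelike $N$, and obtains the coercive $\psi_u^2$ coefficient directly from $\partial_v(-1/r_u)$ via the wave equation for $r$ (in effect the quantity $\varpi_1/r^2+2r/l^2>0$ plays the role of surface gravity), but this is the same mechanism you invoke.

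There is, however, one genuine gap. You assert that ``boundary fluxes on \ldots\ the rays bounding $D$ are bounded by $\mathbb{F}(u,v)$''. For the redshift multiplier the flux on the late constant-$u$ ray contains a zeroth-order piece of the form $\int_{v_0}^{v}\chi\,\psi^2 r^2\,\eta\,d\bar v$. In the near-horizon region $r$ is bounded and $\chi$ is bounded, so this reduces to $\int_{v_0}^{v}\psi^2\,d\bar v$, but the $\Hu^1$ norm only supplies $\int_{v_0}^{v}(r_v/r)\psi^2\,d\bar v$, and $r_v$ degenerates precisely here; neither the pointwise bound on $\psi$ nor Lemma~\ref{HestV} (which lives in $\{r\ge r_Y\}$) rescues this with a $v$-independent constant. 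The paper does not bound this flux directly: instead it rewrites it as $\int_{u}\partial_{\bar u}\bigl(\int_{v}\chi\psi^2 r^2\eta\,d\bar v\bigr)\,d\bar u$, uses the Raychaudhuri equation $\partial_u\chi=-\pi r\Omega^2\psi_u^2/r_u^2$ to convert it into bulk terms, and then absorbs the resulting $\epsilon\int_D r\psi_u^2/r_u^2$ piece back into the coercive redshift bulk, with the remaining $\psi^2$ bulk controlled by $\mathbb{I}_{deg}$. You should include this step; without it the flux bound, and hence the lemma, does not close.
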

\begin{proof}
	Firstly fix $r_W > r_X$ and define
	\begin{equation}
	Y = (-r_u)^{-1}\eta(r)\pa_u,
	\end{equation}
	where 
	\begin{equation}
	\eta(r)=
	\begin{cases}
	0 & \text{ for } r_W\le r,\\
	\text{smooth, monotonic with bounded derivative} & \text{ for } r_X \le r \le r_W,\\
	1 & \text{ for } r \le r_X.
	\end{cases}
	\end{equation}
	We see that
	\begin{equation}
	\begin{split}
	K^{Y}=\left( -\f{2}{\Omega^2}\pa_v\left( \f{-1}{r_u}\right) \left( \pa_u\psi\right)^2+(\pa_u\psi)(\pa_v\psi)\f{-4}{\Omega^2r}+\f{2a}{l^2}\psi^2+\f{4a\pi}{r_u^2}r\psi^2\psi_u^2\right)\eta(r) + \left(\f{2}{\Omega^2}\psi_u^2+\f{a}{l^2}\psi^2 \right)\eta'(r). 
	\end{split}
	\end{equation}
	The first term is equal to
	\begin{equation}\label{redshiftbulk}
	\left( \f{\psi_u^2}{2r_u^2}\left(\f{\varpi_1}{r^2}+\f{2r}{l^2} \right) + \f{\psi_u}{r_u}\f{1}{r\chi}\psi_v + \f{a}{l^2}\psi^2\f{2}{r}\right) \eta(r). 
	\end{equation}
	We also can quickly compute that
	\begin{equation}
	J^{Y,0}[\psi](Y,\pa_u) = \f{\psi_u^2}{-r_u}\eta, \quad J^{Y,0}[\psi](Y,\pa_v)= \f{a\chi}{l^2}\eta\psi^2, 
	\end{equation}
	asymptotically it is clear that these terms are $0$ at $\mi$. Estimating \eqref{redshiftbulk}, we see
	\begin{equation}
	\begin{split}
	&\left( \f{\psi_u^2}{2r_u^2}\left(\f{\varpi_1}{r^2}+\f{2r}{l^2} \right) + \f{\psi_u}{r_u}\f{1}{r\chi}\psi_v + \f{a}{l^2}\psi^2\f{2}{r}\right) \eta(r)\\
	\ge &\left( \f{3r\psi_u^2}{4l^2r_u^2} + \f{r^\hf\psi_u}{lr_u}\f{l}{r^{\f{3}{2}}\chi}\psi_v + \f{a}{l^2}\psi^2\f{2}{r}\right) \eta(r)\\
	\ge&\left( \f{r\psi_u^2}{4l^2r_u^2}-\f{l^2}{2}\f{1}{r^3\chi^2}\psi_v^2 - \f{2\abs{a}}{r}\psi^2\right) \eta(r).
	\end{split}
	\end{equation}
	Integrating $K^{Y}$ this gives the estimate 
	\begin{equation}\label{redcorest}
	\begin{split}
	&\int_{u_\mi}^{u}\f{\psi_u^2}{r_u^2}r^2(-r_u)(\bar{u},v)\eta d\bar{u} + \int_D\f{r\psi_u^2}{r_u^2}\Omega^2r^2\eta d\bar{u}d\bar{v}\\ 
	\le &C_{M,l,a}\int_{D}\left(\f{\psi_v^2}{r^3\chi^2}+\f{\psi^2}{r} \right)\eta(r)\Omega^2r^2d\bar{u}d\bar{v}+\int_{D}\left(\f{2}{\Omega^2}\psi_u^2+\f{a}{l^2}\psi^2 \right)\eta'(r)\Omega^2r^2d\bar{u}d\bar{v}\\&  + C_{M,l,a}\int_{v_0}^v\f{1}{l^2}\chi\psi^2r^2(u,\bar{v})\eta d\bar{v}
	+\int_{u_\mi}^u \f{\psi_u^2}{r_u^2}r^2(-r_u)\eta d\bar{u}. 
	\end{split} 
	\end{equation}
	It is clear from corollary \ref{SEst} that
	\begin{equation}
	\begin{split}
	\int_{D}\left(\f{2}{\Omega^2}\psi_u^2+\f{a}{l^2}\psi^2 \right)\eta'(r)\Omega^2r^2d\bar{u}d\bar{v} \le C_{M,l,g,Y}\mathbb{I}_{deg}[\psi](u,v).
	\end{split}
	\end{equation}
	Now the first term on the right hand side of \eqref{redcorest}, we already control from lemma \ref{lwe}, the last term is the an initial data norm that we control after a few trivial estimates. We are left to deal with the third term
	\begin{equation}
	\int_{v_0}^v \f{1}{l^2}\chi\psi^2r^2(u,\bar{v})\eta d\bar{v}= \int_{v_0}^ud\bar{u}\pa_u \int_{v_0}^{v^*(u)}\f{1}{l^2}\chi\psi^2r^2\eta d\bar{v}.
	\end{equation}
	Where $v^*(u)$ is the $v$-value where the ray of constant $u$ intersects either $\mi$, or the constant $v$ ray. We pass the derivative through
	\begin{equation}
	\int_{v_0}^v \f{1}{l^2}\chi\psi^2r^2(u,\bar{v})\eta d\bar{v}=\int_D\left(-r\pi\f{\psi_u^2}{r_u^2}\psi^2-\hf\psi\f{\psi_u}{r_u}-\f{1}{2r}\psi^2 \right)\eta\Omega^2r^2 d\bar{u}d\bar{v} + \int_D \f{1}{l^2}\chi\psi^2r^2\eta'r_u d\bar{u}d\bar{v},
	\end{equation}
	The second integrand is estimated using corollary \ref{GloEst}.The $\eta'$ allows us to disregard the $r$ weights.
	The first integrand may be estimated by dropping the negative terms, and a Young inequality 
	\begin{equation}
	\int_{v_0}^v \f{1}{l^2}\chi\psi^2r^2(u,\bar{v})\eta d\bar{v}\le\int_D\left(\epsilon\f{r\psi_u^2}{r_u^2}+\f{1}{16\epsilon}\f{\psi^2}{r} \right) \eta\Omega^2r^2 d\bar{u}d\bar{v} + C_{M,l,g,Y}\mathbb{I}_{deg} [\psi](u,v).
	\end{equation}
	We can an absorb an $\epsilon$ amount of the the derivative terms with the left hand side. Control of the zeroth order terms has already been established.\\
	This leaves us with
	\begin{equation}
	\begin{split}
	&\int_{u_\mi}^{u}\f{\psi_u^2}{r_u^2}r^2(-r_u)(\bar{u},v)\eta d\bar{u} + \int_D\f{r\psi_u^2}{r_u^2}\eta\Omega^2r^2dudv\\ 
	\le &C_{M,l,a}\mathbb{I}_{deg}[\psi](u,v)
	+\int_{u_\mi}^u \f{\psi_u^2}{r_u^2}r^2(-r_u)\eta d\bar{u}.
	\end{split}
	\end{equation}
	Combining this with the global estimate we have that 
	\begin{equation}\label{UTILED}
	\int_D r^{-6}\left(r^4\f{\psi_u^2}{r_u^2}+\f{1}{\chi^2}\psi_v^2\right) \f{\Omega^2r^2}{2}dudv+\int_D \f{(1-\abs{a})}{r} \psi^2\f{\Omega^2}{2}r^2d\bar{u}d\bar{v} \le C_{X,Y,l,g,M}\fl(u,v). 
	\end{equation}
	We now show this holds in a twisted setting. 
	We estimate
	\begin{equation}
	r^4\f{\psi_u^2}{r_u^2} \ge C\left( r\f{\psi_u^2}{r_u^2}\right)  = C \left( \f{r}{r_u^2}(\grt_u\psi)^2 + \f{2g}{r_u}\psi\grt_u\psi +\f{1}{r}g^2\psi^2 \right).  
	\end{equation}
	We apply Young's inequality to get
	\begin{equation}
	r^4\f{\psi_u^2}{r_u^2} \ge   C\left( \f{r}{2r_u^2}(\grt_u\psi)^2 -\f{1}{r}g^2\psi^2 \right). 
	\end{equation}
	We thus have
	\begin{equation}
	\int_D \f{1}{r^6}\left(\f{r}{2r_u^2}(\grt_u\psi)^2 -\f{1}{r}g^2\psi^2 \right)\Omega^2r^2(\bar{u},\bar{v})d\bar{u}d\bar{v} \le C_{l,g,M}\fl(u,v). 
	\end{equation}
	Adding a multiple of the zeroth order terms of \eqref{UTILED} to get
	\begin{equation}
	\int_D \f{1}{r^6}\left(\f{r}{r_u^2}(\grt_u\psi)^2+\f{\psi_v^2}{\chi^2}\right)\Omega^2r^2(\bar{u},\bar{v})d\bar{u}d\bar{v}+\int_D \left(\f{1-\abs{a}}{r}\right) \psi^2\f{\Omega^2}{2}r^2d\bar{u}d\bar{v}
	\le C_{X,Y,l,g,M}\fl(u,v).
	\end{equation}
	With the use of corollary \ref{GloEst} we estimate the $\psi_v$ terms by
	\begin{equation}
	\begin{split}
	\f{\psi_v^2}{\chi^2} &\ge \f{C}{r^3} \f{16r^2_u}{\Omega^4}\left( (\grt_v\psi)^2+\f{2gr_v}{r}\psi\grt_v\psi+ \f{r_v^2}{r^2}\psi^2\right) \\
	&\ge C_l\f{-r_u}{\Omega^2}\left(\f{1}{r^3}(\grt_v\psi)^2+\f{2gr_v}{r^4}\psi\grt_v\psi+ \f{r_v^2}{r^5}\psi^2 \right)\\
	&\ge C_l\f{-r_u}{\Omega^2}\left(\f{1}{2r^3}(\grt_v\psi)^2  - \f{r_v^2}{r^5}\psi^2\right)\\
	&\ge C_{l}\f{-r_u}{\Omega^2}\left(\f{1}{2r^3}(\grt_v\psi)^2  - \f{1}{r}\psi^2\right).    
	\end{split}
	\end{equation}
	This allows us to conclude that
	\begin{equation}
	\begin{split}
	&\int_{D}\f{1}{r^{11}}\left(  \f{r^4}{-r_u}(\grt_u\psi)^2+\f{-r_ur^2}{\Omega^2}(\grt_v\psi)^2\right) \Omega^2r^2(\bar{u},\bar{v})d\bar{u}d\bar{v}\\+&\int_D\left(\f{1-\abs{a}}{r}\right) \psi^2\f{\Omega^2}{2}r^2d\bar{u}d\bar{v}  \le C_{l,g,M}\fl(u,v),
	\end{split}
	\end{equation}
	or as
	\begin{equation}
	\begin{split}
	&\int_{D}  \f{1}{r^{7}}\left( \f{r^4}{-r_u}(\grt_u\psi)^2+\f{-r_ur^2}{\Omega^2}(\grt_v\psi)^2\right)  d\bar{u}d\bar{v}\\+&\int_D \left( 1-\abs{a}\right) r(-r_u) \psi^2+ \left( 1-\abs{a}\right) rr_v \psi^2d\bar{u}d\bar{v}  \le C_{l,g,M}\fl(u,v).
	\end{split}
	\end{equation}
\end{proof}
\subsection{Morawetz estimate}
\begin{thm}
	In $\mathscr{R}_\mh$, for $\kappa < \hf$, the following Morawetz estimate holds
	\begin{equation}\label{ME}
	\int_D \f{r^4}{-r_u}(\grt_u\psi)^2+(-r_u)\psi^2 +\f{-r_ur^2}{\Omega^2}(\grt_v\psi)^2 + \f{r_v}{r}\psi^2  dudv \le C_{l,g,M}\mathbb{F}(u,v).
	\end{equation} 
\end{thm}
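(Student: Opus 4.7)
The plan is to reduce the Morawetz bound to a \emph{far--field} first--order estimate in $\{r\ge r_Y\}$, using Theorem \ref{REDSHIFTMOR} to cover everything else. First, in $\{r\le r_Y\}$ the first-order Morawetz density is dominated by the redshift density pointwise, since $\f{r^4}{-r_u}(\grt_u\psi)^2\le r_Y^{7}\cdot\bigl[\f{1}{r^{7}}\f{r^4}{-r_u}(\grt_u\psi)^2\bigr]$, and similarly for $\grt_v\psi$. Second, for every $r\in\mathscr{R}_\mh$ the uniform lower bound $r\ge r_{\min}>0$ (Lemma \ref{COS}) gives $(-r_u)\psi^2\le r_{\min}^{-1}\,r(-r_u)\psi^2$ and $\f{r_v}{r}\psi^2\le r_{\min}^{-2}\,rr_v\psi^2$, so the entire zeroth-order Morawetz density is controlled by Theorem \ref{REDSHIFTMOR}. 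The only quantity still requiring an argument is therefore
\[
\int_{D\cap\{r\ge r_Y\}}\!\Bigl(\tfrac{r^4}{-r_u}(\grt_u\psi)^2+\tfrac{-r_ur^2}{\Omega^2}(\grt_v\psi)^2\Bigr)\,dudv\le C_{l,g,M}\,\fl(u,v).
\]

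To prove this, the plan is to follow the remark in the introduction of Section \ref{AS} and \emph{multiply the twisted Klein--Gordon equation directly}, exploiting the fact that in $\{r\ge r_Y\}$ the geometric equivalences of Corollary \ref{SEst} ($r^2\sim r_v\sim -r_u\sim\Omega^2$) give $\f{r^4}{-r_u}(\grt_u\psi)^2\sim r^{2}(\grt_u\psi)^2$. Fix a smooth monotone cutoff $\eta(r)$ with $\eta\equiv 0$ on $\{r\le r_Y\}$ and $\eta\equiv 1$ on $\{r\ge r_W\}$ for some $r_W>r_Y$. Multiply \eqref{REKG4} by $\eta(r)\,r\,\grt_u\psi$ and integrate over $D$ in $dudv$; symmetrically multiply \eqref{REKG5} by $\eta(r)\,r\,\grt_v\psi$. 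Integration by parts turns the left--hand sides into the desired flux $\hf\int\eta\,(r\grt_u\psi)^2\,du$ on the constant--$v$ slices (which is comparable to $\int\eta\,\f{r^4}{-r_u}(\grt_u\psi)^2\,du$ by Corollary \ref{SEst}), plus an error $\hf\int\eta'\,r_v(r\grt_u\psi)^2\,dudv$ supported in the bounded $r$--range $\{r_Y\le r\le r_W\}$, hence absorbed by Theorem \ref{REDSHIFTMOR}. The right--hand side produces: (i) a cross term $(\kappa-\hf)r_u\,r\,\grt_u\psi\,\grt_v\psi$, handled by a Cauchy--Schwarz/Young split that is closed by pairing the two estimates; and (ii) a zeroth--order term $-\f{\Omega^2r^2V}{4}\psi\,\grt_u\psi$, handled by Young's inequality together with the potential bound $|V|\le Cr^{-3+2\kappa}$ from \eqref{PotEst} and the Hardy inequality of Lemma \ref{HestV}, reducing it to a weight $r^{-2+2\kappa}\cdot\Omega^2r^2\sim r^{2+2\kappa}\psi^2$. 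Boundary contributions at $\mi$ vanish under Dirichlet and Neumann conditions by the same flux computation performed in Lemma \ref{lwe}, and initial--data terms are absorbed into $\fl(u,v)$.

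The main obstacle is closing the coupled system produced by the two multiplier identities: the cross term in (i) must be split so that each piece can be absorbed by the companion identity, and the zeroth--order remainder from (ii) must be dominated by the redshift--controlled Morawetz zeroth order of weight $r^3\psi^2$; this forces the sharp restriction $\kappa<\hf$, since exactly in that range $r^{2+2\kappa}\psi^2$ is integrated with an extra factor of $r$ to spare over $r^{3}\psi^2$ along the outgoing cones. Once the far--field first--order bound is added to Theorem \ref{REDSHIFTMOR}, the estimate \eqref{ME} follows; the standard slab/pigeonhole argument on null rectangles, combined with the time--translation property of $\mathcal{T}$ and the monotonicity of $\varpi$, then upgrades \eqref{ME} to the claimed exponential decay of the $\Hu^1$ flux.
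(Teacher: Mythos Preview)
Your plan matches the paper's proof in all essential points: both reduce to a far-field first-order estimate via Theorem~\ref{REDSHIFTMOR} (which, as you correctly observe, already controls the near-horizon first-order density and the full zeroth-order density), then run a cutoff multiplier argument on the twisted equations \eqref{REKG4}--\eqref{REKG5}, absorbing the cross term by Young's inequality (this is where $|\kappa-\tfrac12|<\tfrac12$ is used), the potential term via \eqref{PotEst} and the already-controlled zeroth order, and the transition-region errors by the redshift bulk.

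The one genuine discrepancy is the choice of multiplier weight and its effect on the flux at $\mi$. The paper takes $f=-\tfrac{r_v}{\Omega^2}r\eta$ and $h=-\tfrac{r_u}{\Omega^2}r\eta$; with these geometric normalizations the $\mi$-flux collapses to a term proportional to $(\tilde{\mc{R}}\psi)^2$, which vanishes directly under either boundary condition. Your symmetric choice $\eta r$ is equivalent to the paper's in the bulk (by Corollary~\ref{SEst}), but at $\mi$ the flux instead becomes $r^2\bigl[(\grt_u\psi)^2-(\grt_v\psi)^2\bigr]$, which at $\mi$ is proportional to $\tilde{\mc{R}}\psi\cdot\mc{T}\psi$ rather than $(\tilde{\mc{R}}\psi)^2$. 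This is \emph{not} what the flux computation in Lemma~\ref{lwe} handles: there the $\mi$-term reduces to $(\mc{R}\psi)^2$ precisely because the multiplier $X=F(r)\mc{R}$ carries the $\tfrac{r_u}{\Omega^2},\tfrac{r_v}{\Omega^2}$ weights. Making your cross-product vanish requires a separate appeal to the decay of $\mc{T}\psi$ at $\mi$, which is available at higher regularity but is not the same structural cancellation. The easiest fix is simply to adopt the paper's asymmetric weights, after which your argument and the paper's coincide.
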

\begin{proof}
	Fix $\infty >r_M>r_Z>r_Y$, (we will specify the conditions they need to satisfy later) and let $\eta(r)$ be the cut off function defined by
	\begin{equation}
	\eta(r)=
	\begin{cases}
	0 & \text{ for } r_Z\ge r,\\
	\text{smooth with bounded derivative} & \text{ for } r_M \ge r \ge r_Z,\\
	1 & \text{ for } r \ge r_M.
	\end{cases}
	\end{equation}	
	From equations \eqref{REKG4} and \eqref{REKG5} we derive
	\begin{equation}
	\begin{split}
	\hf \pa_u\left(h(\grt_v\psi)^2  \right) +\hf \pa_v\left(f(\grt_u\psi)^2  \right)&=   \left( \kappa-\hf\right)\grt_u\psi\grt_v\psi\left(\f{r_v}{r}h+\f{r_u}{r}f \right) \\ &+\left(\hf h_u-\f{r_u}{r}h \right)(\grt_v\psi)^2 +\left(\hf f_v-\f{r_v}{r}f \right)(\grt_u\psi)^2\\&- \f{\Omega^2}{4}V\psi\left(h\grt_v\psi+f\grt_u\psi \right).
	\end{split}
	\end{equation}	
	We now make the choice
	\begin{equation}
	f= -\f{r_v}{\Omega^2}{r}\eta, \quad h = -\f{r_u}{\Omega^2}{r}\eta .
	\end{equation}
	\subsubsection{Flux estimates}~\\
	Applying the divergence theorem and examining the flux terms 
	\begin{equation}
	\begin{split}
	&\int_D\hf \pa_u\left(h(\grt_v\psi)^2  \right) +\hf \pa_v\left(f(\grt_u\psi)^2  \right)du'dv'\\
	= &\hf\int_{v_0}^{v}-\f{r_u}{\Omega^2}\eta{r} (\grt_v\psi)^2(u,v')dv'+ \hf\int_{u_\mi}^{u}\f{r_v}{\Omega^2}\eta{r} (\grt_u\psi)^2(u',v)du'\\ &+ \hf\int_{u_0}^{u}\f{r_v}{\Omega^2}\eta{r} (\grt_u\psi)^2(u',v_0)du'+ \int_{\mi}\hf\f{\Omega^2}{-r_u}r\eta\left(\f{r_v^2}{\Omega^4}\left(\grt_u\psi \right)^2 +\f{r_u^2}{\Omega^2}\left(\grt_v\psi \right)^2   \right)  \\ \le& C_{M,l,g}\mathbb{F}(u,v)
	+ \f{1}{12}\int_{\mi}\f{\Omega^2}{-r_u}r\left( \tilde{\mc{R}}\psi\right)^2  .
	\end{split}
	\end{equation}
	The latter term then vanishes due to the boundary conditions (it decays like $r^{-2\kappa})$, (consistent with the linear problem of \cite{dunn_kleingordon_2016}).\\
	\subsubsection{Bulk terms}~\\ \\
	Defining $\hat{f}$ and $\hat{h}$ through
	\begin{equation}
	f= \hat{f}\eta, \quad h = \hat{h}\eta,
	\end{equation}
	and noting the identity
	\begin{equation}
	\begin{split}
	\hf \pa_u\left(\eta \hat{h}(\grt_v\psi)^2  \right) +\hf \pa_v\left(\eta \hat{f}(\grt_u\psi)^2  \right) &= \eta\left( \hf \pa_u\left(\hat{h}(\grt_v\psi)^2  \right) +\hf \pa_v\left(\hat{f}(\grt_u\psi)^2  \right)\right) \\
	&+\eta'\left( r_u \hat{h}(\grt_v\psi)^2+r_v\hat{f}(\grt_u\psi)^2\right).
	\end{split}
	\end{equation}
	We see there are two regions of interest $r\ge r_M$, and $r_M \ge r \ge r_Z$. We deal with the former first	\\	
	\begin{equation}
	\begin{split}
	&\eta\left( \hf \pa_u\left(\hat{h}(\grt_v\psi)^2  \right) +\hf \pa_v\left(\hat{f}(\grt_u\psi)^2  \right)\right)\\=&   -2\left( \kappa-\hf\right)\grt_u\psi\grt_v\psi\left(\f{r_vr_u}{\Omega^2} \right) \\ &+\left(\hf\f{r_u^2}{\Omega^2} +\f{2\pi r^2}{\Omega^2}\psi_u^2\right)(\grt_v\psi)^2 +\left(\hf\f{r_v^2}{\Omega^2} +\f{2\pi r^2}{\Omega^2}\psi_v^2 \right)(\grt_u\psi)^2\\&+ \f{1}{4}rV\psi\left(r_u\grt_v\psi+r_v\grt_u\psi \right) \\
	\ge& 
	-2\left( \kappa-\hf\right)\grt_u\psi\grt_v\psi\left(\f{r_vr_u}{\Omega^2} \right) +\left(\hf\f{r_u^2}{\Omega^2} \right)(\grt_v\psi)^2 +\left(\hf\f{r_v^2}{\Omega^2} \right)(\grt_u\psi)^2\\&+ \f{1}{4}rV\psi\left(r_u\grt_v\psi+r_v\grt_u\psi \right).
	\end{split}
	\end{equation}
	We apply Young's inequality to see 
	\begin{equation}
	\begin{split}
	\hf \pa_u\left(\hat{h}(\grt_v\psi)^2  \right) +\hf \pa_v\left(\hat{f}(\grt_u\psi)^2  \right)& \ge 
	\left(\left( \hf-\abs{\kappa-\hf}\right) \f{r_u^2}{\Omega^2} \right)(\grt_v\psi)^2 +\left(\left( \hf-\abs{\kappa-\hf}\right)\f{r_v^2}{\Omega^2} \right)(\grt_u\psi)^2\\&+ \f{1}{4}rV\psi\left(r_u\grt_v\psi+r_v\grt_u\psi \right).
	\end{split}
	\end{equation}
	So the first row terms are manifestly positive. We then estimate
	\begin{equation}
	\begin{split}
	\f{1}{4}rV\psi\left(r_u\grt_v\psi+r_v\grt_u\psi \right) &\le \f{1}{8} r^2\psi^2 + \f{1}{8}V^2\left(r_u^2\left( \grt_v\psi\right)^2+r_u^2\left( \grt_u\psi\right)^2  \right)\\ &\le C_{Y,l,M,g}\left( r^2\psi^2+ r^{-2+4\kappa}\left( \left( \grt_u\psi\right)^2 +\left( \grt_v\psi\right)^2 \right)  \right) .
	\end{split}
	\end{equation}
	Using the estimates in corollary \eqref{SEst}, we have in this region for $r_M$ chosen large enough
	\begin{equation}
	\int_D \f{r^4}{-r_u}(\grt_u\psi)^2+(-r_u)\psi^2 +\f{-r_ur^2}{\Omega^2}(\grt_v\psi)^2 + \f{r_v}{r}\psi^2  dudv \le C_{l,g,M}\mathbb{F}(u,v).
	\end{equation} 
	Then in the latter region  
	\begin{equation}
	\begin{split}
	\int_D\eta\left( \hf \pa_u\left(\hat{h}(\grt_v\psi)^2  \right) +\hf \pa_v\left(\hat{f}(\grt_u\psi)^2  \right)\right)
	+\eta'\left( r_u \hat{h}(\grt_v\psi)^2+r_v\hat{f}(\grt_u\psi)^2\right)dv.
	\end{split}
	\end{equation} 
	We see that as the derivative of $\eta$ is bounded, and $r$ is bounded above and below in the region where the stronger estimates \eqref{SEst} hold. We can then trivially bound these terms above by the global estimate \eqref{lwe}.
	We then combine this higher weighted estimate with the global one to see 
	\begin{equation}
	\int_D \f{r^4}{-r_u}(\grt_u\psi)^2+(-r_u)\psi^2 +\f{-r_ur^2}{\Omega^2}(\grt_v\psi)^2 + \f{r_v}{r}\psi^2  dudv \le C_{l,g,M}\mathbb{F}(u,v).\qedhere
	\end{equation}
\end{proof}
\subsection{Exponential decay}
\begin{thm}\label{expoDecay}
	Defining the $v$-flux,
	\begin{equation}
	\mc{F}(v) = \int_{u_\mi}^{u_\mh}  \left( \f{r^4}{-r_u}(\grt_u\psi)^2+(-r_u)\psi^2 \right) (\bar{u},v)d\bar{u},
	\end{equation}
	and the region $\tilde{D}(v_1,v_2) = D(u_\mh,v_2)\cap \{v\ge v_1\}$. Then for $\kappa< \hf$,
	\begin{equation}\label{expoDecay2}
	\mc{F}(v) \le \tilde{C}_{M,l,g}\mc{F}(v_0)e^{-\alpha v}. 
	\end{equation}
	for some uniform $\alpha>0.$
\end{thm}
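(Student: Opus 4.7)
The strategy is the classical dyadic pigeonhole argument bridging the integrated Morawetz estimate \eqref{ME} with a flux boundedness statement along $v=\text{const}$ slices. Two ingredients are needed: an integrated decay estimate of the form $\int_{v_1}^\infty \mc{F}(v)\,dv \le C\,\mc{F}(v_1)$, and a non-increase-up-to-constant $\mc{F}(v_2) \le C\,\mc{F}(v_1)$ for $v_2 \ge v_1$. Once both are in place, exponential decay follows by a standard pigeonhole.

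First I would observe that $\mc{F}(v)$ coincides with the $u$-integral component of $\norm{\psi}{\Hu^1}^2(u_\mh,v)$, so $\mc{F}$ is directly comparable to the $v$-slice of the $\Hu^1$-flux (modulo the horizon-flux piece). In particular, the orbital-stability estimate \eqref{OSE} already furnishes the crude uniform bound $\mc{F}(v)\le C\,\fl(v_0)$. To upgrade this to the sharper $\mc{F}(v_2)\le C\,\mc{F}(v_1)$, I would re-run the Hawking-mass monotonicity argument on the substrip $\tilde D(v_1,v_2)$, treating the $v=v_1$ slice as the ``initial'' data ray. Since the monotonicity of $\varpi$ along both $u$- and $v$-rays (Corollary \ref{HMC2}) and the red-shift estimate (Theorem \ref{C3RSE}) are translation-invariant in $v$, we obtain $\mc{F}(v_2) + [\text{horizon flux from $v_1$ to $v_2$}] \le C\,\mc{F}(v_1)$, where the boundary contribution on $\mi$ vanishes by the imposed boundary conditions.

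Next I would apply the Morawetz estimate \eqref{ME} on $\tilde D(v_1,v_2)$ itself. Inspection of the proof of \eqref{ME} shows that every step -- the Hawking-mass derivation of the low-weighted estimate, the red-shift cut-off vector field, and the $\eta$-localised vector field near $\mi$ -- used only fluxes on the past boundary of the integration domain. Hence the same inequality goes through with $v_0$ replaced by $v_1$, yielding
\begin{equation*}
\int_{v_1}^{v_2}\mc{F}(v)\,dv \;\le\; \int_{\tilde D(v_1,v_2)} \Bigl(\tfrac{r^4}{-r_u}(\grt_u\psi)^2+(-r_u)\psi^2+\tfrac{-r_u r^2}{\Omega^2}(\grt_v\psi)^2+\tfrac{r_v}{r}\psi^2\Bigr)\,du\,dv \;\le\; C\bigl(\mc{F}(v_1)+\mc{F}(v_2)\bigr),
\end{equation*}
the first inequality coming from the fact that the $u$-integral of the first two summands at fixed $v$ is exactly $\mc{F}(v)$. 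Combining with the boundedness $\mc{F}(v_2)\le C\,\mc{F}(v_1)$ and letting $v_2 \to \infty$ gives $\int_{v_1}^{\infty}\mc{F}(v)\,dv \le C'\,\mc{F}(v_1)$.

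Finally, fix $T_0>0$ large. Since $\int_{v_1}^{v_1+T_0}\mc{F}(v)\,dv \le C'\,\mc{F}(v_1)$, pigeonhole yields some $v^{*}\in[v_1,v_1+T_0]$ with $\mc{F}(v^{*})\le C'\mc{F}(v_1)/T_0$. Applying the boundedness statement from $v^{*}$ to $v_1+T_0$ then produces $\mc{F}(v_1+T_0)\le (CC'/T_0)\,\mc{F}(v_1)$. Choosing $T_0$ so that $q:=CC'/T_0 < 1$ and iterating over the sequence $v_0, v_0+T_0, v_0+2T_0,\dots$ gives $\mc{F}(v_0+nT_0)\le q^{n}\mc{F}(v_0)$. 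Interpolating via the boundedness statement on intermediate intervals yields the claimed $\mc{F}(v)\le \tilde C\,\mc{F}(v_0)e^{-\alpha v}$ with $\alpha = -T_0^{-1}\log q > 0$. The main technical obstacle I anticipate is the careful bookkeeping of boundary fluxes when restarting the Morawetz argument at $v_1$: one must check that the multiplier-identity boundary terms at $v=v_1$ are indeed controlled by $\mc{F}(v_1)$ rather than by the original initial flux $\mc{F}(v_0)$, and that the horizon-flux contributions absorbed via the red-shift estimate remain small in the same sense. Everything else reduces to the standard Morawetz-to-exponential-decay scheme.
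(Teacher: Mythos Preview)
Your proposal is correct and follows essentially the same approach as the paper: apply the Morawetz estimate \eqref{ME} on the shifted region $\tilde D(v_1,v_2)$ to obtain integrated decay, combine with the flux-boundedness coming from re-running the Hawking-mass/red-shift energy argument with $v_1$ as the new initial ray, and conclude via a dyadic pigeonhole iteration. The paper compresses all of this into the single line $\mc{F}(v_{n+1})+c\int_{v_n}^{v_{n+1}}\mc{F}\,d\bar v\le C\,\mc{F}(v_n)$, which packages both ingredients at once, but your explicit separation into a boundedness step and an integrated-decay step is exactly the content behind that inequality, and your flagged concern about boundary-flux bookkeeping when restarting at $v_1$ is precisely the point that needs checking.
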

\begin{proof}
	Applying the estimate \eqref{ME} on the region $\tilde{D}(v_{n},v_{n+1})$ yields the estimate
	\begin{eqnarray}
	\mc{F}(v_{n+1}) + c_{M,l}\int_{v_n}^{v_{n+1}}\mc{F}(\bar{v})d\bar{v} \le C_{M,l,g} \mc{F}(v_{n}).
	\end{eqnarray}
	From which the result follows from the standard pigeon hole principle arguments.
\end{proof}
\begin{coro}\label{expodec}
	We have in $\mathscr{R}_\mh$ for $\kappa<\hf$, that
	\begin{equation}
	\sup_{u}\abs{2\chi(u,v)-1}+\sup_{u}\abs{\varpi(u,v)-M} \le \tilde{C}_{M,l,g}\exp\left(-\alpha\cdot v \right), 
	\end{equation}
	and	
	\begin{equation}
	\abs{\psi(u,v)} \le \tilde{C}_{M,l,g}r^{-\f{3}{2}+\kappa}\exp\left(-\alpha\cdot v \right).
	\end{equation}
\end{coro}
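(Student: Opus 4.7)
The plan is to deduce Corollary \ref{expodec} as a direct consequence of the exponential flux decay of Theorem \ref{expoDecay}, combined with the Sobolev inequality and the transport equations for $\chi$ and $\varpi$ that have already been established. Since $\mc{F}(v) \le \tilde{C}_{M,l,g}\mc{F}(v_0)e^{-\alpha v}$ provides a single exponentially decaying quantity controlling the relevant $u$-integrals on constant-$v$ slices, the corollary follows by routing each of the three pointwise bounds through this flux.

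First I would establish the pointwise bound on $\psi$. Inspecting the proof of Lemma \ref{SI}, the control of $|r^{-g}\psi|(u,v)$ at a point on a $v=\mathrm{const}$ slice in fact only uses the $u$-integral portion of $\norm{\psi}{\Hu^1}(u,v)$, together with $\norm{\psi}{\Lu^2}(u_\mh, v)$; both are dominated by $\mc{F}(v)^{1/2}$. Applying Theorem \ref{expoDecay} then yields $|r^{\f{3}{2}-\kappa}\psi|(u,v) \le \tilde{C}_{M,l,g} e^{-\f{\alpha}{2} v}$, which (after absorbing the factor $\tfrac{1}{2}$ into $\alpha$) is the second claim of the corollary.

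Next I would treat $\chi$ by integrating the Raychaudhuri-type transport equation $\pa_u(\log\chi) = \f{4\pi r}{r_u}(\pa_u\psi)^2$ from $\mi$, where $\chi|_\mi=\hf$, to obtain
\begin{equation}
\chi(u,v) = \hf \exp\!\left(-\int_{u_\mi}^{u}\f{4\pi r}{-r_u}(\pa_u\psi)^2(\bu,v)\, d\bu\right).
\end{equation}
Splitting $\pa_u\psi = \grt_u\psi + \f{g r_u}{r}\psi$ and using the uniform bound $r \ge r_{\min}$, the argument of the exponential is controlled by a constant multiple of $\mc{F}(v)$, hence decays exponentially. The elementary inequality $|1 - e^{-x}| \le x$ on $[0,\infty)$ then gives $\sup_u|2\chi(u,v)-1| \le \tilde{C}_{M,l,g} e^{-\alpha v}$. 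For $\varpi$ one proceeds analogously: integrating the $\pa_u\varpi$ equation of Lemma \ref{HME} from $\mi$ (where $\varpi=M$ by Lemma \ref{C3HMCL}) gives $\varpi(u,v) - M$ as an integral whose first two terms $-8\pi r^2 r_v\Omega^{-2}(\grt_u\psi)^2 e^{4\pi g\psi^2}$ and $4\pi g^2 r_u r^{-1}\varpi\psi^2$ are directly dominated by $\mc{F}(v)$, while the cubic remainder $l^{-2}r_u r^2 f(\psi^2)$ with $|f(\psi^2)| \lesssim \psi^4$ is handled by factoring out the already-established pointwise control $\sup_u (r^2\psi^2) \lesssim r^{2\kappa-1}_{\min}e^{-\alpha v}$.

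The only mildly delicate point is the ordering: the $f(\psi^2)$ estimate needs the pointwise decay of $\psi$, which in turn needs the Sobolev bound on a constant-$v$ slice. Since both the Sobolev bound and the $\chi$, $\varpi$ integrals rely on the single quantity $\mc{F}(v)$ of Theorem \ref{expoDecay}, there is no genuine circularity, and the three bounds may be obtained sequentially. I do not anticipate any substantive obstacle beyond this bookkeeping.
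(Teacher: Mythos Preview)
Your proposal is correct and is essentially the same approach as the paper's proof, which simply cites ``This follows from \eqref{OSE} and \eqref{expoDecay2}.'' You have correctly unpacked what that sentence means: the individual lemmas underlying Theorem \ref{basicests} (the Sobolev inequality Lemma \ref{SI}, the transport identity for $\chi$, and the $\pa_u\varpi$ equation of Lemma \ref{HME}) bound the pointwise quantities at $(u,v)$ by the $u$-flux on the slice $\{v=\mathrm{const}\}$, which is exactly $\mc{F}(v)$, and Theorem \ref{expoDecay} gives the exponential decay of $\mc{F}(v)$. Your observation that only the $u$-integral portion of the norms is used, and your careful sequential ordering to handle the $f(\psi^2)$ term, are the right details.
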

\begin{proof}
	This follows from \eqref{OSE} and \eqref{expoDecay2}.
\end{proof}
\begin{rem}
	It is in this sense that we say the metric is converging to a Toroidal Schwarzschild-AdS metric of mass $M$, exponentially in $v$, in the Eddington Finkelstein gauge. 
\end{rem}
\begin{coro}\label{expohm1}
	In $\mathscr{R}_\mh$, for $\kappa < \hf$, we have
	\begin{equation}
	\abs{\f{\varpi_1(u,v)-M}{r^{2\kappa}}}\le \tilde{C}_{M,l,g}\exp\left(-\alpha\cdot v \right).  
	\end{equation}
\end{coro}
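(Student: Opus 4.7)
The plan is to deduce the exponential decay of $(\varpi_1 - M)/r^{2\kappa}$ directly from the analogous decay of $\varpi - M$ and of $\psi$ already established in Corollary \ref{expodec}, using the same algebraic manipulation that proved the uniform bound in Lemma \ref{OHME}. Recall that from the definition $\varpi_1 = \varpi e^{-4\pi g \psi^2} - \tfrac{r^3}{2l^2}\left(e^{-4\pi g \psi^2} - 1\right)$ one has the decomposition
\begin{equation}
\f{\varpi_1 - M}{r^{2\kappa}} = \f{\varpi - M}{r^{2\kappa}}\, e^{-4\pi g \psi^2} + \f{M}{r^{2\kappa}}\left(e^{-4\pi g \psi^2} - 1\right) + \f{r^{3-2\kappa}}{2l^2}\left(1 - e^{-4\pi g \psi^2}\right).
\end{equation}

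The strategy is then to estimate each of the three terms separately using the pointwise exponential bounds from Corollary \ref{expodec}, namely $\abs{\varpi - M} \le \tilde{C}e^{-\alpha v}$ and $\abs{\psi}\le \tilde{C}r^{-\f{3}{2}+\kappa}e^{-\alpha v}$. For the first term, the bootstrap gives $e^{-4\pi g \psi^2} \le C$, and $r \ge r_{\min}$ provides a uniform lower bound for $r^{2\kappa}$, so it is controlled by $\tilde{C}e^{-\alpha v}$. For the second term, the elementary inequality $\abs{e^{-4\pi g \psi^2} - 1} \le C\psi^2$ for small $\psi^2$ yields a bound by $CM r^{-2\kappa}\psi^2 \le \tilde{C}r^{-3}e^{-2\alpha v}$, again decaying exponentially after using $r \ge r_{\min}$.

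The crucial (but routine) cancellation occurs in the third term, where the weight $r^{3-2\kappa}$ threatens to grow unboundedly: here one uses $\abs{1 - e^{-4\pi g \psi^2}} \le C\psi^2$ together with the pointwise decay of $\psi$, giving
\begin{equation}
\f{r^{3-2\kappa}}{2l^2}\abs{1 - e^{-4\pi g \psi^2}} \le C r^{3-2\kappa}\psi^2 \le C r^{3-2\kappa}\cdot r^{-3+2\kappa} e^{-2\alpha v} = C e^{-2\alpha v},
\end{equation}
so the $r$-weights exactly cancel, producing the required uniform exponential decay. Combining the three estimates yields the corollary with a possibly reduced rate (take the minimum of $\alpha$ and $2\alpha$, i.e.\ keep $\alpha$) and a new constant $\tilde{C}_{M,l,g}$.

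There is no real obstacle here — the proof is a short algebraic computation whose point is that the exponential decay already proved for $\varpi$ and $\psi$ transfers to $\varpi_1$, provided one checks the cancellation of the potentially divergent $r^{3-2\kappa}$ weight against the $r^{-3+2\kappa}$ appearing from the Sobolev decay rate of $\psi$. All the analytic content is contained in Corollary \ref{expodec}; the present corollary is essentially a restatement in terms of the original Hawking mass.
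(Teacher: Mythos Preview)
Your proof is correct and follows essentially the same approach as the paper: both decompose $\varpi_1 - M$ using the identity $\varpi_1 = \varpi e^{-4\pi g\psi^2} - \tfrac{r^3}{2l^2}(e^{-4\pi g\psi^2}-1)$, then estimate each piece via the exponential decay of $\varpi - M$ and $\psi$ from Corollary~\ref{expodec}, with the key observation that the $r^{3-2\kappa}$ weight is cancelled by $\psi^2 \lesssim r^{-3+2\kappa}e^{-2\alpha v}$. The only cosmetic difference is that the paper writes $\varpi e^{-4\pi g\psi^2} - M = (\varpi - M) + \varpi(e^{-4\pi g\psi^2}-1)$ whereas you write it as $(\varpi - M)e^{-4\pi g\psi^2} + M(e^{-4\pi g\psi^2}-1)$; both rearrangements lead to the same estimates.
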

\begin{proof}
	Write 
	\begin{equation}
	\begin{split}
	\f{\varpi_1(u,v)-M}{r^{2\kappa}} =& r^{-2\kappa}\left(\varpi(u,v)-M\right) \\&+r^{-2\kappa}\left(e^{-4\pi g\psi^2}-1 \right)\varpi - \f{r^{3-2\kappa}}{2l^2}\left(e^{-4\pi g\psi^2}-1 \right). 
	\end{split}
	\end{equation}
	Taking absolute values the result then follows from theorem \ref{basicests} and corollary \ref{expodec}.
\end{proof}
\begin{coro}
	We have that for $\kappa < \hf$, the Lorentzian Penrose inequality
	\begin{equation}
	\sup_{\mh}r \le r_+.
	\end{equation}
	Furthermore we have along $\mh$ that $r$ converges to $r_+$ exponentially in $v$.
\end{coro}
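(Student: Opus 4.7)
The strategy combines the monotonicity of the renormalised Hawking mass $\varpi$ established in Corollary \ref{HMC1} with the geometric fact that $\mh = N(u_\mh)$ is the boundary of the regular region $\sr_\mh$. The first observation is that on $\mh$, in an appropriate limit sense, $r_v = 0$: in $\sr_\mh$ we have $r_v > 0$, so by continuity $r_v \ge 0$ on $\mh$, while the definition of $u_\mh$ as the supremum of $u$ along whose outgoing rays $r \to \infty$ precludes strict positivity of $r_v$ on all of $\mh$ (otherwise the regular region could be extended). Working along $\mh$ and using continuity of $r$, $r_u$, $r_v$, $\Omega^2$, one obtains $r_v|_{\mh} = 0$, and hence from the definitions
\[
\varpi|_{\mh} \;=\; \varpi_1|_{\mh} \;=\; \frac{r^3}{2l^2}\bigg|_{\mh}.
\]

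Next, I would invoke the monotonicity statement $\varpi \le M$ in $\sr_\mh$ (Corollary \ref{HMC1}) and extend it to $\mh \subset \overline{\sr_\mh}$ by continuity. Combined with the identity above, this immediately yields
\[
\frac{r^3}{2l^2}\bigg|_{\mh} \;=\; \varpi|_{\mh} \;\le\; M \;=\; \frac{r_+^3}{2l^2},
\]
i.e.\ $\sup_{\mh} r \le r_+$, which is the Penrose inequality.

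For the exponential convergence $r \to r_+$ along $\mh$, I would apply the exponential decay estimate of Corollary \ref{expodec} (equivalently, Corollary \ref{expohm1} applied to $\varpi_1$, noting that $r$ is bounded on $\mh$ by the Penrose inequality just proved). This gives
\[
\left|\,\frac{r^3(u_\mh,v)}{2l^2} - M\,\right| \;=\; |\varpi(u_\mh,v) - M| \;\le\; \tilde C_{M,l,g}\, e^{-\alpha v}.
\]
Using the factorisation $r^3 - r_+^3 = (r - r_+)(r^2 + r r_+ + r_+^2)$ together with the uniform lower bound $r \ge r_{min} > 0$ on $\overline{\sr_\mh}$ (established in Section \ref{IDS}), one divides through by the strictly positive factor $r^2 + r r_+ + r_+^2 \ge r_+^2$ to obtain
\[
|r(u_\mh,v) - r_+| \;\le\; C_{M,l,g}\, e^{-\alpha v},
\]
which is the asserted exponential convergence.

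The only delicate step is the justification that $r_v \equiv 0$ along $\mh$; one must rule out pathological scenarios in which $\mh$ lies partly in the interior of $\sr_\mh$ or along which $r$ fails to remain bounded (cf.\ the case distinction in the completeness of $\mi$ proof). An equivalent, perhaps cleaner, route is to avoid this identification: one has on $\mh$ that $r_u < 0$ and, by the limiting argument, $r_v \le 0$, so $r_u r_v \ge 0$ and therefore $\varpi \ge r^3/(2l^2)$, which combined with $\varpi \le M$ again yields the Penrose inequality. Exponential convergence then requires in addition controlling $\varpi - r^3/(2l^2) = (2 r_u r_v r/\Omega^2)\,e^{4\pi g \psi^2}$ from above by $C e^{-\alpha v}$, which follows from the exponential decay of $\varpi - M$ together with the one-sided inequality $\varpi - r^3/(2l^2) \le M - r^3/(2l^2) + |\varpi - M|$ and the monotone convergence of $r$ along $\mh$ to its limit.
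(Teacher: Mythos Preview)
Your argument rests on the claim that $r_v = 0$ (or in the alternative route, $r_v \le 0$) along $\mh = N(u_\mh)$. This conflates the event horizon with the apparent horizon. By the definition of $u_\mh$, every ray $N(u)$ with $u < u_\mh$ lies in $\sr = \{r_v > 0\}$, so continuity in $u$ gives $r_v \ge 0$ on $\mh$, and for a genuinely dynamical perturbation one has $r_v > 0$ strictly for finite $v$. With the correct sign $r_u r_v \le 0$, the Hawking mass formula yields $\varpi \le r^3/(2l^2)$ on $\mh$, which combined with $\varpi \le M$ does \emph{not} give $r^3/(2l^2) \le M$; your Penrose argument therefore collapses.

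The paper proceeds instead by contradiction, using the exponential decay of $\varpi_1 - M$ (Corollary~\ref{expohm1}) rather than the bare monotonicity of $\varpi$. If $r \ge r_+ + \delta$ held along $\mh$ (note $r$ is non-decreasing there since $r_v \ge 0$, so it suffices to rule this out from some $v_i$ onward), then for $v$ large
\[
\frac{\mu_1}{r^2} \;=\; \frac{1}{l^2} - \frac{2\varpi_1}{r^3} \;\ge\; \frac{1}{l^2} - \frac{2M}{(r_+ + \delta)^3} - \frac{2|\varpi_1 - M|}{r^3} \;\ge\; c_{M,l}\,\delta,
\]
whence $r_v/r^2 = \chi\,\mu_1/r^2$ is bounded below by a positive constant. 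Integrating forces $\int_{v_i}^{v} r_v/r^2\,d\bar v$ to grow linearly in $v$, contradicting the elementary bound $\int_{v_i}^{v} r_v/r^2\,d\bar v = r(v_i)^{-1} - r(v)^{-1} \le r_{min}^{-1}$.

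Your idea for the exponential convergence can in fact be salvaged, and is arguably cleaner than the paper's integrated decay argument, once the Penrose bound is in hand by the paper's method: from $r_v \ge 0$ one has the correct one-sided inequality $r^3/(2l^2) \ge \varpi$, and Corollary~\ref{expodec} gives $\varpi \ge M - Ce^{-\alpha v}$, so $r^3 \ge r_+^3 - C'e^{-\alpha v}$; together with $r \le r_+$ this yields $|r - r_+| \le C'' e^{-\alpha v}$ directly, without passing through an integrated estimate.
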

\begin{proof}
	This is an adaptation of the proof in \cite{holzegel_stability_2013} to this setting.\\	
	Assume for contradiction that $r\ge r_+ + \delta$ for some $\delta>0,$ held along $\mh$. Then by corollary \ref{expohm1} we have the existence of a $v_i>v_0$ such that
	\begin{equation}\label{C3rehb}
	\f{\mu_1}{r^2} \ge \left(\f{1}{l^2}-\f{2M}{r^3} \right)-\f{2\abs{M-\varpi_1}}{r^3}\ge c_{M,l}\delta, 
	\end{equation}
	holds on $\mh\cap\{v\ge v_i\}$. Now integrating $r_vr^{-2}$ we see
	\begin{equation}
	\int_{v_i}^{v}\f{r_v}{r^2}d\bar{v} = - \f{1}{r(v)}+\f{1}{r(v_i)}\le C,
	\end{equation}
	holds for some uniform $C>0$. However from \eqref{C3rehb} we have 
	\begin{equation}
	\int_{v_i}^{r_v}\f{r_v}{r^2}d\bar{v} = \int_{v_i}^{v} \f{\mu}{r^2}\chi d\bar{v} \ge C_{M,l}\delta\cdot \left(v-v_i \right) ,
	\end{equation}
	which is clearly a contradiction.
	\\ Now that we have seen $r$ is bounded along $\mh$, we can prove the exponential decay through an integrated decay estimate.
	\begin{equation}
	\begin{split}
	\int_{v}^\infty \left( r_+-r(u_\mh,\bar{v})\right)  d\bar{v} &\le C_{M,l}\int_{v}^\infty\mu_1d\bar{v} +\tilde{C}_{M,l,g}e^{\left( -B_{M,l,g}v\right)} 	\\
	&\le C_{M,l}\int_{v}^\infty r_v d\bar{v}+ \tilde{C}_{M,l,g}e^{ -B_{M,l,g}v}\\	
	&\le C_{M,l}\left( r_+-r(u_\mh,v)\right)+ \tilde{C}_{M,l,g}e^{ -B_{M,l,g}v}.
	\end{split}
	\end{equation}
	From the positivity of $r_+-r(u_\mh,v)$ we derive the integrated decay statement
	\begin{equation}
	\int_{v_1}^{v_2} \left( r_+-r(u_\mh,\bar{v})\right)  d\bar{v} \le  C_{M,l}\left( r_+-r(u_\mh,v_1)\right)+ \tilde{C}_{M,l,g}e^{-C_{M,l,g}v_1}.
	\end{equation}
	Exponential decay follows in a similar manner to theorem \ref{expoDecay}.
\end{proof}
\subsection{The main theorem}
\begin{thm}\label{MT}
	For a weak solution arising from small initial data of the Einstein--Klein-Gordon system, within the class of square flat toroidal symmetries satisfying: Dirichlet or Neumann boundary conditions and a Klein-Gordon mass bound $ \kappa \le \hf$. The associated maximal development of the Lorentzian manifold is a black hole spacetime, with a regular future horizon and a complete null infinity. Furthermore for $\kappa< \hf$ the estimates of \eqref{OSE} and \eqref{asymstab} hold for any $(u,v)$ in the regular region exterior to the apparent horizon. This implies that $\psi$ decays exponentially in $v$ on this region.     
\end{thm}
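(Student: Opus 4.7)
The proof is essentially a synthesis of the results established in Sections \ref{EKGSection}--\ref{AS}, and I would present it as an assembly argument. First I would invoke the local wellposedness result (Theorem \ref{WP}) together with the geometric uniqueness statement to produce the maximal development $(\mc{Q}\times\mathfrak{T}^2,g,\psi)$ of the perturbed initial data set described in Section \ref{IDS}, noting that both the Dirichlet and Neumann boundary conditions are covered by the weak formulation and that the renormalised mass $\varpi$ (rather than the divergent $\varpi_1$) provides the correct energy potential in the Neumann case.

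Next, I would invoke Corollary \ref{iep} (the interior extension principle applied to the marginally trapped sphere provided by Lemma \ref{COS}) to conclude that the quotient contains a region bounded to the future by a null ray $u=u_\mh$ along which $\rt\to 0$. On the regular region $\sr_\mh$ I would then apply the orbital stability statement (Theorem \ref{basicests} / equation \eqref{OSE}), which is proved via the bootstrap $|r^{3/2-\kappa}\psi|<b$ closed by the monotonicity of $\varpi$, the red-shift estimate in $\{r\le r_X\}$, and the Hardy--Sobolev inequalities in $\{r\ge r_Y\}$. This establishes the qualitative Penrose diagram and, combined with the completeness-of-null-infinity proposition proved at the end of Section \ref{OSS} (using the zig-zag curve argument and the extension principle Theorem \ref{ENPI}), shows $v_m=\infty$, giving the regular horizon and complete $\mi$ claimed in the theorem.

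For the exponential decay portion (the case $\kappa<\hf$), I would chain together the three Morawetz-type estimates of Section \ref{AS}: the low-weight global estimate of Lemma \ref{lwe}, its removal of the $\gamma^{-2}$ degeneration via the redshift vector field $Y$ in Lemma \ref{C3VFRS}, and finally the sharp-weight Morawetz estimate \eqref{ME} obtained by multiplying the twisted Klein--Gordon equation by $-\frac{r_{u/v}}{\Omega^2}r\eta$ with a cutoff localising to $\{r\ge r_Z\}$. This yields \eqref{asymstab}. Applying the pigeonhole argument of Theorem \ref{expoDecay} to the flux $\mc{F}(v)$ then gives the exponential decay $\mc{F}(v)\lesssim e^{-\alpha v}$, and Corollary \ref{expodec} transfers this decay to $\chi-\hf$, $\varpi-M$ and the pointwise bound on $\psi$, so that the metric converges exponentially in $v$ to the toroidal AdS--Schwarzschild solution of mass $M$ in the Eddington--Finkelstein gauge.

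The main conceptual obstacle, which drives the entire argument, is reconciling the Neumann boundary condition with the requirement that the energy be finite: the classical Hawking mass $\varpi_1$ diverges at $\mi$, so one must work with the twice-renormalised $\varpi$ of Lemma \ref{HME}, exploit the sign of $f(\psi^2)$ established in Lemma \ref{HMTL1}, and carefully track the $r$-weights in all Hardy and Sobolev estimates so that no boundary term at $\mi$ spoils the positivity in either the orbital or the Morawetz steps. The restriction $\kappa<\hf$ in the decay statement enters at exactly one point, namely the choice $n=2$ in the multiplier of Lemma \ref{lwe}, where larger $\kappa$ would leave an uncontrolled zeroth-order bulk term; this is the sharpest step of the argument and the one I would flag as the principal technical bottleneck.
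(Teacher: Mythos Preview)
Your synthesis is correct and matches the paper's own approach: Theorem \ref{MT} is stated in the paper without a separate proof precisely because it is the culmination of the results in Sections \ref{EKGSection}--\ref{AS}, and you have assembled those results in the right order with the right dependencies. Your identification of the renormalised mass $\varpi$ as the key device for handling Neumann data and of the $n=2$ multiplier in Lemma \ref{lwe} as the source of the $\kappa<\hf$ restriction is exactly what the paper emphasises (see the remark following Lemma \ref{lwe}).
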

We may remark that we can use these techniques to study toroidally symmetric solutions of the Klein-Gordon equation on a fixed toroidal AdS Schwarzschild background. In this decoupled setting we have $\chi=\gamma=\hf$ and $\mc{T} = \pa_t$ and the following corollary.
\begin{coro}
	Let $(\mc{M},g)$ be a fixed toroidal Schwarzschild AdS spacetime with Eddington Finkelstein coordinate system $(u,v)$. Let $\kappa < \hf$ then the toroidally symmetric solutions of the Klein-Gordon equation decay exponentially in the $v$ coordinate on the black hole exterior.
\end{coro}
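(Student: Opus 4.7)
The plan is to specialize the three-step Morawetz proof of Theorem \ref{expoDecay} to the linear Klein-Gordon equation on the fixed toroidal Schwarzschild-AdS background. In this decoupled setting the exact Schwarzschild-AdS identities hold everywhere: $\chi = \gamma = \hf$, $\varpi = \varpi_1 = M$, and $\mc{T} = \pa_t$ is a globally Killing vector field. Accordingly, all the ``error'' estimates in Sections \ref{OSS}--\ref{AS} that were needed to absorb metric perturbations either become trivial or drop out entirely, and the $\pa_t$-current gives an exactly conserved $\Hu^1$-type flux $\fl$ along slabs of constant $v$; the boundary contribution on $\mi$ vanishes for either Dirichlet or Neumann conditions by the same trace computation as in the proof of Lemma \ref{lwe}.

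With that in hand I will re-run verbatim the three key estimates of Section \ref{AS}: the low-weighted global integrated decay estimate (Lemma \ref{lwe}) using the multiplier $X = F(r)\mc{R}$ with $F(r) = -r^{-n}$, the red-shift estimate (Lemma \ref{C3VFRS}) using the horizon-localized vector field $Y = (-r_u)^{-1}\eta(r)\pa_u$, and the sharp Morawetz estimate \eqref{ME} obtained by direct multiplication of the twisted equations \eqref{REKG4}--\eqref{REKG5}. Since the bootstrap parameter $b$ is now identically zero, no perturbative absorption is required and each positivity check collapses to its clean Schwarzschild-AdS version. The end product is \eqref{ME} in the form $\int_D \cdots \le C_{M,l,g}\fl(u,v)$, valid on the entire exterior whenever $\kappa < \hf$.

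Exactly as in Theorem \ref{expoDecay}, combining this integrated decay statement with exact $\pa_t$-flux conservation and applying the pigeon-hole principle yields exponential decay of the $v$-flux $\mc{F}(v)$. Pointwise exponential decay of $\psi$ on the exterior then follows from the weighted Sobolev inequality of Lemma \ref{SI} in the asymptotic region and from the red-shift pointwise bound of Theorem \ref{C3RSE} near $\mh$. The only point that genuinely needs checking is that the sharp bulk positivity condition $n+2 > 4|a|\alpha$ used at $n=2$ in Lemma \ref{lwe} still closes over the full admissible mass range; on the fixed background this reduces to the clean inequality $4|a| < (n+2)(1 + Ml^2/r^3)$ with no residual $b$-dependent corrections, so this step is strictly simpler than its nonlinear counterpart.
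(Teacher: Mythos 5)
Your proposal is correct and is essentially the paper's own argument: the corollary is obtained precisely by specializing the Morawetz/red-shift/pigeonhole machinery of \S\ref{AS} to the decoupled setting where $\chi=\gamma=\hf$, $\varpi=\varpi_1=M$ and $\mc{T}=\pa_t$, so that all perturbative absorptions are vacuous. The paper treats this as immediate and offers no further detail, so your fleshed-out version matches its intent exactly.
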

It is worth contrasting this with the non symmetric results of \cite{dunn_kleingordon_2016}, where only polynomial decay can be established. 
\section{Vacuum result}
So far we have restricted to square flat toroidal symmetry in order to emulate the situation familiar in spherical symmetry where a Birkhoff theorem holds. There are more degrees of freedom in putting a flat metric on a torus, in contrast to the round metric on a sphere, and we can evade trivial vacuum dynamics within a rectangular flat toroidal symmetry class by making the following metric ansatz
\begin{equation}\label{metans}
g = -\Omega^2(u,v)dudv + r^{2}(u,v)\left(e^{-\sqrt{8\pi}B(u,v)}dx^2+e^{\sqrt{8\pi}B(u,v)}dy^2 \right). 
\end{equation}
Here the ratio of the periods of the tori are allow to vary, parameterised in effect by a scalar field $B$. At different points $(u,v)$, we get rectangular tori which, unlike the case $B=const$, we cannot scale back to a unit torus through coordinate transformations of $x$ and $y$. If we study the, now dynamical, vacuum equations
\begin{equation}
R_{\mu\nu}-\hf g_{\mu\nu}R-\f{3}{l^2}g_{\mu\nu}=0,
\end{equation} 
the symmetry reduction leads to the system
\begin{equation}\label{EV1}
\pa_u\left(\frac{r_u}{\Omega^2} \right) = -4\pi r\frac{\left(B_u\right)^2 }{\Omega^2},
\end{equation}
\begin{equation}\label{EV2}
\pa_v\left(\frac{r_v}{\Omega^2} \right) = -4\pi r\frac{\left(B_v\right)^2}{\Omega^2},
\end{equation}
\begin{equation}\label{EV3}
r_{uv} = - \frac{r_ur_v}{r} - \frac{3}{4}\frac{r}{l^2}\Omega^2,
\end{equation}
\begin{equation}\label{EV4}
\left(\log\Omega\right)_{uv} = \frac{r_ur_v}{r^2}- 4\pi B_uB_v,
\end{equation}
\begin{equation}\label{EV5}
B_{uv} =-\frac{r_u}{r}B_v-\frac{r_v}{r}B_u.
\end{equation}
We notice this system is equivalent to \eqref{EKG1} - \eqref{EKG5} where the Klein-Gordon field is massless ($a=0$). In contrast to the Bianchi IX system as studied in \cite{dold_global_2017}, the scalar curvature of the group orbits is $0$. Consequently \eqref{EV5} is a linear wave equation, making the analysis much simpler. However as $a=0$ corresponds to $\kappa = \f{3}{2}$, we cannot currently hope to pose any other boundary conditions other than Dirichlet. Intuitively this makes sense: imposing Dirichlet boundary conditions would mean fixing the periods of the torus at null infinity. 
The main results of this paper do not directly apply for this value of $\kappa$, however, as discussed elsewhere there are only very minor differences between the spherical and toroidal systems (at the level of the reduced equations). It is clear that combining the results of \cite{holzegel_self-gravitating_2012}, and \cite{holzegel_stability_2013} with the observations above we have:
\begin{thm}
	Consider an initial free data set $(\bar{r},\bar{B})$, obeying the conditions of an $\epsilon$-perturbed Schwarzschild-AdS data set (as defined in \cite{holzegel_stability_2013}, taking $a=0)$ on a null ray $N(v_0)$, and Dirichlet boundary conditions. The associated maximal development is a black hole spacetime with a regular future horizon, and a complete null infinity. Furthermore the estimate
	\begin{equation}
	\abs{r^{\frac{3}{2}-\kappa}B(u,v)}\le D\exp(-Cv),
	\end{equation}
	holds on the intersection of the regular region of the spacetime and the exterior of the black hole.
	From which we may deduce that the metric is converging exponentially in $v$, uniformly in $u$, to a toroidal AdS Schwarzschild solution with mass $M$, in the Eddington-Finkelstein gauge.
\end{thm}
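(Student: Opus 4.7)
The key observation is that the system \eqref{EV1}--\eqref{EV5} is formally identical to the toroidally-symmetric reduced Einstein--Klein-Gordon system \eqref{EKG1}--\eqref{EKG5} upon the identification $\psi \leftrightarrow B$ and $a = 0$, so $B$ plays the role of a massless, conformally-coupled-like scalar field. At $a=0$ one has $\kappa = \tfrac{3}{2}$ and hence $g = -\tfrac{3}{2}+\kappa = 0$, so the twisting function $r^g$ is trivial and $\grt = \gr$ throughout; in particular the Dirichlet condition $\rho^{-3/2+\kappa}\psi = 0$ reduces to $B|_{\mi}=0$, which is precisely the requirement that the conformal class of the tori at infinity be fixed. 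The plan is to import the spherically-symmetric stability machinery of \cite{holzegel_self-gravitating_2012, holzegel_stability_2013} for this equivalent scalar-field system in the toroidal setting, noting that the transition from $k=1$ to $k=0$ simply modifies sub-leading terms in the reduced equations (as already observed in the earlier sections of the present paper).

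First, I would reinterpret the $\epsilon$-perturbed data $(\bar r, \bar B)$ on $N(v_0)$ as free data for the renormalised EKG problem at $a=0$, $g=0$, checking that the Dirichlet smallness conditions of \cite{holzegel_stability_2013} are satisfied; here the asymptotic behaviour $B \sim 0$ at $\mi$ plays the role of Dirichlet data $\psi \sim r^{-3/2+\kappa}$ in the general case. Next, local well-posedness, geometric uniqueness and the two extension principles of \S\ref{EP} are insensitive to the value of $\kappa$ and hence apply verbatim, yielding a maximal development $\mc{Q}$ with the required global structure. Orbital stability then follows from the renormalised Hawking mass argument of \S\ref{OSS}: since $g=0$ all terms multiplied by $g$ drop out, and the monotonicity of $\varpi$, the Hardy and Sobolev inequalities, and the redshift estimate near $\{r\le r_X\}$ all carry through. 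The completeness of $\mi$ argument is unchanged since it uses only control of $\mu_1$ and $\chi$ away from the horizon.

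For the decay statement, I would follow the Morawetz scheme of \S\ref{AS}. The restriction $\kappa<\tfrac{1}{2}$ in Theorem \ref{MT} arose from the sign of $(1-2\kappa)$ in the low-weighted bulk estimate (Lemma \ref{lwe}) and was needed because Neumann-type fluxes through $\mi$ had to be controlled. For Dirichlet data, one can instead work at higher regularity (as in \cite{holzegel_stability_2013}) and use a multiplier of the form $F(r)\mc{R}$ with positive weights adapted to the Dirichlet decay $B = O(r^{-3+2\kappa}) = O(1)$ of $B$; the boundary terms on $\mi$ then vanish because $B|_\mi=0$, and the zeroth-order bulk contributions have the favourable sign coming from $a=0$ (so no Hardy absorption analogous to \eqref{HEp2} is needed). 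Combining this with the redshift vector field yields an integrated decay estimate for the $H^1$ energy of $B$, from which a pigeonhole argument identical to that of Theorem \ref{expoDecay} produces exponential decay of the associated $v$-flux. Finally, a Sobolev trace gives the pointwise bound $|B(u,v)| \le D e^{-Cv}$, which in view of \eqref{metans} translates into exponential convergence of the metric to toroidal AdS--Schwarzschild of mass $M$ in the Eddington--Finkelstein gauge.

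The main obstacle is the verification that Morawetz-type estimates really do close when $\kappa = \tfrac{3}{2}$ and that the boundary terms in the vector field identities have the correct sign given only Dirichlet decay of $B$. In the language of \S\ref{AS}, one must check that the estimate analogous to \eqref{bulkest} is positive for a permissible choice of $n$, and that the $\mi$ flux $\int_\mi \tfrac{\Omega^2}{-r_u} r \, (\tilde{\mc R} B)^2$ vanishes; the former follows because the Hardy absorption step is trivial when $a=0$, while the latter is automatic once one works in a regularity class where $\tilde{\mc R} B$ has the expected $r^{-2\kappa}$ fall-off. Once these two points are settled, the remaining arguments are essentially the spherical ones with $k=1$ replaced by $k=0$, producing only harmless error terms.
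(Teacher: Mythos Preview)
Your proposal is essentially correct and follows the same route as the paper: the paper's ``proof'' consists of the single observation that the vacuum system \eqref{EV1}--\eqref{EV5} coincides with the reduced Einstein--Klein-Gordon system at $a=0$ (hence $\kappa=\tfrac{3}{2}$, $g=0$), together with the remark that the results of \cite{holzegel_self-gravitating_2012,holzegel_stability_2013} then apply with only minor modifications arising from the passage $k=1\to k=0$. Your sketch fleshes this out, and your key structural observations --- that the twisting trivialises when $g=0$, that the Dirichlet condition becomes $B|_{\mi}=0$, and that the Hardy absorption step in Lemma~\ref{lwe} is vacuous when $a=0$ since the offending zeroth-order bulk term carries a prefactor of $a$ --- are all correct and explain concretely why the $\kappa<\tfrac12$ restriction of Theorem~\ref{MT} is not an obstruction here.

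One small inaccuracy: you write ``Dirichlet decay $B=O(r^{-3+2\kappa})=O(1)$'', but for $\kappa=\tfrac32$ the Dirichlet branch is the fast-decaying one $r^{-3/2-\kappa}=r^{-3}$, not $r^{-3/2+\kappa}=r^{0}$. This does not affect your argument (indeed it only helps the boundary-flux vanishing), but the exponent should be corrected. With that fixed, your proposal is a faithful and somewhat more detailed rendering of what the paper asserts without proof.
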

Thus the toroidal AdS black hole is indeed a stable solution to the vacuum equations within the symmetry class imposed by the metric ansatz (\ref{metans}).
\section{Acknowledgements}
Claude Warnick acknowledges support from EPSRC through the Cambridge Centre for Analysis.

	\bibliographystyle{alpha}
	\bibliography{MyLibrary2}
\end{document}